\DeclareMathOperator*{\plim}{plim}
\newtheorem{theorem}{Theorem}
\numberwithin{theorem}{section} 
\newtheorem{proposition}{Proposition}
\numberwithin{proposition}{section} 
\newtheorem{assumption}{Assumption} 
\numberwithin{assumption}{section} 
\newtheorem{lemma}{Lemma}
\numberwithin{lemma}{section}
\numberwithin{corollary}{section}
\def\be{\begin{equation}}
	\def\ee{\end{equation}}
\def\bea{\begin{eqnarray}}
	\def\eea{\end{eqnarray}}
\def\beAA{\begin{align}}
	\def\eeAA{\end{align}}
\newcommand*{\dt}[1]{%
	\accentset{\mbox{\large\bfseries .}}{#1}}
\newcommand{\pto}{\overset{p}{\to}} 
\theoremstyle{definition}
\numberwithin{remark}{section}
\def\1{1\!{\rm l}}
\def \R {\mathbb{R}}
\def \E {\mathbb{E}}
\begin{document}

	\begin{center}
		
		{\large \bf Quantile Random-Coefficient Regression with Interactive Fixed Effects: \\Heterogeneous Group-Level Policy Evaluation}
		\medskip
		
		Ruofan Xu$^{\dag}$, Jiti Gao$^{\dag}$, Tatsushi Oka$^{\star}$and Yoon--Jae Whang$^{*}$\\
		$^{\dag}$Department of Econometrics and Business Statistics, Monash University, \\ 
		$^{\star}$Department of Economics, Keio University, and \\
		$^{*}$Department of Economics, Seoul National University
		
	\end{center}

\begin{abstract}
	We propose a quantile random-coefficient regression with interactive fixed effects to study the effects of group-level policies that are heterogeneous across individuals. Our approach is the first to use a latent factor structure to handle the unobservable heterogeneities in the random coefficient. The asymptotic properties and an inferential method for the policy estimators are established. The model is applied to evaluate the effect of the minimum wage policy on earnings between 1967 and 1980 in the United States. Our results suggest that the minimum wage policy has significant and persistent positive effects on black workers and female workers up to the median. Our results also indicate that the policy helps reduce income disparity up to the median between two groups: black, female workers versus white, male workers. However, the policy is shown to have little effect on narrowing the income gap between low- and high-income workers within the subpopulations. 
	\smallskip
	
	\noindent{\it Keywords:}  Heterogeneous policy effect, Hierarchical regression, Random coefficient model
	\smallskip
	
	\noindent{\it JEL Codes:} C13, C31, J15, J31, J38.
	
\end{abstract}


\section{Introduction}

Hierarchical, or nested, data structure is natural in many research fields, including policy evaluation, educational research and meta-analysis, to name a few. In this paper, we are particularly interested in the hierarchical data observed over multiple periods, where the group level consists of a number of cross-sections (such as industries, and states) across time, and the individual level comprises random samples within each group unit. A typical example in survey is the repeated cross-sectional data.

Random-coefficient regression models prevail in modeling hierarchical data, as they not only allow the heterogeneity across groups but also link the individual- and group-level explanations.
The development of random-coefficient models in the conditional mean paradigm can be dated back to the 1970s \citep[see, e.g.,][]{swamy1970efficient,hsiao1975some,borjas1994two,zhang2019identification}.  However, the above models are misspecified when unobserved heterogeneous individual effects exist, in which case quantile regression serves as an alternative approach since the seminal work by \cite{koenker1978regression}. Several papers study the quantile random coefficient models, such as \cite{kim2011semiparametric}, \cite{graham2018quantile}, \cite{pitselis2020multi} and \cite{chetverikov2016iv}.

The aforementioned literature studies the hierarchical data at a single period. These models fail to capture the dynamics across time, and thus, lead to biased estimates when applying to data with multiple periods. Indeed, limited research can be found tailored for the hierarchical data with multiple periods. In this paper, we consider a latent factor structure to flexibly capture the unobserved time and group effects simultaneously. The latent factor structure is prevailingly used in the panel data literature to control the time-varying common shocks that are distinct across groups, which is known as the interactive fixed effects \citep[e.g.,][]{bai2009panel, pesaran2006estimation}. Quantile extension of the latent factor structure has recently been studied by \cite{ando2020quantile,chen2021quantile}. However, to the best of our knowledge, this is the first paper that allows for interactive fixed effects in the quantile random-coefficient model. 

To this end, we proposed a quantile random-coefficient model, where the random coefficients are heterogeneous across both time and cross-sections and modelled by a linear regression with interactive fixed effects.\footnote{As cross-sections are the same across time, the group-level regression framework can be viewed as a panel regression.} For this model, we introduce a simple two-step estimation procedure. First, we estimate the quantile regression model, introduced by \cite{koenker1978regression}, for each group-time pair. Then, we estimate the group-level model on quantile random coefficients using the estimation procedure proposed by \cite{bai2009panel}. This model is capable of documenting the observed and unobserved heterogeneities at both individual and group levels, as well as the interactions between the two levels. 
We note that the model is not only of methodological importance but also of empirical interest. We show that the model provides us with a way to study how group-level policies affect the marginal effects differently among observationally identical individuals and uncover policy effects depending on individual observed and unobserved heterogeneities, which is important yet ``somewhat neglected'' in the policy evaluation \citep{koenker2017quantile}. 

In summary, the contributions of this paper are threefold. 

First, from the methodological point of view, we contribute to the literature on quantile random-coefficient models by (i) allowing the random coefficients to be heterogeneous across both groups and time and (ii) characterizing the unobserved group and time heterogeneities within the random coefficients flexibly via interactive fixed effects. A closely related paper to the empirical aspect of our paper is \cite{oka2021heterogeneous}. They characterize the unobserved time and group heterogeneity via two-way fixed effects in their quantile random coefficients, which can be viewed as a special case of our proposed model. In addition, our paper takes one step further to establish asymptotic results and identification conditions for a number of treatment parameters that quantify the heterogeneous effects of group-level policy. We extend the scope of policy evaluation approaches \citep[e.g.,][]{gobillon2016regional, Arellano2016} by establishing the identification strategy of the distributional group-level policy effect which is heterogeneous across the individuals' characteristics.

Second, from the theoretical perspective, 
we establish the consistency and limiting distribution of the proposed group-level coefficients estimator when the number of observations per group-time pair $N_{st}$, cross-sections $S$, and time units $T$ all go to infinity simultaneously. The estimation error from the first-step quantile regression imposes a great challenge in deriving the asymptotic expression of the group-level estimators. However, we show that the first-step estimation error is negligible, as long as the relative growth rate between individual-level and group-level sample sizes is sufficiently large. 

Last but not least, from the empirical perspective, 
we apply the model to contribute to the debate on racial income disparity during 1960s and 1970s in the United States \citep[see, e.g.,][]{freeman1973changes,card1992school,derenoncourt2021minimum}.
Unlike \cite{derenoncourt2021minimum}, our method estimates the heterogeneous policy effects, particularly on racial income inequality, under a unified framework without the need for alternating responses and selecting sub-samples. In addition, the interactive fixed effects in our model are suitable for controlling for time-varying unobserved common shocks, such as macroeconomic shocks, which could affect industries differently. Our estimation results support the core conclusion of \cite{derenoncourt2021minimum} that minimum wage policy helps reduce the racial income gap and provide additional findings that the minimum wage policy has a significant negative impact on between-inequality but little effect on within-inequality.

The paper proceeds as follows. Section~\ref{sec:model} introduces the model and proposes a two-step estimation method. Section~\ref{sec:4}
presents the asymptotic properties of the estimators. Section~\ref{sec:treatment} 
discusses the identification of treatment effects for group-level policies using our proposed model.
In Section~\ref{sec:empirical}, we apply the proposed method to analyze
the effect of minimum wage on earnings under 
the 1966 Labour Standards Act.
Section~\ref{sec:conclusion} concludes.
Online supplementary material encompasses simulation results, additional empirical analysis, technical assumptions for asymptotic theories and proofs of the theorems presented in the main text.

\section{Model and Estimation}\label{sec:model}

In this section, we first provide the model setup and then introduce the estimation method.
Before preceding, we introduce some notations.
Throughout the paper, 
let $\| \cdot \|$ denote the Euclidean norm
for vectors
and 
the spectral norm
for matrices, 
that is $\| a\| :=\sqrt{a'a}$
and 
$\|A\|:= \sup_{a \not = 0}\| A a \| /\| a \|$
for column vector $a$
and matrix $A$. 
Let $I_p$ denote the $p$-dimensional identity matrix. Let $1\{\cdot\}$ denote the indicator function.
Let $e_k$  be a unit column vector having 1 at the $k^\text{th}$ entry
and 0 for the others,
and the dimension of $e_k$ is allowed to vary according to the context. 
Let $\text{diag}(\cdot)$ denotes the diagonal matrix, whose diagonal entries are given in the parenthesis.
We denote 
$a \vee b:=\max \{a, b\}$
for scalars $a, b$.
Let $C_M$ and $c_M$ be some pre-determined positive real numbers which are independent of the sample.

\subsection{Model}
\label{setup}

Given group $s=1, \dots S$ and
time $t=1, \dots, T$,
let  
$\{(y_{ist}, z_{ist})\}_{i=1}^{N_{st}}$
be 
repeated cross-sectional observations
of 
a scalar outcome
$y_{ist}$
and
a $J \times 1$ regressor vector $z_{ist}$
, which includes a constant 1 if needed, for individual $i$
with the sample size $N_{st}$.
We denote 
supports of $y_{ist}$ and $z_{ist}$
by 
$\mathcal{Y} \subseteq \R$
and 
$\mathcal{Z} \subseteq \R^J$,
respectively.\footnote{
	The supports $\mathcal{Y}$ and $\mathcal{Z}$ can be allowed to depend on
	group and time, while we suppress
	the dependency for notational simplicity.
}
Also,
we observe a $K \times 1$ vector of group-level covariates 
$x_{st}$, including a constant 1 if necessary, whose support  is $\mathcal{X} \subseteq \R^{K}$,
and
a dummy variable $d_{st}$ which
takes 1 when some policy is employed
in group $s$ and time $t$ and 0 otherwise.

We assume that 
the repeated cross-sectional observations
are randomly sampled within each group-time pair, conditional on group-level information. That is to say, interdependency between individauls within paris and dependency across pairs, characterized by both the observed and unobserved group-level information, are allowed. Specifically, we assume that the $u^{\text{th}}$ quantile of the conditional distribution of $y_{ist}$ is given by
\begin{align}
	& 
	Q_{y_{ist}|z_{ist},\alpha_{st}}(u)
	=
	z_{ist}'{\alpha}_{st}(u),
	\quad 
	\alpha_{st}(u)\equiv[\alpha_{1st}(u), \dots, \alpha_{Jst}(u)]',
	\label{Q_y}\\
	& 
	\alpha_{jst}(u)
	=
	\delta_{jt}(u)
	d_{st}
	+
	x_{st}'\beta_j(u) +
	f_{jt}(u)' {\lambda}_{js}(u) +
	\eta_{jst}(u), \label{alpha_st}
\end{align}
where $Q_{y_{ist}|z_{ist},\alpha_{st}}(u)$ is the $u^{\text{th}}$ conditional quantile of $y_{ist}$ given $(z_{ist},\alpha_{st})$ for $u \in \mathcal{U} \subseteq (0,1)$,
$\alpha_{jst}(u)$ is the scalar random quantile coefficient corresponding to the $j^{\text{th}}$ component of the individaul-level observable $z_{ist}$, $\delta_{jt}(u)$
is a scalar coefficient
for the policy effect at time $t$,
$\beta_{j}(u)$
is a $K \times 1$ vector of coefficients measuring the marginal effect of the group-level observable $x_{st}$ on $\alpha_{jst}(u)$,
$f_{jt}(u)$ is an $r \times 1$ vector of unobservable group-level factors, which are heterogeneous across time $t$, ${\lambda}_{js}(u)$ is the corresponding factor loading vector, and
$\eta_{jst}(u)$ is an idiosyncratic group-level error satisfying $\E[\eta_{jst}(u)|d_{st},x_{st},f_{jt}(u),\lambda_{js}(u)] =0$. For notational simplicity, we write $Q_{st}(u | z_{ist}) \equiv Q_{y_{ist}|z_{ist},\alpha_{st}}(u)$ in what follows. 

In this paper, we consider a binary group-level policy that is employed at a known time $T_0$ onward with $1 < T_0 < T$. Then, we may write model~(\ref{alpha_st}) in the vector form as 
\begin{equation*}
	A_{js}(u) 
	= 
	D_s{\delta}_{j}(u) + X_s \beta_j(u) + 
	F_j(u) {\lambda}_{js} (u) +
	{\eta}_{js}(u),
\end{equation*}
where 
$A_{js}(u) := [\alpha_{js1}(u),...,\alpha_{jsT}(u)]'$,
$D_s := d_s[e_{T_0},...,e_{T}]$  with  $d_s=1$ if group $s$ is treated after $T_0$ and $0$ otherwise, and $e_{k}$ being a unit column vector having $1$ at the $k^\text{th}$ entry and 0 for the others, 
$\delta_{j} := [{\delta}_{jT_0}(u),...,\delta_{jT}(u)]'$,
$X_{s}:=[x_{s1}, \dots, x_{sT}]'$,
$F_{j}(u):=[f_{j1}(u), \dots, f_{jT}(u)]'$, 
and 
$\eta_{js}(u) := [\eta_{js1}(u),...,\eta_{jsT}(u) ]'$.

It is known that, the quantile coefficient $\alpha_{st}(u)$ can be interpreted as the marginal effect of individual covariates $z_{ist}$ on the $u^{\text{th}}$ quantile of outcome variables. Moreover, 
when the underlying structural model depends on multi-dimension unobservables, 
\cite{sasaki2015quantile} shows that the quantile regression coefficients can be interpreted as the marginal effects of $z_{ist}$ on $y_{ist}$ averaged over the unobserved variables that satisfy mild regularity conditions. Thus, quantile regression coefficients $\alpha_{st}(u)$ can succinctly summarize the marginal effect of observed individual characteristics on the outcome, averaged over individual unobserved heterogeneity among observationally equivalent individuals within each group and time. That said, although model (\ref{Q_y}) doesn't include an individual fixed effect for each given pair of $(s,t)$, it allows for the unobserved heterogeneity across individuals within the group-time pair.

The key interest of this paper lies in studying how these marginal effects depend on the group-level information, especially the group-level policy. To this end, we impose a linear regression model (\ref{alpha_st})
with interactive fixed effects on each $\alpha_{jst}(u)$.
The interactive fixed effects structure ${f}_{jt}(u)' {\lambda}_{js}(u)$
accounts for
unobserved group and time effects in a flexible way.
For example, it captures time-varying macro shocks ${f}_{jt}(\cdot)$
affecting industry or regions differently via ${\lambda}_{js}(\cdot)$.
Also, the two-way fixed effects model
is included as a special case
if
${f}_{jt}(u)=[1, \nu_{jt}(u)]'$
and 
${\lambda}_{js}(u)=[\phi_{js}(u), 1]'$.

As an example to showcase the usefulness of the above hierarchical modelling framework, consider the empirical study in this paper where we aim to quantify the effects of a policy $d_{st}$, which varies at the industry-by-year level, on the distribution of individuals' wages $y_{ist}$ across individuals with innate heterogeneities $z_{ist}$, such as race, gender, etc. A policy may have differential effects on lower wage quantiles for black workers than for white workers; the specification of (\ref{alpha_st}) captures this idea by allowing the researcher to specify the coefficient for the racial indicator as a function of the policy, together with other group-level variables. Furthermore, we show in Section~\ref{sec:treatment} that our model is capable of quantifying the differentials in the policy effects across different subpopulations and quantiles, which facilitates the analysis of policy effects on inequality measures.

In addition, considering a special case where (\ref{alpha_st}) only applies to the intercept in (\ref{Q_y}), the group and time fixed effects are additive and no group-level unobservables $\epsilon_{st}(u)$ presents, model (\ref{Q_y})-(\ref{alpha_st}) is reduced to a 
two-way fixed effects model:
$$Q_{y_{ist}|z_{ist},\alpha_{st}}(u)
	= \delta_{jt}(u) d_{st} + x_{st}'\beta(u) + f_t(u) + \lambda_s(u) + z_{ist}'{\gamma}_{st}(u),
	$$
which has been used for estimating distributional policy effects in empirical studies
\citep[e.g.][]{angrist2004does},

\subsection{Estimation}\label{sec:estimation}
We propose a computationally simple two-step estimation approach for model~(\ref{Q_y})-(\ref{alpha_st}).
In what follows, we consider a finite set of probability levels $\mathcal{U}$,
as our analysis mainly focuses on several quantiles and their spreads. 
Details of the algorithm for each $u \in \mathcal{U}$ are outlined below.

\vspace{0.3cm}
\noindent 
\textit{Step 1}:
Using 
the individual-level data $\{(y_{ist},{z}_{ist})\}_{i=1}^{N_{st}}$
for each pair of group and time $(s,t)$ separately,
we obtain
the estimator 
$\widehat{{\alpha}}_{st}(u)$
of ${\alpha}_{st}(u)$
as the solution of the following quantile minimization problem:
\begin{equation*}
	\min_{\alpha \in \R^{J}}
	\sum_{i=1}^{N_{st}}
	\varrho_{u}(y_{ist} - z_{ist}'\alpha),
\end{equation*}
where $\varrho_{u}(v) := (u - 1\{v < 0\})v$ for $v \in \R$.

\vspace{0.5cm}
\noindent 
\textit{Step 2}: Let $\Lambda_j(u) := [\lambda_{j1}(u),...,\lambda_{jS}(u)]'$.
Given a collection of the estimators
$\widehat{A}_{js}(u) := [\widehat{{\alpha}}_{js1}(u),...,\widehat{{\alpha}}_{jsT}(u)]'$,
we obtain the estimator of
$\big (
\delta_{j}(u),\beta_j(u),F_j(u),\Lambda_{j}(u) \big )$
by minimizing the following sum of squared residuals:
\begin{align} \label{SSR}
	&\text{SSR}_{u}
	\big (
	\delta_{j},\beta_j,F_j,\Lambda_{j}
	\big )  
	:=
	\sum_{s=1}^{S}
	\big \| 
	\widehat{A}_{js}(u) 
	-
	D_{s}'{\delta}_{j}
	- 
	X_{s}'\beta_j
	-
	F_j {\lambda}_{js} 
	\big \|^2,
\end{align}
with the normalization condition in Assumption~\ref{a2.2}.(i),
which ensures the identification of factors and their loadings up to an orthogonal rotation matrix with column sign change \citep{bai2009panel}.
The least squares estimators are obtained using an iterated procedure described below.
The superscript $m$ indicates the number of iterations and
the converged estimator
is represented as 
{\small $\big(\widehat{\delta}_{j}(u), \widehat{\beta}_j(u), \widehat{F}_j(u),\widehat{\Lambda}_j(u)\big)$. 
\begin{enumerate}[label=(\roman*)]
\item
  Using the LS estimator without the factor components,
  we obtain an initial estimator
  $(\widehat{\delta}^{(0)}_{j}(u),\widehat{\beta}^{(0)}_j(u))$, which minimizes 
	$\sum_{s=1}^{S}\|\widehat{A}_{js}(u) -    D_{s}{\delta}_{j}    -     X_{s}'\beta_j\|^2$.

      \item Given $\big(\widehat{\delta}^{(m-1)}_{j}(u),\widehat{\beta}^{(m-1)}_j(u)\big)$ for $m \ge 1$,
	we obtain
	$\big(\widehat{F}_j^{(m)}(u), \widehat{\Lambda}_j^{(m)}(u) \big)$ as the solution to 
	$$\min_{(F_j,\Lambda_j)}\text{SSR}_{u}
	\big (
	\widehat{\delta}^{(m-1)}_{j}(u),\widehat{\beta}^{(m-1)}_j(u),F_j,\Lambda_j
	\big )$$ 
	by applying the principle component analysis (PCA)
	with the normalization conditions in Assumption~\ref{a2.2}.(i).

	\item
	Given 
	$\big(
	\widehat{F}_j^{(m)}(u), \widehat{\Lambda}_j^{(m)}(u)
	\big)$, 
	we obtain $\big(\widehat{\delta}_{j}^{(m)}(u),\widehat{\beta}_{j}^{(m)}(u)\big)$ as the minimizer of the objective function
	$\text{SSR}_{u}
	\big (
	\delta_{j},\beta_j,
	\widehat{F}_j^{(m)}(u), \widehat{\Lambda}_j^{(m)}(u)
	\big ) $.

	\item Repeat (ii)-(iii) until numerical convergence is reached. Specifically,
	we stop
	the algorithm if 
	$
	\big \|
	\widehat{{\delta}}^{(m)}_{j}(u) - \widehat{\delta}^{(m-1)}_{j}(u)
	\big \|  \leq 10^{-5}
	$,
	$
	\big \|
	\widehat{{\beta}}^{(m)}_j(u) - \widehat{{\beta}}^{(m-1)}_j(u)
	\big \|  \leq 10^{-5}
	$,
	and
	$
	\big \|\widehat{F}^{(m)}_j(u)\widehat{\Lambda}^{(m)}_j(u)' - \widehat{F}^{(m-1)}_j(u)\widehat{\Lambda}^{(m-1)}_j(u)' \big \|  \leq 10^{-5}
	$.
\end{enumerate}

}

In practice, as the number of factors $r$ is unknown,
we use a popular eigen-ratio criterion in PCA for selection. That is, for each $m$ and $j$,
we select the number of factors by minimizing the modified eigen-ratio criterion of \cite{casas2021time}
as follows:
\begin{align*}
	\min_{1 \leq r \le r_{\max}}
	\bigg(
	&
	\frac{\widehat{\rho}^{(m)}_{j,r+1}(u)}{\widehat{\rho}^{(m)}_{j,r}(u)}
	\cdot 
	1\bigg\{\frac{\widehat{\rho}^{(m)}_{j,r}(u)}{\widehat{\rho}^{(m)}_{j,1}(u)}
	\geq
	\frac{1}
	{
		\ln \big(S \vee \widehat{\rho}^{(m)}_{j,1}(u) \big)
	}
	\bigg\}
	+
	1\bigg\{\frac{\widehat{\rho}^{(m)}_{j,r}(u)}{\widehat{\rho}^{(m)}_{j,1}(u)} <
	\frac{1}
	{
		\ln \big (S \vee \widehat{\rho}^{(m)}_{j,1}(u) \big )
	}
	\bigg\}\bigg),
\end{align*}
where
$r_{\max}$ is a pre-specified integer
and 
$\widehat{\rho}^{(m)}_{j,1}(u),...,\widehat{\rho}^{(m)}_{j,T}(u)$ are the estimated eigenvalues of the $T \times T$ matrix $\widehat{L}_j\big(\widehat{\delta}^{(m-1)}_j(u),\widehat{\beta}^{(m-1)}_j(u)\big)$ in descending order, where
\begin{align}
	\label{eq:W}
	\widehat{L}_j( \delta, \beta) :=
	\frac{1}{ST}\sum_{s=1}^{S}
	\big(\widehat{A}_{js}(u) - D_s {\delta}  -X_s \beta \big)
	\big(\widehat{A}_{js}(u) - D_s {\delta}  -X_s \beta \big)'.
\end{align}
Given that a relatively large $r_{\max}$ suffices, 
we set it to the cardinality of the set
$\{\widehat{\rho}^{(m)}_{j,r}(u): \widehat{\rho}^{(m)}_{j,r}(u)>T^{-1}\sum_{r=1}^{T}\widehat{\rho}^{(m)}_{j,r}(u),
r=1, \dots, T\}$ in the empirical and simulation studies.

The proposed two-step estimation method is tailored for the survey data, with the number of individuals within group-time pairs, groups, and time periods being large. It is computationally less demanding relative to jointly estimating all coefficients for large hierarchical data as it only requires estimating a small number of coefficients each time \citep{chetverikov2016iv}.
However, as the proposed model~(\ref{Q_y})-(\ref{alpha_st}) is versatile, alternative one-step estimation methods can also be proposed, especially for the case when the number of individuals per group-time is limited such that the first step of the above algorithm is infeasible.

One major difficulty in the one-step estimation is addressing the omitted-variable bias. 
Let us consider a simple case where $j=1$. We combine (\ref{Q_y})-(\ref{alpha_st}) and write
$Q_{y_{ist}|z_{ist},\alpha_{st}}(u)
	=
	z_{ist}d_{st}
	\delta_{t}(u)
	+
	z_{ist}x_{st}'\beta(u) +
	z_{ist}f_{t}(u)' {\lambda}_{s}(u) + z_{ist}\eta_{st}(u)$.\footnote{We compress the subscript $j$ for all group-level coefficients and idiosyncratic error for notational simplicity.} For this model, one may consider a conventional quantile estimator of \cite{koenker1978regression} through an iterative algorithm similar to \cite{ando2020quantile}, which updates $\widehat{\delta}_t(u), \widehat{f}_t(u)$ and $\widehat{\beta}_{s}(u),\widehat{\lambda}_s(u)$ recursively until convergence and then obtains $\widehat{\beta}(u)$ as the average of $\widehat{\beta}_s(u)$. Unfortunately, such an algorithm is biased as it ignores the interaction term between $z_{ist}$ and the group-level idiosyncratic error $\eta_{st}(u)$. However, how to deal with the endogeneity issue in the presence of interactive fixed effects remains an open question for future research.

\section{Asymptotic Properties}\label{sec:4}

In this section, we first introduce the main assumptions
and then present asymptotic properties of the proposed estimators. The technical assumptions are provided in Online Supplement~\ref{sec:tech_assumptions}.

\subsection{Assumptions}\label{sec:assumptions}
This subsection provides the necessary assumptions for deriving the asymptotic properties of the recursive estimator along with some detailed explanations. 
In what follows, we consider the case where the set $\mathcal{U}$ consists of finite points, since
our empirical application mainly focuses on multiple quantiles and their spreads, instead of the entire distribution.

\begin{assumption} \label{a2.1}
	For each fixed 
	$s,t \geq 1$, 
	\begin{enumerate}[label=(\roman*)]
		\item  Individual observations 
		$\{(y_{ist}, z_{ist})\}_{i=1}^{N_{st}}$
		are
		independent and identically distributed (i.i.d.) across $i=1,...,N_{st}$ conditional on $(\alpha_{st},x_{st},d_{st},f_t,\lambda_s)$ . 
		The regressor vector
		$z_{ist}$ satisfies 
		$\|z_{ist}\|<C_M$ almost surely.
		\item All eigenvalues of $\E[z_{1st}z_{1st}']$ are bounded from below by $c_M>0$.

	\end{enumerate}
	
\end{assumption}

Assumption~\ref{a2.1}.(i) imposes a conditional i.i.d. assumption within group-time pairs. It allows for interdependency within group-time pairs that are fully controlled by group-level information, as well as the dependency between individual observations and group-level characteristics. Such an assumption is typically applicable to survey studies characterized by hierarchical data and is also considered in \cite{chetverikov2016iv}. Assumption~\ref{a2.1}.(ii) is a familiar identification condition in regression analysis. 

\begin{assumption} \label{ass.dens}
	Consider $(s,t) \in \{1,...,S\} \times \{1,...,T\}$, for all $y \in (z'_{1st}\alpha_{st}(u)-c_M,z'_{1st}\alpha_{st}(u)+c_M)$ for some $c_M>0$, it satisfies
	\begin{enumerate}[label=(\roman*)] 
		\item the conditional density function $g_{st}(y)$ is continuously differentiable with the first-order derivative $g_{st}'(\cdot)$ satisfying $|g_{st}'(y)| \leq C_M$ and $|g_{st}'({z}_{1st}'{\alpha}_{st}(u))| \allowbreak\geq c_M$. 
		\item $g_{st}(y) \leq C_M$, and $g_{st}({z}_{1st}'{\alpha}_{st}(u)) \geq c_M$ for some $c_M>0$.
	\end{enumerate}
\end{assumption}
Assumptions~\ref{ass.dens} is a set of mild regularity conditions that are typically imposed in the quantile regression literature \citep{koenker2004quantile,chetverikov2016iv}.

\begin{assumption}\label{a2.2} 
Let $\Lambda_j(u):=[{\lambda}_{j1}(u),...,{\lambda}_{jS}(u) ]'$.
	For all 
	$(s, t) \in \{1, \dots, S\} \times \{1, \dots, T\}$, 
	\begin{enumerate}[label=(\roman*)]
		\item
		$T^{-1}F_j(u)'F_j(u) = I_r$ and $S^{-1}\Lambda_j(u)'\Lambda_j(u)$ is a positive-definite diagonal matrix.
		\item $\mathbb{P}(d_{s}=1)$ is bounded away from below by $c_M>0$ and from above by $C_M<1$.
		\item The eigenvalues of $\E[x_{st}x_{st}'] $ are bounded away from zero.
	\end{enumerate}
\end{assumption}
Assumption~\ref{a2.2} provides the identification conditions of the regression coefficients, factor and loadings. Specifically, Assumption~\ref{a2.2}.(i) guarantees the identification of factor and the loadings up to an orthogonal rotation matrix. Such assumption is standard in the literature of the mean panel factor models \citep{bai2009panel,jiang2020recursive} and quantile factor models \citep[][]{ando2020quantile,chen2021quantile}. Assumption~\ref{a2.2}.(ii) ensures the identifiability of the policy parameter $\delta_{jt}(u)$, which is standard for policy evaluation \citep{hsu2022estimation,noh2023nonparametric}. Assumption~\ref{a2.2}.(iii) is a conventional assumption in regression analysis \citep{bai2009panel} to guarantee the identification of regression coefficients $\beta_{j}(u)$.

\begin{assumption}\label{ass.cov}
	For all $(s,t) \in \{1,...,S\}{\times}\{1,...,T\}$,
	\begin{enumerate}[label=(\roman*)]

		\item $\mathbb{E}[||{\lambda}_{js}(u)||^4] \leq C_M$.
		\item  $\mathbb{E}[\eta_{j s t}(u) | d_{gl}, x_{g  l}, {\lambda}_{jg}(u), f_{jl}(u)]=0$ for all $(g,l) \in \{1,...,S\}\times \{1,...,T\}$.
		\item  The largest eigenvalue of the $T \times T$ matrix $\mathbb{E}[{\eta}_{js}(u){\eta}_{js}(u)'] $ is bounded uniformly in $s$ and $T$.  
	\end{enumerate}
\end{assumption}

Assumption~\ref{ass.cov}.(i) requires standard moment conditions for our analysis. 
Assumption~\ref{ass.cov}.(ii) and (iii) impose weak restriction on the correlation among the idiosyncratic error components, group-level regressors and common factors. These assumptions are often imposed in the factor model literature \citep[e.g.,][]{bai2009panel,jiang2020recursive}.

\begin{assumption} \label{ass.growth}
	Let $N_{\min} := \min\{N_{st},s=1,...,S,t=1,...,T\}$. As $S,T \rightarrow \infty$, we have (i) $T/S \to \kappa > 0$ and (ii) $(ST)^{3/4}(\ln(N_{\min})/N_{\min})^{1/2} \leq C_M$.
\end{assumption}

Assumption~\ref{ass.growth} controls the diverging rates of the number of groups $S$, time $T$ and individuals per group and time $N_{st}$. Assumption~\ref{ass.growth}.(i) is standard in the panel factor model literature \citep[][]{bai2009panel}.  Assumption~\ref{ass.growth}.(ii) requires that the number of individuals per group-time pair grows sufficiently fast as $S$ and $T$ jointly go to infinity, such that the estimation error from the quantile estimation in the first-step is negligible. Compared with Assumption 3 of \cite{chetverikov2016iv}, Assumption~\ref{ass.growth}.(ii) imposes a more explicit yet comparable growth rate, which is necessary in analyzing the limiting property of the interactive fixed effects estimator. Although the growth rate seems restrictive, we have shown with both simulation and empirical study that the estimation procedure is valid in practice as long as the minimum number of individuals per group $N_{\min}$ used to perform the first step estimation is comparable to the total number of groups and time ($S \times T$). For example, in the empirical study, we have $N_{\min} = 229$ while $S \times T = 323$, and the estimation algorithm converges within a small number of iterations.

Define the $J \times 1$ vector
$K_t(u) := [K_{1t}(u),...,K_{Jt}(u)]'$,
whose $j^{\text{th}}$ element is given by  
$$
K_{jt}(u) :=
\big(S^{-1}\sum_{s=1}^{S} R_{js}(u)^2\big)^{-1} 
S^{-1/2}
\sum_{s=1}^{S}R_{js}(u)\eta_{jst}(u),
$$
where
$R_{js}(u) := d_{s} - S^{-1}\sum_{g=1}^{S}\omega_{j,sg}(u)d_g$
		with 
		$\omega_{j,sg}(u) := {\lambda}_{jg}(u)'
		\big(
		S^{-1}\Lambda_j(u)' \Lambda_j(u)
		\big)^{-1} {\lambda}_{js}(u)$.

\begin{assumption}\label{ass.clt_delta}
	For any $u_1,u_2 \in \mathcal{U}$ and $t \geq T_0$, as $S \to \infty$, we have
	\begin{align*}
		\bigg [
		\begin{array}{c}
			K_t(u_1) \\[-0.5cm]
			K_t(u_2)
		\end{array}
		\bigg ]
		\overset{d}{\rightarrow} \mathcal{N} 
		\Bigg(
		\mathbf{0},
		\bigg[
		\begin{array}{c}
			\Sigma_t(u_1,u_1), \Sigma_t(u_1,u_2)\\[-0.5cm]
			\Sigma_t(u_2,u_1), \Sigma_t(u_2,u_2)
		\end{array}
		\bigg]
		\Bigg ), 
	\end{align*}
	where $\mathbf{0}$ is a $(2J) \times 1$ vector and
	$\Sigma_t(u_1,u_2) := \underset{S \to \infty}{\lim} \E[ K_t(u_1) K_t(u_2)']$.
	
\end{assumption}

Assumption~\ref{ass.clt_delta}  is required to derive the joint Central Limit Theorem (CLT) in Theorem~\ref{t6} for the convergence of the estimator of the policy parameter, given quantile levels $u_1,u_2 \in \mathcal{U}$. The assumption shares the same idea as Assumption E of \cite{bai2009panel}.

\subsection{Asymptotic Results} \label{sec.asymp}

In this subsection, we present the asymptotic properties of our proposed estimators.
To maintain focus on the key parameters of interest, we omit the CLT for $\beta_{j}(u)$, though it can be derived similarly to that for $\delta_{jt}(u)$. 
Also, for empirical interests, we derive the corresponding consistent estimator of the asymptotic covariance matrix to facilitate the construction of confidence intervals.

\begin{theorem} \label{t5}
	Suppose that 
	Assumptions~\ref{a2.1}-\ref{ass.growth} and \ref{ass.mixing}-\ref{ass.norm} hold. 
	Then, for any fixed $u\in \mathcal{U}$, $j=1,...,J$ and $m \geq 0$, as $S,T \to \infty$,
	we have
	\begin{enumerate}[label=(\roman*)]
		\item $
		\sqrt{S}
		\big (
		\widehat{\delta}^{(m)}_{jt}(u) - {\delta}_{jt}(u)
		\big )=    O_P(1)$ \ for each $t \geq T_0$.
		\item
		$
		\sqrt{ST}
		\big (
		\widehat{\beta}^{(m)}_j(u) - {\beta}_j(u) 
		\big ) 
		    =  O_P(1)$.
	\end{enumerate}
\end{theorem}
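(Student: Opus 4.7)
The plan is to establish the convergence rates by induction on the iteration index $m$, adapting the interactive fixed effects machinery of \cite{bai2009panel} and \cite{jiang2020recursive} while handling the extra first-step quantile estimation error that is absent in standard panel factor models. Throughout, I would decompose
$$
\widehat{A}_{js}(u) \;=\; A_{js}(u) + \xi_{js}(u) \;=\; D_s \delta_j(u) + X_s \beta_j(u) + F_j(u)\lambda_{js}(u) + \eta_{js}(u) + \xi_{js}(u),
$$
where $\xi_{js}(u) := \widehat{A}_{js}(u)-A_{js}(u)$ collects the first-step error. A uniform Bahadur-type representation under Assumptions~\ref{a2.1} and~\ref{ass.dens} gives $\|\xi_{js}(u)\| = O_P\big((\ln N_{\min}/N_{\min})^{1/2}\big)$ uniformly, and Assumption~\ref{ass.growth}(ii) then ensures that aggregated first-step contributions of order $(ST)^{3/4}(\ln N_{\min}/N_{\min})^{1/2}$ are asymptotically negligible relative to the targeted $\sqrt{S}$ and $\sqrt{ST}$ rates.

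For the initial estimator $(\widehat{\delta}^{(0)}_j,\widehat{\beta}^{(0)}_j)$ obtained by OLS without the factor component, I would plug the decomposition above into the closed-form OLS solution and use Assumption~\ref{ass.reg_1}(ii)(a)--(b) to invert the design matrix. The factor piece $F_j(u)\lambda_{js}(u)$ acts as an omitted variable; Assumption~\ref{ass.norm} bounds its cross-product with $X_s$ and $D_s$, yielding a consistency rate for the initial step that, while slower than the target, is enough to seed the induction. The idiosyncratic and first-step noise terms contribute at the $\sqrt{S}$/$\sqrt{ST}$ scale respectively by Assumptions~\ref{ass.cov}--\ref{ass.mixing} and the argument in the previous paragraph.

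For the inductive step $m \ge 1$, I would first analyse the PCA step. Assumption~\ref{ass.tech}(i) ensures that $\widehat{F}^{(m)}_j(u)$ lies in a well-behaved set; following the strategy of Bai (2009, Proposition 1), I would derive a representation of the form
$$
\widehat{F}^{(m)}_j(u) - F_j(u) H^{(m)}_j \;=\; \text{(terms in previous-step errors)} + \text{(terms in $\eta_{js}$ and $\xi_{js}$)},
$$
with $H^{(m)}_j$ an $r \times r$ rotation matrix. This reduces the identification up to rotation exactly as in the standard factor model and feeds directly into the analysis of $(\widehat{\delta}^{(m)}_j,\widehat{\beta}^{(m)}_j)$. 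Given $\widehat{F}^{(m)}_j(u)$, the first-order conditions from minimising $\text{SSR}_u$ deliver a closed-form update; substituting in the true model and projecting by $M_{\widehat{F}^{(m)}_j}$ isolates (a) a leading cross-sectional average $S^{-1/2}\sum_s R_{js}(u)\eta_{jst}(u)$ for each $t\ge T_0$, giving the $\sqrt{S}$ rate for $\widehat{\delta}^{(m)}_{jt}$ through the normalisation in Assumption~\ref{ass.reg_1}(i); (b) a double-average term of order $(ST)^{-1/2}$ for $\widehat{\beta}^{(m)}_j$; and (c) remainder terms involving products of the factor rotation error with previous-step coefficient errors.

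The main obstacle will be showing that these remainder terms are of lower order than $S^{-1/2}$, so the induction actually preserves and does not degrade the rate. This boils down to sharp bounds on $\|T^{-1}\widehat{F}^{(m)}_j\widehat{F}^{(m)}_j{}' - T^{-1}F_j F_j'\|$ in the spirit of Bai's Lemmas A.1--A.7, but with the additional first-step perturbation $\xi_{js}$ entering every cross-product. Here I would exploit the mixing conditions of Assumption~\ref{ass.mixing} to control the sums $\sum_{s,g}\sum_{t,l} \eta_{jst}\eta_{jgl}$ and their interactions with $\xi_{js}$, and invoke Assumption~\ref{ass.growth}(ii) to bound the $\xi$-induced cross terms by $o_P((ST)^{-1/2})$. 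A similar but cleaner argument handles the initial step $m=0$, completing the induction. Part (i) then follows from collecting the $t$-th coordinate of the closed-form update for $\widehat{\delta}^{(m)}_j$ and part (ii) from the closed-form update for $\widehat{\beta}^{(m)}_j$.
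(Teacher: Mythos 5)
Your proposal follows essentially the same route as the paper: a uniform bound on the first-step quantile error (the paper's Lemma~\ref{l1} converts the $(\ln N_{\min}/N_{\min})^{1/2}$ bound into $O_P((ST)^{-3/4})$ via Assumption~\ref{ass.growth}(ii)), closed-form normal equations for the OLS initial step and for each iterate, a Bai-type expansion of $\widehat{F}^{(m)}_j H^{(m)}_j - F_j$ into leading and remainder pieces (the paper's Lemma~\ref{prop3} with the terms $I_1^{(m)},\dots,I_7^{(m)}$), and an induction on $m$ that controls the cross-products $\sum_s X_s' M_{\widehat{F}}(F-\widehat{F}^{(m)}H^{(m)})\lambda_s$ and $\sum_s d_s e_t' M_{\widehat{F}}(F-\widehat{F}^{(m)}H^{(m)})\lambda_s$ under the mixing conditions.

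One point needs fixing. You state that the initial estimator attains a rate ``slower than the target'' that is merely ``enough to seed the induction.'' The theorem, however, asserts the $\sqrt{S}$ and $\sqrt{ST}$ rates for every $m \geq 0$, including $m=0$, so an induction that only preserves a slower base rate would not prove the stated result. The paper shows the full target rates already at $m=0$: the omitted factor component in the no-factor OLS step is rendered innocuous at the $\sqrt{ST}$ (resp.\ $\sqrt{S}$) scale precisely by Assumption~\ref{ass.norm}(ii)--(iii), which bound $\mathbb{E}\|\sum_s X_s'F_j(u)\lambda_{js}(u)\|^2 = O(ST)$ and $\mathbb{E}\|\sum_s d_s f_{jt}(u)'\lambda_{js}(u)\|^2 = O(S)$, so the factor term behaves like noise rather than like a bias in the initial regression. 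Relatedly, your induction hypothesis should also carry the aggregated quantities $|\sum_{t=T_0}^{T}(\widehat{\delta}^{(m-1)}_{jt}(u)-\delta_{jt}(u))| = O_P(1)$ and $\|\sum_{t=T_0}^{T}\widehat{f}^{(m)}_{jt}(u)(\widehat{\delta}^{(m-1)}_{jt}(u)-\delta_{jt}(u))\| = O_P(1)$, since the factor-error expansion involves $D_s(\widehat{\delta}^{(m-1)}_j-\delta_j)$ only through these sums; without tracking them the remainder bounds in the $I_1^{(m)}$ and $I_2^{(m)}$ terms do not close.
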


The time-varying policy effects are estimated for each time period following the policy intervention. The convergence rate of the policy effect estimator depends solely on the group size $S$. On the other hand, the remaining regression coefficients are estimated using the full sample, and their convergence rate depends on $ST$. The convergence rates are slower than those of the conventional one-step quantile estimator, which could typically achieve $\sqrt{STN_{\min}}$. However, as explained at the end of the Section~\ref{sec:estimation}, the existing one-step quantile estimator suffers from the omitted-variable bias.

For the purpose of inference, we then establish the joint CLT of the estimator $(\widehat{\delta}_t(u_1)',\widehat{\delta}_t(u_2)')'$ in the following theorem.

\begin{theorem}\label{t6}
	Suppose that 
	Assumptions~\ref{a2.1}-\ref{ass.clt_delta} and \ref{ass.mixing}-\ref{ass.norm} hold.
	Let $\widehat{\delta}_{\cdot t}(u) := [\widehat{\delta}_{1t}(u),...,\widehat{\delta}_{Jt}(u)]'$ denote the estimator of $\delta_{\cdot t}(u)$. Then, for any $u_1,\ u_2 \in \mathcal{U}$ and $t \geq T_0$, we have, as $S,T \to \infty$,
	\begin{align*}
		\sqrt{S}
		\bigg[
		\begin{array}{l}
			\widehat{\delta}_{\cdot t}(u_1) - {\delta}_{\cdot t}(u_1)     \\[-0.5cm]
			\widehat{{\delta}}_{\cdot t}(u_2) - {\delta}_{\cdot t}(u_2)
		\end{array}
		\bigg]
		\overset{d}{\rightarrow} 
		\mathcal{N}
		\Bigg(
		\bigg [
		\begin{array}{l}
			B_t(u_1)  \\[-0.5cm]
			B_t(u_2) 
		\end{array}
		\bigg ]
		, 
		\bigg[
		\begin{array}{c}
			\Sigma_t(u_1,u_1), \Sigma_t(u_1,u_2)\\[-0.5cm]
			\Sigma_t(u_2,u_1), \Sigma_t(u_2,u_2)
		\end{array}
		\bigg]
		\Bigg ),
	\end{align*}
	where
	$\Sigma_t(u_1,u_2)$ is defined in Assumption \ref{ass.clt_delta},
	and
	$B_t(u) :=
	\underset{S,T \to \infty}{\plim}
	[\widetilde{B}_{1t}(u),...,\widetilde{B}_{Jt}(u)]'$ is the bounded asymptotic bias, whose $j^{\text{th}}$ component is given by
	\begin{align*}
		\widetilde{B}_{jt}(u)
		:=  
		- \bigg(\frac{1}{S}\sum_{s=1}^{S} R_{js}(u)^2\bigg)^{-1} 
		\frac{1}{S^{3/2}T}\sum_{s,g=1}^{S}d_s \mathbb{E}[{\eta}_{jgt}(u) {\eta}_{jg}(u)']  F_j(u) \bigg(\frac{\Lambda_j(u)'\Lambda_j(u)}{S}\bigg)^{-1}\lambda_{js}(u), 
	\end{align*}
	for $j=1,...,J$,
	where $R_{js}(u)$ is defined above Assumption~\ref{ass.clt_delta}.
\end{theorem}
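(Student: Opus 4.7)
The plan is to establish the claimed joint CLT by deriving a single stochastic expansion for $\sqrt{S}(\widehat{\delta}_{\cdot t}(u) - \delta_{\cdot t}(u))$ and then invoking Assumption~\ref{ass.clt_delta}. First, I would exploit that once the iteration has converged, the converged estimator satisfies the first-order conditions of $\mathrm{SSR}_u$ jointly in $(\delta_j,\beta_j,F_j,\Lambda_j)$ subject to the normalization in Assumption~\ref{a2.2}(i). Given $\widehat{F}_j(u)$ and $\widehat{\Lambda}_j(u)$, the subproblem for $(\delta_j,\beta_j)$ is a linear least squares after projecting out the estimated factor loading, so concentrating out $\lambda_{js}$ via $\widehat{\lambda}_{js}(u) = T^{-1}\widehat{F}_j(u)'(\widehat{A}_{js}(u) - D_s\widehat{\delta}_j - X_s\widehat{\beta}_j)$ and then $\beta_j$ yields a closed-form expression for $\widehat{\delta}_{jt}(u)$. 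Substituting the model $\widehat{A}_{js}(u)=D_s\delta_j(u)+X_s\beta_j(u)+F_j(u)\lambda_{js}(u)+\eta_{js}(u)+[\widehat{A}_{js}(u)-A_{js}(u)]$ produces a baseline expansion of the form
\begin{equation*}
\sqrt{S}\bigl(\widehat{\delta}_{jt}(u)-\delta_{jt}(u)\bigr) = \Bigl(\tfrac{1}{S}\sum_{s=1}^{S} R_{js}(u)^2\Bigr)^{-1}\Bigl\{\tfrac{1}{\sqrt{S}}\sum_{s=1}^{S} R_{js}(u)\,\eta_{jst}(u) + \mathcal{B}_{jt}(u) + \mathcal{E}_{jt}(u) + o_P(1)\Bigr\},
\end{equation*}
where $R_{js}(u)$ is the factor-partialled-out treatment indicator from Assumption~\ref{ass.reg_1}(i), $\mathcal{B}_{jt}(u)$ absorbs the factor-estimation bias, and $\mathcal{E}_{jt}(u)$ collects the first-step quantile-regression error.

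Second, I would control $\mathcal{E}_{jt}(u)$ and $\mathcal{B}_{jt}(u)$ separately. For $\mathcal{E}_{jt}(u)$, a uniform Bahadur representation of the cell-wise quantile estimator under Assumptions~\ref{a2.1} and~\ref{ass.dens} gives $\widehat{\alpha}_{jst}(u)-\alpha_{jst}(u)=O_P(\sqrt{\ln N_{\min}/N_{\min}})$ uniformly in $(s,t)$, so its contribution to the $\sqrt{S}$-scaled score is at most of order $\sqrt{S\ln N_{\min}/N_{\min}}$; Assumption~\ref{ass.growth}(ii) implies $\sqrt{S/N_{\min}}\lesssim (ST)^{-3/4}\sqrt{S\ln N_{\min}}=o(1)$, so this term is $o_P(1)$. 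For $\mathcal{B}_{jt}(u)$, I would replicate the factor-estimation bias analysis of \cite{bai2009panel}, using Assumption~\ref{ass.tech} to identify the rotation matrix $H_j$ such that $T^{-1}\widehat{F}_j(u)'F_j(u)$ converges to an invertible limit, and expressing $\widehat{F}_j(u)-F_j(u)H_j$ in terms of the idiosyncratic errors; the covariance of these with $\eta_{jst}(u)$ produces exactly the stated $\widetilde{B}_{jt}(u)$ through $S^{-1}\sum_s d_s\mathbb{E}[\eta_{jgt}(u)\eta_{jg}(u)']F_j(u)(\Lambda_j'\Lambda_j/S)^{-1}\lambda_{js}(u)$. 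The mixing conditions in Assumption~\ref{ass.mixing} bound all cross-sectional and cross-time sums, and Theorem~\ref{t5} shows that $\widehat{\beta}_j$ being $\sqrt{ST}$-consistent makes all interaction remainders $o_P(1)$ after the $\sqrt{S}$ scaling.

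Third, stacking the expansion across $j=1,\dots,J$ produces $\sqrt{S}(\widehat{\delta}_{\cdot t}(u)-\delta_{\cdot t}(u)) = \mathrm{diag}\bigl(\{(S^{-1}\sum_s R_{js}^2)^{-1}\}_{j=1}^J\bigr)\,K_t(u) + B_t(u) + o_P(1)$, which identifies the probability limit of $[\widetilde{B}_{1t},\dots,\widetilde{B}_{Jt}]'$ with $B_t(u)$. Stacking then across $u_1,u_2$ and applying Assumption~\ref{ass.clt_delta} to the stacked vector $(K_t(u_1)',K_t(u_2)')'$ delivers the bivariate normal limit with the covariance blocks $\Sigma_t(u_l,u_{l'})$. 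The main obstacle, as usual in the interactive-fixed-effects literature, is the rigorous bias analysis: tracking how the factor rotation $H_j$ interacts with both the idiosyncratic errors and, crucially here, with the first-step quantile noise requires delicate perturbation arguments. Doing this carefully requires combining Theorem~\ref{t5} (to bootstrap from rate to expansion), the Bahadur expansion of the cell-level quantile step, the PCA sensitivity bounds guaranteed by Assumption~\ref{ass.tech}, and the mixing-based covariance bounds of Assumption~\ref{ass.mixing}, which together justify treating $\widehat{F}_j(u)$ as sufficiently close to $F_j(u)H_j$ that only its leading bias contributes to the limit.
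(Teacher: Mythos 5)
Your proposal is correct and follows essentially the same route as the paper: the paper likewise obtains the linear expansion $\sqrt{S}(\widehat{\delta}_{jt}(u)-\delta_{jt}(u)) = (S^{-1}\sum_s R_{js}(u)^2)^{-1}S^{-1/2}\sum_s R_{js}(u)\eta_{jst}(u) + \widetilde{B}_{jt}(u) + o_P(1)$ by passing to the limit of the recursive representation from the proof of Theorem~\ref{t5} (which at convergence is exactly the first-order-condition identity you propose, with the self-referential $\omega_{j,sg}$ feedback term moved to the left to convert $\sum_s d_s^2$ into $\sum_s R_{js}(u)^2$), kills the first-step quantile error via the uniform $(ST)^{-3/4}$ bound of Lemma~\ref{l1} under Assumption~\ref{ass.growth}(ii), extracts the bias from the $\mathbb{E}[\eta_{jgt}(u)\eta_{jg}(u)']$ interaction with $\widehat{F}_j - F_jH_j$ \`a la \cite{bai2009panel}, and concludes by Assumption~\ref{ass.clt_delta}. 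The only blemish is notational: since $K_{jt}(u)$ as defined already contains the factor $(S^{-1}\sum_s R_{js}(u)^2)^{-1}$, your stacked display should not premultiply $K_t(u)$ by that diagonal matrix again.
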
 

In view of this theorem, under general cases, the asymptotic distribution of the recursive estimator $\widehat{\delta}_{jt}(u)$ depends on: (i) the quantiles, (ii) the accuracy of the first-step estimation (i.e., $\widehat{\alpha}_{jst}(u) - \alpha_{jst}(u)$), (iii) the consistency of the initial estimation of the second-step, and (iv) the degenerating estimation error of the regression coefficient, factor and loadings carried over from the iterative steps. Therefore, although the CLT is derived per given time $t$, we still require $S,T \to \infty$ jointly, as the convergence of the policy parameter estimator relies on the convergence of the estimators of the factors and loadings, which only hold when $S,T \to \infty$ jointly. 

In addition, we note that the estimation error of the first-step $\widehat{\alpha}_{jst}(u) - \alpha_{jst}(u)$ depends on the sample size of individual observations within each group-time pair ($N_{st}$). Hence, by controlling the relative growth rate between individual-level and group-level sample size (Assumption~\ref{ass.growth}.(ii)), the asymptotic first-step estimation error becomes negligible in the asymptotic representation of $\widehat{\delta}_{jt}(u)$, and the estimation error from the group-level regression contributed to the asymptotic bias and covariance, similar to \cite{bai2009panel}. We also note that the growth rate of $N_{\min}$ relative to $S$ and $T$ can be relaxed, but at the expense of more complicated asymptotic expressions.

We now provide a way to estimate the asymptotic bias and covariance given in Theorem~\ref{t6}, and subsequently,
Proposition~\ref{clt_estimate} below establishes their consistency.

Following \cite{bai2009panel}, we define $\widehat{B}_t(u) := [\widehat{B}_{1t}(u),...,\widehat{B}_{Jt}(u)  ]'$, where, for $j=1,...,J$,
\begin{align*}
	\widehat{B}_{jt}(u) := 
	&
	-\bigg(\frac{1}{S}\sum_{s=1}^{S} \widehat{R}_{js}(u)^2\bigg)^{-1}  \cdot \frac{1}{S^{3/2}T}\sum_{s,g=1}^{S}d_s \big(\widehat{\eta}_{jgt}(u)\big)^2 \widehat{f}_{jt}(u)' \bigg(\frac{\widehat{\Lambda}_j(u)'\widehat{\Lambda}_j(u)}{S}\bigg)^{-1}\widehat{\lambda}_{js}(u), 
\end{align*}
where $\widehat{R}_{js}(u) := d_s -S^{-1}\sum_{g=1}^{S}d_g \widehat{\lambda}_{jg}(u)'
\big(
S^{-1}\widehat{\Lambda}_j(u)' \widehat{\Lambda}_j(u)
\big)^{-1} \widehat{\lambda}_{js}(u)$.
We construct an estimator of the asymptotic covariance matrix $\Sigma_t(u_1,u_2)$, denoted as $\widehat{\Sigma}_t(u_1,u_2)$, by its empirical counterpart. $\widehat{\Sigma}_t(u_1,u_2)$ is a $J \times J$ block matrix, whose
$(j,k)^{\text{th}}$ block is given by
\begin{align*}
	\bigg(\frac{1}{S}\sum_{s=1}^{S} \widehat{R}_{j,s}(u_1)^2\bigg)^{-1} 
	\bigg(\frac{1}{S}\sum_{s=1}^{S} \widehat{R}_{k,s}(u_2)^2\bigg)^{-1} 
	\frac{1}{S}
	\sum_{s=1}^{S}\widehat{R}_{j,s}(u_1)
	\widehat{R}_{k,s}(u_2)
	\widehat{\eta}_{jst}(u_1)\widehat{\eta}_{kst}(u_2).
\end{align*}

\begin{proposition}\label{clt_estimate}
Suppose that the conditions of Theorem~\ref{t6} hold. In addition, we assume that for any fixed $u_1,u_2 \in \mathcal{U}$ and $j,k=1,...,J$, for $t,l \in \{1,...,T\}$ and $s,g \in \{1,...,S\}$, 
$$\mathbb{E}\big[\eta_{jst}(u_1) \eta_{kgl}(u_2)\big|D_s,D_g,W_s,W_g, {\Lambda}_{j}(u_1), {F}_{j}(u_1),{\Lambda}_{k}(u_2), {F}_{k}(u_2)\big] = 0$$
if $s \neq g$ or $t \neq l$.
Then, for any given $u_1, u_2 \in \mathcal{U}$ and $t \geq T_0$, we have
(i) $\widehat{B}_t(u) \overset{p}{\rightarrow} B_t(u)$,
and 
(ii) $\widehat{\Sigma}_t(u_1,u_2) \overset{p}{\rightarrow} \Sigma_t(u_1,u_2)$
as $ S,T \rightarrow \infty$.
\end{proposition}

Proposition~\ref{clt_estimate} assumes that the idiosyncratic errors are uncorrelated across groups and over time,
after conditioning group-level regressors and interactive fixed effects. 
For the correlated idiosyncratic errors, the analytical expression for the consistent estimator is hard to derived.
\cite{bai2009panel} provides some conjectures for bias-correction and covariance estimators using the partial sample method together with the Newey-West procedure. 
Recently, \cite{yan2023bootstrap} propose a wide dependent bootstrap method to consistently estimate the asymptotic covariance matrix when both serial and cross-sectional dependences exist.

\section{Treatment Effects for Group-Level Policy}\label{sec:treatment}
Measuring the impact of policy interventions is a central interest in economic and social studies. Our proposed modeling framework is capable to serve this purpose. To see this, we consider the true data generating process under the potential outcome framework
of \cite{rubin1974estimating},
and provide identification results for several policy effect parameters. 

We consider a binary group-level policy
that is employed at a known time $T_{0}$
onward
with
$1 < T_{0} < T$.
Thus, the sample periods
can be divided into 
the before-period  
($t < T_{0}$)
and after-period ($t \geq T_0$).
Also, 
let
$d_{s} = 1$
if group $s$ is treated
after $T_{0}$
and
0 otherwise, which implies that $d_{st} = d_s 1\{t \geq T_0\}$.\footnote{
  Our analysis and application concentrate on a single policy change event, rather than staggered or sequential policy changes. The latter setup, requiring the causally interpretable estimates framework as discussed in   
  \cite{callaway2021difference, sun2021estimating, athey2022design} among others, fall outside the scope of our study.}
Let $y_{ist}^{1}$ and $y_{ist}^{0}$ denote the individual potential outcomes with and without exposure to the group-level policy ($d_{st} = 1$ and $d_{st} = 0$), respectively.
Then, the observed outcome is written as
\begin{eqnarray*}
	y_{ist} = (1-d_{st}) y_{ist}^{0} + d_{st} y_{ist}^{1}.
\end{eqnarray*}
Correspondingly, under treatment status $d_{st} = d \in \{0,1\}$,
the $u$\textsuperscript{th} conditional quantile of the
potential outcome $y_{ist}^{d}$
is given by
\begin{equation}
	\label{Q_std}
	Q_{st}^{d}(u|z_{ist}) \equiv Q_{y^d_{ist}|z_{ist},\alpha_{st}^d}(u)
	=
	z_{ist}' \alpha_{st}^{d}(u).
\end{equation}
Also, we suppose that the group-level treatment affects the potential conditional quantile through the potential (random) marginal effects $\alpha_{st}^{d}(u)$. That is,
we specify the $j^{\text{th}}$ element of $ \alpha_{st}^{d}(u)$ given the treatment status $d_{st} = d$ as
\begin{eqnarray}
	\label{eq:delta1}
	\alpha_{jst}^{d}(u)
	=  \Delta_{jst}(u)d +
	x_{st}^{\prime}\beta_j(u) +
	f_{jt}(u)' \lambda_{js}(u) +
	\eta_{jst}^d(u), 
	\quad j=1,...,J,
\end{eqnarray}
where
$\Delta_{jst}(u)$ represents the random 
policy effects. 
The correlation between $d_{st}$ and the factor component is unrestricted so that the selection into the treatment can be correlated with factor loadings $\lambda_{js}(u)$. Additionally, the implementation of the policy can be dependent on the aggregate shocks characterized by $f_{jt}(u)$.
Below, we first introduce three treatment parameters-of-interest, and then propose the identification results.

\subsection{Treatment Parameters}

For a given probability level $u$ and individual-level regressor $z$,
the conditional quantile $Q_{st}^d(u|z)$ is a random variable depending on the policy status and additional group-level characteristics.
\cite{Arellano2016} propose an average of conditional quantile treatment effect
to measure treatment effects in non-linear response models.
Extending their idea,
we consider the average quantile treatment effect on the treated (AQTT) at time $t \ge T_{0}$,
defined as:
\begin{eqnarray*}
	\Delta_{t}^{AQTT}(u|z)
	:=
	\E[
	Q_{st}^{1}(u|z)
	-
	Q_{st}^{0}(u|z)
	| d_{s} = 1
	].
\end{eqnarray*}
AQTT can be considered as an extension of the treatment effect measure on unconditional quantiles,
which is employed in numerous empirical studies \citep[for example, see][]{lee1999wage,angrist2004does,bitler2006mean}.
The extensive body of research on quantile treatment effects, as illustrated by 
\cite{callaway2019quantile,wuthrich2020comparison},
typically measures treatment effects as the difference of the quantile functions of responses from the treated and untreated groups.
However, the AQTT approach
differs by accounting for the heterogeneity in the conditional quantile function 
$Q_{st}^d(u|z)$ across various groups and time periods.

As an alternative measure,
we consider 
spreads of conditional quantile functions
to quantify inequality within and between collections of individuals characterized by the individual-level
regressors.
Given the policy status $d \in \{0, 1\}$,
we fix individual-level regressors $z \in \mathcal{Z}$
and 
consider two probability levels of interest
$u_{1}, u_{2} \in (0,1)$
with $u_{2} > u_{1}$.
Then, a within-inequality measure
under the policy status $d_{st} = d$ at time $t \ge T_{0}$
is defined as
the spread of conditional quantiles:
$$
\Delta_{st}^{W, d}(u_{1}, u_{2}|z)
:=
Q_{st}^{d}(u_{2}|z)
-
Q_{st}^{d}(u_{1}|z).
$$
Similarly, 
we fix individual attributes 
$z_{1}, z_{2} \in \mathcal{Z}$
and a probability level $u$
to define a between-inequality measure
under the policy status $d$ at time $t \ge T_{0}$ as 
$$
\Delta_{st}^{B, d}(u |z_{1}, z_{2})
:=
Q_{st}^{d}(u|z_{2}) 
-
Q_{st}^{d}(u|z_{1}).
$$

Figure \ref{fig:inequality}
illustrates the within- and between-inequalities.
The within-inequality measures the
dispersion of the distribution of the outcome conditional on
individual characteristics $z$ by using two conditional
quantile functions,
whereas
the between-inequality measures the distance
between two conditional distributions at a certain probability level.

\begin{figure}[ht]
	\centering
	\caption{Inequality Measures}
	\begin{subfigure}[b]{0.45\textwidth}
		\centering
		\caption{Within-Inequality} 
		\includegraphics[width=0.99\linewidth]{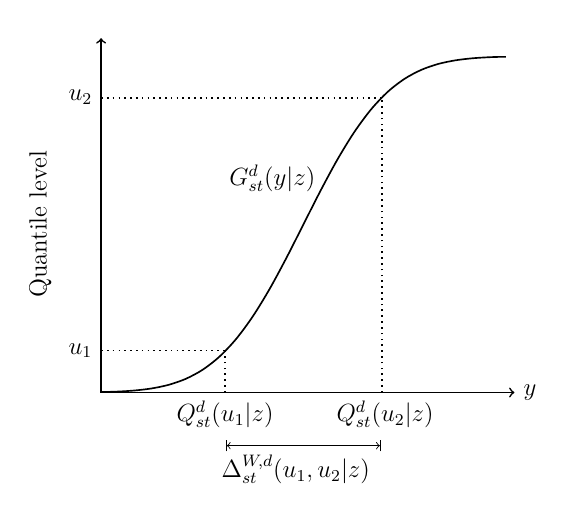}         
	\end{subfigure}
	~
	\begin{subfigure}[b]{0.45\textwidth}
		\centering
		\caption{Between-Inequality} 
		\includegraphics[width=0.99\linewidth]{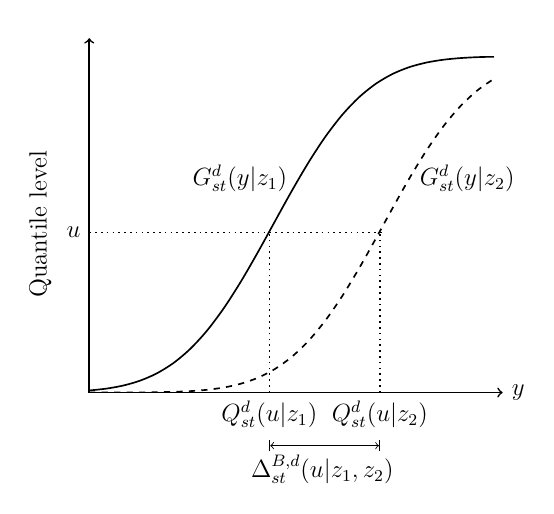} 
	\end{subfigure} \\
	\begin{minipage}{0.95\linewidth}
		\footnotesize
		\textit{Notes:}
                Under the treatment status $d$,
		Panel (a) illustrates
		the within-inequality measure
		as
		the speared of two conditional quantile functions
		at
		quantile levels $u_{1}, u_{2}$,
                while  
		Panel (b) shows
		the between-inequality measure
		as a distance
		of 
		two distributions functions
		conditional on two distinct set of individual attributes $z_{1}, z_{2}$,
		given the same quantile level $u$.
	\end{minipage}
	\label{fig:inequality}
\end{figure}

Group-level policies can affect these inequality measures and their impact can be
quantified
as changes in the inequality measures at time $t$
averaged over treated groups:
\begin{eqnarray*}
	\dt{\Delta}_{t}^{W}(u_{1}, u_{2}|z)
	&:=\E \big [
	\Delta_{st}^{W, 1}(u_{1}, u_{2}|z)
	-
	\Delta_{st}^{W, 0}(u_{1}, u_{2}|z)\big|d_{s}=1 \big], \\
	\dt{\Delta}_{t}^{B}  (u |z_{1}, z_{2})
	&:=
	\E
	\big [
	\Delta_{st}^{B, 1}
	(u |z_{1}, z_{2})
	-
	\Delta_{st}^{B, 0}(u |z_{1}, z_{2})
	\big| d_{s}=1
	\big].
\end{eqnarray*}


\subsection{Identification}\label{identification}

We now exhibit conditions, under which model (\ref{Q_y})-(\ref{alpha_st}) on the observed outcome allows for the identification for
treatment parameters.
Under a similar setup,  
\cite{gobillon2016regional} prove the identification
of average policy effects using the mean regression model.
We make the following assumptions. Throughout the assumptions, we fixed a given $j=1,...,J$ and $u \in \mathcal{U}$, unless stated otherwise.

\begin{assumption}\label{a2.3}
	For all 
	$(s, t) \in \{1, \dots, S\} \times \{1, \dots, T\}$ and $j=1,...,J$,
	\begin{enumerate}[label=(\roman*)]
		\item
		$\E[\eta_{jst}^d(u)|d_{st}, X_{s}, \lambda_{js}(u), F_{j}(u)] =
		\E[\eta_{jst}^d(u)|X_{s}, \lambda_{js}(u), F_{j}(u)] = 0$ for $d=0,1$.
		\item 
		$\E[\Delta_{jst}(u)|d_s=1,X_s] = \E[\Delta_{jst}(u)|d_s=1]$.
	\end{enumerate}
	
\end{assumption}

Assumption~\ref{a2.3}.(i) requires that the error term for the potential outcome is mean-zero and mean-independent of the treatment status, conditional on group-level observed and unobserved variables. 
Assumption~\ref{a2.3}.(i) implies a type of parallel-trend assumption. For clarification, we first note that the assumption implies that, for $t \geq T_0$, we have
$\E[\eta^0_{jst}(u) - \eta^0_{js,T_0-1}(u)|d_{s}=1,X_{s}, \lambda_{js}(u), F_{j}(u)]
= \E[\eta^0_{jst}(u) - \eta^0_{js,T_0-1}(u)|d_s=0,X_{s}, \lambda_{js}(u), F_{j}(u)]$. By the iterative law of expectation, this immediately leads to $
\E[\alpha_{jst}^0(u|z) - \alpha_{js,T_0-1}^0(u|z)|d_s = 1] = \E[\alpha_{jst}^0(u|z) - \alpha_{js,T_0-1}^0(u|z)|d_s = 0]
$. Essentially, it states that, without treatment, the changes in the marginal effects of any variable $j$ don't depend on whether the individual belongs to a treated group or not on average. Taking the analysis of the minimum wage policy as an example, this assumption suggests that the changes in the marginal effect of gender (or race, education, etc,) over time are the same for individuals in the treated and control industries, supposing that the policy is not introduced.

Assumption~\ref{a2.3}.(ii) is a technical assumption that is also considered by \cite{gobillon2016regional}. It is imposed to fulfil the technical requirement that $\E\big[\Delta_{jst}(u) - \E[\Delta_{jst}|d_s=1] \big| d_{st},x_{st}\big] = 0$. It assumes that the random policy effects are mean-independent of the group-level covariates within the treated group. We note that the same as the case of \cite{gobillon2016regional}, this assumption is stronger than necessary. However, as the empirical study in this paper is in the absence of $x_{st}$, generalization of this assumption is out of the scope of the current paper. One may generalize the model by interacting covariates with the treatment indicator, and this would substantially weaken this condition \citep{caetano2022difference}.

In this paper, we do not impose the standard yet restrictive rank invariance \citep{imbens2007nonadditive}, or less restrictive rank similarity \citep{chernozhukov2005iv} assumptions on the individual unobserved heterogeneities, which requires an individual's rank in the potential outcome distribution to be the same or has the same probability distribution across treatment status. Instead, our identification results directly rely on the quantile specifications of the potential outcome. However, we do note that, if the rank preservation assumption holds up, AQTT can be additionally interpreted as individual causal effect for (the same) individual at $u^{\text{th}}$ quantile before and after treatment, and quantile treatment effect parameters can be identified accordingly.

The theorem below shows that we can identify the time-varying distributional impact of a group-level policy using model (\ref{Q_y})-(\ref{alpha_st}).

\begin{theorem}
	\label{theorem:identification}
	Suppose that Assumptions~\ref{a2.1}-\ref{a2.2} and \ref{a2.3} hold. Then, for a given $t \geq T_0$,
        and $(u, z) \in \mathcal{U} \times \mathcal{Z}$, we have
	\begin{eqnarray*}
		\Delta_{t}^{AQTT}(u|z)
		=
		z'\delta_{\cdot t}(u),
	\end{eqnarray*}
        and, for $u_{1}, u_{2} \in \mathcal{U}$ and $z_{1}, z_{2} \in \mathcal{Z}$,
	\begin{eqnarray*}
		\dt{\Delta}_{t}^{B}(u| z_{1}, z_{2})
		= 
		(z_{2} - z_{1})'\delta_{\cdot t}(u)
		\ \ \ \ \ \mathrm{and} \ \ \  \ \
		\dt{\Delta}_{t}^{W}(u_{1}, u_{2}| z)
		= 
		z'\big (\delta_{\cdot t}(u_{2}) - \delta_{\cdot t}(u_{1}) \big).
	\end{eqnarray*}
	Here, $\delta_{\cdot t}(u) := [\delta_{1t}(u),...,\delta_{Jt}(u)]'$,
	whose $j^{\text{th}}$ element $\delta_{jt} := \E[\Delta_{jst}(u) | d_s = 1]$
	can be identified   as
	$ 
	\delta_{jt}(u) = \E[d_{st}\Pi_{st}]^{-1}\E[\Pi_{st}(\alpha_{jst}(u) -  f_{jt}(u)'\lambda_{js}(u))]
	$
	with $\Pi_{st} := d_{st} - \E[d_{st}x_{st}']\E[x_{st}x_{st}']^{-1}x_{st} $.
\end{theorem}

The above result shows that we can identify the treatment effects of group-level policy which are allowed to vary according to individuals' observed and unobserved characteristics. Taking into account the interplay between a group-level policy and individuals' characteristics,
our framework can explicitly identify heterogeneous impacts of the policy across individuals sharing the same observed regressors $z$
and also the impact on the within- and between-inequalities among individuals. 
To simplify the proof, we treat factor and loadings as observed. When they are unobserved, the iterative estimation approach, proposed in Section~\ref{sec:estimation}, can be used to obtain their estimators. 

According to the identification Theorem~\ref{theorem:identification}, we define the estimators of the treatment parameters as $\widehat{{\Delta}}^{AQTT}_t(u) := z'\widehat{\delta}_{\cdot t}(u)$, 
$\widehat{\dt{\Delta}}_{t}{\hspace{-0.1cm}}^{B}(u| z_1,z_2) := (z_2-z_1)'\widehat{\delta}_{\cdot t}(u)$,
$\widehat{\dt{\Delta}}_{t}{\hspace{-0.1cm}}^{W}(u_{1}, u_{2}| z) := z'(\widehat{\delta}_{\cdot t}(u_2) - \widehat{\delta}_{\cdot t}(u_1))$. Then, we establish the CLT results for the estimators of the treatment parameters in the next theorem.

\begin{theorem}
\label{CLT_inequality}
	Under the conditions of Theorems~\ref{t6} and \ref{theorem:identification}, for any $u, u_1,\ u_2 \in \mathcal{U}$, $z,z_1,z_2 \in \mathcal{Z}$ and $t \geq T_0$, we have, as $S,T \to \infty$,
	\begin{align*}
		\sqrt{S}  \Big (    \widehat{{\Delta}}^{AQTT}_t(u|z)  - {\Delta}^{AQTT}_t(u|z)  \Big )
		&
		\overset{d}{\rightarrow} 
		\mathcal{N}  \Big (   z' B_t(u)  , z'\Sigma_t(u,u)z  \Big ), \\
		\sqrt{S}  \Big (
		\widehat{\dt{\Delta}}_{t}{\hspace{-0.1cm}}^{B}(u, | z_1,z_2)
		- \dt{\Delta}_{t}^{B}(u| z_1,z_2)
		\Big ) 
		&  \overset{d}{\rightarrow} 
		\mathcal{N}  \Big (
		\big(z_2-z_1\big)' B_t(u),
		\sigma_{B,t}^{2}(u, z_{1}, z_{2})
		\Big ),\\
		\sqrt{S}  \Big ( 
		\widehat{\dt{\Delta}}_{t}{\hspace{-0.1cm}}^{W}(u_{1}, u_{2}| z)
		- \dt{\Delta}_{t}^{W}(u_{1}, u_{2}| z)
		\Big )  
		&  \overset{d}{\rightarrow} 
		\mathcal{N}  \Big (
		z'\big( B_t(u_2) - B_t(u_1) \big), 
		\sigma_{W,t}^{2}(u_{1},u_{2}, z)
		\Big ),
	\end{align*}
	where 
	$B_t(u)$ is defined in Theorem~\ref{t6}, 
	$
	\sigma_{B,t}^{2}(u, z_{1}, z_{2})
	:=
	(z_2-z_1)' \Sigma_t(u,u)(z_2-z_1)
	$
	and
	$
	\sigma_{W,t}^{2}(u_{1},u_{2}, z)
	:=
	z'\big(\Sigma_t(u_1,u_1)-\Sigma_t(u_1,u_2)-\Sigma_t(u_2,u_1)+\Sigma_t(u_2,u_2)\big)z 
	$
	with 
	$\Sigma_t(u_1,u_2)$ in Assumption~\ref{ass.clt_delta}.
\end{theorem}

Recall that Proposition~\ref{clt_estimate} proposes the consistent estimators of the asymptotic bias and covariance.
Using this result with Theorem~\ref{CLT_inequality}, it is straightforward to construct the confidence intervals for the time-varying AQTT, changes in between- and within-inequality measures.


\section{Empirical Analysis}\label{sec:empirical}

\subsection{Background on Racial Income-Inequality}

Racial economic inequalities
have persisted in the United States over long periods.
Among these inequalities,
the income gap between black and white workers
is evident.
As in Figure \ref{fig:gap}, 
the income gap, measured by 
the average annual earnings, 
was around 
20-30\% for the last two decades,
whereas 
the gap significantly dropped
during the late 1960s and early 1970s.
The empirical literature has explored factors that
narrowed the racial income gap during those periods,
including federal anti-discrimination legislation \citep{smith1984affirmative} and improvements in education \citep{smith1977black,card1992school}.


\begin{figure}[ht]
	\centering
	\caption{White-Black Unadjusted Wage Gap in the Long Run}
	\includegraphics[width=0.7\textwidth]{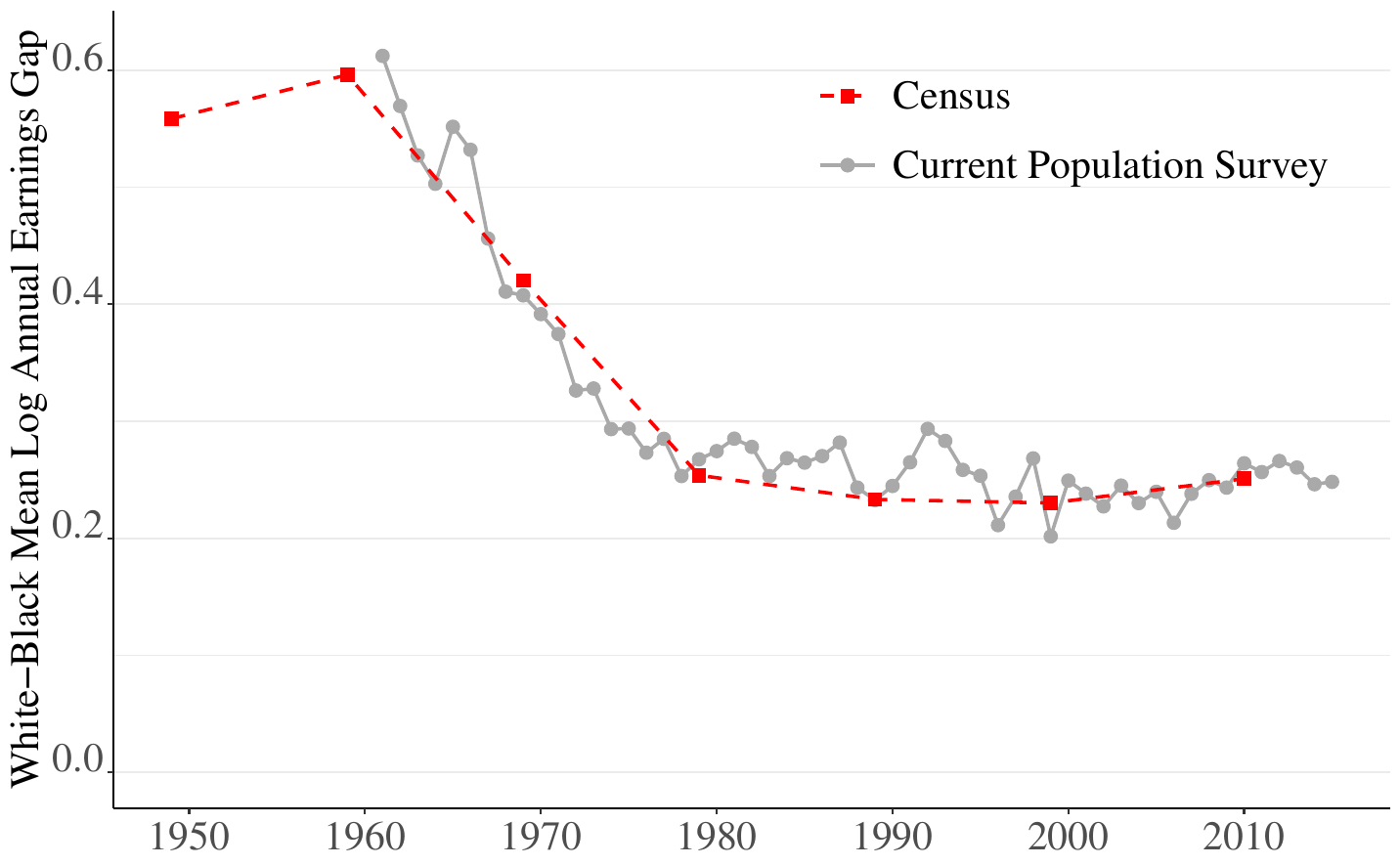}         
	\begin{minipage}{0.95\linewidth} 
		\footnotesize
		\textit{Notes:}  This figure is
		a replication of
		Figure 1 in \cite{derenoncourt2021minimum}.
		The data sources are 
		the Current Population Survey 1962--2016,
		U.S. Census from 1950 to 2000, and
		American Community Survey data in 2010 and 2017.
		The sample includes black or white adults aged 25--65,
		who worked more than 13 weeks last year,
		worked three hours last week,
		do not live in group quarters,
		are not self-employed and  not unpaid family worker
		with no missing industry or occupation code.
	\end{minipage}  
	\label{fig:gap}   
\end{figure}  


Recently,
\cite{derenoncourt2021minimum} put forward a new explanation: the extension of the federal minimum wage to some industries.
The 1966 FLAS established a federal minimum wage (effective February 1967) in previously unregulated industries,
which employed about $20\%$ of the total workforce in the US and nearly a third of all black workers.
They 
evaluate the minimum wage policy effect on earnings, 
using a cross-industry difference-in-differences design,
in which
eight treated and eight control industries
were subject to the minimum wage
under the 1966 and 1938 FLSA, respectively. Additional information and background about the dataset are provided in the Online Supplement~\ref{sec:additional_empirical_cpt2}.

Using repeated cross-sections
of black and white workers aged between 25 and 55
for years 1961 and 1963--1980,
extracted from March CPS,\footnote{
	Since the March CPS of year $t$ contains information in calendar year $t-1$,
	the data source is the 1962, 1964--1981 March CPS.
	The 1963 March CPS is excluded due to the lack of observations and missing demographic information.
}
they
estimate the following two-way fixed effects mean regression model:
\begin{eqnarray}
	\label{eq:meanDID}
	y_{ist} = \alpha + \delta_{t} d_{st} + {z}_{ist}'{\beta} + \phi_{s} + \nu_{t} + \eta_{ist},
\end{eqnarray}
for worker $i$ in industry $s=1, \dots, 16$ and time
$t = 1961,1963 \dots, 1980$.
Here, 
$y_{ist}$ is the log annual earnings deflated by annual CPI-U-RS (\$2017)\footnote{
	Using March CPS data in the 1960s and early 1970s, we only directly
	observe annual earnings, but not hourly wages, whereas 
	the CPS contains more detailed individual worker–level information 
	the Bureau of Labour Statistics data.
	See Section III.B.~of \cite{derenoncourt2021minimum}.
},
and 
$d_{st}$ denotes a dummy variable 
taking 1 if
industrial sector $s$ is subject to 
the federal minimum wage,
and 0 otherwise.
Also, $z_{ist}$ is a vector of worker's characteristics, and 
the unobserved random variables consist of
industry fixed effect $\phi_{s}$, time fixed effect $\nu_{t}$
and 
an idiosyncratic error $\eta_{ist}$.
The parameter of interest is 
$\delta_{t}$, which measures dynamic policy effects.

Their result shows that, 
after controlling for individual characteristics, the average wage of workers in the newly covered industries is around $5\%$ higher relative to that in control industries in 1967--1980 compared with the pre-period 1961--1966,
and the effect of minimum wage reform on workers' log-earning is more than twice as large for black workers as that for white workers on average.
In addition to the above regression,
they present several regression results to uncover intricate facets of the effects of minimum wage by taking various variables as the dependent variable in
(\ref{eq:meanDID}), including log annual wage or its unconditional quantiles, and also
selecting sub-samples based on workers' characteristics.

\subsection{Model and Practical Implementation}\label{sec:implementation}

In this paper, we analyze time-varying policy effects
from 1967 to 1980 at quantile $u \in \{0.1, 0.3, 0.5,  0.7, 0.9\}$. 

	We use the same dataset as \cite{derenoncourt2021minimum}. However, we note that as the Forestry and Fishing industry only contains 35 individual observations per year on average, which could lead to large first-step quantile estimation error, we remove it from the treated industries. Thus, there are seven treated industries and eight controlled industries in the following study. We provide an elementary explanatory data analysis in Appendix~\ref{sec:empirical_summary_stat}.
Given the model of \cite{derenoncourt2021minimum},
we consider the following $u^{\mathrm{th}}$ quantile regression model
in (\ref{Q_y})
with 
$\alpha_{st}(u) = [\alpha_{1st}(u), \dots, \alpha_{Jst}(u)]'$,
where,
for $j=1, \dots, J$, 
\begin{eqnarray}
	\label{quantileDiD.eq2}
	\alpha_{jst}(u)
	= \beta_{j}(u) + 
	\delta_{jt}(u)    d_{st}
	+
	{f}_{jt}(u)' {\lambda}_{js}(u) +
	\eta_{jst}(u). 
\end{eqnarray}
Here,
a set of coefficients
$\{\delta_{jt}(u)\}_{t=1967}^{1980}$ 
measures the time-varying policy effect.
Model (\ref{quantileDiD.eq2}) does not involve any individual-level covariates since such information is not available in the March CPS dataset. We incorporate interactive fixed effects into the model to account for the industrial and temporal heterogeneities. The industry-specific loadings can be interpreted as latent industrial factors.

For simplicity of interpretation,
we treat some of the original covariates as
ordered variables rather than dummy variables for categories.
More precisely,
$z_{ist}$ includes a constant $1$,
dummy variables for race (white/black), gender (male/female)
and work type (full-time/part-time), and ordered variables
including years of schooling, experience, experience squared, the number of weeks worked in a year, and the number of hours worked in a week.\footnote{
	\cite{derenoncourt2021minimum} use dummy variables
	to
	control for
	the number of weeks worked in a year and
	the number of hours worked in a week,  
	because hourly wage is not available in the CPS data
	during the periods of interest.}
This selection yields a very similar result to that
of the mean-regression in (\ref{eq:meanDID}) as
the one in \cite{derenoncourt2021minimum}. See Figure~\ref{fig:pre_analysis}. In this empirical study, we are particularly interested in the heterogeneous policy-effects varies between gender and race.

\begin{figure}[!htb]
	\centering
	\caption{Estimation Results of Mean Regression}\label{fig:pre_analysis}
	\includegraphics[width=0.7\textwidth]{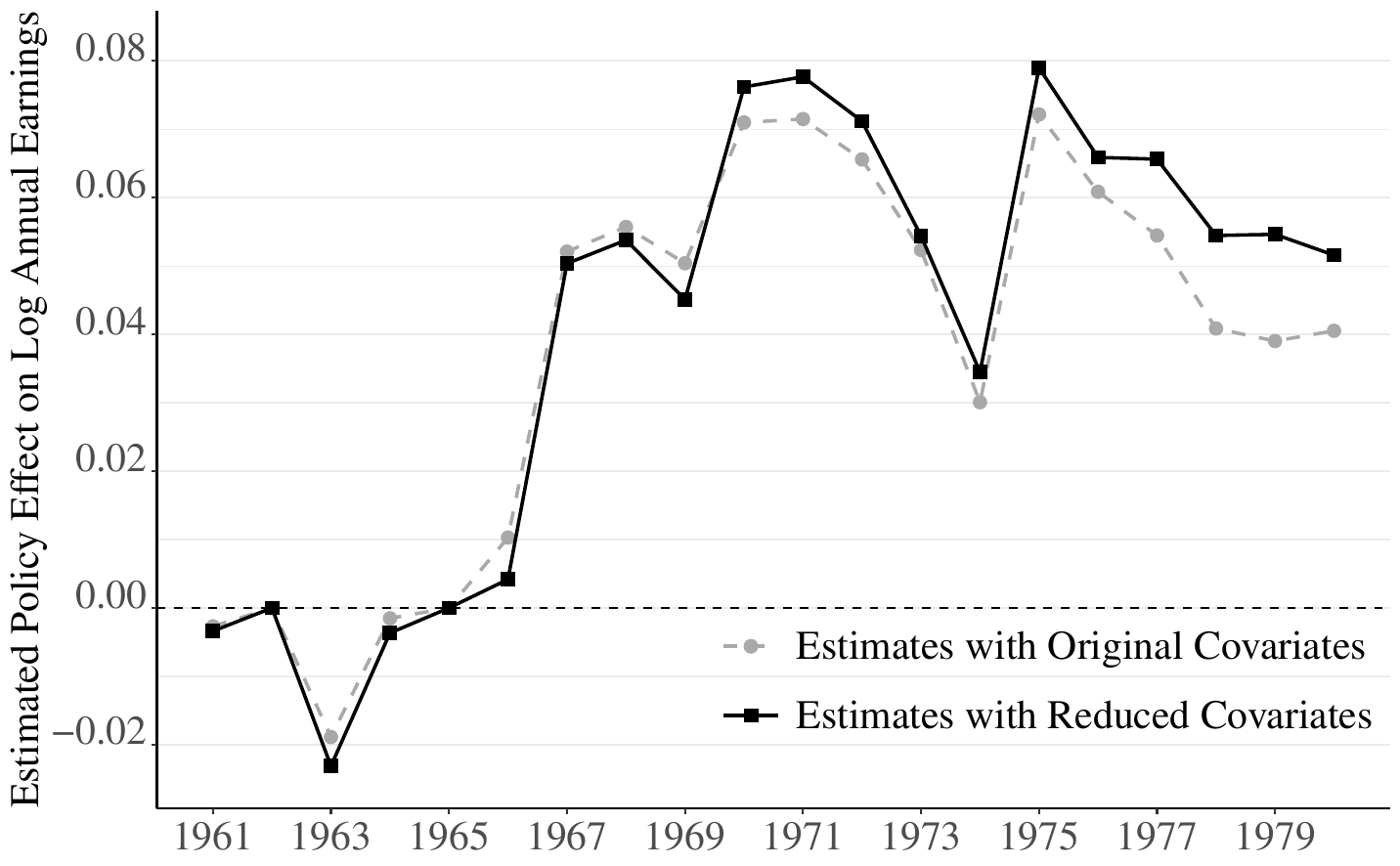}
	\begin{minipage}{0.95\linewidth}
		\footnotesize
		\textit{Notes:}
		We plot the estimates of time-varying policy effect $\delta_t$ in Model (\ref{eq:meanDID}) given the original set of controlled variables (in dashed grey line) and the estimates given the reduced set of covariates that used in our empirical analysis (in solid black line).
	\end{minipage}
\end{figure}

\subsection{Results Analysis}

In Figure \ref{fig:marginal}, we present time-varying policy effects $\delta_{jt}(u)$ in (\ref{quantileDiD.eq2}) with $95\%$ confidence intervals for the categorical individual-level covariates. The policy effects for the continuous covariates: education and experience are insignificant across quantiles. That said, after taking the differentiation between gender and race into account, the policy effect is insignificant across the levels of skill (measured by education and experience). For presentation conciseness, those figures are omitted. 

Panel (a) reports the effects on the intercept coefficients, which correspond to white, male, full-time workers, which are insignificantly different from zero for most of the estimates.
Panel (b) shows statistically significant positive policy effects for black workers in the majority of the years across all quantiles. Especially, the policy effects are most significant at the 0.1th conditional quantile, which is 15--20\% (0.15--0.2 log points).
Panel (c) presents the estimated policy effects on the female dummy's coefficient. 
The effects in the late 1970s are positive and significant up to 0.7th quantile with a magnitude of 5\%. 
However, caution is warranted in interpreting the results, which may be an integrated impact of the 1966 Fair Labour Standards Act and two pieces of important legislation that targeted labour market discrimination against women: the Equal Pay Act of 1963 and Title VII of the Civil Rights Act of 1964.\footnote{
	The Equal Pay Act of 1963 is a federal law that amends the Fair Labour Standards Act and prohibits wage disparity based on gender. Title VII of The Civil Rights Act of 1964 more broadly prohibits discrimination in employment on the basis of race, colour, religion, national origin, and gender.}
Although the gender gap of median wages for full-time, full-year workers was unchanged over the 1960s--1970s \cite[see][]{blau2017gender}, \cite{bailey2021changes} recently document sharp increases in women's wages relative to men's
below median during the 1960s. They underscore the importance of minimum wage policy and the laws to target gender-based workplace discrimination. Our result is consistent with their findings and further suggests long-term positive effects even at 0.5th and 0.7th quantiles, conditional on individual attributes.

\begin{figure}[!htb]
	\centering
	\caption{Time-Varying Policy Effect Estimates ($\delta_{jt}$)}
	\label{fig:marginal}
	\begin{subfigure}{\textwidth}
		\centering
		\caption{Intercept (Baseline: White, Male, Full-Time Workers) }
		\includegraphics[width=\textwidth]{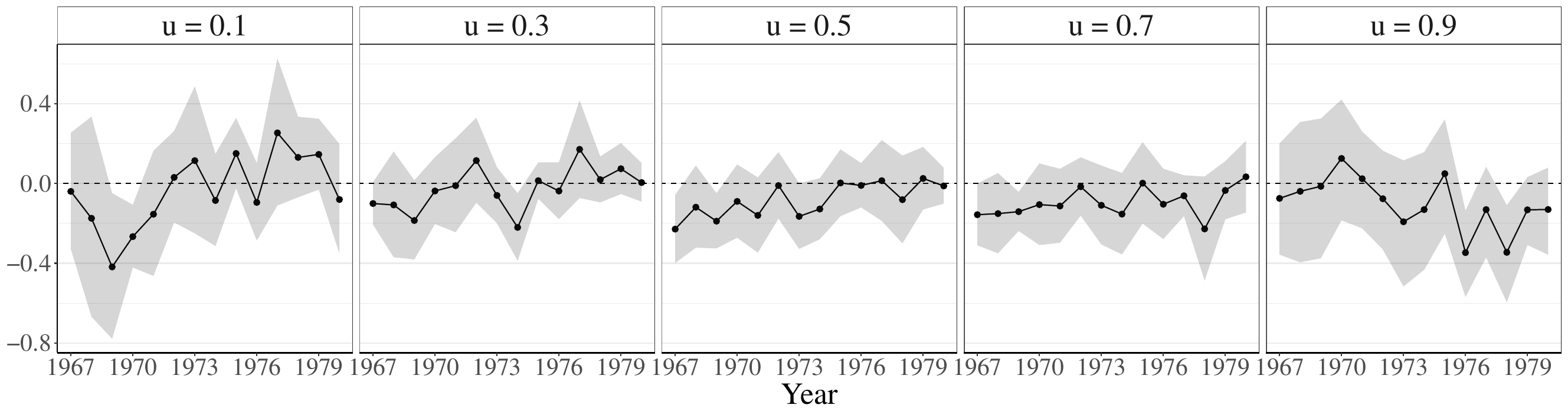} 
	\end{subfigure}\\
	\begin{subfigure}{\textwidth}
		\centering
		\caption{Black}
		\includegraphics[width=\textwidth]{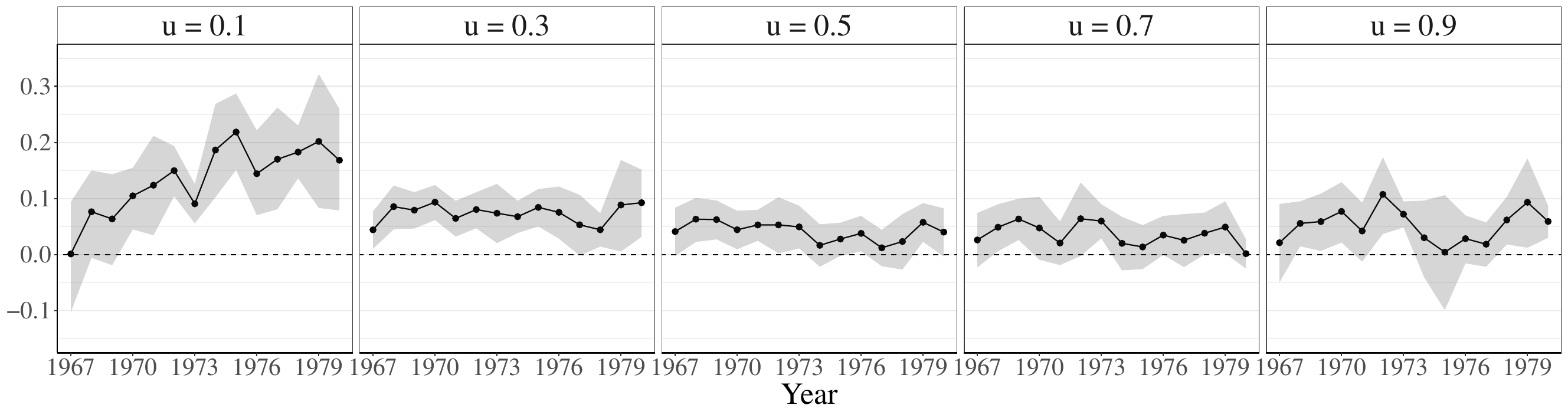} 
	\end{subfigure}\\ 
	\begin{subfigure}{\textwidth}
		\centering
		\caption{Female}
		\includegraphics[width=\textwidth]{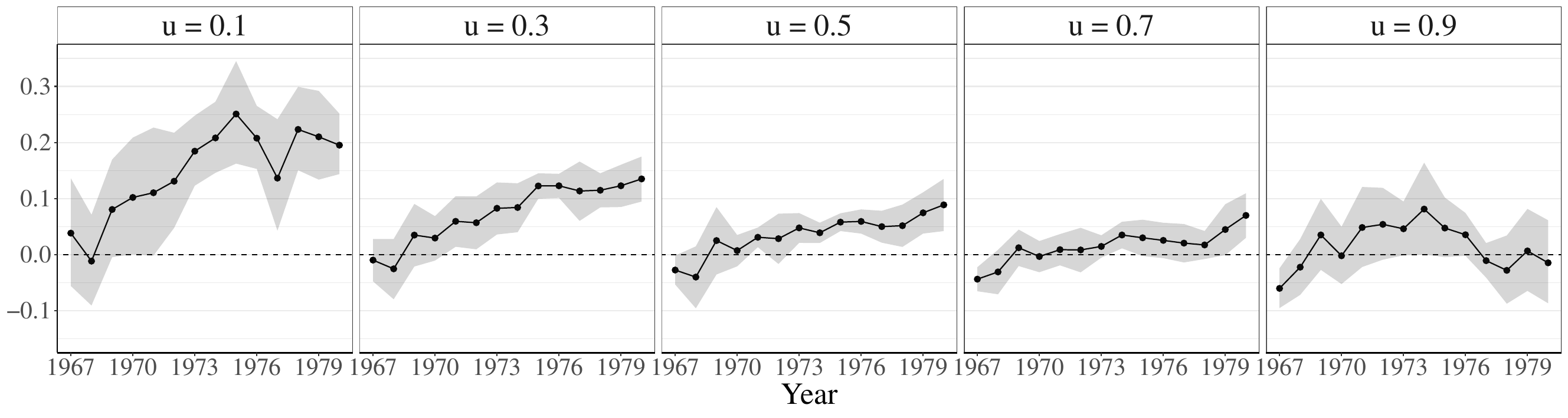}  
	\end{subfigure} 
	\begin{minipage}{0.95\linewidth}
		\footnotesize
		\textit{Notes:} Panels (a)-(c) present the estimated time-varying marginal policy effect
		$\delta_{jt}(u)$
		for $t=1967,\dots,1980$. From left to right, figures correspond to the estimates at quantiles $u = 0.1, 0.3, 0.5, 0.7, 0.9$. Point estimates are plotted with solid black lines, and the pointwise $95\%$ confidence interval are shown as grey shaded area.
	\end{minipage}
\end{figure}

In Figure \ref{fig:within},
we present estimated policy effects on changes in the within-inequality measure
$\dt{\Delta}_{t}^{W}(u_{1}, u_{2}|z)$
in the year $1980$.
For quantile pairs 
$(u_{1}, u_{2}) = (0.1, 0.9)$
or  
$(0.1, 0.5)$,
we measure how much the minimum wage policy changes
the conditional quantile spread.
In the following analysis, we consider the labour with an average skill level
(12 years of education and 20 years of experience),
while reporting
three pairs of
categorical individual attributes,
as shown in the horizontal axis.
The estimates suggest that the introduction of
minimum wage reduces the within-inequality, while the reduction is insignificantly different from 0 for all subpopulations. In addition, we also note that the same conclusion can be drawn when considering less-skilful or more-skilful labours within the three sub-populations considered in Figure~\ref{fig:within}, since the estimated $\delta_{jt}(u)$ associated with education and experience are insignificantly different from 0.


\begin{figure}[!htb]
	\centering
	\caption{Changes in Within-Inequality in 1980}
	\label{fig:within}
	\begin{subfigure}{0.495\textwidth}
		\centering
		\caption{$u_1=0.1,\ u_2=0.9$}
		\includegraphics[width=0.95\textwidth]{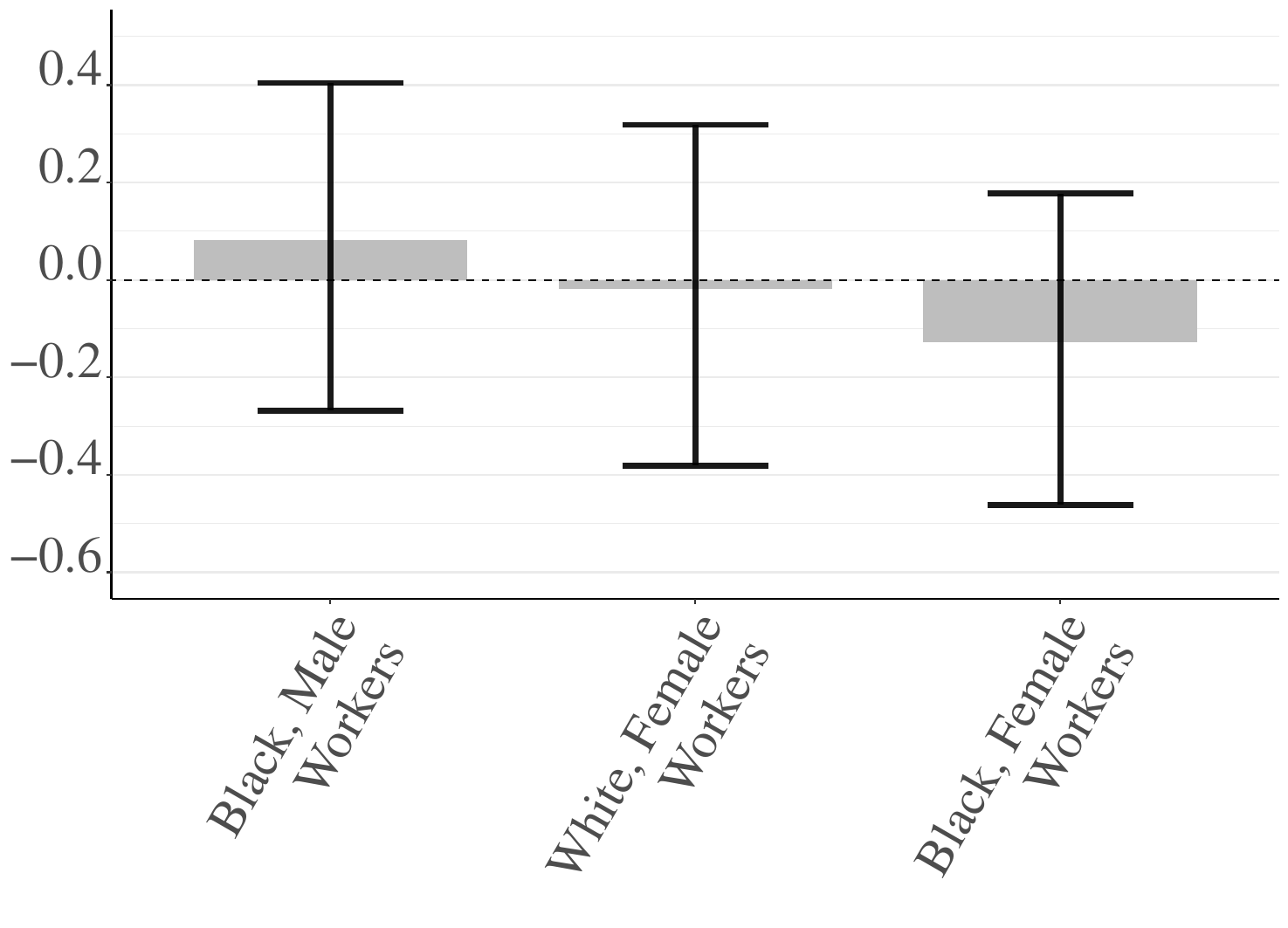} 
	\end{subfigure}
	\begin{subfigure}{0.495\textwidth}
		\centering
		\caption{$u_1=0.1,\ u_2=0.5$}
		\includegraphics[width=0.95\textwidth]{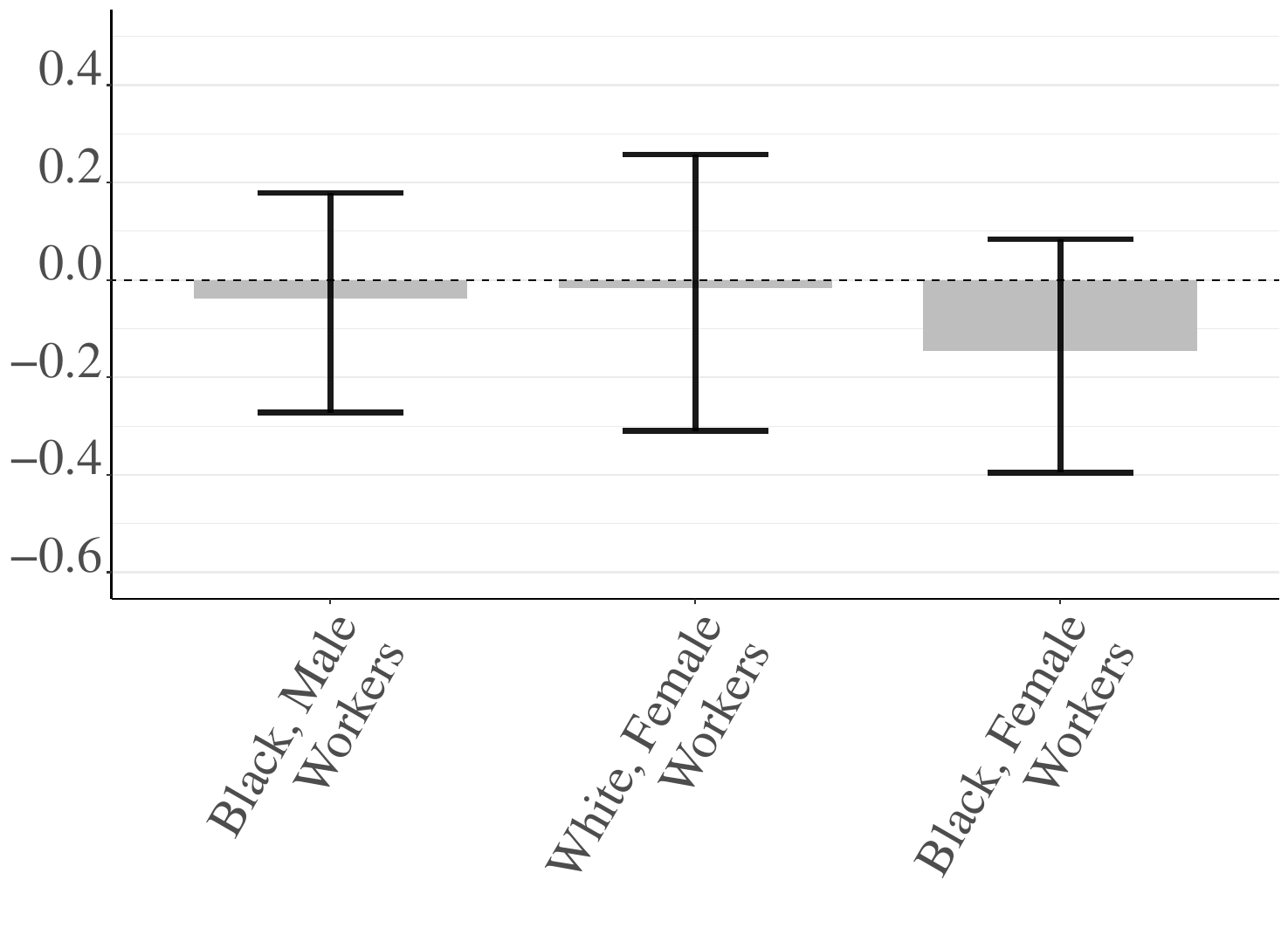} 
	\end{subfigure} \\
	\begin{minipage}{0.95\linewidth}
		\footnotesize
		\textit{Notes:}
		Panels (a)-(b) report
		estimated changes in the within-inequality
		$\dt{\Delta}_{t}^{W}(u_{1}, u_{2}|z)$
		among individuals with attributes $z$
		in year $t=1980$,
		with 
		$(u_{1}, u_{2}) = (0.1, 0.9)$
		in Panel (a)
		and 
		$(u_{1}, u_{2}) = (0.1, 0.5)$
		in Panel (b). 
		The estimates are presented by grey bars
		with the $95\%$ confidence intervals in black. 
		The horizontal axis shows the three subpopulations based on race and gender categories,
		while we fix
		12 years of education and 20 years of experience.

	\end{minipage}
\end{figure}

Figure \ref{fig:between} reports the policy effects on the changes in the between-inequality measure,
$\dot{\Delta}_{t}^{B}(u|z_{1}, z_{2})$,
for $t =1980$
and $u \in \{0.1, 0.3, 0.5, 0.7, 0.9\}$.
The baseline $z_{2}$ is fixed to include white, male workers
and $z_{1}$ changes over the three pairs of categorical attributes
as in Figure~\ref{fig:within}, 
while the remaining variables are the same in $z_{1}$ and $z_{2}$.\footnote{According to the identification result in Theorem \ref{theorem:identification},
$\dot{\Delta}_{t}^{B}(u|z_{1}, z_{2}) = (z_{2} - z_{1})'\delta(u)$, the common values in $z_{1}$ and  $z_{2}$ cancel each other and do not affect the conclusions.}
Panel (a) suggests significant negative impacts on the
between-inequality for black, male workers, compared to white, male workers, with magnitudes 5--20\% for all quantiles except the 0.7th conditional quantile. Panel (b) also shows reduction (0.05--0.20) in the between-inequality for white, female workers at the 0.7th conditional quantile and below.
Panels (c) plots results for female, black workers,
and the estimated changes in the between-inequality
range from -0.05 to -0.35 at the 0.7th conditional quantile and below.

\begin{figure}[!htb]
	\centering
	\caption{Changes in Between-Inequality in 1980 }  
	\label{fig:between}
	\begin{subfigure}{0.33\textwidth}
		\centering
		\caption{Black, Male Workers}
		\includegraphics[width=1\textwidth]{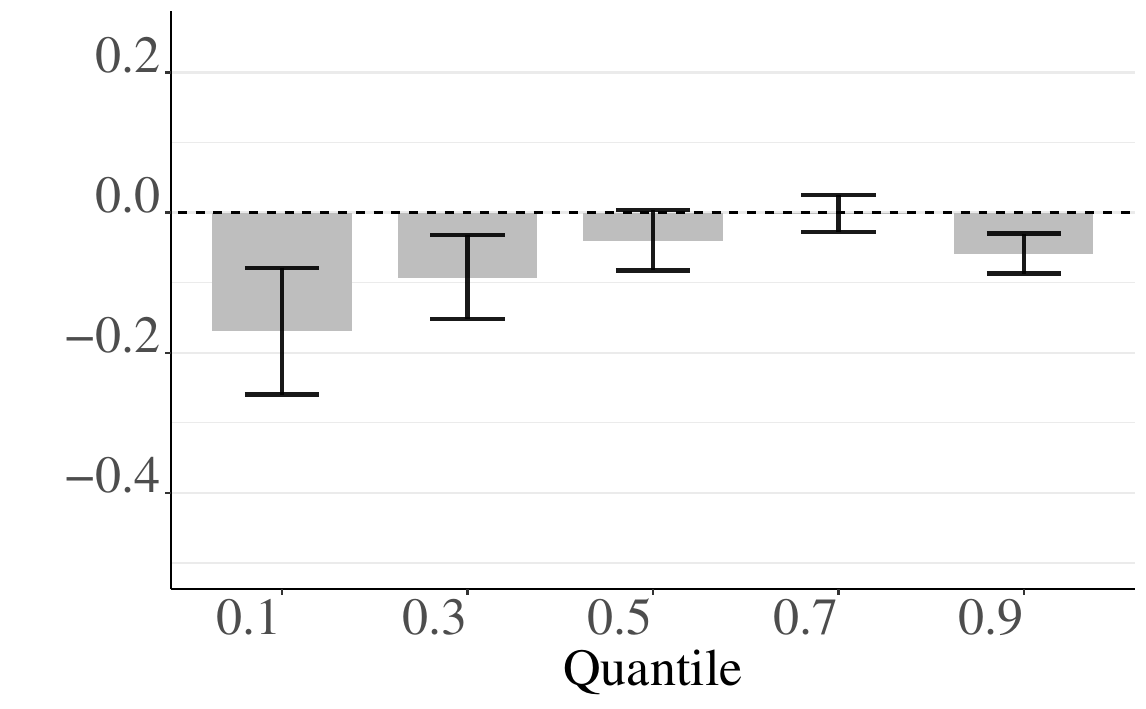} 
	\end{subfigure}%
	\begin{subfigure}{0.33\textwidth}
		\centering    
		\caption{White, Female Workers}
		\includegraphics[width=1\textwidth]{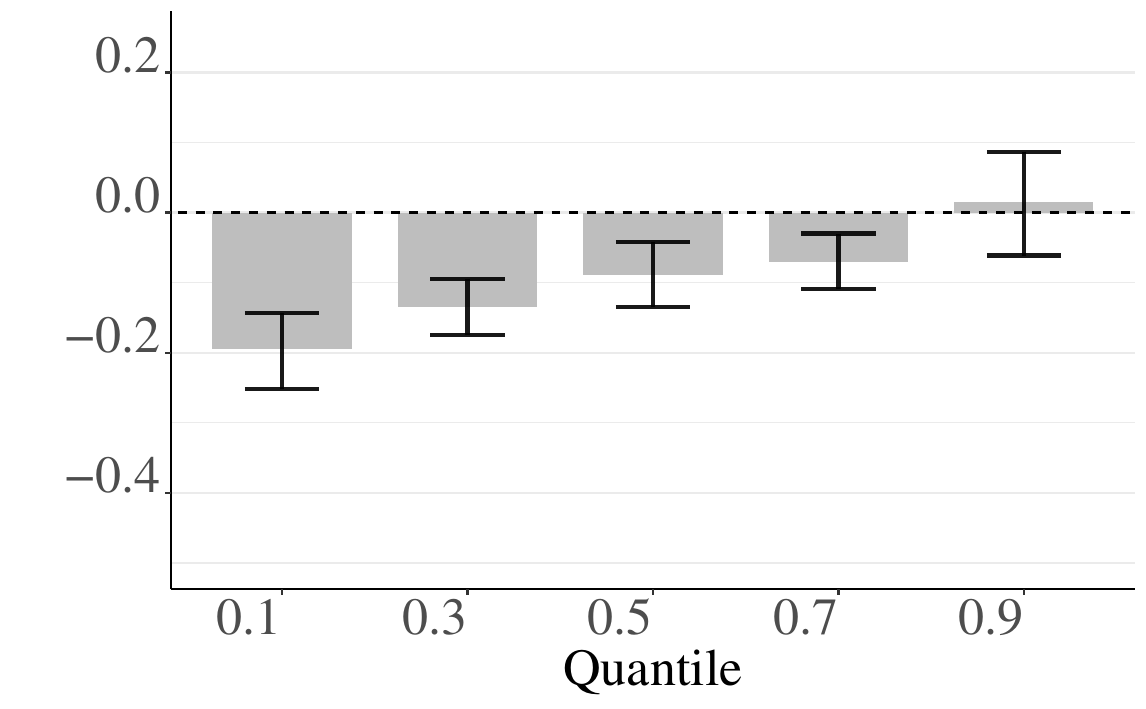} 
	\end{subfigure}%
	\begin{subfigure}{0.33\textwidth}
		\centering
		\caption{Black, Female Workers}
		\includegraphics[width=1\textwidth]{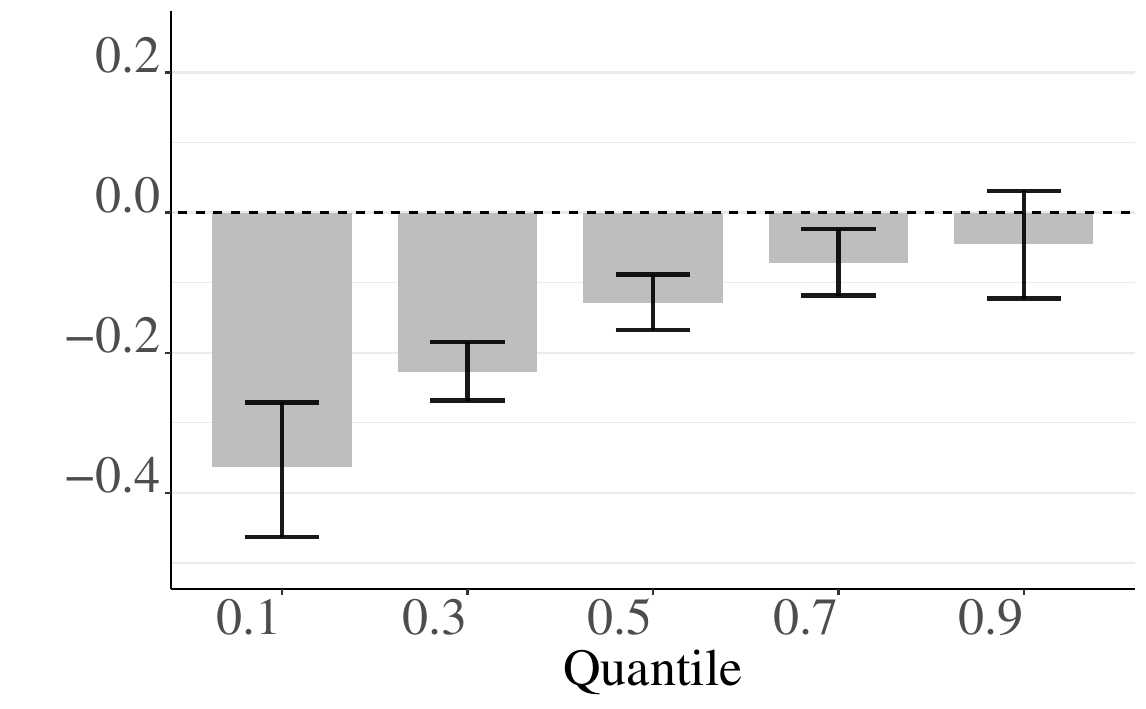} 
	\end{subfigure}
	\begin{minipage}{0.93\linewidth}
		\footnotesize
		\textit{Notes:} Panels (a)-(c) plot the changes in between-inequality ($\dt{\Delta}_{t}^{B}$) between multiple subpopulation groups and the base-level group: white male workers (with 12 years of education and 20 years of experience) while holding other individual covariates constant. The inequality measure are considered at quantiles $u = 0.1, 0.3, 0.5, 0.7, 0.9$ at year 1980. Point estimates are presented by grey bars with point-wise $95\%$ confidence intervals shown in black. 
	\end{minipage}
	
\end{figure}

To illustrate the robustness of the significant policy effects in reducing the between-equality, Figure~\ref{fig:between_alltime} plots the changes in between-inequality for black, female workers from 1967 to 1980. In quantiles up to medium, the magnitude of reduction in the between-inequality increases as time increases. In addition, such policy effects are significant after 1970. Similar patterns are also witnessed in the other two subpopulations presented in Figure~\ref{fig:between}. We choose not to report those plots for the concise of presentation.

\begin{figure}[!htb]
	\centering
	\caption{Changes in Between-Inequality for Black, Female Workers from 1967 to 1980}
	\label{fig:between_alltime}
	\begin{subfigure}{\textwidth}
		\centering
		\includegraphics[width=\textwidth]{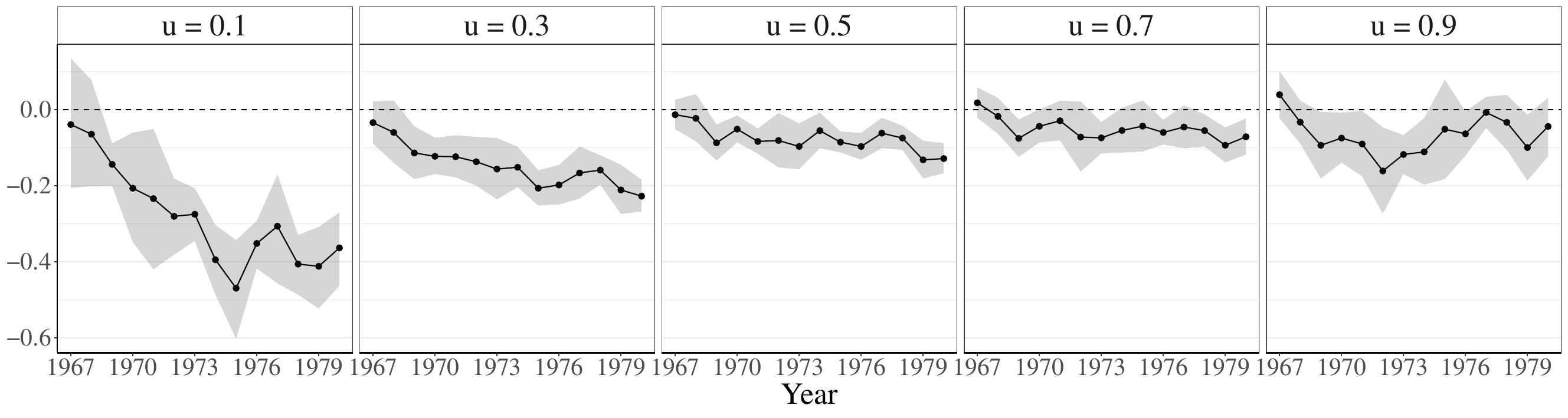} 
	\end{subfigure}
	\begin{minipage}{0.95\linewidth}
		\footnotesize
		\textit{Notes:} From left to right, figures correspond to the estimates at quantiles $u = 0.1, 0.3, 0.5, 0.7, 0.9$. Point estimates are plotted with solid black lines, and the pointwise $95\%$ confidence intervals are shown as grey shaded area.
	\end{minipage}
\end{figure}

Overall, the results above confirm the findings of \cite{derenoncourt2021minimum} that the reform was effective in improving the black economic status and reducing the racial income gap. In addition, we provide empirical evidence of a compounded impact of the policy effect in reducing the racial and gender income gap, which leads to a significant reduction in the between-inequality, at least up to the medium.


\section{Conclusion}\label{sec:conclusion}

In this paper, we introduce an estimation method for evaluating 
the effect of group-level policies
under the quantile random-coefficient regression framework with interactive fixed effects.
Our method can capture the heterogeneous policy effects through the interaction of policy variables and the individual observed and unobserved characteristics, while controlling the unobserved interactive fixed effects, and provides a straightforward way of identifying the policy effect on inequality measures. 
The consistency and limiting distribution of the proposed estimators are established.
Using our proposed model, we evaluate the effect of the minimum wage policy on earnings between 1967 and 1980 in the United States. Our analysis confirms the findings of \cite{derenoncourt2021minimum} that the policy helps reduce the racial income gap by improving the black economic status. On top of that, we provide empirical evidence of a compounded policy effect in narrowing the racial and gender gap, which contributes to the significant reduction in the between-inequality.

\section*{Acknowledgments}
This paper benefited greatly from our discussions with Dukpa Kim.
We are also grateful for comments
from 
Martin Huber,
Rustam Ibragimov,
Artem Prokhorov,
and participants 
at 
the Center for Econometrics and Business Analytics (CEBA) talk
and seminars.
We would also like to thank Ellora Derenoncourt and Claire Montialoux  for sharing the data and code for the empirical application.
Gao and Oka gratefully acknowledge financial support from the Australian Research Council Discovery Programs Scheme under Grant Numbers: DP200102769 and DP190101152, respectively, and Whang thanks financial support from the Korea Bureau of Economic Research and Innovation at the Institute of Economic Research in Seoul National University under Grant 0405-20220046. All errors are our own.

\section*{Declaration of Interest Statement}

The authors would like to declare that there is no conflict of interest.

\newpage
\appendix
\section*{Online Supplementary  Materials}

\setcounter{page}{1}
\setcounter{table}{1}
\setcounter{figure}{1}
\setcounter{assumption}{1}
\setcounter{equation}{1}
\setcounter{proposition}{1}

\renewcommand{\thepage}{S-\arabic{page}}
\renewcommand{\thesection}{S.\arabic{section}}
\renewcommand{\theassumption}{S.\arabic{assumption}} 
\renewcommand{\theproposition}{S.\arabic{proposition}} 
\numberwithin{proposition}{section} 
\renewcommand{\thelemma}{S.\arabic{lemma}}
\numberwithin{lemma}{section} 
\renewcommand{\thetheorem}{S.\arabic{theorem}}
\renewcommand{\theequation}{S.\arabic{equation}} 
\renewcommand{\thetable}{S.\arabic{table}} 
\renewcommand{\thefigure}{S.\arabic{figure}}

		This material consists of five sections. Section~\ref{sec:sim} examines the finite sample properties of the estimator
		through Monte Carlo simulation. 
		Section~\ref{sec:additional_empirical_cpt2} provides additional empirical analysis on the 1966 Fair Labour Standard Act. Section~\ref{sec:tech_assumptions} covers the technical assumptions required for the asymptotic theories.
		Section~\ref{sec:proof} establishes the proofs of the theoretical results given in the main text. Some theoretical results required in Section~\ref{sec:proof} are then provided in Section~\ref{online.theorem}, while Section~\ref{B.lemma} presents the preliminary lemmas and the proofs that are required in Section~\ref{online.theorem}.

		\section{Simulation Study}\label{sec:sim}
		
		In this section, we investigate the accuracy of both the point estimators of the policy parameter $\delta_{jt}(u)$ and the treatment parameters $\Delta_{t}^{AQTT}(u|z)$, $\dt{\Delta}_{t}^{B}(u| z_{1}, z_{2})$ and $\dt{\Delta}_{t}^{W}(u_{1}, u_{2}| z)$ through Monte Carlo simulations. We consider two data-generating processes (DGP). DGP I is a simplified scenario of our model, similar to the settings of \cite{chetverikov2016iv}. Using this setting, we illustrate the fast convergence of our proposed estimation method by documenting the estimation bias after given iteration steps. 
		In addition, we consider a location-scale model as the DGP II, in which case the expressions for the treatment parameters can be derived. We investigate the estimation accuracy of the treatment parameters in DGP II. For both DGPs, we separately consider two scenarios, where the unobserved factors are either correlated or uncorrelated with the group-level regressors. 
		
		
		\subsection{Data Generating Process I}\label{sec:dgp1}
		In this section, we investigate the convergence rate of our proposed two-step iterative estimation approach.
		We generate data according to the following simplified model, similar to \cite{chetverikov2016iv}. That is, for $i = 1,...,N,\ s=1,...,S$, and $t=1,...,T$,
		\begin{align*}
			& y_{ist} = \alpha_{st0}(u_{ist}) +
			z_{ist}\alpha_{st1}(u_{ist}) + \Phi^{-1}(u_{ist}) ,
			\\
			& \alpha_{st0}(u) =  \delta_0(u) + d_{st} \delta_t(u) + x_{st} \beta(u) +  f_{t}(u)'\lambda_{s}(u)+
			\eta_{st}(u), 
			\quad \alpha_{st1}(u) = 2+0.3u,\\
			& 
			\delta_0(u) = 2 + \frac{u^2}{4},
			\quad 
			\delta_t(u) = 2 + \frac{t}{2T}+ \frac{u^2}{4} ,
			\quad
			\beta(u) = 3+ \frac{u^2}{4},
			\quad 
			\eta_{st}(u) = (\xi_{st} - 0.5)u,
		\end{align*}
		where $\Phi(\cdot)$ is the cdf of $N(0,1)$,$\{u_{ist}\}$ and $\{\xi_{st}\}$ are i.i.d $\text{U}(0,1)$, $\{z_{ist}\}$ are i.i.d $\text{U}(0,1)$, $x_{st}$ are i.i.d $N(0,1)$, $(f_{t1},f_{t2})$ are generated orthogonally via the SVD of a $T \times T$ random matrix whose entries are i.i.d $N(0,1)$, and $(\lambda_{s1},\lambda_{s2})$ are generated from i.i.d ${\rm U}(0,2)$. The policy dummy variable $d_{st} = 1\{t \geq T/4\} \times 1\{s \geq S/4\} $, that is, we fixed the first one quarter of the cross-sections as the control groups, and the policy is implemented at $T_0 = T/4$.
		We consider the following two scenarios in terms of whether exists endogeneity in the group-level:
		\begin{enumerate}
			\item group-level observables $x_{st}$ are i.i.d $N(0,1)$.
			\item group-level observables $x_{st} =\zeta_{st} + 0.02 f_{t1}^2 + 0.02 \lambda_{s1}^2$, where $\{\zeta_{st}\}$ are i.i.d $N(0,1)$. This setting allows moderate endogeneity at group-level.
		\end{enumerate}

		We investigate the accuracy of the point estimators of the policy parameters $\delta_{T}(u)$ for both scenarios. The sample bias and standard deviation at finite iteration steps $m=2,5$ and after convergence criterion satisfied are reported in Table~\ref{table1}. 
		The simulation shows a few nice properties, which can be summarized as follows. First of all, the recursive estimator converges quite fast. In all cases, the mean bias and standard deviation of the coefficient estimation at the second iteration are reasonably small. In addition, the estimators remain valid even if the number of observations per group ($N_{st}$) is relatively small compared to the total number of groups and time ($S \times T$), which is a particularly attractive property in practice. 
		
		\begin{table}[!htbp]
			\centering
			\caption{
				Sample Bias and Standard Deviation of estimates of the Policy Parameters \label{table1}}
			\tabcolsep 0.10in
			\resizebox{\textwidth}{!}{%
				\begin{tabular}{@{} l l l l r  r r   r  r r  @{}} 
					\toprule
					&& & & \multicolumn{3}{c}{Scenario 1} & \multicolumn{3}{c}{Scenario 2}  \\
					\cmidrule(r){5-7}   \cmidrule{8-10}
					$N$ & $S$ & $T$ & $u$ & {$\widehat{\delta}^{(2)}_T(u)$}  & {$\widehat{\delta}^{(5)}_T(u)$} & {$\widehat{\delta}_T(u)$}  & {$\widehat{\delta}^{(2)}_T(u)$}  & {$\widehat{\delta}^{(5)}_T(u)$} & {$\widehat{\delta}_T(u)$} \\
					\midrule                
					1000	&	20	&	20	&	0.1	&$	-0.001	\ (	0.059	)$&$	-0.001	\ (	0.057	)$&$	-0.001	\ (	0.059	)$&$	-0.001	\ (	0.060	)$&$	-0.001	\ (	0.057	)$&$	-0.001	\ (	0.059	)$\\				
					&		&		&	0.5	&$	0.001	\ (	0.103	)$&$	0.001	\ (	0.086	)$&$	0.001	\ (	0.094	)$&$	0.001	\ (	0.103	)$&$	0.001	\ (	0.086	)$&$	0.001	\ (	0.095	)$\\				
					&		&		&	0.9	&$	-0.003	\ (	0.159	)$&$	-0.003	\ (	0.145	)$&$	-0.003	\ (	0.150	)$&$	-0.003	\ (	0.160	)$&$	-0.003	\ (	0.145	)$&$	-0.003	\ (	0.150	)$\\				
					&	40	&	40	&	0.1	&$	0.001	\ (	0.038	)$&$	0.001	\ (	0.035	)$&$	0.001	\ (	0.035	)$&$	0.001	\ (	0.038	)$&$	0.001	\ (	0.035	)$&$	0.001	\ (	0.035	)$\\				
					&		&		&	0.5	&$	0.001	\ (	0.078	)$&$	0.001	\ (	0.063	)$&$	0.001	\ (	0.055	)$&$	0.001	\ (	0.079	)$&$	0.001	\ (	0.063	)$&$	0.001	\ (	0.055	)$\\				
					&		&		&	0.9	&$	0.001	\ (	0.112	)$&$	0.001	\ (	0.096	)$&$	0.001	\ (	0.091	)$&$	0.001	\ (	0.112	)$&$	0.001	\ (	0.096	)$&$	0.001	\ (	0.091	)$\\				
					&	60	&	60	&	0.1	&$	-0.001	\ (	0.030	)$&$	-0.001	\ (	0.028	)$&$	-0.001	\ (	0.028	)$&$	-0.001	\ (	0.030	)$&$	-0.001	\ (	0.028	)$&$	-0.001	\ (	0.028	)$\\				
					&		&		&	0.5	&$	0.001	\ (	0.075	)$&$	0.001	\ (	0.055	)$&$	0.001	\ (	0.043	)$&$	0.001	\ (	0.075	)$&$	0.001	\ (	0.055	)$&$	0.001	\ (	0.043	)$\\				
					&		&		&	0.9	&$	0.001	\ (	0.093	)$&$	0.001	\ (	0.075	)$&$	0.001	\ (	0.071	)$&$	0.001	\ (	0.093	)$&$	0.001	\ (	0.075	)$&$	0.001	\ (	0.071	)$\\				
					\addlinespace[0.2cm]  																																			
					2000	&	20	&	20	&	0.1	&$	0.001	\ (	0.050	)$&$	0.001	\ (	0.045	)$&$	0.001	\ (	0.047	)$&$	0.001	\ (	0.050	)$&$	0.001	\ (	0.045	)$&$	0.001	\ (	0.047	)$\\				
					&		&		&	0.5	&$	-0.001	\ (	0.100	)$&$	0.001	\ (	0.082	)$&$	0.001	\ (	0.082	)$&$	-0.001	\ (	0.100	)$&$	0.001	\ (	0.082	)$&$	0.001	\ (	0.083	)$\\				
					&		&		&	0.9	&$	0.004	\ (	0.160	)$&$	0.004	\ (	0.144	)$&$	0.004	\ (	0.147	)$&$	0.004	\ (	0.160	)$&$	0.004	\ (	0.144	)$&$	0.004	\ (	0.147	)$\\				
					&	40	&	40	&	0.1	&$	0.001	\ (	0.036	)$&$	0.001	\ (	0.028	)$&$	0.001	\ (	0.026	)$&$	0.001	\ (	0.036	)$&$	0.001	\ (	0.028	)$&$	0.001	\ (	0.026	)$\\				
					&		&		&	0.5	&$	-0.001	\ (	0.072	)$&$	0.001	\ (	0.050	)$&$	0.001	\ (	0.049	)$&$	-0.001	\ (	0.072	)$&$	0.001	\ (	0.050	)$&$	0.001	\ (	0.049	)$\\				
					&		&		&	0.9	&$	0.001	\ (	0.121	)$&$	0.001	\ (	0.099	)$&$	0.001	\ (	0.085	)$&$	0.001	\ (	0.121	)$&$	0.001	\ (	0.099	)$&$	0.001	\ (	0.085	)$\\				
					&	60	&	60	&	0.1	&$	0.001	\ (	0.027	)$&$	0.001	\ (	0.022	)$&$	0.001	\ (	0.021	)$&$	0.001	\ (	0.027	)$&$	0.001	\ (	0.022	)$&$	0.001	\ (	0.021	)$\\				
					&		&		&	0.5	&$	0.001	\ (	0.066	)$&$	0.001	\ (	0.046	)$&$	0.001	\ (	0.039	)$&$	0.001	\ (	0.066	)$&$	0.001	\ (	0.047	)$&$	0.001	\ (	0.039	)$\\				
					&		&		&	0.9	&$	-0.001	\ (	0.099	)$&$	0.001	\ (	0.071	)$&$	0.001	\ (	0.068	)$&$	-0.001	\ (	0.099	)$&$	0.001	\ (	0.071	)$&$	0.001	\ (	0.068	)$\\				
					\bottomrule
			\end{tabular}}
			\begin{minipage}{1.0\linewidth}
				\footnotesize
				\textit{Notes:}
				The number of Monte-Carlo repetitions is 250. 
				We report bias averaged over 250 repetitions with
				the standard deviation in the parenthesis at finite iteration steps $m=2,5$ and after convergence. 
			\end{minipage}
			
		\end{table}

		
		\subsection{Data Generating Process II}\label{sec:dgp2}
		In this section, we focus on the estimation accuracy of the policy parameters and treatment parameters after convergence.
		We consider a location-scale model, whose conditional quantile is easy to derive. The setting of the factor structure is similar to that of \cite{chen2021quantile}. Specifically, we generate
		\begin{align*}
			& y_{ist} = \sum_{j=1}^{3}z_{ist(j)}\alpha_{1(j),st} + \sum_{j=1}^{3}z_{ist(j)}\alpha_{2(j),st}\epsilon_{ist},\\
			& \alpha_{1(j),st} 
			= 3 + d_{st}\delta_{1(j),t} + x_{st}'\beta_{1(j)}
			+ f_{1(j),t}'\lambda_{2(j),s} +  \eta_{1(j),st}\\
			&\alpha_{2(j),st} 
			= 0.5 + d_{st}\delta_{2(j),t} + x_{st}'\beta_{2(j)} 
			+ f_{2(j),t}'\lambda_{2(j),s}
			+ \eta_{2(j),st},
			\quad j=1,2,3,
		\end{align*}
		where 
		\begin{align*}
			& z_{ist(1)} \equiv 1,
			\quad 
			z_{ist(2)} \overset{iid}{\sim} Bern(0.6),
			\quad
			z_{ist(3)} \overset{iid}{\sim} N(0,1) \text{ truncated between }[0,3], \\
			& \epsilon_{ist} \overset{iid}{\sim} N(0,1),
			\quad
			\eta_{1(j),st} \overset{iid}{\sim} N(0,2),
			\quad
			\eta_{2(j),st} \overset{iid}{\sim} N(0,0.5) \text{ truncated between }[-0.1,0.1],\\
			& 
			d_{st} = 1\{s \geq S/3\}\times 1\{t \geq T/3\},
			\\
			& \beta_{1(j)} = 5+j/3,
			\quad \beta_{2(j)} =0.1,
			\quad
			\delta_{1(j),t} = 5+t/T,
			\quad
			\delta_{2(j),t} = 0.1, \\
			& 
			f_{1(1),t}  = 0.5 f_{1(1),t-1} + \xi_{1,t},
			\quad
			f_{1(2),t} = 0.75 f_{1(2),t-1} + \xi_{2,t}, 
			\quad
			\xi_{1,t}, \xi_{2,t} \overset{iid}\sim N(0,1), \\
			&
			f_{1(3),t} \overset{iid}{\sim}N(0,1), 
			\quad
			f_{2(j),t} \overset{iid}{\sim} |N(0,0.5)|,
			\quad
			\lambda_{1(j),s} \overset{iid}{\sim} N(0,1),
			\quad
			\lambda_{2(j),s} \overset{iid}{\sim} U(0,0.5).
		\end{align*}
		Again, we consider the following two scenarios in terms of whether exists endogeneity in the group-level:
		\begin{enumerate}
			\item group-level observables $x_{st}$ are i.i.d $\text{exp}(0.1  N(0,1))$.
			\item group-level observables $x_{st} =\zeta_{st} + 0.1 f_{t2}^2 + 0.2 \lambda_{s2}^2$, where $\{\zeta_{st}\}$ are i.i.d $\exp\big(0.25N(0,1)\big)$. This setting allows moderate endogeneity at group-level.
		\end{enumerate}
		
		For both scenarios, it is easy to check that $z_{ist}\alpha_{2(j),st} >0$ for all $j=1,2,3$. Thus, it is straightforward to obtain the conditional quantile function of $y_{ist}$ as
		\begin{align*}
			Q_{y_{ist}} (u &|d_{st},x_{it},z_t,f_t(u),\lambda_i(u)) = \sum_{j=1}^{3}z_{ist(j)}\alpha_{jst}(u), \\
			\alpha_{jst}(u)
			& = \alpha_{1(j),st} + \alpha_{2(j),st}Q_{\epsilon}(u) \\
			& = \delta_{j0}(u) 
			+ d_{st}\delta_{jt}(u)
			+ x_{st}'\beta_{j}(u)
			+ f_{jt}(u)'\lambda_{js}(u)
			+ \eta_{jst}(u),\footnotemark
		\end{align*}
		\footnotetext{The DGP is constructed such that the assumption 
			$\mathbb{E}[\eta_{jst}(u) | x_{st},d_{st},f_{jt},\lambda_{js}] = 0$ is satisfied.}
		where 
		\begin{align*}
			& \delta_{j0}(u) = 3 + 0.5Q_{\epsilon}(u),
			\quad
			\delta_{jt}(u) = \delta_{1(j),t} + \delta_{2(j), t}Q_{\epsilon}(u),
			\quad
			\beta_{j}(u) = \beta_{1(j)} + \beta_{2(j)}Q_{\epsilon}(u),\\
			& f_{jt}(u) = [f_{1(j),t},f_{2(j),t}]',
			\quad
			\lambda_{js}(u) = [\lambda_{1(j),s},\lambda_{2(j),s}Q_{\epsilon}(u)]', \\
			& \eta_{jst}(u) = \eta_{1(j),st}+\eta_{2(j),st}Q_{\epsilon}(u).
		\end{align*}

		We investigate the accuracy of the point estimators of the policy parameter $\delta_{jt}(u)$ and the treatment parameters $\Delta_{t}^{AQTT}(u|z)$, $\dt{\Delta}_{t}^{B}(u| z_{1}, z_{2})$ and $\dt{\Delta}_{t}^{W}(u_{1}, u_{2}| z)$ for both scenarios. The sample bias and standard deviation at time $t=T$ are reported in Tables~\ref{sim:table1}-\ref{sim:table4}. 
		
		Table~\ref{sim:table1} reports the estimation accuracy of $\delta_{jt}(u)$. The simulation shows a few nice properties similar to those found in Section~\ref{sec:dgp1}. First of all, the recursive estimator converges quite fast at both tail and central quantiles. In all cases, the mean bias and standard deviation of the coefficient estimation at the second iteration are reasonably small. An additional appealing feature is that the estimators remain accurate even if the number of observations per group ($N_{st} \equiv N$) is relatively small compared to the total number of groups and time ($S \times T$). Last but not least, we conclude that the moderate group-level endogeneity, captured by the common factor structure, does not diminish the precision of the estimators. This feature is aligned with our asymptotic analysis.
		
		\begin{table}[!htbp]
			\centering
			\caption{Point accuracy of $\widehat{\delta}_{jT}(u)$\label{sim:table1}}
			\tabcolsep 0.12in
			\resizebox{\textwidth}{!}{%
				\begin{tabular}{@{}l l l r r r r r r r   @{}} 
					\toprule
					& & & & & Scenario 1 & & & Scenario 2 & \\
					\cmidrule(r){5-7}   \cmidrule{8-10}
					N & S & T & $u$ & $j=1$ & $j=2$ & $j=3$ & $j=1$ & $j=2$ & $j=3$ \\
					\midrule
					1000 & 20 &20  &$  0.1 $&$ 0.046\ ( 0.769 )$&$  -0.048\  ( 0.882 )$&$  0.038\ ( 0.787 )$&$  0.042\ ( 0.766 )$&$  -0.033\  ( 0.872 )$&$  0.041\ ( 0.789 )$\\
					&&&$  0.5 $&$ 0.032\ ( 0.900 )$&$  -0.038\  ( 0.869 )$&$  0.063\ ( 0.780 )$&$  0.029\ ( 0.897 )$&$  -0.025\  ( 0.860 )$&$  0.066\ ( 0.782 )$\\
					&&&$  0.9 $&$ 0.028\ ( 1.149 )$&$  -0.014\  ( 0.898 )$&$  0.065\ ( 0.821 )$&$  0.022\ ( 1.149 )$&$  -0.001 \ ( 0.890 )$&$  0.069\ ( 0.824 )$\\
					& 40 &40  &$  0.1 $&$ -0.051\  ( 0.402 )$&$  0.096\ ( 0.534 )$&$  -0.017\  ( 0.436 )$&$  -0.052\  ( 0.402 )$&$  0.096\ ( 0.534 )$&$  -0.018\  ( 0.436 )$\\
					&&&$  0.5 $&$ -0.051\  ( 0.394 )$&$  0.081\ ( 0.472 )$&$  -0.029\  ( 0.425 )$&$  -0.051\  ( 0.394 )$&$  0.081\ ( 0.472 )$&$  -0.030\  ( 0.424 )$\\
					&&&$  0.9 $&$ -0.052\  ( 0.412 )$&$  0.079\ ( 0.488 )$&$  -0.043\  ( 0.437 )$&$  -0.052 \ ( 0.412 )$&$  0.080\ ( 0.490 )$&$  -0.043 \ ( 0.436 )$\\
					& 60 &60  &$  0.1 $&$ -0.005 \ ( 0.331 )$&$  0.013\ ( 0.381 )$&$  -0.014\  ( 0.329 )$&$  -0.006\  ( 0.331 )$&$  0.013\ ( 0.381 )$&$  -0.013\   ( 0.328 )$\\
					&& &$  0.5 $&$ -0.007\  ( 0.328 )$&$  0.012\ ( 0.364 )$&$  -0.020\  ( 0.306 )$&$  -0.008 \ ( 0.328 )$&$  0.012\ ( 0.364 )$&$  -0.019 \ ( 0.306 )$\\
					&&&$  0.9 $&$ -0.010\  ( 0.356 )$&$  0.014\ ( 0.379 )$&$  -0.027\  ( 0.316 )$&$  -0.010\  ( 0.355 )$&$  0.015\ ( 0.379 )$&$  -0.027\  ( 0.316 )$\\
					\addlinespace[0.2cm]  
					2000 & 20 &20  &$  0.1 $&$ -0.072\  ( 0.846 )$&$  -0.032\  ( 0.797 )$&$  0.144\ ( 1.340 )$&$  -0.080\  ( 0.824 )$&$  -0.041\  ( 0.786 )$&$  0.027\ ( 0.818 )$\\
					&&&$  0.5 $&$ -0.059\  ( 0.831 )$&$  -0.003\  ( 0.770 )$&$  0.055\ ( 0.677 )$&$  -0.070\  ( 0.797 )$&$  -0.012\  ( 0.760 )$&$  0.018\ ( 0.818 )$\\
					&&&$  0.9 $&$ -0.044\  ( 1.015 )$&$  0.022\ ( 0.776 )$&$  0.034\ ( 0.713 )$&$  -0.046\  ( 0.911 )$&$  0.013\ ( 0.765 )$&$  0.030\ ( 0.707 )$\\
					& 40 &40 &$  0.1 $&$ -0.072\  ( 0.424 )$&$  -0.037\  ( 0.438 )$&$  0.033\ ( 0.434 )$&$  -0.072\  ( 0.424 )$&$  -0.037\  ( 0.439 )$&$  0.033\ ( 0.435 )$\\
					&&&$  0.5 $&$ -0.066 \ ( 0.415 )$&$  -0.034\  ( 0.425 )$&$  0.022\ ( 0.422 )$&$  -0.066\  ( 0.415 )$&$  -0.035\  ( 0.426 )$&$  0.022\ ( 0.423 )$\\
					&&&$  0.9 $&$ -0.062\  ( 0.432 )$&$  -0.032\  ( 0.443 )$&$  0.011\ ( 0.434 )$&$  -0.062\  ( 0.431 )$&$  -0.032\  ( 0.444 )$&$  0.011\ ( 0.435 )$\\
					& 60 &60 &$  0.1 $&$ -0.032\  ( 0.346 )$&$  0.007\ ( 0.345 )$&$  -0.022\  ( 0.348 )$&$  -0.031\  ( 0.345 )$&$  0.006\ ( 0.345 )$&$  -0.022\  ( 0.349 )$\\
					&&&$  0.5 $&$ -0.024\  ( 0.322 )$&$  0.014\ ( 0.327 )$&$  -0.032\  ( 0.332 )$&$  -0.024\  ( 0.323 )$&$  0.013\ ( 0.328 )$&$  -0.032\  ( 0.333 )$\\
					&&&$  0.9 $&$ -0.019\  ( 0.336 )$&$  0.019\ ( 0.344 )$&$  -0.041 \ ( 0.347 )$&$  -0.019\  ( 0.335 )$&$  0.019\ ( 0.345 )$&$  -0.041\  ( 0.348 )$\\
					\bottomrule
			\end{tabular}}
			\begin{minipage}{1.0\linewidth}
				\footnotesize
				\textit{Notes:}
				The number of Monte-Carlo repetitions is 250. 
				We report bias averaged over 250 repetitions with
				the standard deviation in the parenthesis. 
			\end{minipage}
		\end{table}
		
		Furthermore, to illustrate the estimation accuracy of our proposed treatment parameters. Tables~\ref{sim:table2}-\ref{sim:table4} report the estimation accuracy of $\Delta_{t}^{AQTT}(u|z)$, $\dt{\Delta}_{t}^{B}(u| z_{1}, z_{2})$ and $\dt{\Delta}_{t}^{W}(u_{1}, u_{2}| z)$, respectively. Throughout the comparison, we fix $z := (1,1,1)'$, $z_1 := (1,1,0)'$ and $z_2 := (1,0,1)'$ to allow sufficient variation in terms of the individual characteristics. It is not surprising that the estimation results are relatively similar to those for $\delta_{jt}(u)$, as the treatment parameters are identified as linear combinations of $\delta_{jt}(u)$ according to Theorem~\ref{theorem:identification}.

		\begin{table}[!htbp]
			\centering
			\caption{Point accuracy of $\widehat{\Delta}_{T}^{AQTT}(u|z)$\label{sim:table2}}
			\tabcolsep 0.20in
			\resizebox{0.75\textwidth}{!}{%
				\begin{tabular}{@{}l l l r r r    @{}} 
					\toprule
					N & S & T & $u$ & Scenario 1 & Scenario 2 \\
					\midrule
					1000 & 20 & 20&$  0.1 $&$ 0.036\ ( 1.409 )$&$  0.050\ ( 1.399 )$\\
					&&&$  0.5 $&$ 0.056\ ( 1.497 )$&$  0.070\ ( 1.488 )$\\
					&&&$  0.9 $&$ 0.079\ ( 1.719 )$&$  0.090\ ( 1.718 )$\\
					& 40 & 40 &$  0.1 $&$ 0.027\ ( 0.817 )$&$  0.027\ ( 0.816 )$\\
					&&&$  0.5 $&$ 0.001\ ( 0.760 )$&$  0.001\ ( 0.760 )$\\
					&&&$  0.9 $&$ -0.016\  ( 0.783 )$&$  -0.016 \ ( 0.783 )$\\
					& 60 & 60 &$  0.1 $&$ -0.006\ ( 0.606 )$&$  -0.005\ ( 0.606 )$\\
					&&&$  0.5 $&$ -0.015\ ( 0.584 )$&$  -0.015\ ( 0.585 )$\\
					&&&$  0.9 $&$ -0.023\  ( 0.617 )$&$  -0.022\  ( 0.617 )$\\
					\addlinespace[0.2cm]  
					2000 & 20  & 20  &$  0.1 $&$ 0.040\ ( 1.731 )$&$  -0.094 \ ( 1.271 )$\\
					&&&$  0.5 $&$ -0.006\  ( 1.218 )$&$  -0.065\  ( 1.223 )$\\
					&&&$  0.9 $&$ 0.013\ ( 1.401 )$&$  -0.002\  ( 1.286 )$\\
					& 40  & 40  &$  0.1 $&$ -0.077\  ( 0.744 )$&$  -0.077\  ( 0.745 )$\\
					&&&$  0.5 $&$ -0.079\  ( 0.723 )$&$  -0.078\  ( 0.724 )$\\
					&&&$  0.9 $&$ -0.084\  ( 0.748 )$&$  -0.084\  ( 0.750 )$\\
					& 60 & 60 &$  0.1 $&$ -0.047\  ( 0.608 )$&$  -0.047\  ( 0.609 )$\\
					&&&$  0.5 $&$ -0.042\  ( 0.559 )$&$  -0.042\  ( 0.560 )$\\
					&&&$  0.9 $&$ -0.041\  ( 0.567 )$&$  -0.041 \ ( 0.568 )$\\	
					\bottomrule
			\end{tabular}}
			\begin{minipage}{0.75\linewidth}
				\footnotesize
				\textit{Notes:}
				The number of Monte-Carlo repetitions is 250. 
				We report bias averaged over 250 repetitions with
				the standard deviation in the parenthesis. We consider individual covariate $z := (1,1,1)'$.
			\end{minipage}
		\end{table}

		\begin{table}[!htbp]
			\centering
			\caption{Point accuracy of $\widehat{{\Delta}}_T^B(u|z_1,z_2)$ \label{sim:table3}}
			\tabcolsep 0.20in
			\resizebox{0.75\textwidth}{!}{%
				\begin{tabular}{@{}l l l r r r    @{}} 
					\toprule
					N & S & T & $u$ & Scenario 1 & Scenario 2 \\
					\midrule
					1000 & 20 & 20 &$  0.1 $&$ 0.086\ ( 1.122 )$&$  0.073\ ( 1.115 )$\\
					&&&$  0.5 $&$ 0.101\ ( 1.110 )$&$  0.091\ ( 1.105 )$\\
					&&&$  0.9 $&$ 0.079\ ( 1.163 )$&$  0.070\ ( 1.160 )$\\
					&  40  & 40  &$  0.1 $&$ -0.113\  ( 0.679 )$&$  -0.114\  ( 0.679 )$\\
					&&&$  0.5 $&$ -0.110\  ( 0.623 )$&$  -0.111\  ( 0.623 )$\\
					&&&$  0.9 $&$ -0.122\  ( 0.640 )$&$  -0.123\  ( 0.641 )$\\
					& 60  & 60  & $  0.1 $&$ -0.027\  ( 0.501 )$&$  -0.026\  ( 0.501 )$\\
					&&&$  0.5 $&$ -0.031\  ( 0.483 )$&$  -0.031\  ( 0.483 )$\\
					&&&$  0.9 $&$ -0.041 \ ( 0.506)$&$  -0.041\  ( 0.505 )$\\
					\addlinespace[0.2cm]  
					2000 & 20  &20  &$  0.1 $&$ 0.176\ ( 1.556 )$&$  0.068\ ( 1.197 )$\\
					&&&$  0.5 $&$ 0.058\ ( 1.107 )$&$  0.030\ ( 1.200 )$\\
					&&&$  0.9 $&$ 0.012\ ( 1.137 )$&$  0.017\ ( 1.130 )$\\
					& 40  & 40  &$  0.1 $&$ 0.070\ ( 0.625 )$&$  0.070\ ( 0.627 )$\\
					&&&$  0.5 $&$ 0.056\ ( 0.607 )$&$  0.057\ ( 0.608 )$\\
					&&&$  0.9 $&$ 0.043\ ( 0.626 )$&$  0.043\ ( 0.626 )$\\
					& 60  & 60 &$  0.1 $&$ -0.029\  ( 0.490 )$&$  -0.028\  ( 0.490 )$\\
					&&&$  0.5 $&$ -0.045\  ( 0.474 )$&$  -0.045\  ( 0.474 )$\\
					&&&$  0.9 $&$ -0.060\  ( 0.504 )$&$  -0.060\  ( 0.504 )$\\
					
					\bottomrule
			\end{tabular}}
			\begin{minipage}{0.75\linewidth}
				\footnotesize
				\textit{Notes:}
				The number of Monte-Carlo repetitions is 250. 
				We report bias averaged over 250 repetitions with
				the standard deviation in the parenthesis. We consider individual covariates $z_1 := (1,1,0)'$ and $z_2 := (1,0,1)'$.
			\end{minipage}
		\end{table}

		\begin{table}[!htbp]
			\centering
			\caption{Point accuracy of $\widehat{{\Delta}}_T^W(u_1,u_2|z)$\label{sim:table4}}
			\tabcolsep 0.20in
			\resizebox{0.75\textwidth}{!}{%
				\begin{tabular}{@{}l l l r r r    @{}} 
					\toprule
					N & S & T & $(u_1,u_2)$ & Scenario 1 & Scenario 2 \\
					\midrule
					1000  & 20  & 20  &$( 0.1 , 0.5 )$&$  0.021\ ( 0.444 )$&$  0.020\ ( 0.444 )$\\
					&&&$( 0.1 , 0.9 )$&$  0.043\ ( 0.894 )$&$  0.040\ ( 0.898 )$\\
					&&&$( 0.5 , 0.9 )$&$  0.022\ ( 0.497 )$&$  0.020\ ( 0.499 )$\\
					& 40  & 40  &$( 0.1 , 0.5 )$&$  -0.027\  ( 0.209 )$&$  -0.026\  ( 0.207 )$\\
					&&&$( 0.1 , 0.9 )$&$  -0.043\  ( 0.372 )$&$  -0.042\  ( 0.371 )$\\
					&&&$( 0.5 , 0.9 )$&$  -0.016\  ( 0.184 )$&$  -0.016\  ( 0.184 )$\\
					& 60  & 60  &$( 0.1 , 0.5 )$&$  -0.009\  ( 0.190 )$&$  -0.009\  ( 0.189 )$\\
					&&&$( 0.1 , 0.9 )$&$  -0.017\  ( 0.376 )$&$  -0.017\  ( 0.374 )$\\
					&&&$( 0.5 , 0.9 )$&$  -0.008\  ( 0.188 )$&$  -0.008\  ( 0.187 )$\\
					\addlinespace[0.2cm]  
					2000  & 20 & 20  & $( 0.1 , 0.5 )$&$  -0.046\  ( 1.187 )$&$  0.029\ ( 0.320 )$\\
					&&&$( 0.1 , 0.9 )$&$  -0.027\  ( 1.437 )$&$  0.091\ ( 0.909 )$\\
					&&&$( 0.5 , 0.9 )$&$  0.019\ ( 0.596 )$&$  0.062\ ( 0.737 )$\\
					& 40  & 40  &$( 0.1 , 0.5 )$&$  -0.002\  ( 0.182 )$&$  -0.002\  ( 0.181 )$\\
					&&&$( 0.1 , 0.9 )$&$  -0.007\  ( 0.365 )$&$  -0.007\  ( 0.363 )$\\
					&&&$( 0.5 , 0.9 )$&$  -0.005\  ( 0.186 )$&$  -0.005\  ( 0.185 )$\\
					& 60  & 60  &$( 0.1 , 0.5 )$&$  0.005\ ( 0.177 )$&$  0.004\ ( 0.176 )$\\
					&&&$( 0.1 , 0.9 )$&$  0.006\ ( 0.352 )$&$  0.005\ ( 0.350 )$\\
					&&&$( 0.5 , 0.9 )$&$  0.001\ ( 0.177 )$&$  0.001\ ( 0.176 )$\\
					
					\bottomrule
			\end{tabular}}
			\begin{minipage}{0.75\linewidth}
				\footnotesize
				\textit{Notes:}
				The number of Monte-Carlo repetitions is 250. 
				We report bias averaged over 250 repetitions with
				the standard deviation in the parenthesis.  We consider individual covariate $z := (1,1,1)'$.
			\end{minipage}
		\end{table}
		\clearpage


		\newpage
		\section{Additional Empirical Results}\label{sec:additional_empirical_cpt2}

		In this section, we first provide additional information about the March CPS dataset, and then show the estimation results based on model with additive fixed effects as a comparison with our proposed model.

		\subsection{Additional Details about the March CPS Dataset}\label{sec:empirical_summary_stat}

		Table~\ref{appendix:table1} summarizes the 16 industries covered by 1938 and 1966 FLSA, respectively. As the key interest of this empirical study is to quantify the minimum wage policy effect of 1966 FLSA, the eight industries covered by 1966 FLSA are considered as treated industries in this case study, while the eight industries covered by 1938 FLSA are classified as control industries. For each listed industry, the average number of observations per year is listed. We note that Forestry and Fishing only contains 35 individual observations per year. As the sample size is not sufficient to perform the first-step quantile estimation, we remove Forestry and Fishing from the treated industries to avoid estimation error, which leads to 7 treated industries in our study.
		
		Table~\ref{appendix:table2} reports the variation of employment shares and the average earnings for white, black, male and female subpopulations over time. Figures~\ref{fig:earnings_year} and \ref{fig:summary_stat} further visualize the time trends of those statistics for treated and control industries respectively. Following \cite{derenoncourt2021minimum}, the March CPS 1963 (i.e. corresponding to earnings earned in the year 1962) is excluded from our analysis as it suffers from a lower number of observations and lacks demographic information (such as education) for the entire population.
		Figure~\ref{fig:earnings_year} shows an upward trend in the average earnings for all subpopulations. Black workers and female workers have lower average earnings, compared to white workers and male workers. Furthermore, within black and female workers, the earning gap between workers in the treated and control industry remained rather stable before the 1966 FLSA but gradually reduced after 1967. Figure~\ref{fig:earnings_box} additionally summarizes the distribution of the earnings for the four major groups of individuals across treated and control industries before and after the introduction of the 1966 FLSA. 
		Figure~\ref{fig:summary_stat} presents the aggregate evidence of stable employment trends in the CPS. Figures~\ref{fig:summary_stat}.(a) and (b) show that the proportion of black employees and female employees across industry types (industries covered in 1938 v.s. covered in 1967) are relatively stable from the early 1960s to 1980. There is no discontinuity in the aggregate shares of workers in the treated vs. control industries around the reform. Notably, while black proportions are below 25\% for both treated and control industries, the female proportions are doubled in treated industries compared with those in control industries throughout the years. Figures~\ref{fig:summary_stat}.(c) and (d) further decompose these aggregate employment trends by gender and race and show a stable trend.

		Figure~\ref{fig:educ_exp} summarizes the distribution of education and experience level of the four major groups of individuals of interest. The average and medium education level is around 12 years of schooling for all groups, with heavy tails at the lower quantile, representing a large variation in education level among the society. There is vast literature analyzing the effect of improving education on reducing wage disparities, such as \citep{card1992school}. However, the education factor cannot explain the dramatic decline of the wage gap after 1967, as education levels remained stable over the years. In addition, the distributions of experience levels are rather similar for all groups, with the first quarter and medium level of experience being 10 and 20 years, respectively.
		
		\begin{table}[!htbp]
			\centering
			\caption{List of industries, and the average number of observations per year used in March CPS (1962, 1964--1981) \label{appendix:table1}}
			\tabcolsep 0.12in
			\resizebox{\textwidth}{!}{%
				\begin{tabular}{@{} l   c c c c c  @{}} 
					\toprule
					Industry &  Observations & \multicolumn{4}{c}{Subgroup Sizes} \\
					\cmidrule(r){3-6} 
					& & White & Black & Men & Women\\
					
					\midrule
					\textbf{Control industries (industries covered by 1938 FLSA):}	&		&				&				&				&				\\																			
					Business and Repair Services	&	841	&$	256	\ (	140	)$&$	585	\ (	217	)$&$	63	\ (	25	)$&$	778	\ (	331	)$\\
					Durable manufacturing	&	5107	&$	1060	\ (	333	)$&$	4048	\ (	956	)$&$	420	\ (	115	)$&$	4687	\ (	1136	)$\\
					Finance, Insurance, and Real Estate	&	1519	&$	760	\ (	345	)$&$	760	\ (	216	)$&$	85	\ (	43	)$&$	1435	\ (	506	)$\\
					Food manufacturing	&	726	&$	171	\ (	49	)$&$	555	\ (	115	)$&$	77	\ (	20	)$&$	649	\ (	141	)$\\
					Mining	&	304	&$	26	\ (	20	)$&$	278	\ (	124	)$&$	8	\ (	4	)$&$	296	\ (	140	)$\\
					Other non-durable manufacturing	&	2480	&$	988	\ (	225	)$&$	1493	\ (	308	)$&$	215	\ (	63	)$&$	2265	\ (	475	)$\\
					Transportation, Communication, and Other Utilities	&	2203	&$	439	\ (	195	)$&$	1764	\ (	423	)$&$	187	\ (	60	)$&$	2016	\ (	542	)$\\
					Wholesale Trade	&	1149	&$	240	\ (	97	)$&$	909	\ (	267	)$&$	68	\ (	15	)$&$	1081	\ (	349	)$\\
					\\												
					\textbf{Treated industries (industries covered by 1966 FLSA):}		&		&				&				&				&				\\
					Agriculture	&	351	&$	50	\ (	19	)$&$	301	\ (	82	)$&$	54	\ (	23	)$&$	296	\ (	95	)$\\
					Entertainment and Recreation Services	&	228	&$	82	\ (	44	)$&$	146	\ (	66	)$&$	18	\ (	7	)$&$	211	\ (	103	)$\\
					Forestry and Fishing	&	36	&$	6	\ (	5	)$&$	30	\ (	15	)$&$	2	\ (	2	)$&$	34	\ (	20	)$\\
					Hospitals	&	1134	&$	865	\ (	344	)$&$	270	\ (	107	)$&$	220	\ (	69	)$&$	914	\ (	386	)$\\
					Hotels, laundries, and other personal services	&	531	&$	331	\ (	102	)$&$	200	\ (	59	)$&$	99	\ (	29	)$&$	432	\ (	148	)$\\
					Nursing homes and other professional services	&	1515	&$	922	\ (	513	)$&$	593	\ (	279	)$&$	139	\ (	77	)$&$	1376	\ (	714	)$\\
					Restaurants	&	649	&$	429	\ (	141	)$&$	219	\ (	87	)$&$	68	\ (	16	)$&$	581	\ (	215	)$\\
					Schools and other educational services	&	2693	&$	1667	\ (	684	)$&$	1027	\ (	340	)$&$	266	\ (	84	)$&$	2427	\ (	939	)$\\
					\bottomrule
					
			\end{tabular}}
			\begin{minipage}{\linewidth} 
				\footnotesize
				\textit{Notes:}  
				The data sources are 
				the Current Population Survey (1962, 1964--1981).
				Sample includes black or white adults aged 25--65,
				who worked more than 13 weeks last year,
				worked three hours last week,
				do not live in group quarters,
				are not self-employed and  not unpaid family worker
				with no missing industry or occupation code within the treated and control industries. We report the number of observations, and the size of subpopulations (white, female, black and white), averaged over years and aggregated by industry, as well as the corresponding standard deviation over years within the parenthesis. 
			\end{minipage}  
		\end{table}
		
		\begin{table}[!htbp]
			\centering
			\caption{The employment shares and earnings by race and gender \label{appendix:table2}}
			\tabcolsep 0.12in
			\resizebox{\textwidth}{!}{%
				\begin{tabular}{@{} l   c c c c c c c c c @{}} 
					\toprule
					Year &  Observations  & \multicolumn{4}{c}{Employment Shares} & \multicolumn{4}{c}{Log-Earnings (\$2017)}\\
					\cmidrule(r){3-6}   \cmidrule{7-10}
					& & White & Black & Men & Women & White & Black & Men & Women\\
					
					\midrule
					1961	&	10086	&	0.913	&	0.087	&	0.679	&	0.321	&	10.329	&	9.722	&	10.506	&	9.791	\\
					1963	&	10535	&	0.911	&	0.089	&	0.675	&	0.325	&	10.378	&	9.843	&	10.573	&	9.826	\\
					1964	&	10561	&	0.901	&	0.099	&	0.665	&	0.335	&	10.394	&	9.891	&	10.606	&	9.824	\\
					1965	&	22587	&	0.902	&	0.098	&	0.669	&	0.331	&	10.434	&	9.891	&	10.641	&	9.855	\\
					1966	&	14313	&	0.904	&	0.096	&	0.665	&	0.335	&	10.459	&	9.923	&	10.676	&	9.874	\\
					1967	&	22867	&	0.905	&	0.095	&	0.650	&	0.350	&	10.474	&	10.012	&	10.702	&	9.926	\\
					1968	&	23228	&	0.904	&	0.096	&	0.646	&	0.354	&	10.505	&	10.089	&	10.742	&	9.961	\\
					1969	&	22452	&	0.903	&	0.097	&	0.641	&	0.359	&	10.549	&	10.136	&	10.796	&	9.996	\\
					1970	&	22043	&	0.902	&	0.098	&	0.633	&	0.367	&	10.554	&	10.175	&	10.801	&	10.028	\\
					1971	&	21363	&	0.903	&	0.097	&	0.625	&	0.375	&	10.547	&	10.174	&	10.798	&	10.033	\\
					1972	&	21407	&	0.908	&	0.092	&	0.621	&	0.379	&	10.576	&	10.264	&	10.846	&	10.058	\\
					1973	&	21176	&	0.905	&	0.095	&	0.614	&	0.386	&	10.582	&	10.261	&	10.863	&	10.055	\\
					1974	&	20349	&	0.908	&	0.092	&	0.603	&	0.397	&	10.549	&	10.271	&	10.833	&	10.052	\\
					1975	&	22011	&	0.908	&	0.092	&	0.593	&	0.407	&	10.515	&	10.237	&	10.799	&	10.037	\\
					1976	&	26094	&	0.911	&	0.089	&	0.590	&	0.410	&	10.517	&	10.276	&	10.813	&	10.037	\\
					1977	&	26059	&	0.908	&	0.092	&	0.583	&	0.417	&	10.527	&	10.275	&	10.821	&	10.061	\\
					1978	&	26598	&	0.908	&	0.092	&	0.571	&	0.429	&	10.525	&	10.309	&	10.828	&	10.074	\\
					1979	&	31723	&	0.915	&	0.085	&	0.559	&	0.441	&	10.516	&	10.291	&	10.825	&	10.081	\\
					1980	&	32438	&	0.916	&	0.084	&	0.557	&	0.443	&	10.496	&	10.260	&	10.798	&	10.072	\\
					\bottomrule
					
			\end{tabular}}
			\begin{minipage}{\linewidth} 
				\footnotesize
				\textit{Notes:}  
				The data sources are 
				the Current Population Survey (1962, 1964--1981).
				Sample includes black or white adults aged 25--65,
				who worked more than 13 weeks last year,
				worked three hours last week,
				do not live in group quarters,
				are not self-employed and  not unpaid family worker
				with no missing industry or occupation code within the treated and control industries. We report the number of observations, and the size of subpopulations (white, female, black and white), averaged over years and aggregated by industry, as well as the corresponding standard deviation over years within the parenthesis. 
			\end{minipage}  
		\end{table}

		\begin{figure}[!htb]
			\centering
			\caption{Average Log-Earnings (\$2017) per Year}
			\label{fig:earnings_year}
			\begin{subfigure}{0.465\textwidth}
				\centering
				\caption{White Workers}
				\includegraphics[width=0.95\textwidth]{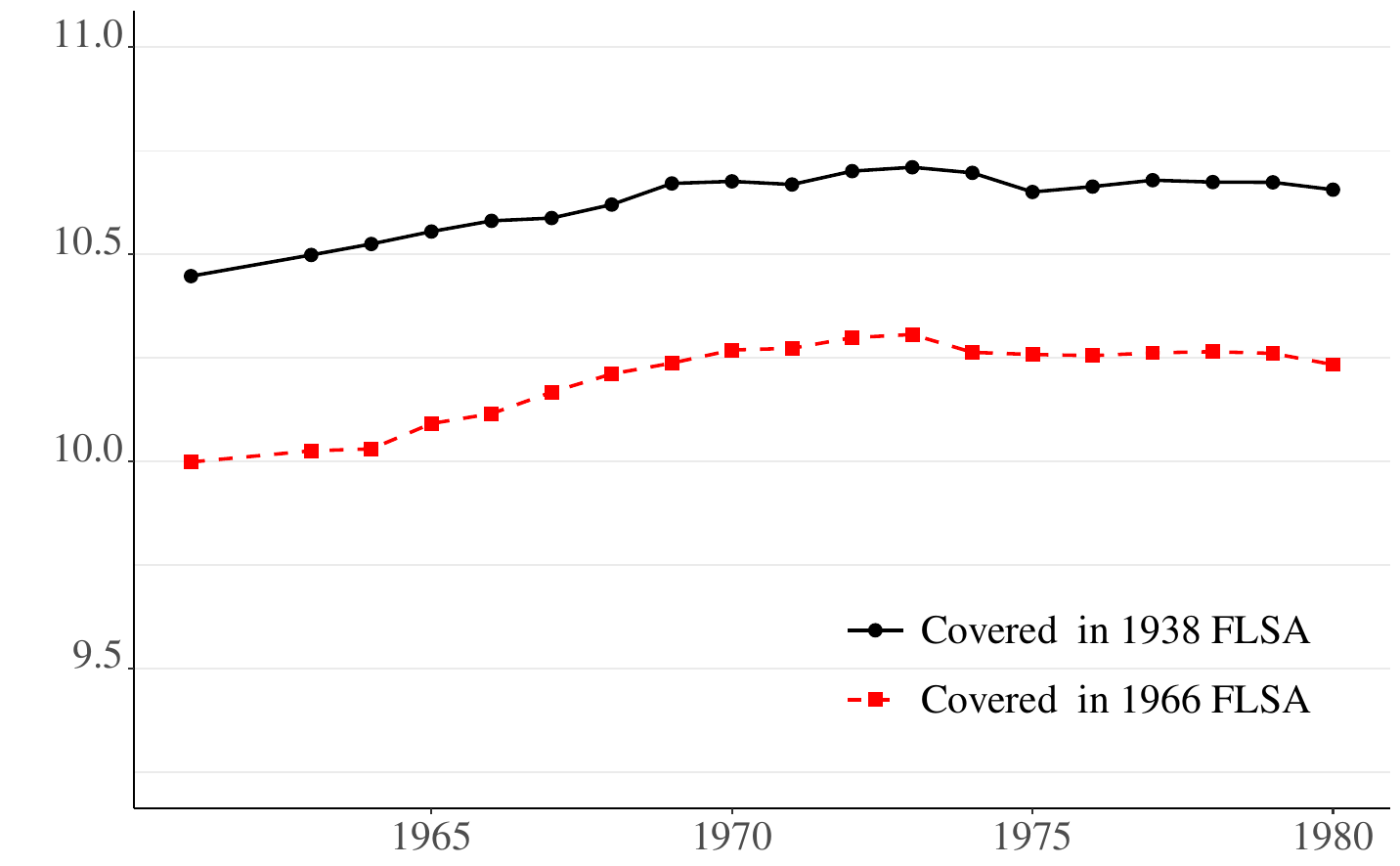} 
			\end{subfigure}
			\begin{subfigure}{0.465\textwidth}
				\centering
				\caption{Black Workers}
				\includegraphics[width=0.95\textwidth]{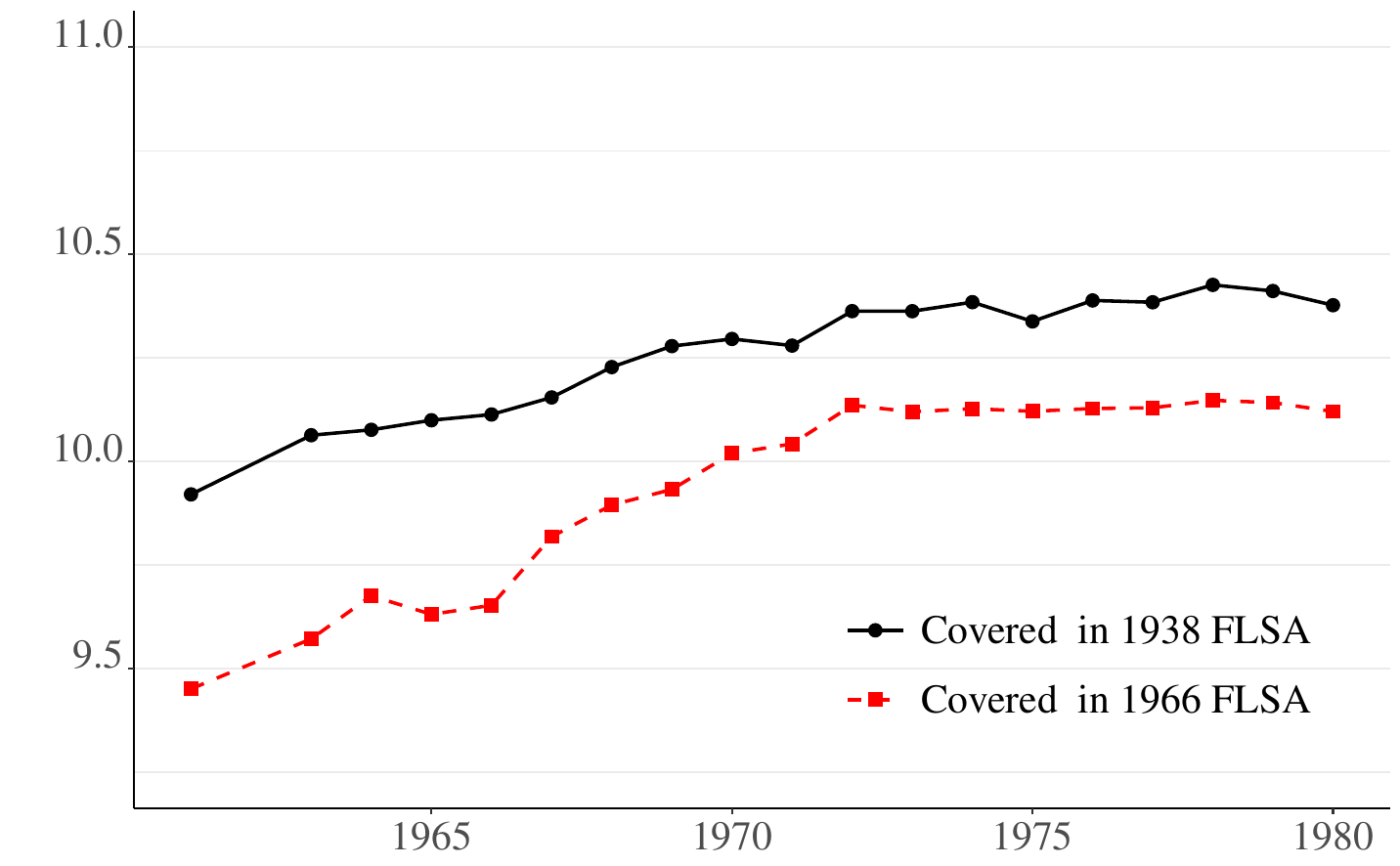} 
			\end{subfigure} \\
			\begin{subfigure}{0.465\textwidth}
				\centering
				\caption{Male Workers}
				\includegraphics[width=0.95\textwidth]{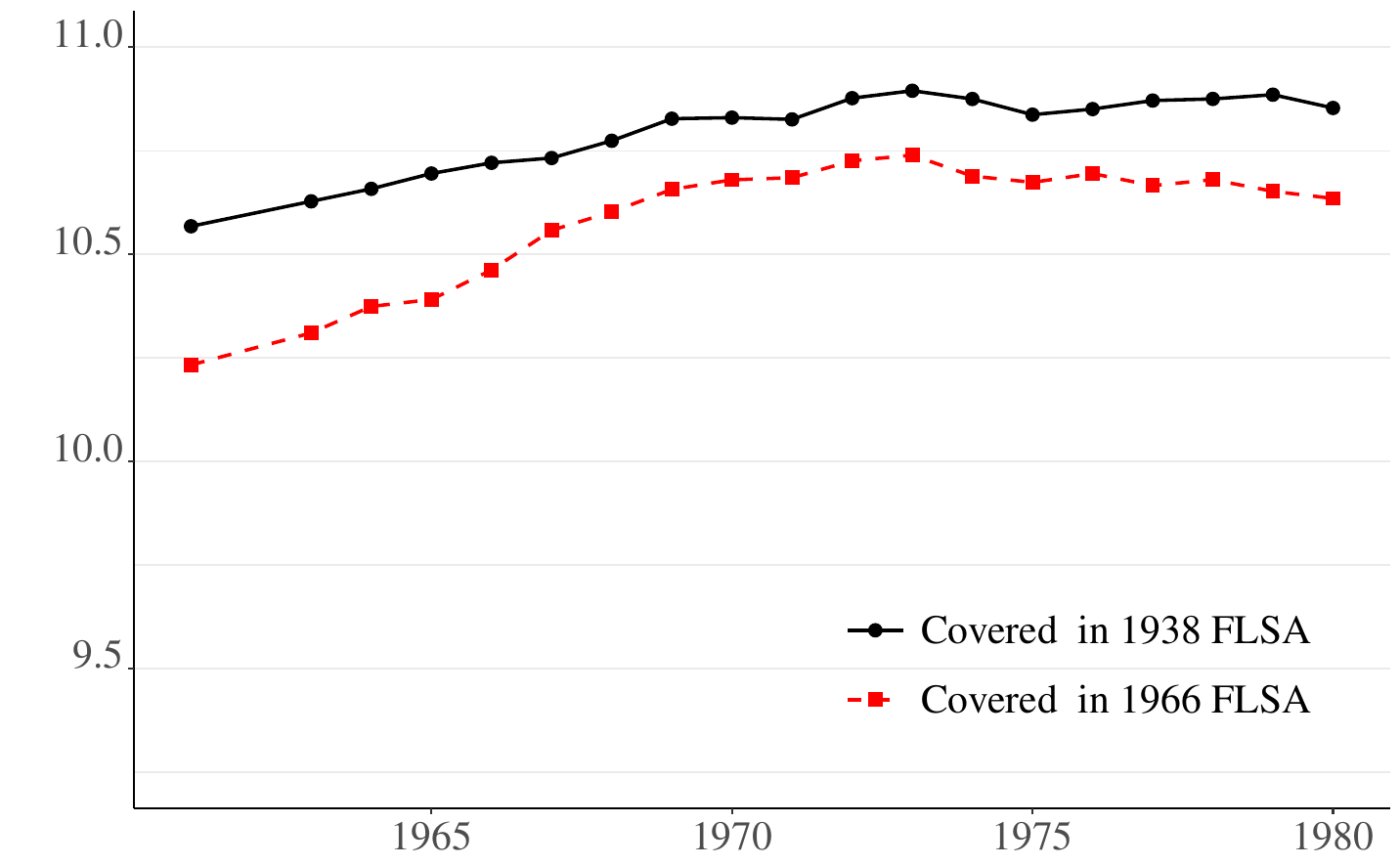} 
			\end{subfigure}
			\begin{subfigure}{0.465\textwidth}
				\centering
				\caption{Female Workers}
				\includegraphics[width=0.95\textwidth]{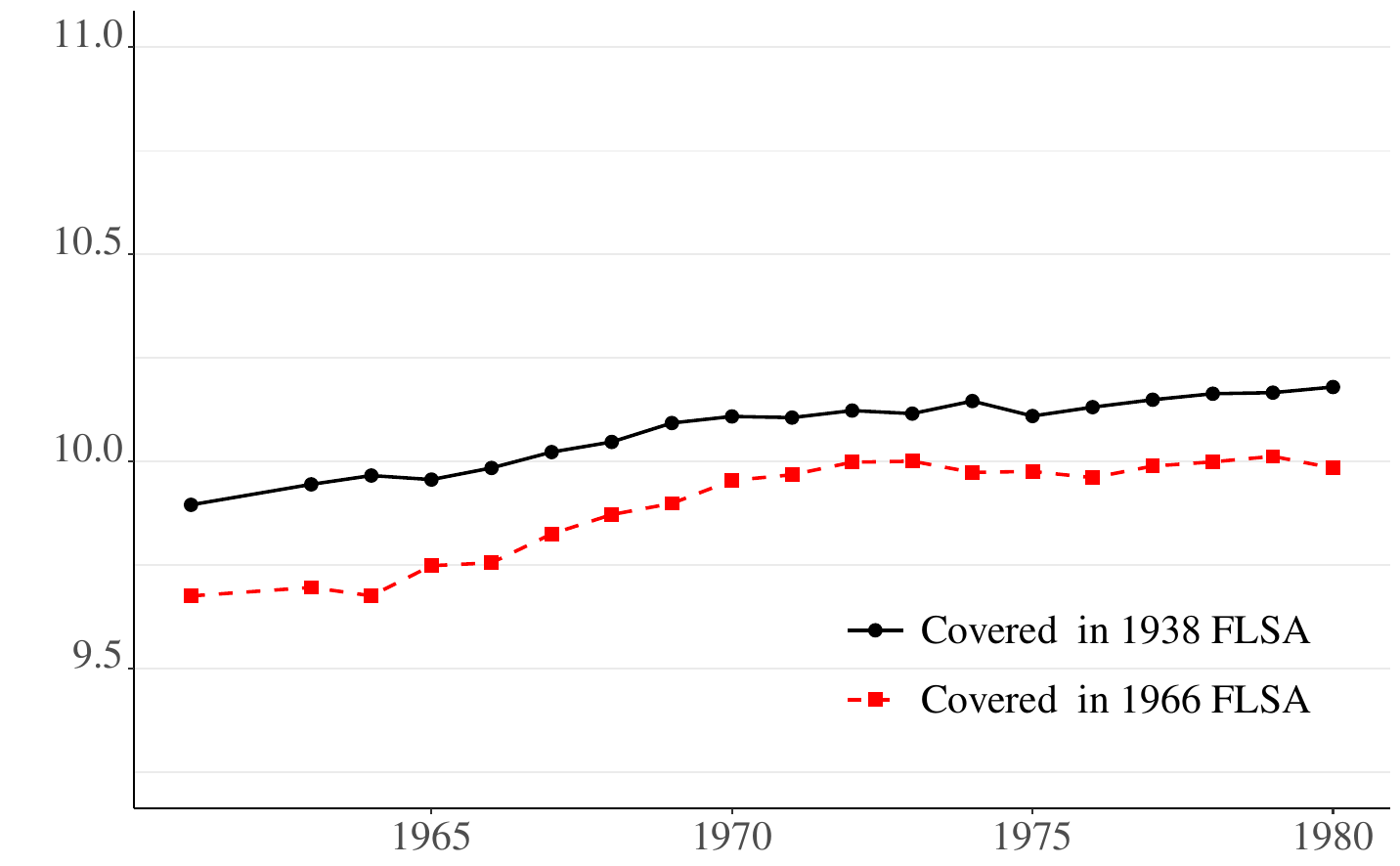} 
			\end{subfigure} 
			
			\begin{minipage}{\linewidth} 
				\footnotesize
				\textit{Notes:}  
				The data sources are 
				the Current Population Survey 1962, 1964--1981.
				Sample includes black or white adults aged 25--65,
				who worked more than 13 weeks last year,
				worked three hours last week,
				do not live in group quarters,
				are not self-employed and  not unpaid family worker
				with no missing industry or occupation code within the treated and control industries.
			\end{minipage}  
		\end{figure}

		\begin{figure}[!htb]
			\centering
			\caption{Log-Earnings (\$2017) by Race and Gender}
			\label{fig:earnings_box}
			
			\begin{subfigure}{0.465\textwidth}
				\centering
				\caption{Control industries, Year $< 1967$}
				\includegraphics[width=0.95\textwidth]{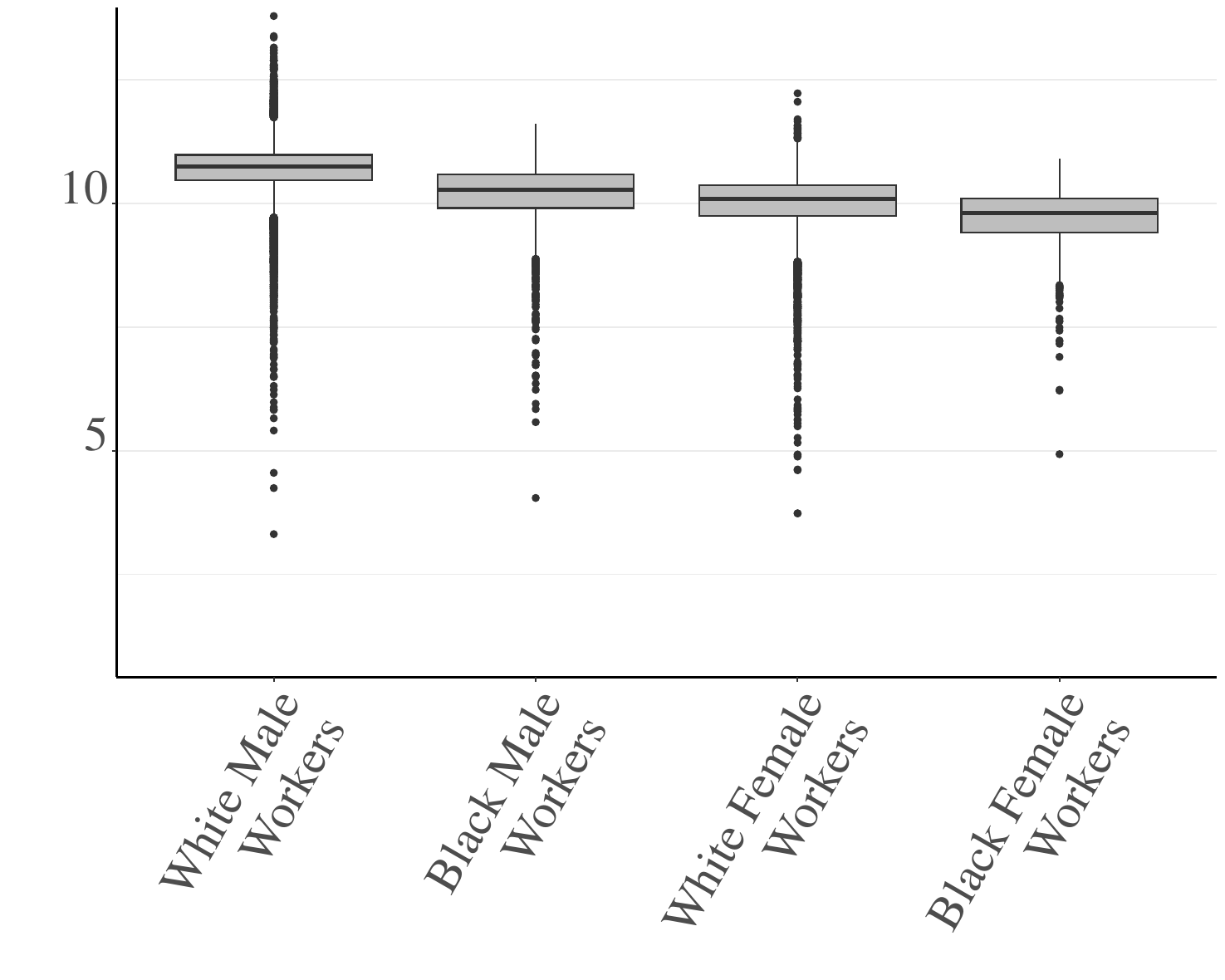} 
			\end{subfigure}
			\begin{subfigure}{0.465\textwidth}
				\centering
				\caption{Treated industries, Year $< 1967$}
				\includegraphics[width=0.95\textwidth]{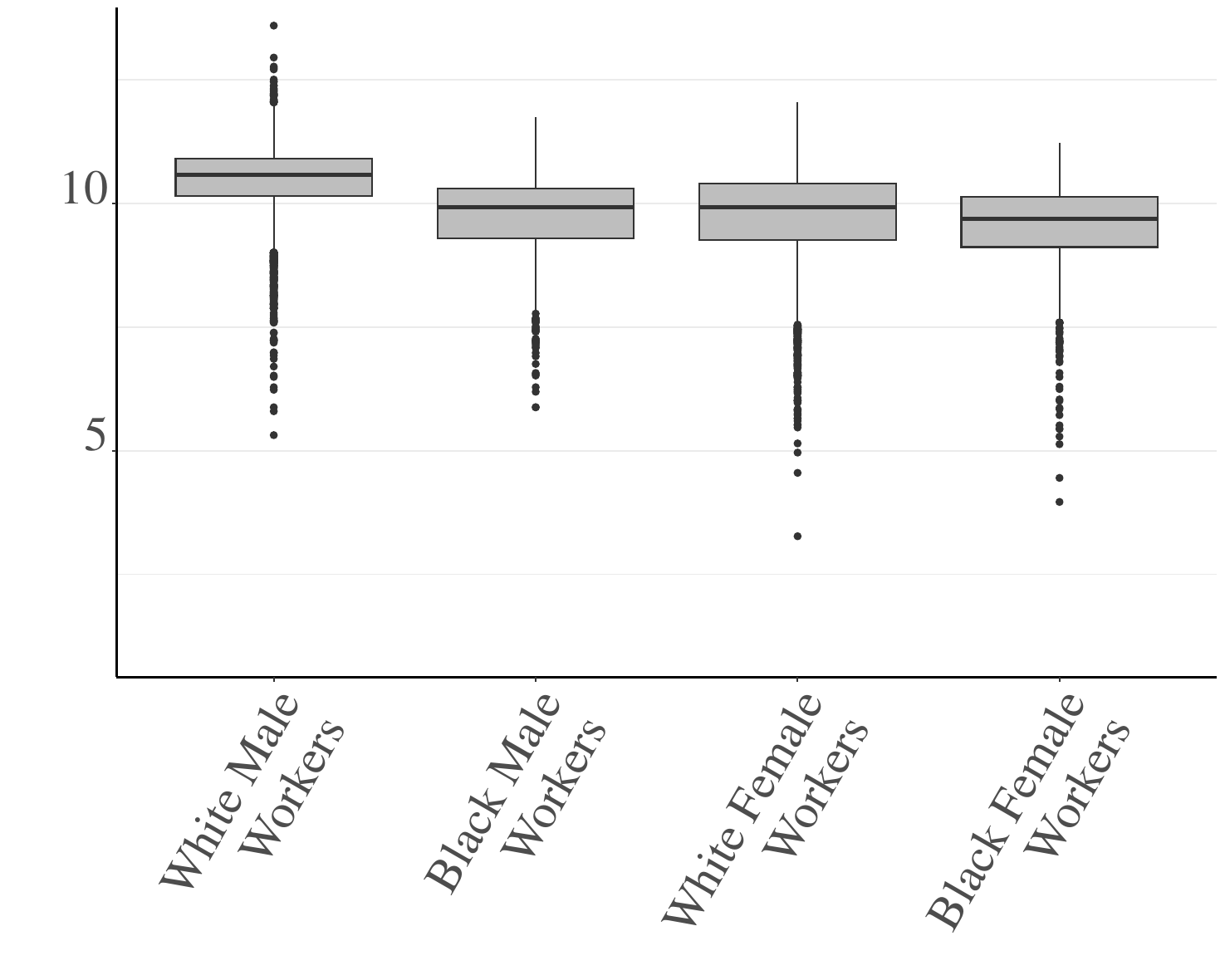} \\
			\end{subfigure}
			
			\begin{subfigure}{0.465\textwidth}
				\centering
				\caption{Control industries, Year $\geq 1967$}
				\includegraphics[width=0.95\textwidth]{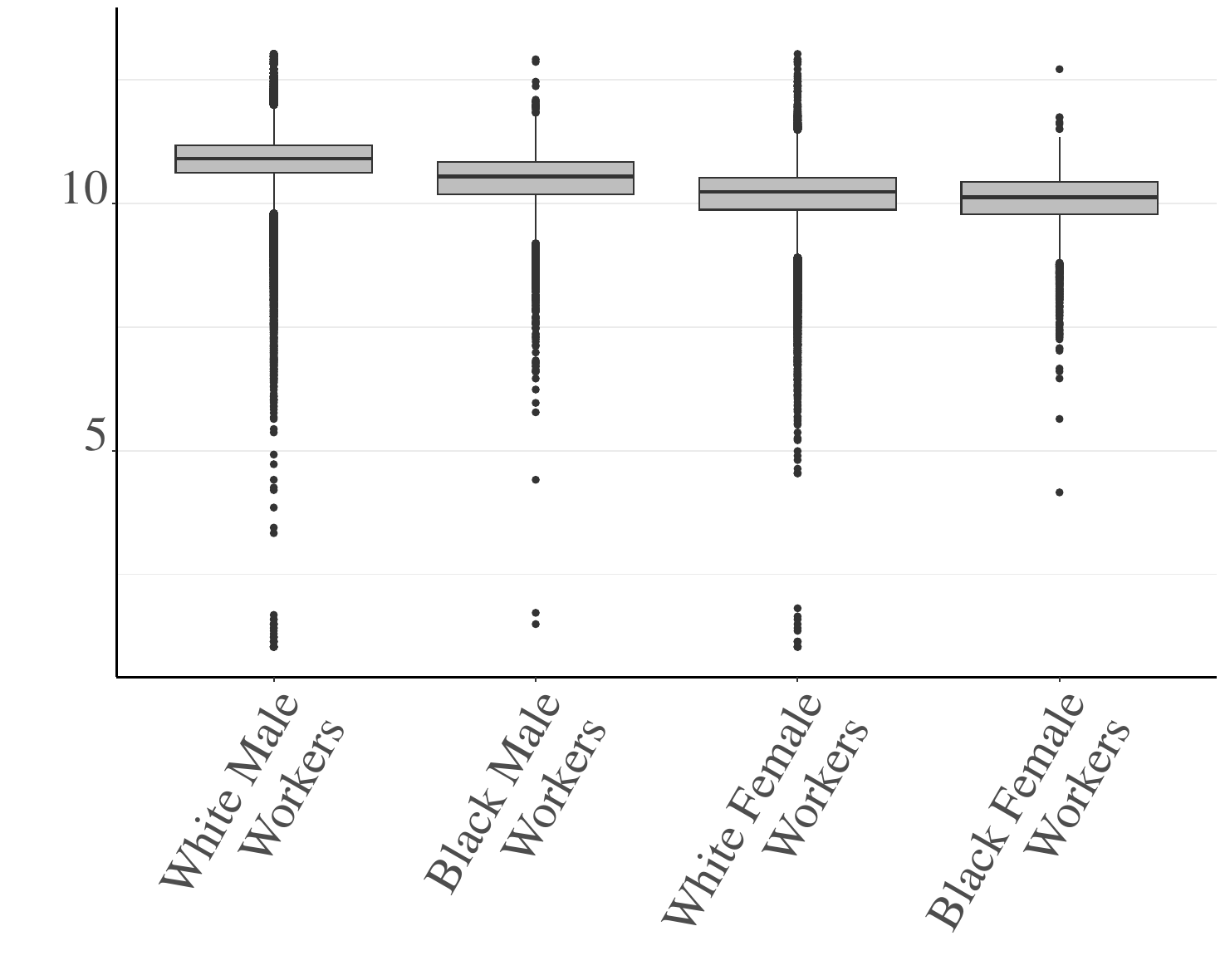} 
			\end{subfigure} 
			\begin{subfigure}{0.465\textwidth}
				\centering
				\caption{Treated industries, Year $\geq 1967$}
				\includegraphics[width=0.95\textwidth]{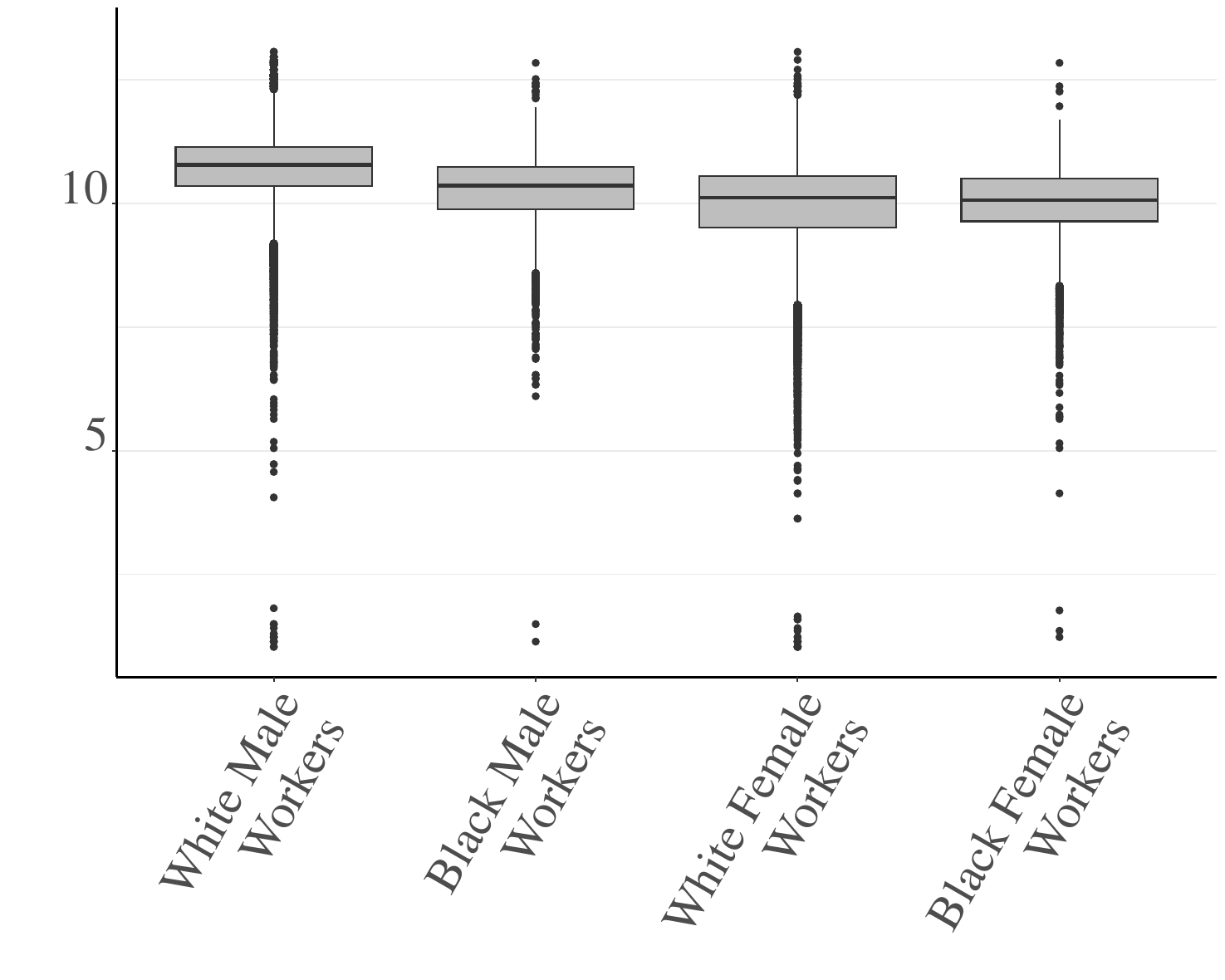} \\
			\end{subfigure}

			\begin{minipage}{\linewidth} 
				\footnotesize
				\textit{Notes:}  
				The data sources are 
				the Current Population Survey 1962, 1964--1981.
				Sample includes black or white adults aged 25--65,
				who worked more than 13 weeks last year,
				worked three hours last week,
				do not live in group quarters,
				are not self-employed and  not unpaid family worker
				with no missing industry or occupation code within the treated and control industries.
			\end{minipage}  
		\end{figure}
		
		\begin{figure}[!htb]
			\centering
			\caption{Aggregate Employment Shares}
			\label{fig:summary_stat}
			\begin{subfigure}{0.465\textwidth}
				\centering
				\caption{Black proportions, by industries}
				\includegraphics[width=0.95\textwidth]{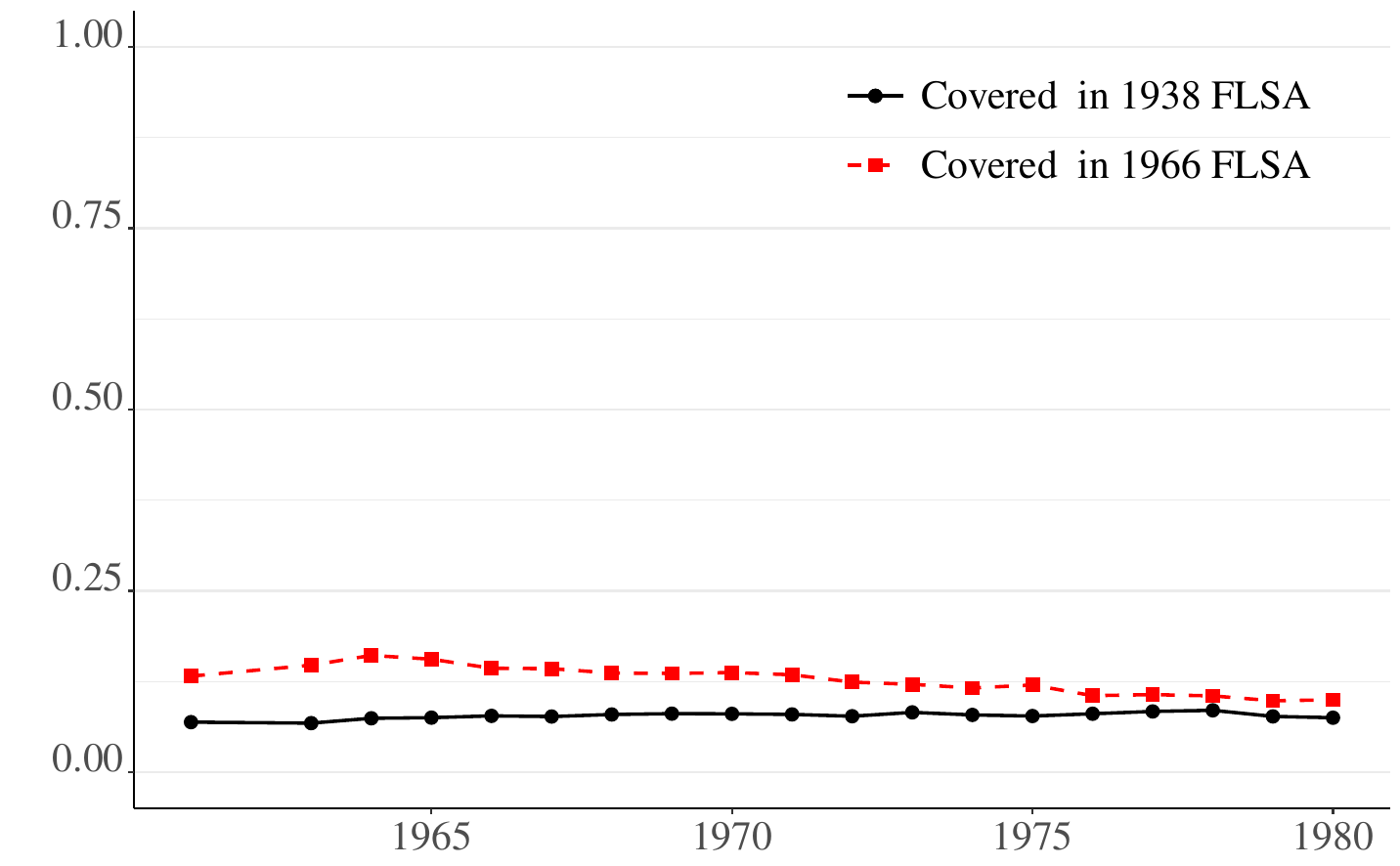} 
			\end{subfigure}
			\begin{subfigure}{0.465\textwidth}
				\centering
				\caption{Female proportions, by industries}
				\includegraphics[width=0.95\textwidth]{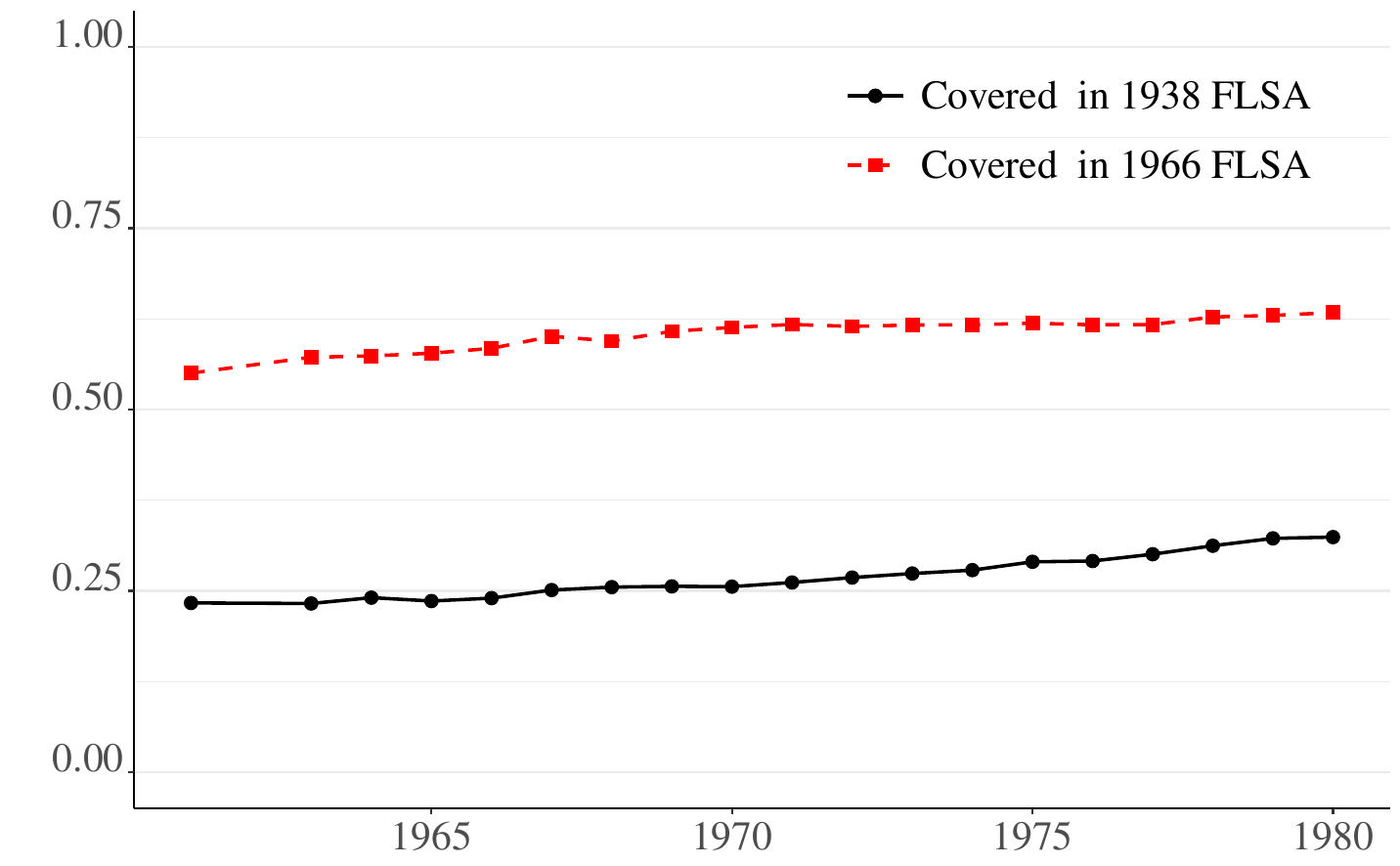} 
			\end{subfigure} \\
			\begin{subfigure}{0.465\textwidth}
				\centering
				\caption{Control industries, by race and gender}
				\includegraphics[width=0.95\textwidth]{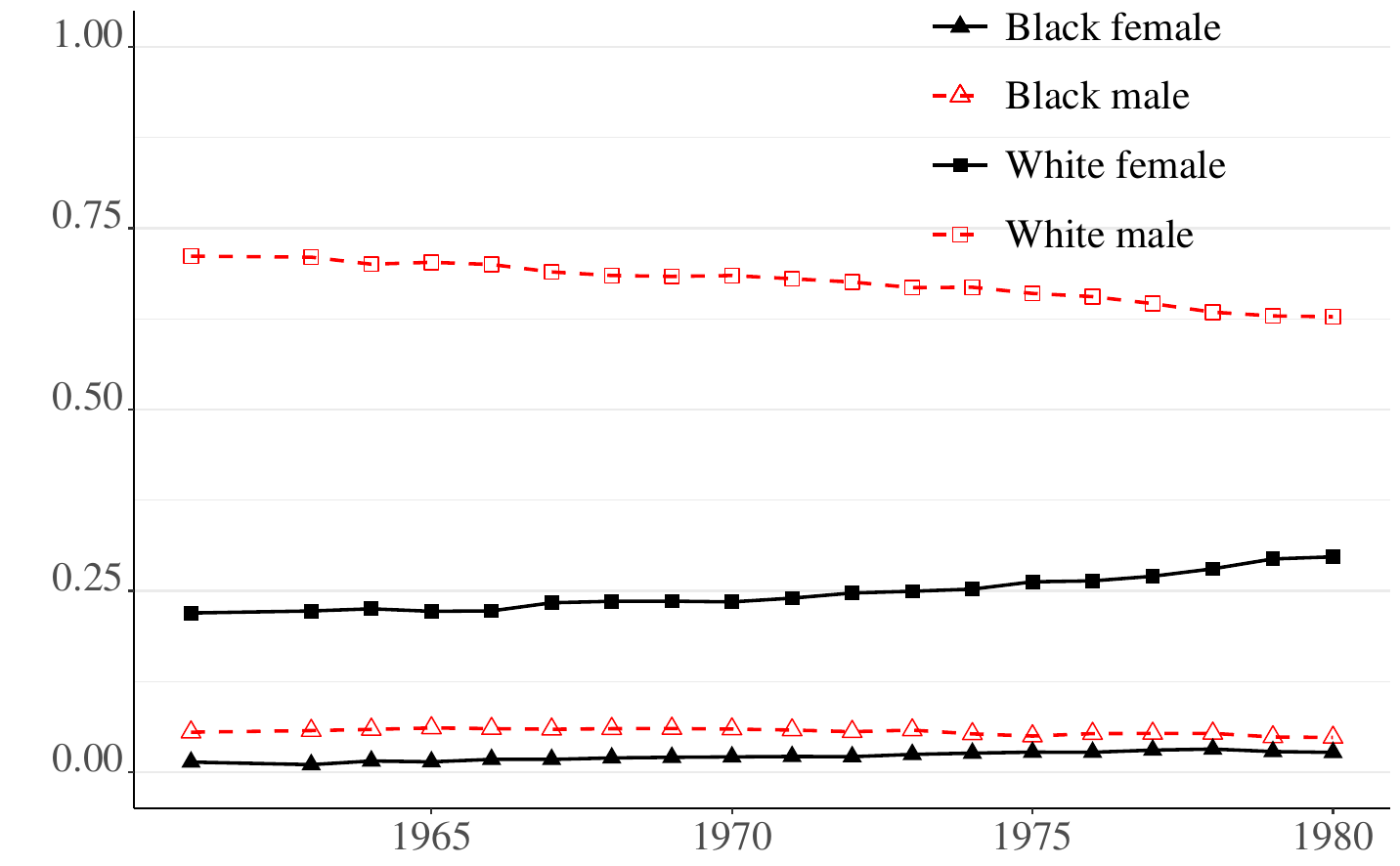} 
			\end{subfigure}
			\begin{subfigure}{0.465\textwidth}
				\centering
				\caption{Treated industries, by race and gender}
				\includegraphics[width=0.95\textwidth]{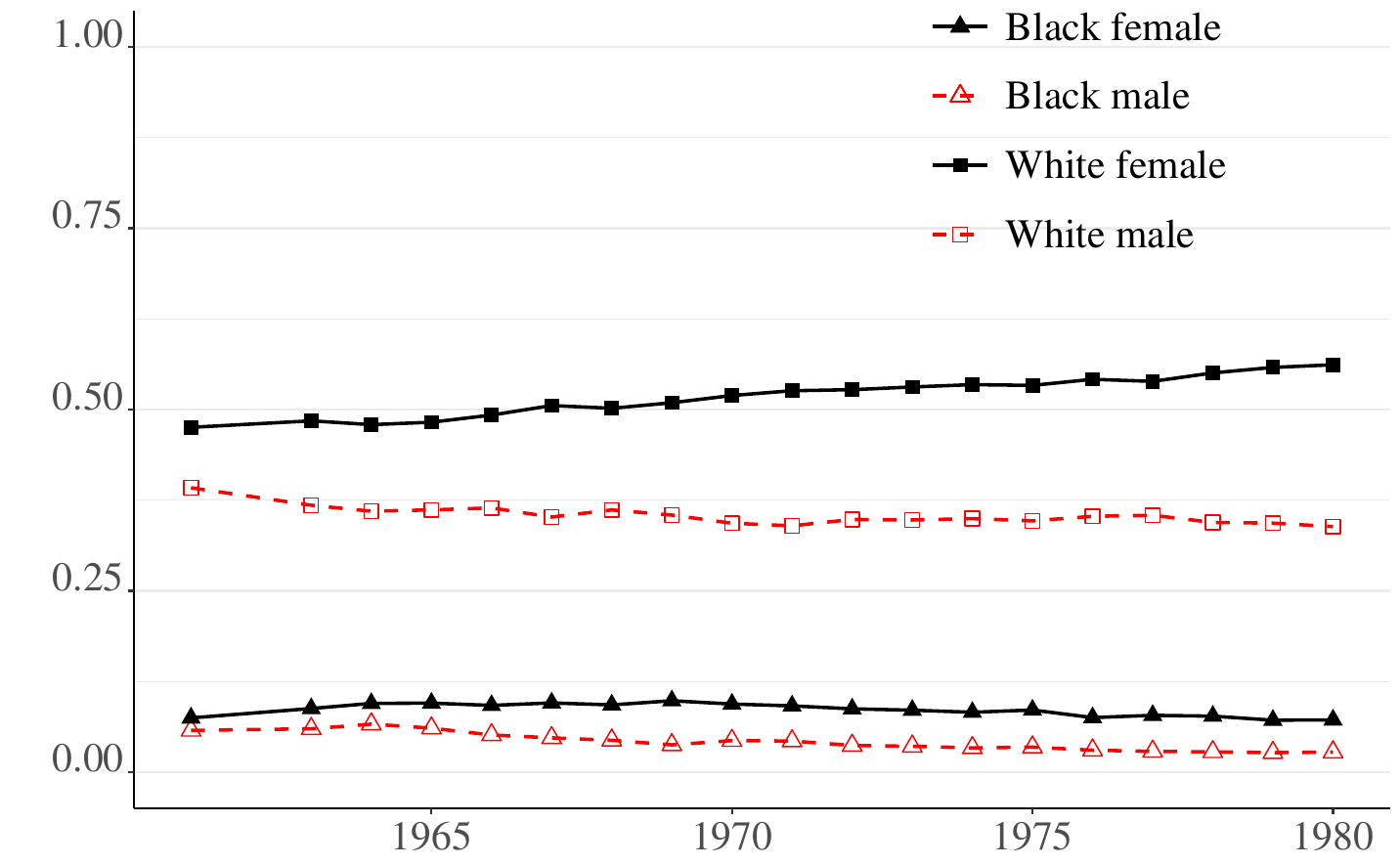} 
			\end{subfigure} 
			
			\begin{minipage}{\linewidth} 
				\footnotesize
				\textit{Notes:}  
				The data sources are 
				the Current Population Survey 1962, 1964--1981.
				Sample includes black or white adults aged 25--65,
				who worked more than 13 weeks last year,
				worked three hours last week,
				do not live in group quarters,
				are not self-employed and  not unpaid family worker
				with no missing industry or occupation code within the treated and control industries.
			\end{minipage}  
		\end{figure}

		\begin{figure}[!htb]
			\centering
			\caption{Education and Experience by Race and Gender}
			\label{fig:educ_exp}
			\begin{subfigure}{0.465\textwidth}
				\centering
				\caption{Years of education}
				\includegraphics[width=0.95\textwidth]{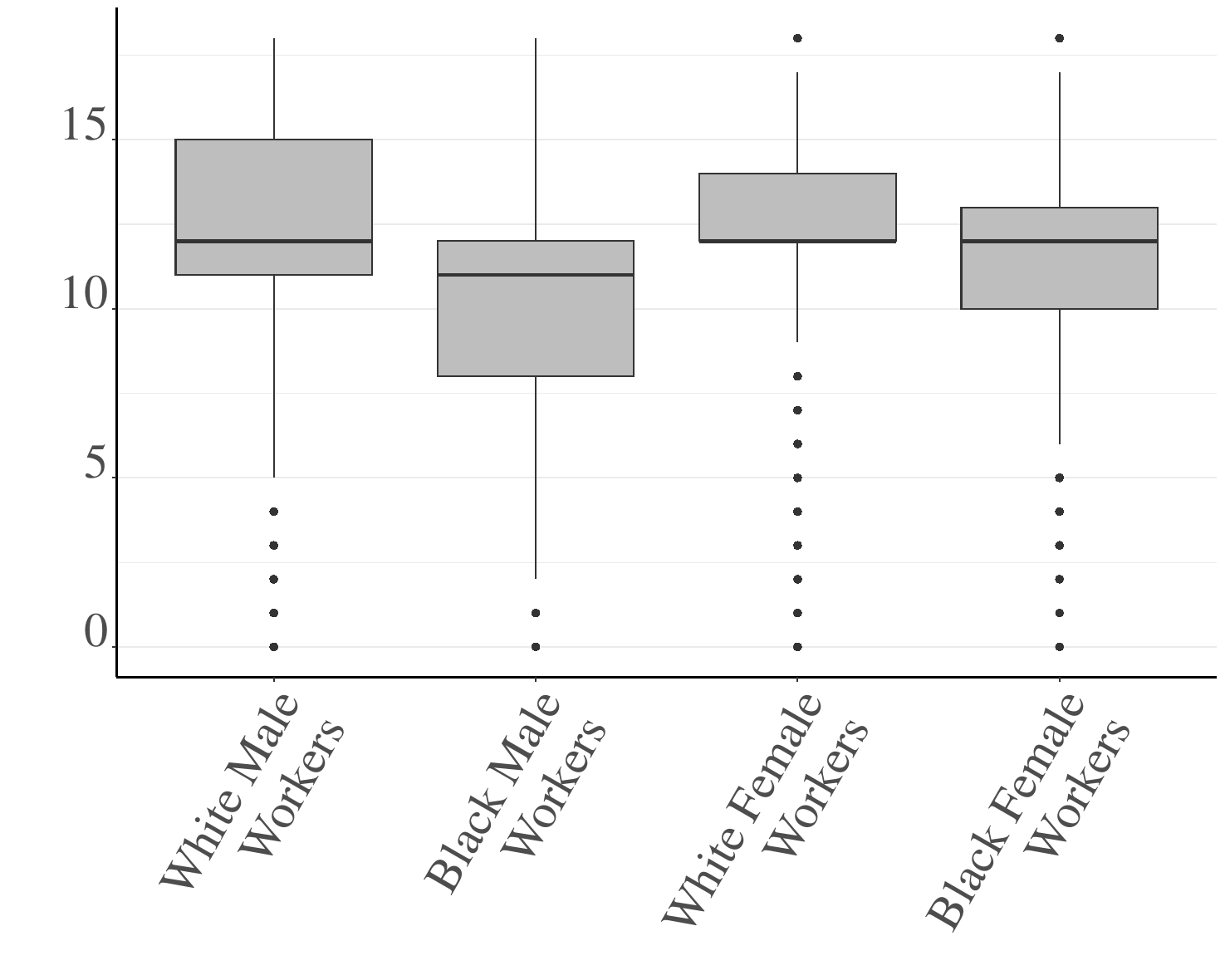} 
			\end{subfigure}
			\begin{subfigure}{0.465\textwidth}
				\centering
				\caption{Years of experience}
				\includegraphics[width=0.95\textwidth]{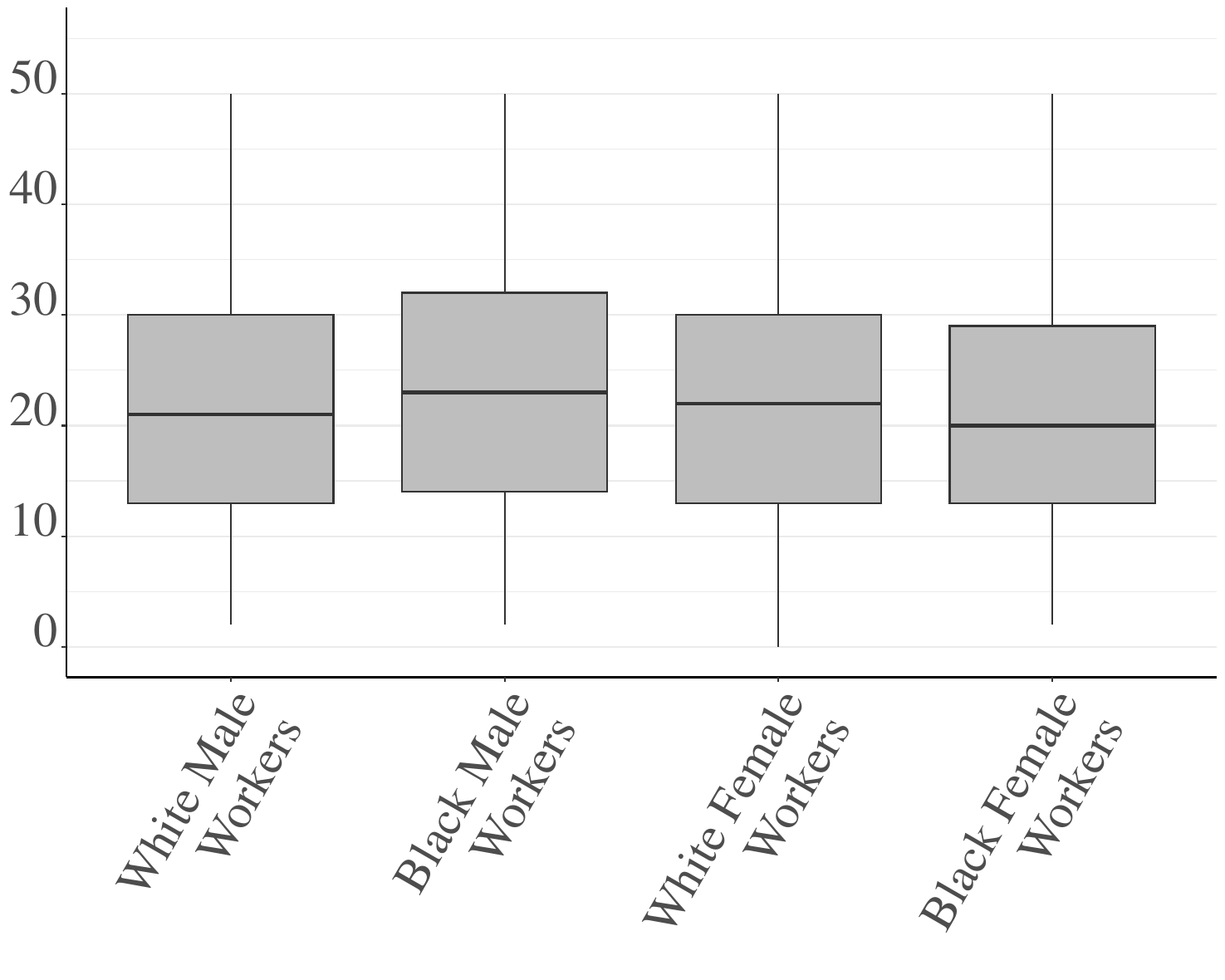} \\
			\end{subfigure} 
			
			\begin{minipage}{\linewidth} 
				\footnotesize
				\textit{Notes:}  
				The data sources are 
				the Current Population Survey 1962, 1964--1981.
				Sample includes black or white adults aged 25--65,
				who worked more than 13 weeks last year,
				worked three hours last week,
				do not live in group quarters,
				are not self-employed and  not unpaid family worker
				with no missing industry or occupation code within the treated and control industries.
			\end{minipage}  
		\end{figure}

		\clearpage

		\newpage
		
		\subsection{Estimation with Alternative Model}
		
		Different from \cite{derenoncourt2021minimum}, which incorporate the additive fixed effects to document the industrial and temporal heterogeneities, we consider the interactive fixed effects in our model specification (\ref{quantileDiD.eq2}) to allow for more flexible latent heterogeneities. A natural question to ask is whether it is necessary to consider the interactive fixed effects or if the conventional additive fixed effects model is the correct specification. To answer this question, we provide the estimation results based on the model with additive fixed effects as follows. We consider quantile random coefficient model in (\ref{Q_y}) with 
		$\alpha_{st}(u) = [\alpha_{1st}(u), \dots, \alpha_{Jst}(u)]'$,
		where, for $j=1, \dots, J$, 
		\begin{eqnarray}
			\label{quantileDiD_ae}
			\alpha_{jst}(u)
			= \beta_{j}(u) + 
			\delta_{jt}(u)    d_{st}
			+
			{f}_{jt}(u) + {\lambda}_{js}(u) +
			\eta_{jst}(u). 
		\end{eqnarray}
		
		In Figure \ref{fig:marginal_ae}, we present time-varying policy effects $\delta_{jt}(u)$ in (\ref{quantileDiD_ae}) with $95\%$ confidence intervals for the categorical individual-level covariates. The estimated policy effects are generally smaller than those from the model with IFE (see Figure~\ref{fig:marginal}), suggesting that omitting the IFE might lead to a downward bias in the estimation of regression coefficients. Especially, at $0.1$th quantile, the estimated policy effects for both black and female are not significantly different from 0 under model (\ref{quantileDiD_ae}), which is different from the results under model (\ref{quantileDiD.eq2}) and deviates from the existing literature. 
		
		Given the questionable estimation results, it is reasonable to question the correctness of using the simple additive fixed effects in the quantile random coefficient regression and use a more flexible IFE structure instead. In addition, we would like to mention that, based on the factor selection criteria given in Section \ref{sec:estimation}, the number of latent factors is distinct across quantiles $\tau$ and individual regressors $j$, ranging from 1 to 3. The variation in the number of latent factors also suggests that the model cannot be fully characterized with additive fixed effects.

		\begin{figure}[!htb]
			\centering
			\caption{Time-Varying Policy Effect Estimates ($\delta_{jt}$) with the Alternative Model (\ref{quantileDiD_ae})}
			\label{fig:marginal_ae}
			\begin{subfigure}{\textwidth}
				\centering
				\caption{Intercept (Baseline: White, Male, Full-Time Workers) }
				\includegraphics[width=\textwidth]{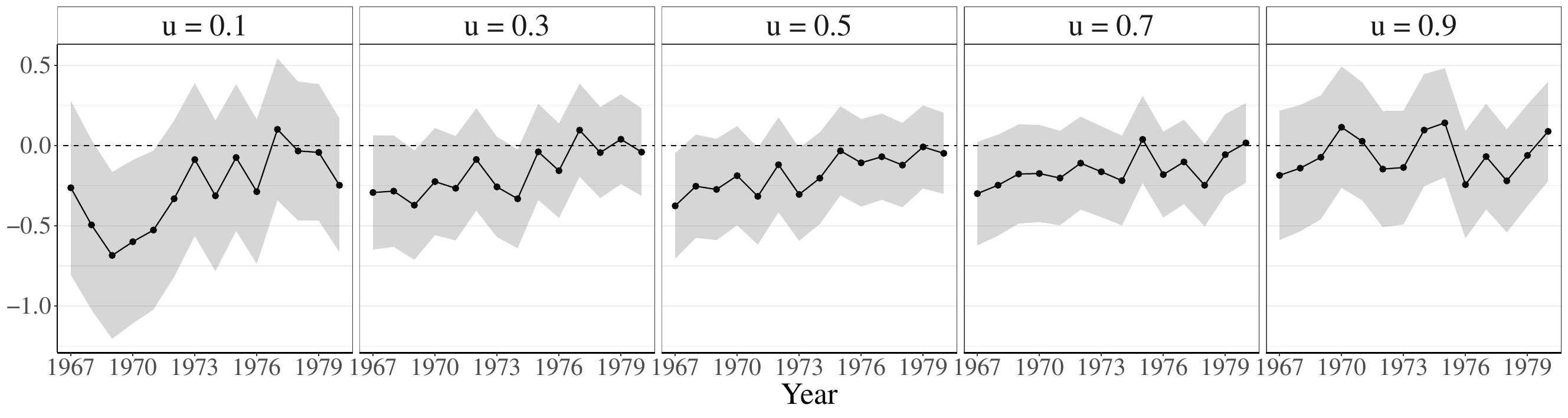} 
			\end{subfigure}\\
			\begin{subfigure}{\textwidth}
				\centering
				\caption{Black}
				\includegraphics[width=\textwidth]{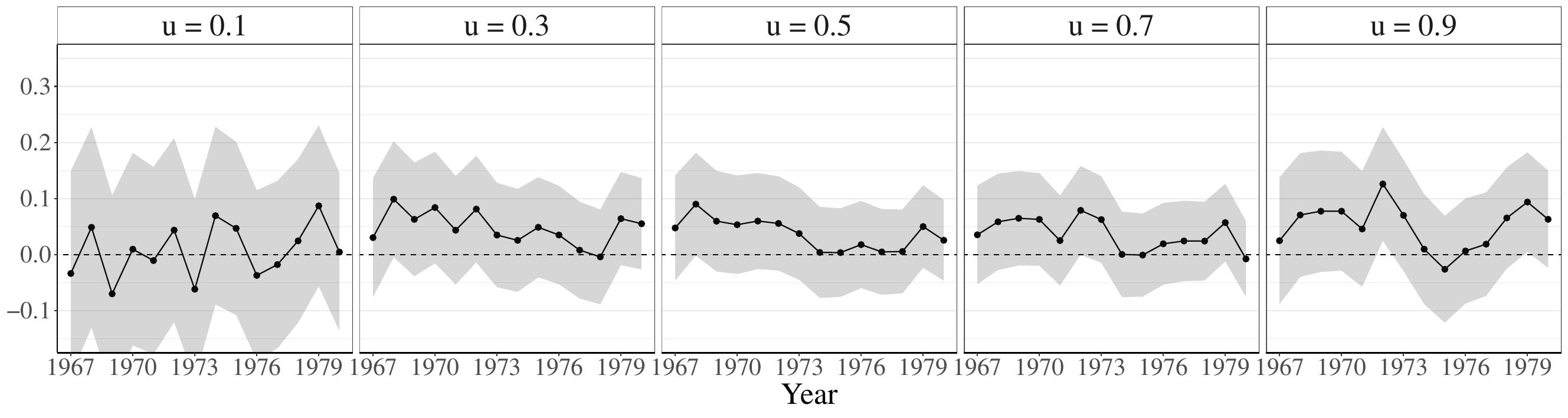} 
			\end{subfigure}\\ 
			\begin{subfigure}{\textwidth}
				\centering
				\caption{Female}
				\includegraphics[width=\textwidth]{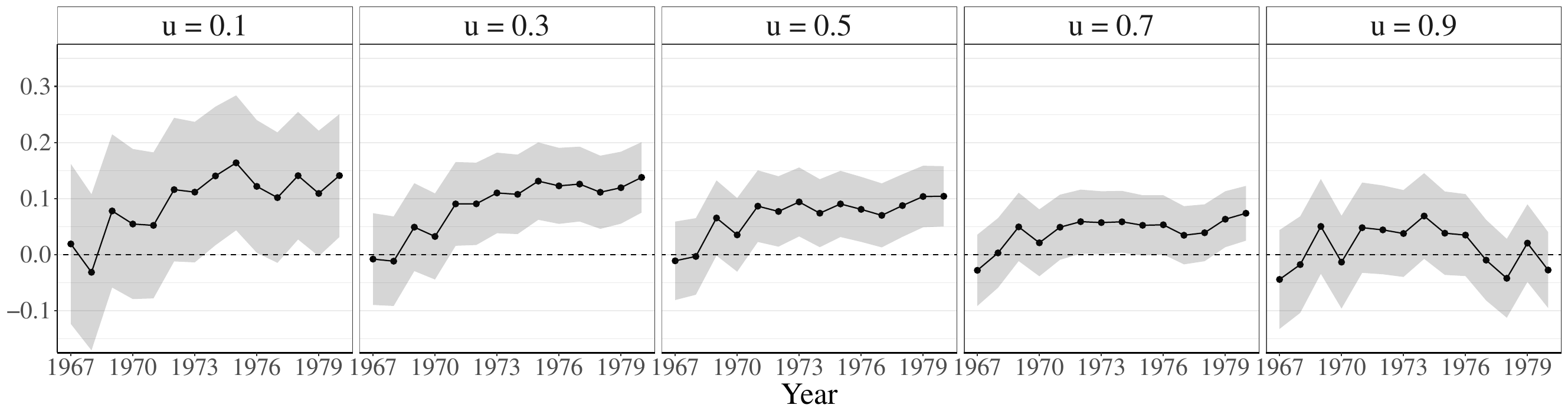}  
			\end{subfigure} 
			\begin{minipage}{0.95\linewidth}
				\footnotesize
				\textit{Notes:} Panels (a)-(c) present the estimated time-varying marginal policy effect
				$\delta_{jt}(u)$
				for $t=1967,\dots,1980$ from the alternative model (\ref{quantileDiD_ae}). From left to right, figures correspond to the estimates at quantiles $u = 0.1, 0.3, 0.5, 0.7, 0.9$. Point estimates are plotted with solid black lines, and the pointwise $95\%$ confidence interval are shown as grey shaded area.
			\end{minipage}
		\end{figure}
		\clearpage

		\section{Technical Assumptions}\label{sec:tech_assumptions}
		In this section, we provide the technical assumptions required for establishing the asymptotic results, together with the corresponding justifications.
		
		\begin{assumption} \label{ass.mixing}
			\begin{enumerate}[label=(\roman*)]
				\item  for $s=1,..., S$, the random sequence $\{{\ell}_{j s t}(u):=(d_{st}, x_{s t}', {f}_{jt}'(u), \eta_{j s t}(u)): t \geq 1\}$ is a stationary and $\alpha$-mixing process with mixing coefficient $a_{s}(\tau)$ and $\tau>0$. Furthermore, there exists a positive coefficient function $a(\tau)$ such that
				$\sup_{s} a_{s}(\tau) \leq a(\tau)$
				and 
				$\sum_{t\neq l}^{T} a(|t-l|)^{\delta/(4+\delta)}=O(T)$
				for such $\delta>0$ that $\sup_{s,t} \mathbb{E}\left[\|\ell_{j s t}(u)\|^{4+\delta}\right]<\infty$.
				\item For any cross groups $s$ and $g$ with $s \neq g$, the random sequence $\{\left(\ell_{j s t}(u), \ell_{j g t}(u)\right): t \geq 1\}$ is also an
				$\alpha$-mixing process with mixing coefficient $a_{sg}(\tau)$, which satisfies
				$\sum_{s \neq g}^{S}a_{sg}(0)^{\delta/(4+\delta)}  \allowbreak =O(S)$ and
				$ \sum_{s\neq g}^{S} \sum_{t \neq l}^{T} a_{sg}(|t-l|)^{\delta/(4+\delta)} = O(ST)$.
				\item
				For any cross groups $s, g, k, m = 1,...,S$ where $s \neq g \neq k \neq m$, the random sequence $\{(\ell_{j s t}(u), \ell_{j g t}(u), \ell_{j k t}(u), \ell_{j m t}(u)): t \geq 1 \}$ is an $\alpha$-mixing process with mixing coefficient $a_{s g k m}(\tau)$ such that
				$
				\sum_{s,g, k, m=1}^{S} \sum_{t \neq l}^{T}
				a_{s g k m}(|t-l|)^{\delta/(4+\delta)}=O\left(S^{2} T\right).
				$
				
			\end{enumerate}
		\end{assumption}
		
		Assumption~\ref{ass.mixing} uses the notation of ``$\alpha$-mixing'' for panel data \citep[e.g.,][]{dong2015semiparametric,jiang2020recursive} to capture the temporal dependence exhibited in large panels and controls for the cross-sectional dependency by regulating the mixing coefficient across sections in a concise manner.
		Alternatively, one can assume the high-order moment conditions employed by \cite{bai2009panel}.

		\begin{assumption}\label{ass.reg_1}
			For any fixed $u\in \mathcal{U}$, as $S,T \rightarrow \infty$ jointly,  
			\begin{enumerate}[label=(\roman*)] 
				\item 
				$ S^{-1}\sum_{s=1}^{S} R_{js}(u)^2 >0 $,
				where
				$$R_{js}(u) := d_{s} - S^{-1}\sum_{g=1}^{S}\omega_{j,sg}(u)d_g,
				\quad
				\omega_{j,sg}(u) := {\lambda}_{jg}(u)'
				\big(
				S^{-1}\Lambda_j(u)' \Lambda_j(u)
				\big)^{-1} {\lambda}_{js}(u),   $$
				\item  the eigenvalues of the following quantities: 
				\begin{align*}
					& \text{(a) }
					(ST)^{-1} \sum_{s=1}^{S} X_s'X_s,
					\quad
					\text{(b) }
					(ST)^{-1}\Big\{\sum_{s=1}^{S} X_s'X_s - \Big(\sum_{s=1}^{S}d_s\Big)^{-1}\sum_{s,g=1}^S X_s'D_sD_g'X_g\Big\},\\
					& 
					\text{(c) }
					\inf_{F: T^{-1}F'F = I_{r}} (ST)^{-1}\sum_{s=1}^{S} X_s' M_{F}X_s ,
				\end{align*}
				are bounded away from zero with probability one, where 
				$M_{F}:= I_T -T^{-1}FF'$.
			\end{enumerate}
		\end{assumption}

		Assumption~\ref{ass.reg_1} is a technical assumption which guarantees that the inverse matrix in the initial and recursive estimators of the regression coefficients are well-defined, so that the estimation in the second step is valid. Similar assumption are adopted in \cite[][etc]{bai2009panel,jiang2020recursive}.

		\begin{assumption}
			\label{ass.tech}
			For each recursive step $m \geq 1$,
			there exist 
			positive definite matrices
			$\Sigma_{FF_j}$ and
			$\Sigma_{0}$ such that, as $S, T \to \infty$,
			\begin{enumerate}[label=(\roman*)]
				\item  
				$
				\widehat{F}^{(m)}_j(u) \in 
				\{
				F \in \mathbb{R}^{T \times r}: T^{-1}F'F = I_r,
				T^{-2}F_j(u)' FF' F_j(u) \rightarrow \Sigma_{FF_j}
				\}
				$;
				\item
				$
				(\widehat{\delta}^{(m)}_{j}(u), \widehat{\beta}^{(m)}_j(u) ) \in 
				\{ (\delta,\beta) \in \mathbb{R}^{T-T_0+1} \times \mathbb{R}^{K} : F_j^{(m)}(u)' \widehat{L}_j(\delta,\beta)F_j^{(m)}(u) \overset{p}{\rightarrow} \Sigma_{0}\}
				$, where $\widehat{L}_j(\delta,\beta)$ is defined as (\ref{eq:W}).

			\end{enumerate}
		\end{assumption}

		Assumption~\ref{ass.tech}.(i) is required for deriving a closed-form expression for the recursive formula of $(\widehat{\delta}^{(m)}_{j}(u)$, $\widehat{\beta}^{(m)}_j(u) )$, so that the Central Limit Theorem (CLT) can be established accordingly, and (ii) is a technical assumption required in the derivations, which ensures the invertibility of $\widehat{V}^{(m)}_j(u) := \text{diag}\big(\widehat{\rho}^{(m)}_{j,1}(u),...,\widehat{\rho}^{(m)}_{j,r}(u)\big)$.

		\begin{assumption} \label{ass.norm}
			For $u \in \mathcal{U}$, the following rates hold:
			\begin{enumerate}[label=(\roman*)]
				\item
				$ \mathbb{E}\|\sum_{s=1}^{S}X_sX_s'\|^2 = O(S^2T^2)$, 
				\item $\mathbb{E}\|\sum_{s=1}^{S}X_s' F_j(u) {\lambda}_{js}(u)\|^{2} = O(ST)$,
				\item for all $t \geq T_0$, $ \mathbb{E}\|\sum_{s=1}^{S} d_sf_{jt}(u)' {\lambda}_{js}(u)\|^{2} = O(S) $ uniformly.
				
			\end{enumerate}
			
		\end{assumption}
		
		Assumption~\ref{ass.norm} guarantees the desirable rates of the regression coefficients. In particular, 
		condition (i) follows trivially when ${\max}_{s,g,t,l} |\mathbb{E}(x'_{st}x_{gt}x_{sl}'x_{gl}) | \leq C < \infty$. Conditions (ii) and (iii) hold when the factor component is independent of the group-level regressors and satisfies $\mathbb{E}[{\lambda}_{js}(u)'f_{jt}(u)]=0$.
		
		\section{Proofs of The Main Results}\label{sec:proof}
		In this section, we provides the proof of the results in the main text.  
		In Section~\ref{A.proof2}, we first derive the convergence rates for the parameter of interest in Theorem~\ref{t5}, and then present the proofs of the asymptotic distributions for the regression coefficients (Theorem~\ref{t6}). Finally, we provide  the consistency of our proposed estimators for the asymptotic bias and covariance, stated as Proposition~\ref{clt_estimate}. 
		In Section~\ref{sec:identification_proof}, we provide the proof of the results on treatment parameters, including the identification result (Theorem~\ref{theorem:identification}) and the CLT (Theorem~\ref{CLT_inequality}).
		
		In what follows, we use a few additional notation.
		For a square matrix $A$,
		let $\text{tr}(A)$ denote the trace operation of $A$
		and let $\rho_{\max}(A)$ and $\rho_{\min}(A)$ denote
		the maximum and minimum eigenvalue of $A$, respectively.
		Notice that $\rho_{\max}(A) \leq \text{tr}(A)$
		for every symmetric positive semi-definite matrix. Thus, $\|X\|^2\leq \text{tr}(X'X)$
		for  matrix $X$, and we repeatedly use this inequality.
		We denote $c$ and $C$ as strictly positive constants that depend only on $c_M,\ C_M$, whose values can change at each appearance.
		Let ${x}_{st}, {x}_{s(k)}$ and $x_{st(k)}$ represent different vectors related to the regressor $X$. Precisely, ${x}_{st}$ is a $K\times 1$ vector, ${x}_{s(k)}$ is a $T \times 1$ vector and $x_{st(k)}$ is a scalar for any $(s,t,k) \in \{1,...,S\} \times \{1,...,T\} \times \{1,...,K\}$. 
		Let $\delta_{ST} \equiv \min(\sqrt{S},\sqrt{T})$. Let $e_k$ denote the a column vector, whose entries are all $0$ except for the $k$-th entry; the dimension of $e_k$ varies along with the context.
		
		In addition, we use the following facts: for all $(s,t) \in \{1,...,S\} \times \{1,...,T\} $, $\|X_s\| = O_p(\sqrt{T})$, $\|F_j(u)\| = O_p(\sqrt{T})$, $\|\widehat{F}_j^{(m)}(u)\| = O_p(\sqrt{T})$, $\|\Lambda_j(u)\| = O_p(\sqrt{S})$, $\|\widehat{\Lambda}_j^{(m)}(u)\| = O_p(\sqrt{S})$ under Assumption~\ref{ass.cov}. Also, $\|M_{F_j}(u)\| = O_p(1)$ since the largest
		eigenvalue of $M_{F_j}(u)$ is 1 as $M_{F_j}(u)$ is a projection matrix, and similarly, $\|M_{\widehat{F}_j}^{(m)}(u)\| = O_p(1)$. In addition, we define the orthogonal projection matrix $P_{F_j}(u) := {F_j}(u)({F_j}(u)'{F_j}(u))^{-1}{F_j}(u)$.
		\subsection{Proofs of the Asymptotic Properties of Regression Coefficients}\label{A.proof2}

		Before proceeding to the proof, since the convergence of $\widehat{\delta}^{(m)}_{jt}(u)$ and $\widehat{\beta}^{(m)}_{j}(u)$ are different, there is a need to partial-out the expression of $\widehat{\delta}^{(m)}_{jt}(u)$ and $\widehat{\beta}^{(m)}_{j}(u)$ from that of $\widehat{\beta}^{(m)}_j(u) := [\widehat{\delta}^{(m)}_{jT_0}(u),...,\widehat{\delta}^{(m)}_{jT}(u),\widehat{\beta}^{(m)}_j(u)]'$ obtained in Section~\ref{sec:estimation}. Recall that the estimator $\widehat{\beta}^{(m)}_j(u)$ is the minimizer of the SSR of Equation~(\ref{SSR}). Therefore, by setting the first order derivative of SSR with respect to $\beta^{(m)}_j(u)$ to zero, we obtain the following relationships between $\widehat{\delta}^{(m)}_{jt}(u)$ and $\widehat{\beta}^{(m)}_{j}(u)$:
		\begin{align}\label{B.eq1}
			\begin{cases}
				\widehat{\delta}_{jt}^{(0)}(u) = \big(\sum_{s=1}^{S}d_s^2\big)^{-1}\sum_{s=1}^{S}d_s\big(\widehat{\alpha}_{jst}(u) - x_{st}'\widehat{\beta}_{j}^{(0)}(u)\big), \quad t= T_0,...,T,
				\\
				\widehat{\beta}_{j}^{(0)}(u) = \big(\sum_{s=1}^{S}X_s'X_s\big)^{-1}\sum_{s=1}^{S}X_s'\big(\widehat{A}_{js}(u) - D_s\widehat{\delta}_{j}^{(0)}(u) \big),
			\end{cases}
		\end{align}
		and, for $m \geq 1$, given $(\widehat{F}_j^{(m)}(u),\widehat{\Lambda}^{(m)}_j(u))$,  
		{\small
			\begin{gather}\label{B.eq2}
				\begin{cases}
					\widehat{\delta}_{jt}^{(m)}(u) = \big(\sum_{s=1}^{S}d_s^2\big)^{-1}
					\sum_{s=1}^{S}d_s \big(\widehat{\alpha}_{jst}(u) - x_{st}'\widehat{\beta}_{j}^{(m)}(u) - \widehat{f}^{(m)}_{jt}(u)'\widehat{\lambda}^{(m)}_{js}(u)\big),  
					\\
					\widehat{\beta}_{j}^{(m)}(u) = \big(\sum_{s=1}^{S}X_s'X_s\big)^{-1}
					\sum_{s=1}^{S}X_s'\big(\widehat{A}_{js}(u) - D_s\widehat{\delta}_{j}^{(m)}(u) - \widehat{F}^{(m)}_j(u)\widehat{\lambda}^{(m)}_{js}(u) \big).
				\end{cases}
			\end{gather}
		}
		Using the above relationships, we first establish the convergence rate of the regression coefficients in Theorem~\ref{t5}.

		\bigskip
		\noindent
		\textbf{Proof of Theorem~\ref{t5}}
		Since $u$ and $j$ are fixed, we suppress $u$ and $j$ throughout the following proof. 
		The proof is completed by induction. We first show that (i) $\sqrt{ST}\|\widehat{\beta}^{(0)}- {\beta}\| = O_P(1) $, $\sqrt{S}|\widehat{\delta}^{(0)}_{t} - {\delta}_{t} | = O_P(1)$, and $\sqrt{S/T}|\sum_{t=1}^{T_0}\widehat{\delta}^{(0)}_{t} - {\delta}_{t} | = O_P(1)$. Then, using the iterative formula and the rates of the previous estimators, we show that suppose the above rates hold at $(m-1)^{\text{th}}$ iteration, then the rates hold at $m^{\text{th}}$ iteration for all $m \geq 1$.

		\textbf{(i)} We start with $\widehat{\beta}^{(0)} - {\beta}$. By simple algebra on (\ref{B.eq1}) and the linear model (\ref{alpha_st}) for $\alpha_{st}$, we obtain
		\begin{align}\label{t5p.eq1}
			\begin{cases}
				\widehat{\delta}_{t}^{(0)} - \delta_t = \big(\sum_{s=1}^{S}d_s^2\big)^{-1}\sum_{s=1}^{S}d_s\big((\widehat{\alpha}_{st}-\alpha_{st}) - x_{st}'(\widehat{\beta}^{(0)} - \beta) + f_t'\lambda_s + \eta_{st}\big),
				\\
				\widehat{\beta}^{(0)} - \beta = \big(\sum_{s=1}^{S}X_s'X_s\big)^{-1}\sum_{s=1}^{S}X_s'\big((\widehat{A}_{s}-A_s) - D_s(\widehat{\delta}^{(0)}-\delta) + F\lambda_s + \eta_s \big).
			\end{cases}
		\end{align}
		By further substituting the expression of $\widehat{\delta}_{t}^{(0)} - \delta_t$ ($t = T_0,...,T$) into $\widehat{\beta}^{(0)} - \beta$, we obtain
		\begin{align*}
			\Big[ \sum_{s=1}^{S} & X_s'X_s 
			-\Big(\sum_{s=1}^{S}d_s^2\Big)^{-1}\sum_{s,g=1}^{S}X_s'D_sD_g'X_g\Big]
			\big(\widehat{\beta}^{(0)} - \beta \big) \\
			& = \sum_{s=1}^{S}X_s'[(\widehat{A}_s - A_s) + F\lambda_s + \eta_s] 
			- \Big(\sum_{s=1}^{S}d_s^2\Big)^{-1}\sum_{s,g=1}^{S}X_s'D_sD_g'[(\widehat{A}_g - A_g) + F\lambda_g + \eta_g].
		\end{align*}
		
		An application of the triangle inequality together with Lemma~\ref{l1} yields
		$
		\big \|
		\sum_{s=1}^{S}
		X_s'
		\big (
		\widehat{{A}}_{s}
		-
		A_{s}
		\big )
		\big \|
		= O_P((ST)^{1/4}).
		$
		Also,
		$
		(ST)^{-1}
		\sum_{s=1}^{S}  X_s' F \lambda_{s}
		= O_P\big(\sqrt{ST}\big)
		$
		and (\ref{al5.eq2}) in
		Lemma~\ref{al.5}
		shows 
		that
		$
		(ST)^{-1/2}
		\sum_{s=1}^{S}  X_s' \eta_{s}
		=
		O_P(1)$.
		In addition, recall that $D_s = d_s[e_{T_0},...,e_T]$ and $d_s \in \{0,1\}$, we then have
		\begin{align*}
			\Big\|\sum_{s,g=1}^{S}X_s'D_sD_g'(\widehat{A}_g - A_g) \Big\|
			= \sup_{g,t}|\widehat{\alpha}_{gt} - \alpha_{gt}| \cdot \sum_{s,g=1}^{S}\sum_{t=T_0}^{T}\|x_{st}d_sd_g \| = O_P(S^{5/4}T^{1/4}),
		\end{align*}
		following from Lemma~\ref{l1}, 
		\begin{align*}
			\Big\|\sum_{s,g=1}^{S}X_s'D_sD_gF\lambda_g  \Big\|
			\leq \Big(\sum_{t=T_0}^{T}\Big\|\sum_{s=1}^{S}d_sx_{st}\Big\|^2\Big)^{1/2}\Big(\sum_{t=T_0}^{T}\Big\|\sum_{g=1}^{S}d_sf_t'\lambda_g\Big\|^2\Big)^{1/2} = O_P(ST),
		\end{align*}
		due to the assumption that $\mathbb{E}\|\sum_{g=1}^{S}d_sf_t'\lambda_g\|^2 = O_P(S)$ for all $t \geq T_0$ uniformly, and,
		\begin{align*}
			\Big\|\sum_{s,g=1}^{S}X_s'D_sD_g\eta_g\Big\|
			= \Big\|\sum_{s,g=1}^{S}\sum_{t=T_0}^{T}d_s d_g x_{st}\eta_{gt}\Big\| = O_P(S\sqrt{T})
		\end{align*}
		by $\alpha$-mixing. Finally, under Assumption~\ref{ass.reg_1}, we have $\|(\sum_{s=1}^{S}d_s^2)^{-1}\| = O_P(S^{-1}) $ and $\|[\sum_{s=1}^{S}X_s'X_s 
		-(\sum_{s=1}^{S}d_s^2)^{-1}\sum_{s,g=1}^{S}X_s'D_sD_g'X_g]^{-1}\| = O_P((ST)^{-1}) $. Thus, we collect the rates for all terms and obtain $\sqrt{ST}\|\widehat{\beta}^{(0)}_j(u) - {\beta}_j(u)\| = O_P(1) $.
		
		\textbf{(ii)} We then consider $\widehat{\delta}^{(0)}_{t} - {\delta}_{t}$ for a given $t \geq T_0$. We check all the terms of the expression of $\widehat{\delta}^{(0)}_{t} - {\delta}_{t}$ in (\ref{t5p.eq1}) as follows: $|\sum_{s=1}^{S}d_s(\widehat{\alpha}_{st}-\alpha_{st})| \leq \sup_{s,t}|\widehat{\alpha}_{st}-\alpha_{st}|\cdot\sum_{s=1}|d_s| = O_P(S^{1/4}T^{-3/4})$ using Lemma~\ref{l1}, $|\sum_{s=1}^{S}d_sf_t'\lambda_s| = O_P(\sqrt{S})$ by assumption, $|\sum_{s=1}^{S}d_s\eta_{st}| = O_P(\sqrt{S})$ by $\alpha$-mixing, and $|\sum_{s=1}^{S}d_s x_{st}'(\widehat{\beta}^{(0)} - \beta)| \leq   \|\widehat{\beta}^{(0)}_j(u) - {\beta}_j(u)\|\sum_{s=1}^{S} |x_{st}|  = O_P(\sqrt{S/T})$. 
		Finally, since $|(\sum_{s=1}^{S}d_s^2)^{-1}\| = O_P(S^{-1})$ under Assumption~\ref{ass.reg_1}, collecting the rates of all terms yields $\sqrt{S}|\widehat{\delta}^{(0)}_{jt}(u) - {\delta}_{jt}(u)| = O_P(1)$.
		
		In addition, we show that $|\sum_{t=T_0}^{T}\widehat{\delta}^{(0)}_{jt}(u) - {\delta}_{jt}(u)| = O_P(1)$ as follows. Since
		\begin{align*} \label{t5p.eq4}
			\sum_{t=T_0}^{T}\widehat{\delta}^{(0)}_{t} - {\delta}_{t} = 
			\big(\sum_{s=1}^{S}d_s^2\big)^{-1}\sum_{s=1}^{S}\sum_{t=T_0}^{T}d_s\big((\widehat{\alpha}_{st}-\alpha_{st}) - x_{st}'(\widehat{\beta}^{(0)} - \beta) + f_t'\lambda_s + \eta_{st}\big),
		\end{align*}
		applying similar arguments as those in (ii), it is easy to check that $|\sum_{s=1}^{S}\sum_{t=T_0}^{T}d_s(\widehat{\alpha}_{st}-\alpha_{st}) | = O_P((ST)^{1/4})$, $|\sum_{s=1}^{S}\sum_{t=T_0}^{T}d_s x_{st}'(\widehat{\beta}^{(0)} - \beta) | = O_P(\sqrt{ST})$, $|\sum_{s=1}^{S}\sum_{t=T_0}^{T}d_sf_t'\lambda_s| = O_P(\sqrt{ST})$, and $|\sum_{s=1}^{S}\sum_{t=T_0}^{T}d_s\eta_{st}| = O_P(\sqrt{ST})$. Therefore, we have
		\begin{align}
			\bigg|\sum_{t=T_0}^{T}\widehat{\delta}^{(0)}_{t}  - {\delta}_{t}\bigg| 
			& \leq
			O_P(S^{-1})
			\cdot \bigg|\sum_{s=1}^{S}\sum_{t=T_0}^{T}d_s\big((\widehat{\alpha}_{st}-\alpha_{st}) - x_{st}'(\widehat{\beta}^{(0)} - \beta) + f_t'\lambda_s + \eta_{st}\big)\bigg|   \nonumber\\
			& = O_P(1).
		\end{align}
		We also note that a similar argument yields that  $\| \sum_{t=T_0}^{T}\widehat{f}^{(1)}_t(\widehat{\delta}^{(0)}_t - \delta_t)\| = O_P(1) $.
		
		Now, for a given $m\geq 1$, suppose that $\sqrt{ST}\|\widehat{\beta}^{(m-1)}- {\beta}\| = O_P(1) $, $\sqrt{S}|\widehat{\delta}^{(m-1)}_{t} - {\delta}_{t} | = O_P(1)$, $|\sum_{t=T_0}^{T}\widehat{\delta}^{(m-1)}_{t} - {\delta}_{t} | = O_P(1)$ and $\| \sum_{t=T_0}^{T}\widehat{f}^{(m)}_t(\widehat{\delta}^{(m-1)}_t - \delta_t)\| = O_P(1) $.
		
		\textbf{(iii)} We want to show  $\sqrt{ST}\|\widehat{\beta}^{(m)} - {\beta}\| = O_P(1) $. By simple algebra on (\ref{B.eq2}) and the linear model for $\alpha_{st}$, we obtain
		\begin{align} \label{t5p.eq2}
			\Big[ \sum_{s=1}^{S}X_s'X_s 
			& -\Big(\sum_{s=1}^{S}d_s^2\Big)^{-1}\sum_{s,g=1}^{S}X_s'D_sD_g'X_g\Big]
			\big(\widehat{\beta}^{(m)} - \beta \big) \nonumber\\
			= & \sum_{s=1}^{S}X_s'\big[(\widehat{A}_s - A_s) + \big(F\lambda_s-\widehat{F}^{(m)}\widehat{\lambda}^{(m)}_s\big) + \eta_s\big] \nonumber\\
			& - \Big(\sum_{s=1}^{S}d_s^2\Big)^{-1}\sum_{s,g=1}^{S}X_s'D_sD_g \big[(\widehat{A}_g - A_g) + \big(F\lambda_g-\widehat{F}^{(m)}\widehat{\lambda}^{(m)}_g\big) + \eta_g \big], 
		\end{align}
		and 
		\begin{align}\label{t5p.eq3}
			\widehat{\delta}_{t}^{(m)} - \delta_t = \big(\sum_{s=1}^{S}d_s^2\big)^{-1}\sum_{s=1}^{S}d_s\big((\widehat{\alpha}_{st}-\alpha_{st}) - x_{st}'(\widehat{\beta}^{(m)} - \beta) + \big(f_t'\lambda_s - \widehat{f}^{(m)\prime}_t\widehat{\lambda}^{(m)}_s\big) + \eta_{st}\big).
		\end{align}
		
		To find the rate of $\widehat{\beta}^{(m)}_j(u) - {\beta}_j(u)$, we derive the rates for all terms on the right-hand side of (\ref{t5p.eq2}) below. As show in part (i), we have $\|\sum_{s=1}^{S}X_s'[(\widehat{A}_s - A_s)  + \eta_s]\| = O_P(\sqrt{ST})$. We now consider the term $\sum_{s=1}^{S}X_s'(F\lambda_s-\widehat{F}^{(m)}\widehat{\lambda}^{(m)}_s)$. Using the expression that $\widehat{\lambda}^{(m)}_s = T^{-1}\widehat{F}^{(m)\prime}(\widehat{A}_s - D_s\widehat{\delta}^{(m)} - X_s\widehat{\beta}^{(m)})$, we obtain
		\allowdisplaybreaks
		\begin{align}\label{t5p.eq5}
			\sum_{s=1}^{S}X_s'(F\lambda_s-\widehat{F}^{(m)}\widehat{\lambda}^{(m)}_s) 
			= & \sum_{s=1}^{S}X_s'(F-\widehat{F}^{(m)}H^{(m)})\lambda_s + \sum_{s=1}^{S}X_s'\widehat{F}^{(m)}(H^{(m)}\lambda_s - \widehat{\lambda}^{(m)}_s) \nonumber\\
			= 
			& -\sum_{s=1}^{S}X_s'P^{(m)}_{\widehat{F}}(\widehat{A}_s - A_s)
			- \sum_{s=1}^{S}X_s'P^{(m)}_{\widehat{F}}\eta_s \nonumber\\
			& +\sum_{s=1}^{S}X_s'P^{(m)}_{\widehat{F}}X_s(\widehat{\beta}^{(m-1)} - \beta) 
			+ \sum_{s=1}^{S}X_s'P^{(m)}_{\widehat{F}}D_s(\widehat{\delta}^{(m-1)} - \delta) \nonumber\\
			& + \sum_{s=1}^{S}X_s'M^{(m)}_{\widehat{F}}(F-\widehat{F}^{(m)}H^{(m)})\lambda_s, 
		\end{align}
		where 
		\begin{align}
			H_j^{(m)}(u) = &  \widehat{\Upsilon}_j^{(m)}(u)\left({K}_j^{(m)}(u)\right)^{-1}, \label{prop3.eq1}
		\end{align}
		in which $\widehat{\Upsilon}_j^{(m)}(u)$ is the $r \times r$ diagonal matrix with diagonal elements being the $r$ largest eigenvalues of $\widehat{L}_j\big(\widehat{\delta}^{(m-1)}_j(u),\widehat{\beta}^{(m-1)}_j(u)\big)$, defined in (\ref{eq:W}), in descending order, and
		\begin{align*}
			K^{(m)}_j(u) = &  \bigg(\frac{\Lambda_j(u)'\Lambda_j(u)}{S}\bigg)\bigg(\frac{F_j(u)'\widehat{F}^{(m)}_j(u)}{T}\bigg).
		\end{align*}
		Now, we consider each term in the second equation of (\ref{t5p.eq5}). For the first three terms, by H\"{o}lder's and triangle inequalities, it is straightforward to show
		\begin{align*}
			& \Big\|\sum_{s=1}^{S}X_s'P^{(m)}_{\widehat{F}}(\widehat{A}_s - A_s)\Big\| 
			\leq \sup_{s} \|\widehat{A}_s - A_s\| \cdot \sum_{s=1}^{S}\|X_s\| \cdot \|P^{(m)}_{\widehat{F}}\| = O_P((ST)^{1/4}),\\
			& \Big\|\sum_{s=1}^{S}X_s'P^{(m)}_{\widehat{F}}X_s(\widehat{\beta}^{(m-1)} - \beta) \Big\| 
			\leq
			\sum_{s=1}^{S}\|X_s\|^2 \cdot \|P^{(m)}_{\widehat{F}}\| \cdot \|\widehat{\beta}^{(m-1)} - \beta\| = O_P(\sqrt{ST}), 
		\end{align*}
		and $\|\sum_{s=1}^{S}X_s'P^{(m)}_{\widehat{F}}\eta_s\| =  O_P(\sqrt{ST})$ by $\alpha$-mixing property. 
		Using the definition of $D_s$, the forth term is of order $O_P(T)$ as follows
		\allowdisplaybreaks
		\begin{align*}
			\Big\|\sum_{s=1}^{S}X_s'P^{(m)}_{\widehat{F}}D_s(\widehat{\delta}^{(m-1)} - \delta) \Big\| 
			& = \Big\|\frac{1}{T}\sum_{s=1}^{S}X_s'\widehat{F}^{(m)} d_s \sum_{t=T_0}^{T}\widehat{f}^{(m)}_t(\widehat{\delta}^{(m-1)}_t - \delta_t) \Big\| \\
			& \leq \frac{1}{T}\sum_{s=1}^{S}\|X_s\| \cdot \|\widehat{F}^{(m)}\| \cdot \Big\| \sum_{t=T_0}^{T}\widehat{f}^{(m)}_t(\widehat{\delta}^{(m-1)}_t - \delta_t) \Big\| \\
			& = O_P(T),
		\end{align*}
		provided $\| \sum_{t=T_0}^{T}\widehat{f}^{(m)}_t(\widehat{\delta}^{(m-1)}_t - \delta_t)\| = O_P(1) $.
		In addition, Lemma~\ref{t2} yields that the last term 
		\begin{align*}
			\sum_{s=1}^{S}X_s' M_{\widehat{F}}^{(m)} F{\lambda}_s
			= & - \sum_{s=1}^{S}X_s' M_{\widehat{F}}^{(m)} \Big(I_1^{(m)}+I_2^{(m)} + I_3^{(m)}\Big) \widehat{F}^{(m)} \left(K^{(m)}\right)^{-1} \lambda_s \\
			& + O_P (\sqrt{S}) \cdot \bigg|\sum_{t=T_0}^{T} \widehat{\delta}^{(m-1)}_t - \delta_t \bigg|^2 
			+O_P(T\sqrt{S}) \cdot \Big\|\widehat{\beta}^{(m-1)} - \beta\Big\|^2 \nonumber\\
			&
			+ O_P (\sqrt{ST}) \cdot \bigg|\sum_{t=T_0}^{T} \widehat{\delta}^{(m-1)}_t - \delta_t\bigg| \cdot  \Big\|\widehat{\beta}^{(m-1)} - \beta\Big\| \nonumber\\
			&
			+ O_P(\sqrt{T}) \cdot 
			\bigg(\Big\|\widehat{{\beta}}^{(m-1)}-{\beta}\Big\|
			+ \bigg|\sum_{t=T_0}^{T}{\delta}_{t}- \widehat{{\delta}}^{(m-1)}_{t}\bigg|\bigg)+ O_P(\sqrt{ST}),
		\end{align*}
		where, according to the proof of Lemma~\ref{t2}, the leading term associated with $I_1^{(m)}$ is of order $O_P (S) \cdot|\sum_{t=T_0}^{T} \widehat{\delta}^{(m-1)}_t - \delta_t|^2 
		+O_P(ST) \cdot\|\widehat{\beta}^{(m-1)} - \beta\|^2
		+ O_P (S\sqrt{T}) \cdot |\sum_{t=T_0}^{T} \widehat{\delta}^{(m-1)}_t - \delta_t| \cdot  \|\widehat{\beta}^{(m-1)} - \beta\|$, and the leading terms associated with $I_2^{(m)}$ and $I_3^{(m)}$ are of order $O_P(\sqrt{ST}) \cdot \big(\|\widehat{{\beta}}^{(m-1)}-{\beta}\| + |\sum_{t=T_0}^T\widehat{\delta}^{(m)}_t - \delta_t| \big)$.
		Thus, given the rates: $\sqrt{ST}\|\widehat{\beta}^{(m-1)}- {\beta}\| = O_P(1) $, $\sqrt{S}|\widehat{\delta}^{(m-1)}_{t} - {\delta}_{t} | = O_P(1)$, $|\sum_{t=T_0}^{T}\widehat{\delta}^{(m-1)}_{t} - {\delta}_{t} | = O_P(1)$ and $\| \sum_{t=T_0}^{T}\widehat{f}^{(m)}_t(\widehat{\delta}^{(m-1)}_t - \delta_t)\| = O_P(1) $, we conclude that
		\begin{align*}
			\bigg\|\sum_{s=1}^{S}X_s'M^{(m)}_{\widehat{F}}(F-\widehat{F}^{(m)}H^{(m)})\lambda_s\bigg\| = O_P(\sqrt{ST}).
		\end{align*}

		Finally, given the rates of all terms on the right-hand side  of (\ref{t5p.eq5}), we conclude that $\|\sum_{s=1}^{S}X_s'(F\lambda_s -\widehat{F}^{(m)}\widehat{\lambda}^{(m)}_s) \| = O_P(\sqrt{ST})$ since $T/S \to \kappa$.
		
		Collecting all terms so far, we obtain that the first term on the right-hand side of (\ref{t5p.eq2})
		\begin{align*}
			\bigg\| \sum_{s=1}^{S}X_s'\big[(\widehat{A}_s - A_s) + \big(F\lambda_s-\widehat{F}^{(m)}\widehat{\lambda}^{(m)}_s\big) + \eta_s\big] \bigg\| = O_P(\sqrt{ST}),
		\end{align*}
		and similar arguments yield that 
		\begin{align*}
			\bigg\|\sum_{s,g=1}^{S}X_s'D_sD_g \big[(\widehat{A}_g - A_g) + \big(F\lambda_g-\widehat{F}^{(m)}\widehat{\lambda}^{(m)}_g\big) + \eta_g \big] \bigg\| = O_P(S^{3/2}T^{1/2}).
		\end{align*}
		In addition, as $(ST)^{-1}\big[\sum_{s=1}^{S}X_s'X_s 
		-\Big(\sum_{s=1}^{S}d_s^2\Big)^{-1}\sum_{s,g=1}^{S}X_s'D_sD_g'X_g\big]$ and $S^{-1}\sum_{s=1}^{S}d_s^2$ are invertible and their inverse are bounded above according to Assumption~\ref{ass.reg_1}, from (\ref{t5p.eq2}) we obtain that
		\begin{align*} 
			\sqrt{ST}  \big(\widehat{\beta}^{(m)} & - \beta \big) \\
			= &
			\bigg[ \frac{1}{ST}\sum_{s=1}^{S}X_s'X_s -\bigg(\frac{1}{S}\sum_{s=1}^{S}d_s^2\bigg)^{-1} \frac{1}{S^2T}\sum_{s,g=1}^{S}X_s'D_sD_g'X_g\bigg]^{-1} \\
			& \cdot \bigg\{ \frac{1}{\sqrt{ST}}
			\sum_{s=1}^{S}X_s'\big[(\widehat{A}_s - A_s) + \big(F\lambda_s-\widehat{F}^{(m)}\widehat{\lambda}^{(m)}_s\big) + \eta_s\big] \nonumber\\
			& - \Big(\frac{1}{S}\sum_{s=1}^{S}d_s^2\Big)^{-1}\frac{1}{S^{3/2}T^{1/2}}\sum_{s,g=1}^{S}X_s'D_sD_g \big[(\widehat{A}_g - A_g) + \big(F\lambda_g-\widehat{F}^{(m)}\widehat{\lambda}^{(m)}_g\big) + \eta_g \big] \bigg\} \nonumber\\
			=& O_P(1).
		\end{align*}
		
		\textbf{(iv)} Finally, we consider the terms associated with $\widehat{\delta}_{t}^{(m)} - \delta_{t}$ for a given $m \geq 1$. We start by deriving the rates for all terms in (\ref{t5p.eq3}).  Recall from part (ii) that $|\sum_{s=1}^{S}d_s[(\widehat{\alpha}_{st}-\alpha_{st})| = O_P(S^{1/4}T^{-3/4}) $, $|\sum_{s=1}^{S}d_s\eta_{st}| = O_P(\sqrt{S}) $ and $|\sum_{s=1}^{S}d_s x_{st}'(\widehat{\beta}^{(m)} - \beta)| = O_P(\sqrt{S/T})$ given that $\|\widehat{\beta}^{(m)} - \beta\| = O_P(\sqrt{ST})$. Below we consider consider the term $|\sum_{s=1}^{S}d_s(f_t'\lambda_s - \widehat{f}^{(m)\prime}_t\widehat{\lambda}^{(m)}_s)|$. 
		
		Using the estimation expression of $\widehat{\lambda}^{(m)}_s$, we obtain that
		\begin{align}\label{t5p.eq6}
			\sum_{s=1}^{S}d_s(f_t'\lambda_s - \widehat{f}^{(m)\prime}_t\widehat{\lambda}^{(m)}_s)
			= & -\frac{1}{T}\sum_{s=1}^{S}d_s \widehat{f}_t^{(m)\prime}\widehat{F}^{(m)\prime}(\widehat{A}_s - A_s)
			- \frac{1}{T}\sum_{s=1}^{S}d_s \widehat{f}_t^{(m)\prime}\widehat{F}^{(m)\prime}\eta_s \nonumber\\
			& +\frac{1}{T}\sum_{s=1}^{S}d_s\widehat{f}_t^{(m)\prime}\widehat{F}^{(m)\prime}X_s(\widehat{\beta}^{(m-1)} - \beta) \nonumber\\
			& + \frac{1}{T}\sum_{s=1}^{S}d_s \widehat{f}_t^{(m)\prime}\widehat{F}^{(m)\prime}D_s(\widehat{\delta}^{(m-1)} - \delta) \nonumber\\
			& + \sum_{s=1}^{S}d_s e_t'M^{(m)}_{\widehat{F}}(F-\widehat{F}^{(m)}H^{(m)})\lambda_s. 
		\end{align}  
		
		Using triangle- and Cauchy--Schwartz inequalities, it is easy to check the rates of the first four terms of (\ref{t5p.eq6}) as follows
		\begin{align*}
			&\bigg|\frac{1}{T}\sum_{s=1}^{S}d_s \widehat{f}_t^{(m)\prime}\widehat{F}^{(m)\prime}(\widehat{A}_s - A_s)\bigg| \\
			& \qquad \leq 
			\frac{1}{T}\sup_{s}(\widehat{A}_s - A_s) \sum_{s=1}^{S}|d_s| \cdot  \|\widehat{f}_t^{(m)} \| \cdot \|\widehat{F}^{(m)}\| 
			= O_P(S^{-1/4}T^{-3/4}),\\
			& \bigg|\frac{1}{T}\sum_{s=1}^{S}d_s \widehat{f}_t^{(m)\prime}\widehat{F}^{(m)\prime}\eta_s\bigg|
			= \frac{1}{T}\bigg|\sum_{s=1}^{S}\sum_{l=1}^{T}d_s \widehat{f}_t^{(m)\prime}\widehat{f}^{(m)}_l\eta_{sl}\bigg| = O_P\bigg(\sqrt{\frac{S}{T}}\bigg),\\
			& \bigg|\frac{1}{T}\sum_{s=1}^{S}d_s \widehat{f}_t^{(m)\prime}\widehat{F}^{(m)\prime}X_s(\widehat{\beta}^{(m-1)} - \beta)\bigg| \\
			& \qquad 
			\leq \frac{1}{T} \bigg(\sum_{s=1}^{S}|d_s|^{2}\bigg)^{1/2} \cdot  \bigg(\sum_{s=1}^{S}\|X_s\|^{2}\bigg)^{1/2} \cdot  \|\widehat{f}_t^{(m)} \| \cdot \|\widehat{F}^{(m)}\| \cdot \|\widehat{\beta}^{(m-1)} - \beta)\|
			= O_P\bigg(\sqrt{\frac{S}{T}}\bigg),\\
			& \bigg|\frac{1}{T}\sum_{s=1}^{S}d_s \widehat{f}_t^{(m)\prime}\widehat{F}^{(m)\prime}D_s(\widehat{\delta}^{(m-1)} - \delta) \bigg| \\
			& \qquad 
			\leq \frac{1}{T} \sum_{s=1}^{S}|d_s|  \cdot  \|\widehat{f}_t^{(m)} \| \cdot \bigg\| \sum_{l=T_0}^{T}\widehat{f}_l^{(m)}(\widehat{\delta}^{(m-1)}_l - \delta_{l}) \bigg| 
			= O_P\bigg(\frac{S}{T}\bigg),
		\end{align*} 
		given that $\sup_s \widehat{A}_s - A_s = O_P(S^{-3/4}T^{-1/4})$ from Lemma~\ref{l1}, $\| \sum_{l=T_0}^{T}\widehat{f}_l^{(m)}(\widehat{\delta}^{(m-1)}_l - \delta_{l}) | = O_P(1) $ and $\|\widehat{\beta}^{(m-1)} - \beta\| = O_P((ST)^{-1/2})$ from induction assumption. Lastly, according to Lemma~\ref{prop.iter}, we have
		\begin{align*}
			\sum_{s=1}^{S} d_s e_t'M^{(m)}_{\widehat{F}}(F-\widehat{F}^{(m)}H^{(m)})\lambda_s 
			= & 
			\frac{1}{S} \sum_{s,g=1}^{S}\sum_{l=T_0}^{T}\omega_{sg} d_s d_g (\widehat{\delta}^{(m-1)}_t - \delta_t) -  \frac{1}{S} \sum_{s,g=1}^{S}\omega_{sg}d_s\eta_{gt}  \\
			& - \frac{1}{ST}\sum_{s,g=1}^{S}d_s \mathbb{E}[{\eta}_{gt} {\eta}_{g}']  \widehat{F}^{(m)} (K^{(m)})^{-1}\lambda_s 
			+ O_P\bigg(\sqrt{\frac{S}{T}}\bigg),
		\end{align*}
		where the remaining terms on the right-hand side have been shown to be $O_P(\sqrt{S})$ in Lemma~\ref{prop.iter}. 
		Therefore, given the rates of all terms of (\ref{t5p.eq3}), we finally obtain that 
		\begin{align} \label{t5p.eq7}
			\sqrt{S}\big(\widehat{\delta}_{t}^{(m)} - \delta_t\big) 
			= & \bigg(\frac{1}{S}\sum_{s=1}^{S}d_s^2\bigg)^{-1} \cdot \frac{1}{\sqrt{S}}\bigg[ 
			\sum_{s=1}^{S}\Big(d_s - \frac{1}{S}\sum_{g=1}^{S}\omega_{sg}d_g\Big)\eta_{st} \nonumber\\
			& -\frac{1}{S} \sum_{s,g=1}^{S}\sum_{l=T_0}^{T}\omega_{sg} d_s d_g (\widehat{\delta}^{(m-1)}_t - \delta_t)  \nonumber\\
			& - \frac{1}{ST}\sum_{s,g=1}^{S}d_s \mathbb{E}[{\eta}_{gt} {\eta}_{g}']  \widehat{F}^{(m)} (K^{(m)})^{-1}\lambda_s \bigg] + o_P(1),
		\end{align}
		and the leading terms on the right-hand side have shown to be $O_P(1)$, and thus, we conclude that $\sqrt{S}|\widehat{\delta}_{t}^{(m)} - \delta_t| = O_P(1)$.
		
		To complete the proof of induction, it requires to further show
		$|\sum_{t=T_0}^{T}\widehat{\delta}_{t}^{(m)} - \delta_t| = O_P(1)$ and $|\sum_{t=T_0}^{T}\widehat{f}^{(m+1)\prime}_t(\widehat{\delta}_{t}^{(m)} - \delta_t)| = O_P(1)$, which can be proof in the same manner as above, and thus is omitted here. Finally, combining parts (i) to (iv), we complete the proof of Theorem~\ref{t5}.
		\hfill$\square$
		
		\bigskip
		Given the convergence rate, the following proof establishes the limiting distribution for the regression coefficients.
		
		\noindent
		\textbf{Proof of Theorem~\ref{t6}}
		To complete the proof, we first show that, for any given $u \in \mathcal{U}$, $j=1,...,J$ and $t \geq T_0$, the converged estimator $\widehat{\delta}_{jt}(u)$ has an asymptotic linear expansion around the true parameter, and then establish the asymptotic normality for the joint policy parameter $(\widehat{\delta}_t(u_1)',\widehat{\delta}_t(u_2)')'$.

		Recall from the proof of Theorem~\ref{t5}, we derived an asymptotic representation for the iterative estimator $\widehat{\delta}^{(m)}_{jt}(u)$ for $m \geq 1$ as (\ref{t5p.eq7}), where the subscript $j$ and quantile $u$ is omitted. Therefore, when the estimators converges, we have
		\begin{align*}
			\sqrt{S} & \big(\widehat{\delta}_{jt}(u) - \delta_{jt}(u)\big) \\
			= & \bigg(\frac{1}{S}\sum_{s=1}^{S}d_s^2\bigg)^{-1} \cdot \frac{1}{\sqrt{S}}\bigg[ 
			\sum_{s=1}^{S}R_{js}(u)\eta_{jst}(u)
			-\frac{1}{S} \sum_{s,g=1}^{S}\sum_{l=T_0}^{T}\omega_{j,sg}(u) d_s d_g \big(\widehat{\delta}_{jt}(u) - \delta_{jt}(u)\big)  \nonumber\\
			& - \frac{1}{ST}\sum_{s,g=1}^{S}d_s \mathbb{E}[{\eta}_{jgt}(u) {\eta}_{jg}(u)']  \widehat{F}_j(u) \bigg(\frac{F_j(u)'\widehat{F}_j(u)}{T}\bigg)^{-1}\bigg(\frac{\Lambda_j(u)'\Lambda_j(u)}{S}\bigg)^{-1}\lambda_{js}(u) \bigg] + o_P(1),
		\end{align*}
		where $R_{js}(u)$ and $\omega_{j,sg}(u)$ are defined above Assumption~\ref{ass.reg_1}.
		Equivalently, we have
		\begin{align*}
			\sqrt{S} & \big(\widehat{\delta}_{jt}(u) - \delta_{jt}(u)\big) \\
			= & \bigg(\frac{1}{S}\sum_{s=1}^{S} R_{js}(u)^2\bigg)^{-1} \cdot \frac{1}{\sqrt{S}}\bigg[ 
			\sum_{s=1}^{S}R_{js}(u)\eta_{jst}(u)  \nonumber\\
			& - \frac{1}{ST}\sum_{s,g=1}^{S}d_s \mathbb{E}[{\eta}_{jgt}(u) {\eta}_{jg}(u)']  \widehat{F}_j(u) \bigg(\frac{F_j(u)'\widehat{F}_j(u)}{T}\bigg)^{-1}\bigg(\frac{\Lambda_j(u)'\Lambda_j(u)}{S}\bigg)^{-1}\lambda_{js}(u) \bigg] \\
			& + o_P(1),
		\end{align*}
		given the fact that 
		$$\frac{1}{S}\sum_{s=1}^{S}d_s^2 -\frac{1}{S^2}\sum_{s,g=1}^{S}\sum_{l=T_0}^{T}\omega_{j,sg}(u) d_s d_g 
		= \frac{1}{S}\sum_{s=1}^{S} R_{js}(u)^2 >0 $$
		according to the definition of $\omega_{j,sg}(u)$ and Assumption~\ref{ass.reg_1}.(i).
		
		Next, we claim that 
		\begin{align*}
			\frac{1}{S^{3/2}T} 
			&\sum_{s,g=1}^{S}d_s \mathbb{E}[{\eta}_{jgt}(u) {\eta}_{jg}(u)']  \widehat{F}_j(u) \bigg(\frac{F_j(u)'\widehat{F}_j(u)}{T}\bigg)^{-1}\bigg(\frac{\Lambda_j(u)'\Lambda_j(u)}{S}\bigg)^{-1}\lambda_{js}(u) \\
			&\overset{p}{\to}
			\frac{1}{S^{3/2}T}\sum_{s,g=1}^{S}d_s \mathbb{E}[{\eta}_{jgt}(u) {\eta}_{jg}(u)']  F_j(u) \bigg(\frac{\Lambda_j(u)'\Lambda_j(u)}{S}\bigg)^{-1}\lambda_{js}(u). 
		\end{align*}
		To prove this claim, it is sufficient to show that 
		\begin{align*}
			\bigg\| \widehat{F}_j(u) & \bigg(\frac{F_j(u)'\widehat{F}_j(u)}{T}\bigg)^{-1}\bigg(\frac{\Lambda_j(u)'\Lambda_j(u)}{S}\bigg)^{-1} -  F_j(u) \bigg(\frac{\Lambda_j(u)'\Lambda_j(u)}{S}\bigg)^{-1}\bigg\|\\
			& \leq 
			\bigg\| \widehat{F}_j(u)  -  F_j(u)\bigg(\frac{F_j(u)'\widehat{F}_j(u)}{T}\bigg)^{-1} \bigg\| \cdot 
			\bigg\|\bigg(\frac{F_j(u)'\widehat{F}_j(u)}{T}\bigg)^{-1}\bigg\|
			\cdot 
			\bigg\|\bigg(\frac{\Lambda_j(u)'\Lambda_j(u)}{S}\bigg)^{-1} \bigg\| \\
			& \leq
			\bigg\| \widehat{F}_j(u)  -  P_{F_j}(u) \widehat{F}_j(u) \bigg\| \cdot 
			\bigg\|\bigg(\frac{F_j(u)'\widehat{F}_j(u)}{T}\bigg)^{-1}\bigg\|
			\cdot 
			\bigg\|\bigg(\frac{\Lambda_j(u)'\Lambda_j(u)}{S}\bigg)^{-1} \bigg\| \\
			& \leq
			\sqrt{T} \|P_{\widehat{F}_j}(u) - P_{F_j}(u) \| \cdot  \bigg\|\bigg(\frac{F_j(u)'\widehat{F}_j(u)}{T}\bigg)^{-1}\bigg\|
			\cdot 
			\bigg\|\bigg(\frac{\Lambda_j(u)'\Lambda_j(u)}{S}\bigg)^{-1} \bigg\|\\
			& = o_P(\sqrt{T}), 
		\end{align*}
		since $\|P_{\widehat{F}_j}(u) - P_{F_j}(u) \| = o_P(1)$, which can be shown similar to (\ref{al6.eq3}) of Lemma~\ref{al.6}. Thus,
		\begin{align*}
			\bigg\|\frac{1}{S^{3/2}T} 
			&\sum_{s,g=1}^{S}d_s \mathbb{E}[{\eta}_{jgt}(u) {\eta}_{jg}(u)']  \widehat{F}_j(u) \bigg(\frac{F_j(u)'\widehat{F}_j(u)}{T}\bigg)^{-1}\bigg(\frac{\Lambda_j(u)'\Lambda_j(u)}{S}\bigg)^{-1}\lambda_{js}(u) \\
			& \qquad \qquad -
			\frac{1}{S^{3/2}T}\sum_{s,g=1}^{S}d_s \mathbb{E}[{\eta}_{jgt}(u) {\eta}_{jg}(u)']  F_j(u) \bigg(\frac{\Lambda_j(u)'\Lambda_j(u)}{S}\bigg)^{-1}\lambda_{js}(u) \bigg\| \\
			\leq & \frac{1}{S^{3/2}T} \cdot 
			\bigg\|\sum_{s=1}d_s\lambda_{js}(u)\bigg\|
			\cdot 
			\bigg\|\sum_{g=1}^{S}\mathbb{E}[{\eta}_{jgt}(u) {\eta}_{jg}(u)'] \bigg\|\\
			& \cdot 
			\bigg\| \widehat{F}_j(u) \bigg(\frac{F_j(u)'\widehat{F}_j(u)}{T}\bigg)^{-1}\bigg(\frac{\Lambda_j(u)'\Lambda_j(u)}{S}\bigg)^{-1} -  F_j(u) \bigg(\frac{\Lambda_j(u)'\Lambda_j(u)}{S}\bigg)^{-1}\bigg\| \\
			= & \frac{1}{S^{3/2}T} \cdot O_P(S) \cdot O_P(S) \cdot o_P(\sqrt{T})
			= o_P(1).
		\end{align*}
		Therefore, we obtain the asymptotic expansion of $\widehat{\delta}_{jt}(u)$ as 
		\begin{align}\label{t5p.eq8}
			\sqrt{S} & \big(\widehat{\delta}_{jt}(u) - \delta_{jt}(u)\big)  \nonumber\\
			= & - \bigg(\frac{1}{S}\sum_{s=1}^{S} R_{js}(u)^2\bigg)^{-1} \frac{1}{S^{3/2}T}\sum_{s,g=1}^{S}d_s \mathbb{E}[{\eta}_{j,gt}(u) {\eta}_{jg}(u)']  F_j(u) \bigg(\frac{\Lambda_j(u)'\Lambda_j(u)}{S}\bigg)^{-1}\lambda_{js}(u)  \nonumber\\
			& +
			\bigg(\frac{1}{S}\sum_{s=1}^{S} R_{js}(u)^2\bigg)^{-1} 
			\cdot \frac{1}{\sqrt{S}}
			\sum_{s=1}^{S}R_{js}(u)\eta_{jst}(u) + o_P(1),
		\end{align}
		where the first term on the right-hand side has the probability limit $B_{jt}(u)$ defined in Theorem~\ref{t6} 
		
		Combining (\ref{t5p.eq8}) with the definition of $\widehat{\delta}_t(u)$, we obtain that 
		\begin{align*}
			\sqrt{S} & \big(\widehat{\delta}_{t}(u) - \delta_{t}(u)\big) = B_{t}(u) + K_{t}(u) + o_P(1),
		\end{align*}
		where $B_{t}(u)$ is defined in Theorem~\ref{t6}, and $K_{t}(u)$ is defined above Assumption~\ref{ass.clt_delta}. Finally, combining the representation with Assumption~\ref{ass.clt_delta}, we establish the joint CLT in Theorem~\ref{t6}.
		\hfill$\square$


		\bigskip
		
		\noindent
		\textbf{Proof of Corollary \ref{clt_estimate}}
		\textbf{(i)} To show $\widehat{B}_t(u) \pto B_t(u)$, it is sufficient to show that $\widehat{B}_{jt}(u) \pto B_{jt}(u)$ for any given $j=1,...,J $ and $u \in \mathcal{U}$.
		
		We first note that, under the assumption of no cross-sectional dependence, the asymptotic bias $B_{jt}(u)$ is reduced to 
		\begin{align*}
			B_{jt}(u) = 
			\underset{S,T \to \infty}{\plim}
			& - \bigg(\frac{1}{S}\sum_{s=1}^{S} R_{js}(u)^2\bigg)^{-1} \\
			& \cdot \frac{1}{S^{3/2}T}\sum_{s,g=1}^{S}d_s \mathbb{E}[{\eta}_{j,gt}(u)^2]  f_{jt}(u)' \bigg(\frac{\Lambda_j(u)'\Lambda_j(u)}{S}\bigg)^{-1}\lambda_{js}(u).
		\end{align*}
		To shown the consistency of $\widehat{B}_{jt}(u)$, it is sufficient to prove the following two claims:
		\allowdisplaybreaks
		\begin{align}
			& \bigg\| \frac{1}{S} \sum_{s=1}^{S} R_{js}(u)^2 
			- \frac{1}{S}\sum_{s=1}^{S} \widehat{R}_{js}(u)^2\bigg\|= o_P(1), \label{c4.5p_eq0}\\
			& \bigg\| \frac{1}{S^{3/2}T}\sum_{s,g=1}^{S}d_s \mathbb{E}[{\eta}_{j,gt}(u)^2]  f_{jt}(u)' \bigg(\frac{\Lambda_j(u)'\Lambda_j(u)}{S}\bigg)^{-1}\lambda_{js}(u) \nonumber\\
			& \qquad -\frac{1}{S^{3/2}T}\sum_{s,g=1}^{S}d_s \big(\widehat{\eta}_{j,gt}(u)\big)^2 \widehat{f}_{jt}(u)' \bigg(\frac{\widehat{\Lambda}_j(u)'\widehat{\Lambda}_j(u)}{S}\bigg)^{-1}\widehat{\lambda}_{js}(u)\bigg\| = o_P(1). \label{c4.5p_eq00}
		\end{align}
		
		We start with the proof of (\ref{c4.5p_eq0}).
		Using the identity $a^2 - b^2 = (a+b)(a-b)$ and H\"{o}lder's inequality, we obtain that 
		\begin{align} \label{c4.5p_eq1}
			\bigg\| \frac{1}{S}  \sum_{s=1}^{S} R_{js}(u)^2 
			- \frac{1}{S}\sum_{s=1}^{S} \widehat{R}_{js}(u)^2\bigg\| 
			\leq & \frac{1}{S} \bigg[\sum_{s=1}^{S} \bigg\|\frac{1}{S} \sum_{g=1}^{S}\big(\widehat{\omega}_{j,sg}(u) + \omega_{j,sg}(u)\big)d_g - 2d_s  \bigg\|^2\bigg]^{1/2}  \nonumber\\
			& \cdot  \bigg[\sum_{s=1}^{S} \bigg\|\frac{1}{S} \sum_{g=1}^{S}\big(\widehat{\omega}_{j,sg}(u) - \omega_{j,sg}(u)\big)d_g  \bigg\|^2\bigg]^{1/2}.
		\end{align}
		It is straightforward to show that
		{\small
			\begin{align}\label{c4.5p_eq2}
				\bigg[\sum_{s=1}^{S}& \bigg\|\frac{1}{S} \sum_{g=1}^{S}\big(\widehat{\omega}_{j,sg}(u) + \omega_{j,sg}(u)\big)d_g - 2d_s  \bigg\|^2 \bigg]^{1/2} \nonumber\\
				= & \frac{1}{S}\bigg\|\sum_{g=1}^{S}\bigg[\widehat{\Lambda}_{j}(u) \bigg(\frac{\widehat{\Lambda}_{j}(u)'\widehat{\Lambda}_{j}(u)}{S}\bigg)^{-1}\widehat{\lambda}_{jg}(u)+ {\Lambda}_{j}(u) \bigg(\frac{{\Lambda}_{j}(u)'{\Lambda}_{j}(u)}{S}\bigg)^{-1}{\lambda}_{jg}(u)\bigg]d_g - 2D_s  \bigg\| \nonumber\\
				= & \frac{1}{S}\bigg\{\bigg\|\widehat{\Lambda}_{j}(u) \bigg(\frac{\widehat{\Lambda}_{j}(u)'\widehat{\Lambda}_{j}(u)}{S}\bigg)^{-1}\sum_{g=1}^{S}\widehat{\lambda}_{jg}(u)d_g\bigg\| 
				+ \bigg\|{\Lambda}_{j}(u) \bigg(\frac{{\Lambda}_{j}(u)'{\Lambda}_{j}(u)}{S}\bigg)^{-1}\sum_{g=1}^{S}{\lambda}_{jg}(u)d_g \bigg\| \bigg\} \nonumber\\
				= & O_P(\sqrt{S}), 
		\end{align}}
		since
		\begin{align*}
			\bigg\|\Lambda_{j}(u) \bigg(\frac{\Lambda_{j}(u)'\Lambda_{j}(u)}{S}\bigg)^{-1}\sum_{g=1}^{S}{\lambda}_{jg}(u)d_g \bigg\| 
			& \leq \|\Lambda_{j}(u)\| \cdot 
			\bigg\|\bigg(\frac{\Lambda_{j}(u)'\Lambda_{j}(u)}{S}\bigg)^{-1}\bigg\| 
			\cdot \Big\|\sum_{g=1}^{S}{\lambda}_{jg}(u)d_g \Big\| \\
			& = O_P(S^{3/2})
		\end{align*}
		and the same rate holds for the first term in the second last equation.
		
		Given the expression of $\omega_{j,sg}(u)$ and the identity that $\widehat{a}\widehat{b}\widehat{c} - abc  = (\widehat{a} - a)\widehat{b}\widehat{c} + a(\widehat{b}-b)\widehat{c} + ab(\widehat{c}- c)$, we write
		\begin{align*}
			\bigg[\sum_{s=1}^{S} & \bigg\|\frac{1}{S} \sum_{g=1}^{S}\big(\widehat{\omega}_{j,sg}(u) + \omega_{j,sg}(u)\big)d_g - 2d_s  \bigg\|^2\bigg]^{1/2} \\
			= & \frac{1}{S}\bigg\|\sum_{g=1}^{S}\bigg[\widehat{\Lambda}_{j}(u) \bigg(\frac{\widehat{\Lambda}_{j}(u)'\widehat{\Lambda}_{j}(u)}{S}\bigg)^{-1}\widehat{\lambda}_{jg}(u)
			- {\Lambda}_{j}(u) \bigg(\frac{{\Lambda}_{j}(u)'{\Lambda}_{j}(u)}{S}\bigg)^{-1}{\lambda}_{jg}(u)\bigg]d_g  \bigg\| \\
			\leq & \frac{1}{S}\bigg\{
			\bigg\|\Big(\widehat{\Lambda}_{j}(u) - \Lambda_j(u)H_j(u)'\Big) \bigg(\frac{\widehat{\Lambda}_{j}(u)'\widehat{\Lambda}_{j}(u)}{S}\bigg)^{-1}\sum_{g=1}^{S}\widehat{\lambda}_{jg}(u)d_g \bigg\| \\
			& + \bigg\|{\Lambda}_{j}(u)H_j(u)'\bigg[\bigg(\frac{\widehat{\Lambda}_{j}(u)'\widehat{\Lambda}_{j}(u)}{S}\bigg)^{-1} \\
			& \qquad \qquad - \big(H_j(u)'\big)^{-1}\bigg(\frac{{\Lambda}_{j}(u)'{\Lambda}_{j}(u)}{S}\bigg)^{-1} \big(H_j(u)\big)^{-1}\bigg]\sum_{g=1}^{S}\widehat{\lambda}_{jg}(u)d_g   \bigg\|  \\
			& + \bigg\|{\Lambda}_{j}(u) \bigg(\frac{{\Lambda}_{j}(u)'{\Lambda}_{j}(u)}{S}\bigg)^{-1}\big(H_j(u)\big)^{-1}\bigg[\sum_{g=1}^{S}\widehat{\lambda}_{jg}(u) - H_j(u){\lambda}_{jg}(u)\bigg]d_g\bigg\|\bigg\}.
		\end{align*}
		Combining (\ref{fl1.eq3}) of Lemma~\ref{factor.l1} with (\ref{al6.eq1}) and Theorem~\ref{t5}, we obtain that $ \|\widehat{\Lambda}_{j}(u) - \Lambda_j(u)H_j(u)'\| = O_P(1)$, and 
		$$\Big\|\big(S^{-1}\widehat{\Lambda}_{j}(u)'\widehat{\Lambda}_{j}(u)\big)^{-1} - \big(H_j(u)'\big)^{-1}\big(S^{-1}{\Lambda}_{j}(u)'{\Lambda}_{j}(u)\big)^{-1} (H_j(u)\big)^{-1} \Big\|  = O_P(\delta_{ST}^{-1}).$$ Thus,
		by Cauchy--Schwartz inequality, we obtain that the first term in the last inequality is
		\begin{align*}
			\bigg\|\Big(\widehat{\Lambda}_{j}(u) & - \Lambda_j(u)H_j(u)'\Big) \bigg(\frac{\widehat{\Lambda}_{j}(u)'\widehat{\Lambda}_{j}(u)}{S}\bigg)^{-1}\sum_{g=1}^{S}\widehat{\lambda}_{jg}(u)d_g \bigg\| \\
			= & \Big\|\widehat{\Lambda}_{j}(u) - \Lambda_j(u)H_j(u)'\Big\| \cdot \bigg\| \bigg(\frac{\widehat{\Lambda}_{j}(u)'\widehat{\Lambda}_{j}(u)}{S}\bigg)^{-1}
			\bigg\| \cdot \bigg\|\sum_{g=1}^{S}\widehat{\lambda}_{jg}(u)d_g \bigg\| = O_P(S),
		\end{align*}
		and the same rate applies to the second term. For the third term,
		\begin{align*}
			\bigg\| & {\Lambda}_{j}(u)  \bigg(\frac{{\Lambda}_{j}(u)'{\Lambda}_{j}(u)}{S}\bigg)^{-1}\big(H_j(u)\big)^{-1}\bigg[\sum_{g=1}^{S}\widehat{\lambda}_{jg}(u) - H_j(u){\lambda}_{jg}(u)\bigg]d_g\bigg\| \\
			& \leq
			\|{\Lambda}_{j}(u)\|  \cdot 
			\bigg\|\bigg(\frac{{\Lambda}_{j}(u)'{\Lambda}_{j}(u)}{S}\bigg)^{-1}\bigg\|
			\cdot \Big\|\big(H_j(u)\big)^{-1}\Big\|  \\
			& \qquad \qquad \cdot \bigg(\sum_{g=1}^{S}\|\widehat{\lambda}_{jg}(u) - H_j(u){\lambda}_{jg}(u)\|^2 \bigg)^{1/2}
			\cdot \bigg(\sum_{g=1}^{S} |d_g|^{2} \bigg)^{1/2} \\
			& \leq \|{\Lambda}_{j}(u)\|  \cdot 
			\bigg\|\bigg(\frac{{\Lambda}_{j}(u)'{\Lambda}_{j}(u)}{S}\bigg)^{-1}\bigg\|
			\cdot \Big\|\big(H_j(u)\big)^{-1}\Big\| 
			\cdot \Big\|\widehat{\Lambda}_{j}(u) - {\Lambda}_{j}(u)H_j(u)'\Big\|^2
			\cdot \bigg(\sum_{g=1}^{S} |d_g|^{2} \bigg)^{1/2} \\
			& = O_P(S).
		\end{align*}
		Collecting all terms, we have
		\begin{align*}
			\bigg[\sum_{s=1}^{S} & \bigg\|\frac{1}{S} \sum_{g=1}^{S}\big(\widehat{\omega}_{j,sg}(u) + \omega_{j,sg}(u)\big)d_g - 2d_s  \bigg\|^2\bigg]^{1/2} = O_P(1).
		\end{align*}
		And thus, together with (\ref{c4.5p_eq1}) and (\ref{c4.5p_eq2}), we prove the claim (\ref{c4.5p_eq0}).
		
		To prove the second claim (\ref{c4.5p_eq00}), we consider two terms
		{\small
			\begin{align*}
				\frac{1}{S^{3/2}T}\sum_{s,g=1}^{S}d_s \mathbb{E}[{\eta}_{j,gt}(u)^2] \bigg(f_{jt}(u)' \bigg(\frac{\Lambda_j(u)'\Lambda_j(u)}{S}\bigg)^{-1}\lambda_{js}(u) - \widehat{f}_{jt}(u)' \bigg(\frac{\widehat{\Lambda}_j(u)'\widehat{\Lambda}_j(u)}{S}\bigg)^{-1}\widehat{\lambda}_{js}(u)
				\bigg) ,
		\end{align*}}
		and
		\begin{align*}
			\frac{1}{S^{3/2}T}\sum_{s,g=1}^{S}d_s \Big(\mathbb{E}[{\eta}_{j,gt}(u)^2] - \widehat{\eta}_{j,gt}(u)^2 \Big)
			\widehat{f}_{jt}(u)' \bigg(\frac{\widehat{\Lambda}_j(u)'\widehat{\Lambda}_j(u)}{S}\bigg)^{-1}\widehat{\lambda}_{js}(u).
		\end{align*}
		The first term is $o_P(1)$ by expanding the terms related to factors and loadings and apply Lemma~\ref{al.6} and \ref{factor.l1}. For the second term, it is easy to show that $\sum_{s=1}^{S}\big(\eta_{j,st}(u)^2 - \widehat{\eta}^{(m)}_{j,st}(u)^2\big)
		= O_P(\sqrt{S})$
		and 
		$\sum_{s=1}^{T}\E[\eta_{j,st}(u)^2] - \eta_{j,st}(u)^2 = O_P(\sqrt{S})$, which leads to 
		{\small
			\begin{align*}
				\bigg\|\frac{1}{S^{3/2}T} & \sum_{s,g=1}^{S}d_s \Big(\mathbb{E}[{\eta}_{j,gt}(u)^2] - \widehat{\eta}_{j,gt}(u)^2 \Big)
				\widehat{f}_{jt}(u)' \bigg(\frac{\widehat{\Lambda}_j(u)'\widehat{\Lambda}_j(u)}{S}\bigg)^{-1}\widehat{\lambda}_{js}(u) \bigg\| \\
				\leq & 
				\frac{1}{S^{3/2}T}
				\cdot \bigg\|\sum_{g=1}^{S} \mathbb{E}[{\eta}_{j,gt}(u)^2] - \widehat{\eta}_{j,gt}(u)^2\bigg\|
				\cdot 
				\|\widehat{f}_{jt}(u)\| 
				\cdot
				\Bigg\|\bigg(\frac{\widehat{\Lambda}_j(u)'\widehat{\Lambda}_j(u)}{S}\bigg)^{-1}\Bigg\|
				\cdot \Big\|\sum_{s=1}^{S}d_s\widehat{\lambda}_{js}(u)\Big\|\\
				= & o_P(1).
		\end{align*}}
		Then, given these two claims, it is straightforward that $\widehat{B}_t(u) \pto B_t(u)$.
		
		\textbf{(ii)} 
		We first note that under the assumption of no cross-sectional correlation, the $(j,k)^{\text{th}}$ entry of $\Sigma_{t}(u_1,u_2)$ is given by
		\begin{align*}
			{\sigma}_{t,jk}(u_1,u_2) = \plim_{S,T \to \infty}
			& \bigg(\frac{1}{S}\sum_{s=1}^{S} R_{js}(u_1)^2\bigg)^{-1} 
			\bigg(\frac{1}{S}\sum_{s=1}^{S} R_{ks}(u_2)^2\bigg)^{-1} \\
			&
			\cdot \frac{1}{S}
			\sum_{s=1}^{S}R_{js}(u_1)R_{ks}(u_2)\eta_{jst}(u_1)\eta_{kst}(u_2).
		\end{align*}
		Thus, to show $\widehat{\Sigma}_t(u_1,u_2) \pto \Sigma_t(u_1,u_2) $, it is sufficient to show $\widehat{\sigma}_{t,jk}(u_1,u_2) \pto {\sigma}_{t,jk}(u_1,u_2)$, where $\widehat{\sigma}_{t,jk}(u_1,u_2) $ is the $(j,k)^{\text{th}}$ entry of $\widehat{\Sigma}_{t}(u_1,u_2)$.
		Given (\ref{c4.5p_eq0}), it remains to show 
		\begin{align}\label{c4.5p_eq4}
			\frac{1}{S}
			\sum_{s=1}^{S} & \widehat{R}_{js}(u_1)
			\widehat{R}_{ks}(u_2) \widehat{\eta}_{j,st}(u_1)\widehat{\eta}_{k,st}(u_2) 
			\pto
			\frac{1}{S}
			\sum_{s=1}^{S}\R_{js}(u_1)R_{ks}(u_2)\eta_{j,st}(u_1)\eta_{k,st}(u_2). 
		\end{align}
		Using the identity that $\widehat{a}\widehat{b}- ab  = (\widehat{a} - a)\widehat{b} + a(\widehat{b}-b)$, we first note that
		\begin{align*}
			\bigg|\frac{1}{S} 
			\sum_{s=1}^{S} & \Big[\widehat{R}_{js}(u_1)\widehat{R}_{ks}(u_2) - R_{js}(u_1)R_{ks}(u_2)\Big]
			\eta_{jst}(u_1)\eta_{kst}(u_2)\bigg| =o_P(1),
		\end{align*}
		whose proof is similar to (\ref{c4.5p_eq0}).
		It remains to consider the term
		\begin{align*}
			\bigg|\frac{1}{S} & 
			\sum_{s=1}^{S} R_{js}(u_1)R_{ks}(u_2) \big( \widehat{\eta}_{jst}(u_1)\widehat{\eta}_{kst}(u_2)- \eta_{jst}(u_1)\eta_{kst}(u_2) \big)\bigg|.
		\end{align*}
		From 
		\begin{align*}
			\bigg|\frac{1}{S} & 
			\sum_{s=1}^{S} d_s^2 \big( \widehat{\eta}_{jst}(u_1)\widehat{\eta}_{k,st}(u_2)- \eta_{jst}(u_1)\eta_{kst}(u_2) \big)\bigg| \\
			\leq & 
			\bigg|\frac{1}{S} 
			\sum_{s=1}^{S} d_s \big( \widehat{\eta}_{jst}(u_1)- \eta_{jst}(u_1) \big)\widehat{\eta}_{kst}(u_2)\bigg| 
			+
			\bigg|\frac{1}{S} 
			\sum_{s=1}^{S} d_s \big( \widehat{\eta}_{kst}(u_2)- \eta_{k,st}(u_2) \big)\eta_{jst}(u_1)\bigg|,
		\end{align*}
		and 
		the expression of $\widehat{\eta}_{jst}(u)$, 
		we write
		{\small
			\begin{align*}
				\bigg| 
				\sum_{s=1}^{S} & d_s \big( \widehat{\eta}_{kst}(u_2)- \eta_{kst}(u_2) \big)\eta_{j,st}(u_1)\bigg| \\
				\leq &  
				\bigg|\sum_{s=1}^{S} d_s \eta_{jst}(u_1) (\widehat{\alpha}_{jst}(u) - \alpha_{jst}(u))\bigg| 
				+
				\bigg|\sum_{s=1}^{S} d_s \eta_{jst}(u_1)d_s(\delta_{jt}(u) - \widehat{\delta}_{jt}(u))\bigg|\\
				& +
				\bigg|\sum_{s=1}^{S} d_s \eta_{jst}(u_1) x_{st}'(\beta_{j}(u) - \widehat{\beta}_{j}(u)) \bigg|
				+
				\bigg|\sum_{s=1}^{S} d_s \eta_{jst}(u_1) (f_{jt}(u)'\lambda_{js}(u) - \widehat{f}_{jt}(u)'\widehat{\lambda}_{js}(u)])\bigg|\\
				= &  o_P(1),
		\end{align*}}
		whose proof are similar to part (iv) in the proof of Theorem~\ref{t5}. Given a similar reason, we also have 
		$|S^{-1}\sum_{s=1}^{S} d_s ( \widehat{\eta}_{jst}(u_1)- \eta_{jst}(u_1) )\widehat{\eta}_{kst}(u_2)|=o_P(1)$. In addition, replacing $d_s$ with $S^{-1}\sum_{g=1}^{S}\omega_{j,sg}(u)d_g$ does not affect the convergence rate. Therefore, we obtain that
		\begin{align*}
			\bigg|\frac{1}{S} & 
			\sum_{s=1}^{S} R_{js}(u_1)R_{ks}(u_2) \big( \widehat{\eta}_{jst}(u_1)\widehat{\eta}_{kst}(u_2)- \eta_{jst}(u_1)\eta_{kst}(u_2) \big)\bigg| = o_P(1),
		\end{align*}
		which completes the proof of (\ref{c4.5p_eq4}). Thus, the proof of (ii) is complete.
		\hfill$\square$

		\subsection{Proof of Results on Treatment Parameters}\label{sec:identification_proof}

		In this section, we prove the theorems related to the treatment parameters, discussed in Section~\ref{identification}.
		
		\bigskip
		\noindent
		\textbf{Proof of Theorem \ref{theorem:identification}}
		Let $u \in \mathcal{U}$, $z \in \mathcal{Z}$ and $t \geq T_0$
		be fixed.
		Under the potential outcome framework (\ref{Q_std})-(\ref{eq:delta1}),
		we can write 
		$\Delta_{t}^{AQTT}(u|z) =
		z'
		\E[
		\alpha_{st}^{1}(u)
		-
		\alpha_{st}^{0}(u)
		|
		d_{s}=1
		]
		$.
		Under Assumption~\ref{a2.3}.(i),
		it follows from (\ref{eq:delta1}) that
		$
		\E[
		\alpha_{jst}^{1}(u)
		-
		\alpha_{jst}^{0}(u)
		|
		d_{s}=1
		]
		=
		\E[
		\Delta_{jst}(u)
		|
		d_{s}=1
		]
		$, for $j=1,...,J$.
		Letting $\Delta_{st}(u) := [\Delta_{1st}(u),...,\Delta_{Jst}(u)]'$,
		we can show
		\begin{eqnarray*}
			\Delta_{t}^{AQTT}(u|z) &=& z'\E[  \Delta_{st}(u)  |  d_{s}=1  ], \\
			\dt{\Delta}_t^{W}(u_1,u_2|z) &=& z' \E[  \Delta_{st}(u_2) -  \Delta_{st}(u_1)  |  d_{s}=1 ], \\
			\dt{\Delta}_t^{B}(u|z_1,z_2) &=& (z_2- z_1)' \E[  \Delta_{st}(u)  |  d_{s}=1 ].
		\end{eqnarray*}
		Moreover, 
		we have
		$\alpha_{jst}(u) =
		(1 - d_{st}) \alpha_{jst}^{0}(u) + d_{st} \alpha_{jst}^{1}(u)$
		under the potential outcome framework.
		Therefore,
		we can rewrite (\ref{eq:delta1}) as
		\begin{align}
			\label{indentification_proof.eq1}
			\alpha_{jst}(u) = d_{st}\E[  \Delta_{jst}(u)  |  d_{s}=1  ] + x_{st}'\beta_j(u) + f_{jt}(u)'\lambda_{js}(u) + \eta_{jst}(u)
		\end{align}
		where
		$\eta_{jst}(u) := d_{st}(\Delta_{jst}(u) - \E[  \Delta_{jst}(u)  |  d_{s}=1  ]) + (1-d_{st})\eta^{0}_{jst}(u) + d_{st} \eta^{1}_{jst}(u)$. Then, it is clear that (\ref{indentification_proof.eq1}) coincides with (\ref{alpha_st}) by defining $\delta_{jt}(u) :=  \E[  \Delta_{jst}(u)  |  d_{s}=1  ]$.
		
		To prove the theorem, it remains to check that 
		$\delta_{jt}(u)  $
		can be identified from model (\ref{Q_y})-(\ref{alpha_st})
		for each $j = 1, \dots, J$
		and $t \ge T_{0}$.
		
		It is know that, for model (\ref{Q_y})-(\ref{alpha_st}), under Assumption~\ref{a2.1}, we can estimate the quantile regression coefficients $\alpha_{jst}(u)$ from model (\ref{Q_y}). Also, the
		factors and factor loadings are identifiable under Assumption~\ref{a2.2} using the argument
		of
		\cite{bai2009panel}.
		Thus, we treat the quantile regression coefficients, factors and loadings as known objects
		in the rest of the proof.
		Then, since $\E[\eta_{jst}(u)| d_{st},X_s] = 0$ under Assumption~\ref{a2.3}, taking the expectation on both sides of (\ref{alpha_st}) leads to the normal equations
		\begin{align}
			& \E[d_{st}(\alpha_{jst} -f_{jt}(u)'\lambda_{js}(u))] = \E[d_{st}^2]\delta_{jt}(u) + \E[d_{st}x_{st}']\beta_{j}(u) , \label{indentification_proof.eq2}\\
			& \E[x_{st}(\alpha_{jst} -f_{jt}(u)'\lambda_{js}(u))] = \E[d_{st}x_{st}]\delta_{jt}(u) + \E[x_{st}x_{st}']\beta_{j}(u) . \label{indentification_proof.eq3}
		\end{align}
		Solving (\ref{indentification_proof.eq3}) with respect to $\beta_{j}(u)$, we have 
		$$\beta_{j}(u) = \E[x_{st}x_{st}']^{-1} \{\E[x_{st}(\alpha_{jst} -f_{jt}(u)'\lambda_{js}(u))] - \E[d_{st}x_{st}]\delta_{jt}(u)\}$$ given that $\E[x_{st}x_{st}']^{-1} $ is invertible under Assumption~\ref{a2.2}.(ii). 
		Substituting the solution into (\ref{indentification_proof.eq2}), we obtain
		\begin{align*}
			\delta_{jt}(u) = \E[d_{st}\Pi_{st}]^{-1}\E[\Pi_{st}(\alpha_{jst}(u) -  f_{jt}(u)'\lambda_{js}(u))],
		\end{align*}
		where $\Pi_{st} := d_{st} - \E[d_{st}x_{st}']\E[x_{st}x_{st}']^{-1}x_{st} $. Moreover, since $\text{Var}(x_{st}) = \E[x_{st}x_{st}']- \E[x_{st}]\E[x_{st}'] >0$, it follows that 
		$$
		\E[d_{st}\Pi_{st}] = \E[\E[d_{st}\Pi_{st}] | d_{st} ] 
		= \mathbb{P}(d_{st}= 1 ) (1 - \E[x_{st}']\E[x_{st}x_{st}']^{-1}\E[x_{st} ] ) >0, 
		$$
		that is,  
		$\E[d_{st}\Pi_{st}]$ is invertible.
		
		Because $\delta_{jt}(u)$ is identifiable in model (\ref{Q_y})-(\ref{alpha_st}), the policy parameters $\Delta_{t}^{AQTT}(u|z)$, $\dt{\Delta}_{t}^{B}(u|z_1,z_2)$ and $\dt{\Delta}_t^{W}(u_1,u_2|z)$ are identifiable from the arguments at the beginning of the proof.
		\hfill$\square$
		\vspace{0.5cm}
		
		\bigskip\noindent
		\textbf{Proof of Theorem~\ref{CLT_inequality}}
		Recall from Theorem~\ref{theorem:identification} that 
		$\Delta_{t}^{AQTT}(u|z)=z'\delta_{\cdot t}(u)$, which is a linear combination of $\delta_{\cdot t}(u)$. Thus, following the joint-CLT result in Theorem~\ref{t6}, 
		it is straightforward to check that, given $z$, the asymptotic bias and variance  are 
		\begin{align*}
			\lim_{S,T\to \infty}\E\Big[\sqrt{S}  \Big (    \widehat{{\Delta}}^{AQTT}_t(u|z)  - {\Delta}^{AQTT}_t(u|z)  \Big )\Big]
			= & \lim_{S,T\to \infty}\E\Big[\sqrt{S}z'\Big(\widehat{\delta}_{\cdot t}(u) -\delta_{\cdot t}(u) \Big)\Big] \\
			= & z'B_t(u),\\
			\lim_{S,T \to \infty}\text{Var}\Big[\sqrt{S}  \Big (    \widehat{{\Delta}}^{AQTT}_t(u|z)  - {\Delta}^{AQTT}_t(u|z)\Big)\Big]
			= & \lim_{S,T\to \infty}\text{Var}\Big[\sqrt{S}z'\Big(\widehat{\delta}_{\cdot t}(u) -\delta_{\cdot t}(u) \Big)\Big] \\
			= & z'\Sigma_t(u,u)z.
		\end{align*}
		As $\widehat{\dt{\Delta}}_{t}{\hspace{-0.1cm}}^{B}(u| z_1,z_2) $ and $\widehat{\dt{\Delta}}_{t}{\hspace{-0.1cm}}^{W}(u_{1}, u_{2}| z)$ are also linear combinations of $\delta_{\cdot t}$, following similar arguments, we establish the corresponding claims in Theorem~\ref{CLT_inequality}.
		\hfill$\square$

		\section{Technical Details for the Main Proofs}\label{online.theorem}
		
		In this section, we provide the techical details necessary for the proofs in Section~\ref{sec:proof}.

		\begin{lemma}
			\label{l1}
			Under Assumption~\ref{a2.1}, \ref{ass.dens} and \ref{ass.growth}, for fixed $u \in \mathcal{U}$, and all $(s,t) \in \{1,...,S\} \times \{1,...,T\}$,
			\begin{align*}
				\underset{s,t}{\sup}\|\widehat{\alpha}_{st}(u) - {\alpha}_{st}(u)\|
				= O_P\left((ST)^{-3/4}\right).
			\end{align*}
			
		\end{lemma}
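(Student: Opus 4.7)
The plan is to combine a concentration bound for a single quantile regression estimator with a union bound over the $ST$ pairs, and then use Assumption~\ref{ass.growth}.(ii) to translate the resulting rate into the target rate $(ST)^{-3/4}$. The reason to anticipate this approach works is that Assumption~\ref{ass.growth}.(ii) forces $N_{\min}$ to grow much faster than $ST$, so a logarithmic price from the union bound is easily absorbed.

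First, for each fixed pair $(s,t)$, the observations $\{(y_{ist},z_{ist})\}_{i=1}^{N_{st}}$ are i.i.d., the regressor $z_{ist}$ is uniformly bounded by $C_M$ (Assumption~\ref{a2.1}.(i)), the matrix $\E[z_{1st}z_{1st}']$ has eigenvalues bounded below by $c_M$ (Assumption~\ref{a2.1}.(ii)), and the conditional density $g_{st}(\cdot|z)$ is bounded and bounded away from zero at the quantile of interest (Assumption~\ref{ass.dens}). Under these conditions, standard arguments for quantile regression — using the convexity of the check loss $\varrho_u$, a Knight-type identity to quadratically lower bound the population objective, and a symmetrization/Hoeffding bound on the centered empirical score $N_{st}^{-1}\sum_i z_{ist}(\mathbf{1}\{y_{ist}\leq z_{ist}'a\}-u)$ evaluated on an $\varepsilon$-net of a bounded neighborhood — yield an exponential tail bound: there exist constants $C_1,c_1,c_0>0$ depending only on $c_M,C_M$ such that, for every $x\in(0,c_0]$,
\begin{equation*}
  P\bigl(\|\widehat{\alpha}_{st}(u)-\alpha_{st}(u)\|\geq x\bigr)\;\leq\;C_1\exp(-c_1 N_{st} x^2).
\end{equation*}
(This is essentially the finite-sample bound underlying He and Shao (1996) or Koenker (2005, Ch.~4); the uniform bounds in Assumption~\ref{a2.1} and \ref{ass.dens} make the constants independent of $s,t$.)

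Next, set $x_N := M\sqrt{\log(ST)/N_{\min}}$ for a constant $M$ to be chosen. A union bound over the $ST$ pairs, together with $N_{st}\geq N_{\min}$, gives
\begin{equation*}
  P\Bigl(\sup_{s,t}\|\widehat{\alpha}_{st}(u)-\alpha_{st}(u)\|\geq x_N\Bigr)\;\leq\;ST\cdot C_1\exp\bigl(-c_1 M^2\log(ST)\bigr)\;=\;C_1\,(ST)^{\,1-c_1M^2},
\end{equation*}
which can be made arbitrarily small by taking $M$ large. Hence $\sup_{s,t}\|\widehat{\alpha}_{st}(u)-\alpha_{st}(u)\|=O_P\bigl(\sqrt{\log(ST)/N_{\min}}\bigr)$. (For this step I also need $x_N\leq c_0$ eventually, which holds because Assumption~\ref{ass.growth}.(ii) forces $\log(ST)/N_{\min}\to 0$.)

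Finally, I convert to the target rate. Assumption~\ref{ass.growth}.(ii) states $(ST)^{3/4}(\log N_{\min}/N_{\min})^{1/2}\leq C_M$, equivalently $N_{\min}\geq C_M^{-2}(ST)^{3/2}\log N_{\min}$, which implies $N_{\min}\geq ST$ for all large $ST$ and therefore $\log(ST)\leq \log(N_{\min})$. Consequently,
\begin{equation*}
  \sqrt{\log(ST)/N_{\min}}\;\leq\;\sqrt{\log(N_{\min})/N_{\min}}\;\leq\;C_M\,(ST)^{-3/4},
\end{equation*}
yielding $\sup_{s,t}\|\widehat{\alpha}_{st}(u)-\alpha_{st}(u)\|=O_P\bigl((ST)^{-3/4}\bigr)$.

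The main obstacle is establishing the exponential concentration inequality in the first step. The rate $N_{st}^{-1/2}$ for a single QR is textbook, but here I need a nonasymptotic bound with explicit exponential decay and constants that do not depend on $(s,t)$. This requires the convex/peeling apparatus plus a maximal inequality over a bounded parameter region; the uniformity in $(s,t)$ is delivered by the common bounds $c_M,C_M$ in Assumptions~\ref{a2.1} and \ref{ass.dens}. Everything else is bookkeeping.
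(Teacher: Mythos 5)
Your proposal is correct and follows essentially the same route as the paper: both rest on the per-cell non-asymptotic exponential tail bound (the paper imports it as Lemma~\ref{bl.nonasymptotic} from Chetverikov et al.\ rather than re-deriving it), a union bound over the $ST$ cells, and Assumption~\ref{ass.growth}.(ii). The only difference is bookkeeping — you first establish the intermediate rate $\sqrt{\log(ST)/N_{\min}}$ and then show it is dominated by $(ST)^{-3/4}$, whereas the paper substitutes the threshold $M(ST)^{-3/4}$ directly into the union bound.
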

		
		\noindent
		\textbf{Proof}
		To prove the claim, it is enough to show that, for some $c>0$, there exists a fixed and sufficiently large $M>0$, such that 
		\begin{align*}
			\mathbb{P}\bigg(\underset{s,t}{\sup}\|\widehat{\alpha}_{st}(u) - {\alpha}_{st}(u)\|>\frac{M}{(ST)^{3/4}}\bigg) \rightarrow 0.
		\end{align*}

		Under Assumption~\ref{ass.growth} that $(ST)^{3/4}(\ln(N_{\min})/N_{\min})^{1/2} \leq C_M$, we choose $M$ such that $M>\big(2C_M^2/(3c)\big)^{1/2}$, where $c$ is the constant in Lemma~\ref{bl.nonasymptotic} independent of $S,T,N_{\min}$. Then,
		we have
		\begin{align*}
			\mathbb{P}\bigg(\underset{s,t}{\sup}\|\widehat{\alpha}_{st}(u) - {\alpha}_{st}(u)\|>\frac{M}{(ST)^{3/4}}\bigg)
			& \leq 
			\sum_{s=1}^{S}\sum_{t=1}^{T}\mathbb{P}\bigg(\|\widehat{\alpha}_{st}(u) - {\alpha}_{st}(u)\|>\frac{M}{(ST)^{3/4}}\bigg)\\
			& \leq 
			\sum_{s=1}^{S}\sum_{t=1}^{T}Ce^{-cM^2N_{st}/(ST)^{3/2}}\\
			& 
			\leq CSTe^{-cM^2N_{\min}/(ST)^{3/2}}\\
			& 
			\leq
			C \cdot \bigg(\frac{C_M^2 N_{\min} }{\ln(N_{\min})}\bigg)^{2/3}N_{\min}^{-cM^2/C_M^{2}}\\
			&
			\rightarrow 0,
		\end{align*}
		where the second inequality follows from the non-asymptotic upper bound given in Lemma~\ref{bl.nonasymptotic}, the last inequality follows from Assumption~\ref{ass.growth}, which converges to $0$ since $2/3-cM^2/C_M^2 <0$.	
		\hfill$\square$

		\begin{lemma}\label{al.5}
			Under Assumption \ref{ass.cov} and \ref{ass.mixing}, we have the following estimations, for any fixed $u \in \mathcal{U}$, $k=1,...,K$ and $j=1,...,J$,
			\begin{align}
				& \left\|\frac{1}{S T} \sum_{s=1}^{S} {\eta}_{js}(u) {\eta}_{js}'(u)\right\|
				=O_P\left(\max \left(\frac{1}{\sqrt{S}}, \frac{1}{\sqrt{T}}\right)\right), \label{al5.eq1}\\
				& \left\|\frac{1}{S T} \sum_{s=1}^{S} {x}_{s(k)} {\eta}_{js}'(u)\right\|=O_P\left(\frac{1}{\sqrt{ST}}\right), \label{al5.eq2}\\
				& \left\|\frac{1}{S T} \sum_{s=1}^{S}  F_{j}(u){\lambda}_{js}(u) {\eta}_{js}'(u)\right\|=O_P\left(\frac{1}{\sqrt{S}}\right). \label{al5.eq3}
			\end{align}
		\end{lemma}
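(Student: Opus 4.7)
The proof proposal follows the Bai (2009) blueprint for moment bounds on normalized products of factor-model quantities, adapted to the $\alpha$-mixing framework imposed in Assumption~\ref{ass.mixing}. In each of the three cases, the key idea is to bound the spectral norm of a $T\times T$ matrix by computing the second moment of an appropriate scalar surrogate (Frobenius norm, or the trace of $AA'$), and then showing the variance structure collapses to the claimed rate because of the moment bounds in Assumption~\ref{ass.cov} together with summable mixing coefficients.

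\textbf{Part (\ref{al5.eq1}).} The first step is the mean/deviation split
\begin{equation*}
\frac{1}{ST}\sum_{s=1}^{S}\eta_{js}(u)\eta_{js}(u)'
=\frac{1}{ST}\sum_{s=1}^{S}\mathbb{E}[\eta_{js}(u)\eta_{js}(u)']
+\frac{1}{ST}\sum_{s=1}^{S}\bigl(\eta_{js}(u)\eta_{js}(u)'-\mathbb{E}[\eta_{js}(u)\eta_{js}(u)']\bigr).
\end{equation*}
The deterministic piece has spectral norm at most $C_M/T$ by Assumption~\ref{ass.cov}(iii). For the centered piece, I would compute its Frobenius norm squared, whose expectation equals $(ST)^{-2}\sum_{t,l}\mathrm{Var}\bigl(\sum_s\eta_{jst}(u)\eta_{jsl}(u)\bigr)$. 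Expanding the variance and applying Assumption~\ref{ass.mixing}(ii)--(iii) (cross-group $\alpha$-mixing at the required order, with the summability conditions on $a_{sg}(\cdot)$ and $a_{sgkm}(\cdot)$) together with the moment bound $\sup_{s,t}\mathbb{E}\|\ell_{jst}(u)\|^{4+\delta}<\infty$, I would bound this expectation by $O(1/S)$, giving Frobenius norm $O_P(1/\sqrt{S})$. Combining with the mean piece yields the rate $O_P(\max\{1/T,1/\sqrt{S}\})=O_P(\max\{1/\sqrt{S},1/\sqrt{T}\})$.

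\textbf{Parts (\ref{al5.eq2}) and (\ref{al5.eq3}).} Both summands have zero mean by Assumption~\ref{ass.cov}(ii), so I would work directly with the second moment. For (\ref{al5.eq2}), I write the object as $(ST)^{-1}X_{(k)}E_j'$ where $X_{(k)}=[x_{1(k)},\ldots,x_{S(k)}]$ and $E_j=[\eta_{j1}(u),\ldots,\eta_{jS}(u)]$ are $T\times S$. Bounding $\mathbb{E}\|\cdot\|^2$ via Cauchy--Schwarz and expanding the $(s,g,t,l)$-sums, the cross-group terms are controlled by Assumption~\ref{ass.mixing}(ii) (so they contribute at most the rate of the diagonal), while diagonal-in-$s$ terms are bounded by the fourth-moment condition. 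This delivers the claimed $O_P(1/\sqrt{ST})$ rate. For (\ref{al5.eq3}), the cleanest route is to factor the expression as $(ST)^{-1}F_j(u)\cdot\sum_{s}\lambda_{js}(u)\eta_{js}(u)'$, use $\|F_j(u)\|=O_P(\sqrt{T})$ (from $T^{-1}F_j'F_j=I_r$), and bound the Frobenius norm of the inner sum by $O_P(\sqrt{ST})$ using Assumptions~\ref{ass.cov}(i) and the mixing structure. Multiplying: $\sqrt{T}\cdot\sqrt{ST}/(ST)=1/\sqrt{S}$.

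\textbf{Main obstacle.} The routine Bai-style calculation in the i.i.d.\ setting is replaced here by $\alpha$-mixing in both cross-section and time. Thus the main technical task is the bookkeeping in the expansion of $\mathbb{E}\bigl|\sum_s\cdot\bigr|^2$ into one-, two-, three-, and four-index sums over $(s,g,k,m)$ and $(t,l)$; each sum must be shown to match the rate implied by the corresponding summability condition of Assumption~\ref{ass.mixing}. Once the mixing bookkeeping is set up carefully, each of the three rates follows from a dimension count of the surviving terms.
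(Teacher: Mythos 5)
Your proposal is correct and follows essentially the same route as the paper: bound the spectral norm by the second moment of a Frobenius-type surrogate, split the resulting $(s,g,t,l)$ sums into diagonal parts (controlled by the fourth-moment bounds in Assumption~\ref{ass.cov}) and off-diagonal parts (controlled by the covariance inequality for $\alpha$-mixing sequences and the summability conditions in Assumption~\ref{ass.mixing}), then count dimensions. The only cosmetic differences are the mean/deviation split in part (\ref{al5.eq1}) (the paper instead absorbs the uncentered $s\neq g$, $t=l$ term directly, which is where its $1/T$ component arises) and pulling out $\|F_j(u)\|=\sqrt{T}$ in part (\ref{al5.eq3}) rather than using $F_j(u)'F_j(u)=TI_r$ inside the trace; neither changes the substance of the argument.
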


		\noindent
		\textbf{Proof}
		Since $u$ and $j$ are fixed, we suppress $u$ and $j$ throughout the following proof.
		For (\ref{al5.eq1}),
		\begin{align*}
			\mathbb{E} \bigg[\bigg\|
			& \frac{1}{S T} \sum_{s=1}^{S} {\eta}_{s} {\eta}_{s}^{\prime}\bigg\|^{2} \bigg]
			\leq  \mathbb{E}\left[\frac{1}{S^{2} T^{2}} \sum_{s, g=1}^{S} \sum_{t, l=1}^{T} \eta_{s t} \eta_{g t} \eta_{s l} \eta_{g l}\right] \\
			=& \frac{1}{S^{2} T^{2}} \sum_{s=1}^{S} \sum_{t, l=1}^{T} \mathbb{E}\left[\eta_{s t}^{2} \eta_{s l}^{2}\right]
			+\frac{1}{S^{2} T^{2}} \sum_{s \neq g}^{S} \sum_{t=1}^{T} \mathbb{E}\left[\eta_{s t}^{2} \eta_{g t}^{2}\right]
			+\frac{1}{S^{2} T^{2}} \sum_{s \neq g}^{S} \sum_{t \neq l}^{T} \mathbb{E}\left[\eta_{s t} \eta_{g t} \eta_{s l} \eta_{g l}\right] \\
			=& O\left(\frac{1}{S}\right) 
			+O\left(\frac{1}{T}\right)
			+O\left(\frac{1}{ST}\right) 
			= O\left(\max \left(\frac{1}{S}, \frac{1}{T}\right)\right) ,
		\end{align*}
		where the second last equality is obtained using Assumption~\ref{ass.cov}.(iv) and the fact that 
		\begin{align*}
			\bigg|
			\frac{1}{S^{2} T^{2}} &
			\sum_{s \neq g}^{S} \sum_{t \neq l}^{T} \mathbb{E}\left[\eta_{s t} \eta_{g t} \eta_{s l} \eta_{g l}\right]\bigg| \\
			& =  \frac{1}{S^{2} T^{2}} \sum_{s \neq g}^{S} \sum_{t \neq l}^{T}
			\Big|\mathbb{E}\left[\eta_{s t} \eta_{g t} \eta_{s l} \eta_{g l}\right]
			-\mathbb{E}[\eta_{s t}] \mathbb{E}[\eta_{g t}] \mathbb{E}[\eta_{s l}] \mathbb{E}[\eta_{g l}]\Big|  \\
			& \leq  C
			\sum_{s\neq g}^{S} \sum_{t \neq l}^{T} 
			a_{sg}(|t-l|)^{\frac{\delta}{\delta+4}}
			\left(\mathbb{E}\left[|\eta_{st}|^{\delta+4}\right]\right)^{\frac{1}{\delta+4}} 
			\left(\mathbb{E}\left[|\eta_{gt}|^{\delta+4}\right]\right)^{\frac{1}{\delta+4}} \\
			& \qquad \qquad 
			\cdot
			\left(\mathbb{E}\left[|\eta_{sl}|^{\delta+4}\right]\right)^{\frac{1}{\delta+4}} 
			\left(\mathbb{E}\left[|\eta_{gl}|^{\delta+4}\right]\right)^{\frac{1}{\delta+4}}  \\
			& \leq  C \sum_{s\neq g}^{S} \sum_{t \neq l}^{T}   a_{sg}(|t-l|)^{\frac{\delta}{\delta+4}}
			= O \left(\frac{1}{ST}\right)
			,
		\end{align*}
		under Assumption~\ref{ass.mixing}.(i) and (ii). 
		Therefore,
		\begin{align*}
			\left\|\frac{1}{S T} \sum_{s=1}^{S} {\eta}_{s} {\eta}_{s}'\right\|
			=O_P\left(\max \left(\frac{1}{\sqrt{S}}, \frac{1}{\sqrt{T}}\right)\right).
		\end{align*}

		For (\ref{al5.eq2}),
		\begin{align}\label{al5p.eq1}
			\mathbb{E}\bigg[\bigg\|
			& \frac{1}{S T} \sum_{s=1}^{S} {x}_{s(k)} {\eta}_{s}'\bigg\|^2\bigg] 
			\leq 
			\frac{1}{S^2 T^2} \sum_{s,g=1}^{S}\sum_{t,l=1}^{T}
			\mathbb{E}\left[\eta_{sl}x_{st(k)}x_{gt(k)}\eta_{gl}\right]  \nonumber\\
			= &
			\frac{1}{S^2 T^2} \sum_{s =1}^{S}\sum_{t=1}^{T} \mathbb{E}\left[\eta_{st}^2x_{st(k)}^2\right]
			+
			\frac{1}{S^2 T^2} \sum_{s \neq g}^{S}\sum_{t \neq l}^{T} \mathbb{E}\left[\eta_{sl}x_{st(k)}x_{gt(k)}\eta_{gl}\right] \nonumber\\
			& 
			+
			\frac{1}{S^2 T^2}\sum_{s =1}^{S}\sum_{t \neq l}^{T} \mathbb{E}\left[\eta_{sl}x_{st(k)}x_{st(k)}\eta_{sl}\right]
			+ 
			\frac{1}{S^2 T^2} \sum_{s \neq g}^{S}\sum_{t=1}^{T} \mathbb{E}\left[\eta_{st}x_{st(k)}x_{gt(k)}\eta_{gt}\right] \nonumber\\
			=&  O \left(\frac{1}{ST}\right),
		\end{align}
		where the last equality comes from the following facts that the first term in the first equality is
		\begin{align*}
			\frac{1}{S^2 T^2} \sum_{s =1}^{S}\sum_{t=1}^{T} \mathbb{E}\left[\eta_{st}^2x_{st(k)}^2\right] 
			\leq \frac{1}{S^2 T^2} \sum_{s =1}^{S}\sum_{t=1}^{T} \left(\mathbb{E}\left[\eta_{st}^4\right]\right)^{1/2} \cdot \left(\mathbb{E}\left[x_{st(k)}^4\right]\right)^{1/2} 
			= O\left(\frac{1}{ST}\right),
		\end{align*}
		under Assumption~\ref{ass.cov}.(i) and (iv), and the second term in the first equality of (\ref{al5p.eq1}) is
		\begin{align*}
			\frac{1}{S^2 T^2} 
			&
			\sum_{s \neq g}^{S}\sum_{t \neq l}^{T} \mathbb{E}\left[\eta_{sl}x_{st(k)}x_{gt(k)}\eta_{gl}\right] \\
			= & 
			\frac{1}{S^2 T^2} \sum_{s \neq g}^{S}\sum_{t \neq l}^{T} 
			\Big(\mathbb{E}\left[\eta_{st}x_{st(k)}x_{gt(k)}\eta_{gl}\right] - \mathbb{E}\left[\eta_{sl}x_{st(k)}\right] \cdot
			\mathbb{E}\left[x_{gt(k)}\eta_{gl}\right] \Big) \\
			\leq & \frac{1}{S^2 T^2} \sum_{s \neq g}^{S}\sum_{t \neq l}^{T}
			10 a_{sg}(|t-l|)^{\frac{\delta}{\delta+2}}
			\left(\mathbb{E}\left[|\eta_{sl}x_{st(k)}|^{\delta+2}\right]\right)^{\frac{1}{\delta+2}}
			\left(\mathbb{E}\left[|x_{gt(k)}\eta_{gl}|^{\delta+2}\right]\right)^{\frac{1}{\delta+2}} \\
			=& O\left(\frac{1}{ST}\right),
		\end{align*}
		where  the first equality holds since $ \mathbb{E}\left[\eta_{st}x_{st(k)}\right] = \mathbb{E}[x_{st(k)}]\mathbb{E}(\eta_{st}|x_{st(k)})] = 0$ under Assumption~\ref{ass.cov}.(iv), the last inequality holds due to Lemma~\ref{al.4} and  Assumption~\ref{ass.cov}.(iv), and the last equality follows from Assumption \ref{ass.mixing}(i) and \ref{ass.mixing}(ii). The third term in the first equality of (\ref{al5p.eq1}) is
		\begin{align*}
			\frac{1}{S^2 T^2}
			&
			\sum_{s =1}^{S}\sum_{t \neq l}^{T} \mathbb{E}\left[\eta_{st}x_{st(k)}x_{sl(k)}\eta_{sl}\right] \\
			= & 
			\frac{1}{S^2 T^2} \sum_{s =1 }^{S}\sum_{t \neq l}^{T} 
			\Big(\mathbb{E}\left[\eta_{st}x_{st(k)}x_{sl(k)}\eta_{sl}\right] - \mathbb{E}\left[\eta_{st}x_{st(k)}\right] \cdot
			\mathbb{E}\left[x_{sl(k)}\eta_{sl}\right] \Big) \\
			\leq & \frac{1}{S^2 T^2} \sum_{s =1}^{S}\sum_{t \neq l}^{T}
			10 a(|t-l|)^{\frac{\delta}{\delta+2}}
			\left(\mathbb{E}\left[|\eta_{st}x_{st(k)}|^{\delta+2}\right]\right)^{\frac{1}{\delta+2}}
			\left(\mathbb{E}\left[|x_{sl(k)}\eta_{sl}|^{\delta+2}\right]\right)^{\frac{1}{\delta+2}} \\
			=& O\left(\frac{1}{ST}\right),
		\end{align*}
		where the last equality follows from Assumption~\ref{ass.mixing}.(i). And lastly, the fourth term in the first equality of (\ref{al5p.eq1}) is 
		\begin{align*}
			\frac{1}{S^2 T^2}
			&
			\sum_{s \neq g}^{S}\sum_{t = 1}^{T} \mathbb{E}\left[\eta_{st}x_{st(k)}x_{gl(k)}\eta_{gl}\right] \\
			= & 
			\frac{1}{S^2 T^2} \sum_{s \neq g}^{S}\sum_{t =1}^{T} 
			\Big(\mathbb{E}\left[\eta_{st}x_{st(k)}x_{gt(k)}\eta_{gt}\right] - \mathbb{E}\left[\eta_{st}x_{st(k)}\right] \cdot
			\mathbb{E}\left[x_{gt(k)}\eta_{gt}\right] \Big) \\
			\leq & \frac{1}{S^2 T^2} \sum_{s \neq g}^{S}\sum_{t =1}^{T}
			10 a_{sg}(0)^{\frac{\delta}{\delta+2}}
			\left(\mathbb{E}\left[|\eta_{st}x_{st(k)}|^{\delta+2}\right]\right)^{\frac{1}{\delta+2}}
			\left(\mathbb{E}\left[|x_{gt(k)}\eta_{gt}|^{\delta+2}\right]\right)^{\frac{1}{\delta+2}} \\
			=& O\left(\frac{1}{ST}\right)
		\end{align*}
		where the last equality follows from Assumption \ref{ass.mixing}(i) and (ii).

		Therefore, we have
		\begin{align*}
			\left\|\frac{1}{S T} \sum_{s=1}^{S} x_{s(k)} {\eta}_{s}'\right\|
			=O \left(\frac{1}{ST}\right).
		\end{align*}

		For (\ref{al5.eq3}), we have
		\begin{align*}
			\mathbb{E}\bigg[\bigg\|
			& \frac{1}{S T} \sum_{s=1}^{S}  F {\lambda}_{s}{\eta}_{s}'\bigg\|^2\bigg]
			\leq  \frac{1}{S^2 T^2}\sum_{s,g=1}^{S}  
			\mathbb{E}\left[\text{tr}\left({\eta}_{s}{\lambda}_{s}' F' F {\lambda}_{g}{\eta}_{g}' \right)\right] \\
			= & \frac{1}{S^2 T} \sum_{i=1}^{r}\sum_{s=1}^{S}\sum_{t=1}^{T}
			\mathbb{E}\left[\eta_{st}^2 \lambda_{si}^2 \right] 
			+ \frac{1}{S^2 T} \sum_{i=1}^{r}\sum_{s \neq g}^{S}\sum_{t=1}^{T}
			\mathbb{E}\left[\eta_{gt}\eta_{st}\lambda_{si} \lambda_{gi} \right] 
			=   O \left(\frac{1}{S}\right), 
		\end{align*}
		where the last equality holds due to the facts that
		\begin{align*}
			\frac{1}{S^2 T} \sum_{i=1}^{r}\sum_{s=1}^{S}\sum_{t=1}^{T}
			\mathbb{E}\left[\eta_{st}^2 \lambda_{si}^2 \right] 
			\leq \frac{1}{S^2 T} \sum_{i=1}^{r}\sum_{s=1}^{S}\sum_{t=1}^{T}
			\left(\mathbb{E}\left[\eta_{st}^4\right]\right)^{1/2} 
			\left(\mathbb{E}\left[\lambda_{si}^4\right]\right)^{1/2}
			= O\left(\frac{1}{S}\right),
		\end{align*}
		under Assumption~\ref{ass.cov}, and
		\begin{align*}
			\frac{1}{S^2 T} &
			\sum_{i=1}^{r}\sum_{s \neq g}^{S}\sum_{t=1}^{T}
			\mathbb{E}\left[\eta_{gt}\eta_{st}\lambda_{si} \lambda_{gi} \right] \\
			= &  
			\frac{1}{S^2 T} \sum_{i=1}^{r}\sum_{s \neq g}^{S}\sum_{t=1}^{T}
			\Big(
			\mathbb{E}\left[\eta_{st}\lambda_{si} \eta_{gt}\lambda_{gi} \right]
			- 
			\mathbb{E}\left[\eta_{st}\lambda_{si} \right]
			\cdot
			\mathbb{E}\left[\eta_{gt}\lambda_{gi} \right] \Big) \\
			\leq & \frac{1}{S^2 T} \sum_{i=1}^{r}\sum_{s \neq g}^{S}\sum_{t =1}^{T}
			10 a_{sg}(0)^{\frac{\delta}{\delta+2}}
			\left(\mathbb{E}\left[|\eta_{st}\lambda_{si}|^{\delta+2}\right]\right)^{\frac{1}{\delta+2}}
			\left(\mathbb{E}\left[|\eta_{gt}\lambda_{gi}|^{\delta+2}\right]\right)^{\frac{1}{\delta+2}} 
			= O\left(\frac{1}{S}\right),
		\end{align*}
		under Assumption~\ref{ass.mixing}.(i) and (ii).
		
		Therefore,
		\begin{align*}
			\left\|\frac{1}{S T} \sum_{s=1}^{S}  F{\lambda}_{s}{\eta}_{s}'\right\|=O_P\left(\frac{1}{\sqrt{S}}\right).
		\end{align*}
		\hfill$\square$
		
		\begin{lemma} \label{prop3}
			Under  assumptions of Theorem~\ref{t5}, we have for any fixed $u\in \mathcal{U}$, $j=1,...,J$ and $m \geq 1$,

			\begin{align*}
				\frac{1}{\sqrt{T}} & \left(\widehat{F}_j^{(m)}(u)H_j^{(m)}(u) - F_j(u)\right) \\
				=& 
				\frac{1}{\sqrt{T}} \left[I_1^{(m)}(u) + I_2^{(m)}(u) + I_3^{(m)}(u)\right]  \widehat{F}^{(m)}_j(u)\left(K_j^{(m)}(u)\right)^{-1}
				+
				O_P(\delta_{ST}^{-1})  \\
				&
				+ o_P\left(\frac{1}{\sqrt{ST}}\right)
				\cdot
				\bigg(\Big\|\widehat{{\beta}}^{(m-1)}_j(u)-{\beta}_j(u)\Big\|
				+\bigg |\sum_{t=T_0}^{T} \widehat{{\delta}}^{(m-1)}_{jt}(u) - {\delta}_{jt}(u)\bigg| \bigg),
			\end{align*}
			where 
			\begin{align*} 
				& I_1^{(m)} 
				:=  \frac{1}{ST} \sum_{s=1}^{S}\Big[D_s (\widehat{\delta}^{(m-1)}_j(u) - \delta_j(u))(\widehat{\delta}^{(m-1)}_j(u) - \delta_j(u))'D_s' \\
				& \qquad \qquad \qquad \quad
				+  D_s (\widehat{\delta}^{(m-1)}_j(u) - \delta_j(u))(\widehat{\beta}^{(m-1)}_j(u) - \beta_j(u))'X_s' \\
				& \qquad \qquad \qquad \quad
				+ X_s (\widehat{\beta}^{(m-1)}_j(u) - \beta_j(u))(\widehat{\delta}^{(m-1)}_j(u) - \delta_j(u))'D_s' \\
				& \qquad \qquad \qquad \quad
				+ X_s (\widehat{\beta}^{(m-1)}_j(u) - \beta_j(u))(\widehat{\beta}^{(m-1)}_j(u) - \beta_j(u))'X_s'\Big], \nonumber\\
				& I_2^{(m)} (u)
				:=  \frac{1}{ST} \sum_{s=1}^{S} \Big[
				D_s\left({\delta}_j(u)- \widehat{{\delta}}^{(m-1)}_j(u)\right){\lambda}_{js}(u)' F_j(u)' \\
				& \qquad \qquad \qquad \quad 
				+ X_s\left({\beta}_j(u)- \widehat{{\beta}}^{(m-1)}_j(u)\right){\lambda}_{js}(u)' F_j(u)' \\
				& \qquad \qquad \qquad \quad
				+ F_j(u) {\lambda}_{js}(u) \left({\delta}_j(u)- \widehat{{\delta}}^{(m-1)}_j(u)\right)' D_s'\\
				& \qquad \qquad \qquad \quad
				+ F_j(u) {\lambda}_{js}(u) \left({\beta}_j(u)- \widehat{{\beta}}^{(m-1)}_j(u)\right)' X_s'
				\Big], \nonumber\\
				& I_3^{(m)}(u)
				:=   \frac{1}{ST} \sum_{s=1}^{S}
				\bigg[D_s\left({\delta}_j(u)- \widehat{{\delta}}^{(m-1)}_j(u)\right){\eta}_{js}(u)' 
				+ X_s\left({\beta}_j(u)- \widehat{{\beta}}^{(m-1)}_j(u)\right){\eta}_{js}(u)' \\
				& \qquad \qquad \qquad \quad
				+
				{\eta}_s  \left({\delta}- \widehat{{\delta}}^{(m-1)}\right)' D_s'
				+
				{\eta}_s  \left({\beta}- \widehat{{\beta}}^{(m-1)}\right)' X_s'
				\bigg], 
			\end{align*}
			and 
			\begin{align*}
				K^{(m)}_j(u) :=   \left(\frac{\Lambda_j(u)'\Lambda_j(u)}{S}\right)\left(\frac{F_j(u)'\widehat{F}^{(m)}_j(u)}{T}\right),
				\quad 
				H_j^{(m)}(u) := \widehat{\Upsilon}_j^{(m)}(u)\left({K}_j^{(m)}(u)\right)^{-1}, 
			\end{align*}
			in which $\widehat{\Upsilon}_j^{(m)}(u)$ is the $r \times r$ diagonal matrix with diagonal elements being the $r$ largest eigenvalues of $\widehat{L}_j\big(\widehat{\delta}^{(m-1)}_j(u),\widehat{\beta}^{(m-1)}_j(u)\big)$, defined in (\ref{eq:W}), in descending order.
		\end{lemma}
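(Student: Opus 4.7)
The plan is to exploit the PCA eigenvalue identity that defines $\widehat{F}^{(m)}_j(u)$:
\[
\widehat{L}_j\big(\widehat{\delta}^{(m-1)}_j(u),\widehat{\beta}^{(m-1)}_j(u)\big)\,\widehat{F}^{(m)}_j(u) = \widehat{F}^{(m)}_j(u)\,\widehat{\Upsilon}^{(m)}_j(u),
\]
with normalization $T^{-1}\widehat{F}^{(m)\prime}_j(u)\widehat{F}^{(m)}_j(u) = I_r$. Suppressing $u$ and $j$, I would combine the vectorized model~(\ref{alpha_st}) with the first-step estimator to write
\[
\widehat{A}_s - D_s\widehat{\delta}^{(m-1)} - X_s\widehat{\beta}^{(m-1)} = D_s(\delta - \widehat{\delta}^{(m-1)}) + X_s(\beta - \widehat{\beta}^{(m-1)}) + (\widehat{A}_s - A_s) + F\lambda_s + \eta_s,
\]
substitute this into $\widehat{L}_j$, and expand the outer product. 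The dominant block is $F(\Lambda'\Lambda/S)F'/T$; three further groups of cross-product blocks match $I_1^{(m)}$ (cross-products among the two parameter-error components), $I_2^{(m)}$ (their symmetric cross-products with $F\lambda_s$), and $I_3^{(m)}$ (their symmetric cross-products with $\eta_s$). All remaining contributions I would collect into a residual matrix $\mathcal{R}^{(m)}$.

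Next I would bound $\|\mathcal{R}^{(m)}\|$ in three batches. Lemma~\ref{al.5} controls the pure-noise blocks $(ST)^{-1}\sum_s \eta_s\eta_s'$ and $(ST)^{-1}\sum_s F\lambda_s\eta_s'$ at $O_P(\max(S^{-1/2},T^{-1/2}))$ and $O_P(S^{-1/2})$ respectively, both $O_P(\delta_{ST}^{-1})$. Lemma~\ref{l1}'s uniform bound $\sup_{s,t}|\widehat{\alpha}_{jst}-\alpha_{jst}| = O_P((ST)^{-3/4})$ handles every block containing $(\widehat{A}_s - A_s)$ alone or crossed with $F\lambda_s$ or $\eta_s$, again giving $O_P(\delta_{ST}^{-1})$ after combining with $\|F\|=O_P(\sqrt{T})$ and $\|\Lambda\|=O_P(\sqrt{S})$. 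The remaining residual pieces are the cross-products $(ST)^{-1}\sum_s (\widehat{A}_s - A_s)(\delta-\widehat{\delta}^{(m-1)})'D_s'$ and its $(\beta, X_s)$ analogue. These are bounded by $\sup_{s,t}|\widehat{\alpha}_{jst}-\alpha_{jst}|$ times $T^{-1}\big|\sum_{t\geq T_0}(\widehat{\delta}^{(m-1)}_t-\delta_t)\big|$ or times $\|\widehat{\beta}^{(m-1)}-\beta\|$, producing exactly the $o_P((ST)^{-1/2})$-weighted contribution appearing in the statement.

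Collecting everything, the eigenvalue identity becomes $\widehat{F}^{(m)}\widehat{\Upsilon}^{(m)} = F K^{(m)} + (I_1^{(m)}+I_2^{(m)}+I_3^{(m)})\widehat{F}^{(m)} + \mathcal{R}^{(m)}$. Right-multiplying by $(\widehat{\Upsilon}^{(m)})^{-1}$ and then by $H^{(m)} = \widehat{\Upsilon}^{(m)}(K^{(m)})^{-1}$---using Assumption~\ref{ass.tech}.(ii) to ensure $\|(\widehat{\Upsilon}^{(m)})^{-1}\| = O_P(1)$, together with Assumptions~\ref{a2.2}.(i) and \ref{ass.cov}.(i) to give $\|(K^{(m)})^{-1}\| = O_P(1)$---yields
\[
\widehat{F}^{(m)}H^{(m)} - F = (I_1^{(m)}+I_2^{(m)}+I_3^{(m)})\widehat{F}^{(m)}(K^{(m)})^{-1} + \mathcal{R}^{(m)}(K^{(m)})^{-1},
\]
and dividing by $\sqrt{T}$ delivers the claimed expansion.

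The hard part will be the bookkeeping of $\mathcal{R}^{(m)}$: roughly a dozen cross-product blocks must be enumerated and each allocated to either the uniform $O_P(\delta_{ST}^{-1})$ remainder or the sharper $o_P((ST)^{-1/2})$-weighted pieces multiplying $\|\widehat{\beta}^{(m-1)}-\beta\| + |\sum_t(\widehat{\delta}^{(m-1)}_t-\delta_t)|$. A secondary subtlety is keeping the scalings consistent across $S$ and $T$, since the PCA identity, $K^{(m)}$, and $H^{(m)}$ each involve different normalizations, and the final $1/\sqrt{T}$ rescaling must be absorbed uniformly into every bound derived above.
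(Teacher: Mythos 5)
Your proposal is correct and follows essentially the same route as the paper: the PCA eigenvalue identity, the substitution of $\widehat{A}_s = D_s\delta + X_s\beta + F\lambda_s + \eta_s + (\widehat{A}_s - A_s)$ into $\widehat{L}_j$, the identification of $T^{-1}F(\Lambda'\Lambda/S)F'$ as the block producing $FK^{(m)}$, and the allocation of the remaining cross-product blocks to $I_1,I_2,I_3$ versus a residual controlled by Lemmas~\ref{l1} and \ref{al.5}. Your right-multiplication by $(\widehat{\Upsilon}^{(m)})^{-1}$ followed by $H^{(m)}$ composes to exactly the paper's single multiplication by $(K^{(m)})^{-1}$, so the two arguments coincide.
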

		
		\noindent
		\textbf{Proof}
		Since $u$ and $j$ are fixed in Lemma~\ref{prop3}, we suppress $u$ and $j$ throughout the following proof.
		
		It is easy to show that a square matrix A is invertible if $\det(AA') = \det(A) \det(A') \neq 0$, 
		and thus, 
		$T^{-1}F'\widehat{F}^{(m)}$ is invertible and bounded under Assumption \ref{ass.tech}.(i). Additionally, $S^{-1}\Lambda'\Lambda$ is invertible and bounded under Assumption~\ref{ass.cov}.(ii), 
		and thus, $K^{(m)}$ is invertible, and 
		\begin{align}
			\label{prop3p.eq0}
			\left\|\Big(K^{(m)}\Big)^{-1}\right\| 
			\leq \left\|\bigg(\frac{\Lambda'\Lambda}{S}\bigg)^{-1}\right\| 
			\cdot \left\|\bigg(\frac{F'\widehat{F}^{(m)}}{T}\bigg)^{-1}\right\| 
			= O_P(1).
		\end{align}
		Then, given the definition of $H^{(m)}$, to obtain the asymptotic expression of $\widehat{F}^{(m)} H^{(m)} - F$, it is enough to analyze $\widehat{F}^{(m)} \widehat{\Upsilon}^{(m)} - F K^{(m)}$, and then $\widehat{F}^{(m)} H^{(m)} - F = \big(\widehat{F}^{(m)} \widehat{\Upsilon}^{(m)} - F K^{(m)}\big) \big(K^{(m)}\big)^{-1}$.

		With ($\widehat{\delta}^{(m-1)}$,$\widehat{\beta}^{(m-1)}$) obtained at the $(m-1)^{\text{th}}$ step, we have the estimator $\big(\widehat{F}^{(m)}, \widehat{\Lambda}^{(m)}\big)$ via the PCA. Thus, $\widehat{F}^{(m)}$ statisfies
		\begin{equation} \label{eq.vj}
			\widehat{F}^{(m)} \widehat{\Upsilon}^{(m)} = \widehat{L}^{(m)}\widehat{F}^{(m)}.
		\end{equation}
		Moreover,  $\widehat{L}^{(m)}=(ST)^{-1}\sum_{s=1}^{S}
		\big (
		\widehat{A}_{s} - X_s \widehat{{\beta}}^{(m-1)}\big)
		\big(
		\widehat{A}_{s} - X_s \widehat{{\beta}}^{(m-1)} \big)'$  can be deposed into eight terms by substituting $\widehat{A}_s$ in $\widehat{L}^{(m)}$ with $\widehat{{A}}_s =  X_s{\beta} +  F {\lambda}_{s} +  {\eta}_{s} + (\widehat{{A}}_s - {A}_s)$ as follows:
		\allowdisplaybreaks
		\begin{align*}
			& I_1^{(m)} 
			=  \frac{1}{ST} \sum_{s=1}^{S}\Bigg[D_s (\widehat{\delta}^{(m-1)} - \delta)(\widehat{\delta}^{(m-1)} - \delta)'D_s'
			+  D_s (\widehat{\delta}^{(m-1)} - \delta)(\widehat{\beta}^{(m-1)} - \beta)'X_s' \\
			& \qquad \qquad \qquad \quad
			+ X_s (\widehat{\beta}^{(m-1)} - \beta)(\widehat{\delta}^{(m-1)} - \delta)'D_s' 
			+ X_s (\widehat{\beta}^{(m-1)} - \beta)(\widehat{\beta}^{(m-1)} - \beta)'X_s'\Bigg]
			, \\
			& I_2^{(m)}
			= \frac{1}{ST} \sum_{s=1}^{S} 
			\bigg[D_s\left({\delta}- \widehat{{\delta}}^{(m-1)}\right){\lambda}_s' F'
			+
			X_s\left({\beta}- \widehat{{\beta}}^{(m-1)}\right){\lambda}_s' F' \\
			& \qquad \qquad \qquad \quad
			+
			F {\lambda}_s \Big({\delta}- \widehat{{\delta}}^{(m-1)}\Big)' D_s'
			+
			F {\lambda}_s \Big({\beta}- \widehat{{\beta}}^{(m-1)}\Big)' X_s'
			\bigg], \\
			& I_3^{(m)}
			= \frac{1}{ST} \sum_{s=1}^{S}
			\bigg[D_s\left({\delta}- \widehat{{\delta}}^{(m-1)}\right){\eta}_s' 
			+ X_s\left({\beta}- \widehat{{\beta}}^{(m-1)}\right){\eta}_s' \\
			& \qquad \qquad \qquad \quad
			+
			{\eta}_s  \left({\delta}- \widehat{{\delta}}^{(m-1)}\right)' D_s'
			+
			{\eta}_s  \left({\beta}- \widehat{{\beta}}^{(m-1)}\right)' X_s'
			\bigg], \\
			& I_4^{(m)}
			= \frac{1}{ST} \sum_{s=1}^{S} 
			\bigg[F {\lambda}_s {\eta}_s' 
			+
			{\eta}_s  {\lambda}_s'  F'
			+
			{\eta}_s  {\eta}_s'
			\bigg],\\
			& I_5^{(m)}
			= \frac{1}{ST} \sum_{s=1}^{S} 
			\bigg[ \left(\widehat{{A}}_s - {A}_s \right) \left({\delta}- \widehat{{\delta}}^{(m-1)}\right)' D_s'
			+
			\left(\widehat{{A}}_s - {A}_s \right) \left({\beta}- \widehat{{\beta}}^{(m-1)}\right)' X_s'\\
			& \qquad \qquad \qquad \quad
			+
			D_s \left({\delta}- \widehat{{\delta}}^{(m-1)}\right) \left(\widehat{{A}}_s - {A}_s \right)'
			+
			X_s \left({\beta}- \widehat{{\beta}}^{(m-1)}\right) \left(\widehat{{A}}_s - {A}_s \right)'
			\bigg], \\
			& I_{6}^{(m)}
			= \frac{1}{ST} \sum_{s=1}^{S} 
			\bigg[
			\left(\widehat{{A}}_s - {A}_s \right) {\eta}_s' 
			+
			{\eta}_s \left(\widehat{{A}}_s - {A}_s \right)'
			+
			\left(\widehat{{A}}_s - {A}_s \right) {\lambda}_s' F'
			+
			F {\lambda}_s \left(\widehat{{A}}_s - {A}_s \right)'
			\bigg] ,\\
			& I_{7}^{(m)}
			= \frac{1}{ST} \sum_{s=1}^{S} \left(\widehat{{A}}_s - {A}_s \right)\left(\widehat{{A}}_s - {A}_s \right)', \\
			& I_{8}^{(m)}
			= \frac{1}{ST} \sum_{s=1}^{S} F {\lambda}_s {\lambda}_s' F' .
		\end{align*}
		Then, since $I_{8}^{(m)} \widehat{F}^{(m)} = F K^{(m)}$ according to the definition of $K^{(m)}$, we have
		\begin{align} 
			\label{prop2p.eq0}
			\widehat{F}^{(m)} \widehat{\Upsilon}^{(m)} - F K^{(m)}= \sum_{h=1}^{7}I_h^{(m)} \widehat{F}^{(m)},
		\end{align}
		where $I_1^{(m)}$, $I_2^{(m)}$ and $I_3^{(m)}$ are the leading terms, while the rest of the terms are negligible in the limit. To show this, we start by showing the rate of convergence for $I_1^{(m)}$. By triangle and Cauchy-Schwartz inequalities, we have
		\begin{align*}
			\|I_1^{(m)}\| 
			\leq  & \frac{1}{ST} \Bigg[\bigg\|\sum_{s=1}^{S}D_s (\widehat{\delta}^{(m-1)} - \delta)(\widehat{\delta}^{(m-1)} - \delta)'D_s'\bigg\| \\
			& \quad
			+ 2 \bigg\|\sum_{s=1}^{S}D_s (\widehat{\delta}^{(m-1)} - \delta)(\widehat{\beta}^{(m-1)} - \beta)'X_s'\bigg\| \\
			& \quad + \bigg\|\sum_{s=1}^{S}W_s (\widehat{\beta}^{(m-1)} - \beta)(\widehat{\beta}^{(m-1)} - \beta)'X_s'\bigg\|\Bigg] \\
			\leq  & \frac{1}{ST} \Bigg[ \bigg\|\sum_{s=1}^{S}\sum_{t,l=T_0}^{T} d_s e_t (\widehat{\delta}^{(m-1)}_t - \delta_t) (\widehat{\delta}^{(m-1)}_l - \delta_l) e_l' \bigg\| \\
			& \quad + 2 \bigg\|\sum_{s=1}^{S}X_s\bigg\| \cdot \bigg\|\sum_{t=T_0}^{T} e_t (\widehat{\delta}^{(m-1)}_t - \delta_t) \bigg\| \cdot \|\widehat{\beta}^{(m-1)} - \beta\| \\
			& \quad 
			+ \sum_{s=1}^{S}\|X_s\|^2 \cdot \|\widehat{\beta}^{(m-1)} - \beta\|^2 \Bigg] \\
			\leq & \frac{1}{ST} \Bigg[ \sum_{s=1}^{S}|d_s|\cdot \bigg|\sum_{t=T_0}^{T} \widehat{\delta}^{(m-1)}_t - \delta_t\bigg|^2 
			+ \sum_{s=1}^{S}\|X_s\|^2 \cdot \|\widehat{\beta}^{(m-1)} - \beta\|^2 \\
			& \quad 
			+ 2 \bigg\|\sum_{s=1}^{S}X_s\bigg\| \cdot \bigg|\sum_{t=T_0}^{T} \widehat{\delta}^{(m-1)}_t - \delta_t\bigg| \cdot \|\widehat{\beta}^{(m-1)} - \beta\| 
			\Bigg] \\
			= & O_P \bigg(\frac{1}{T}\bigg) \cdot \bigg|\sum_{t=T_0}^{T} \widehat{\delta}^{(m-1)}_t - \delta_t\bigg|^2 
			+O_P(1) \cdot\|\widehat{\beta}^{(m-1)} - \beta\|^2\\
			& \quad
			+ O_P \bigg(\frac{1}{\sqrt{T}}\bigg) \cdot \bigg|\sum_{t=T_0}^{T} \widehat{\delta}^{(m-1)}_t - \delta_t\bigg| \cdot  \|\widehat{\beta}^{(m-1)} - \beta\|,
		\end{align*}
		For $I_2^{(m)}$, a similar argument yields that 
		\begin{align*}
			\|I_2^{(m)}\| 
			\leq  & \frac{1}{ST} \Bigg[\bigg\|\sum_{s=1}^{S}D_s (\widehat{\delta}^{(m-1)} - \delta)\lambda_s'F'\bigg\| 
			+ \bigg\|\sum_{s=1}^{S}W_s (\widehat{\beta}^{(m-1)} - \beta)\lambda_s'F'\bigg\| \Bigg] \\
			\leq & \frac{1}{ST} \Bigg[ \bigg\|\sum_{s=1}^{S}d_s\lambda_s'F'\bigg\|\cdot \bigg|\sum_{t=T_0}^{T} \widehat{\delta}^{(m-1)}_t - \delta_t\bigg|
			+ \bigg\|\sum_{s=1}^{S}W_s\lambda_s'F'\bigg\| \cdot  \|\widehat{\beta}^{(m-1)} - \beta\| \Bigg] \\
			= &  O_P\left(\frac{1}{\sqrt{ST}}\right) \cdot 
			\bigg(\Big\|\widehat{{\beta}}^{(m-1)}-{\beta}\Big\|
			+\bigg |\sum_{t=T_0}^{T}{\delta}_{t}- \widehat{{\delta}}^{(m-1)}_{t}\bigg| \bigg),
		\end{align*}
		under Assumption~\ref{ass.norm} that $\mathbb{E}\|\sum_{s=1}^{S}W_s\lambda_s'F'\|^2 = O_P(ST)$ and $\mathbb{E}\|\sum_{s=1}^{S}d_s\lambda_s'F'\|^2 = O_P(ST)$. 
		
		For $I_3^{(m)}$ and $I_4^{(m)}$, by triangle and Cauchy-Schwartz inequalities and the property of $\alpha$-mixing varaibles (see Lemma~\ref{al.5}) we have
		\begin{align*}
			\left\|I_{3}^{(m)}\right\| 
			\leq & 2 \bigg(
			\sum_{k=1}^{K} \bigg\|\frac{1}{S T} \sum_{s=1}^{S} {x}_{s(k)} {\eta}_{s}^{\prime}\bigg\|^2\bigg)^{1/2} 
			\bigg(\sum_{k=1}^{K}\left|{\beta}_{k}- \widehat{{\beta}}^{(m-1)}_{k}\right|^2\bigg)^{1/2} \\
			&
			+ \frac{2}{S T}
			\bigg\| \sum_{s=1}^{S} {d}_{s} {\eta}_{s}^{\prime}\bigg\| \cdot  
			\bigg |\sum_{t=T_0}^{T}{\delta}_{t}- \widehat{{\delta}}^{(m-1)}_{t}\bigg| \\
			\leq & O_P\left(\frac{1}{\sqrt{ST}}\right) \cdot 
			\bigg(\Big\|\widehat{{\beta}}^{(m-1)}-{\beta}\Big\|
			+\bigg |\sum_{t=T_0}^{T}{\delta}_{t}- \widehat{{\delta}}^{(m-1)}_{t}\bigg| \bigg),\\
			\left\|I_{4}^{(m)}\right\| 
			\leq  & 2 
			\left\|\frac{1}{S T} \sum_{s=1}^{S} F {\lambda}_s {\eta}_s'\right\| 
			+
			\left\|\frac{1}{S T} \sum_{s=1}^{S}  {\eta}_s {\eta}_s'\right\| 
			= O_P\left(\delta_{ST}^{-1}\right).
		\end{align*}
		
		For $I_5^{(m)}$ to $I_7^{(m)}$, we apply Cauchy-Schwartz inequality and have
		{\small
			\begin{align} 
				\label{l8p.eq4}
				\left\|I_5^{(m)}\right\| 
				& \leq 
				\frac{2}{ST}\ \underset{s}{\sup}\left\|\widehat{{A}}_s - {A}_s \right\|
				\cdot 
				\bigg( \sum_{s=1}^{S}  \left\|X_{s} \right\| \cdot \left\|{\beta}- \widehat{{\beta}}^{(m-1)}\right\|
				+ 
				\bigg\| \sum_{s=1}^{S} {d}_{s}\bigg\| \cdot  
				\bigg |\sum_{t=T_0}^{T}{\delta}_{t}- \widehat{{\delta}}^{(m-1)}_{t}\bigg| \bigg) \nonumber\\
				& = O_P\left( (ST)^{-3/4}\right) \cdot 
				\bigg(\Big\|\widehat{{\beta}}^{(m-1)}-{\beta}\Big\|
				+\bigg |\sum_{t=T_0}^{T}{\delta}_{t}- \widehat{{\delta}}^{(m-1)}_{t}\bigg| \bigg), \nonumber\\
				\left\|I_{6}^{(m)}\right\| 
				& \leq \frac{2}{ST}
				\ \underset{s}{\sup}\left\|\widehat{{A}}_s - {A}_s \right\|
				\cdot \sum_{s=1}^{S} \big(\left\| \eta_{s} \right\| + \|\lambda_s'F\|\big)
				= O_P\left( (ST)^{-3/4}\right) , \\
				\left\|I_{7}^{(m)}\right\| 
				& \leq \frac{1}{ST}\sum_{s=1}^{S}\left\|\widehat{A}_s - A_s\right\|^2
				=  \frac{1}{ST} 
				\sum_{s=1}^{S}\sum_{t=1}^{T}\left\|\widehat{\alpha}_{st} - {\alpha}_{st}\right\|^2
				= O_P\left( (ST)^{-3/2}\right),\nonumber
		\end{align}}
		since $\sup_s\|\widehat{{A}}_s - {A}_s \|^2 = \sum_{t=1}^{T}\sup_{s,t}\|\widehat{\alpha}_{st} - \alpha_{st} \|^2 = O_P(S^{-3/2}T^{-1/2})$ according to Lemma~\ref{l1} and $\sum_{s=1}^{S} \|X_{s} \| = O_P(S\sqrt{T})$,  $\sum_{s=1}^{S} \|\eta_{s} \| = O_P(S\sqrt{T})$ and  $\sum_{s=1}^{S} \|\lambda_{s}'F' \| = O_P(S\sqrt{T})$ under Assumption~\ref{ass.cov}.

		Finally, combining (\ref{prop3p.eq0})-(\ref{prop2p.eq0}) with $\widehat{F}^{(m)} = O_P(\sqrt{T})$ and $\big(K^{(m)}\big)^{-1} = O_P(1)$, we obtain 
		\begin{align*}
			\frac{1}{\sqrt{T}}  \left(\widehat{F}^{(m)} H^{(m)} - F\right)  
			& = \frac{1}{\sqrt{T}}\sum_{h=1}^{7}I_h^{(m)} \widehat{F}^{(m)}\left(K^{(m)}\right)^{-1} \\
			& = \frac{1}{\sqrt{T}} \left[ I_1^{(m)}  + I_2^{(m)} + I_3^{(m)}\right] \widehat{F}^{(m)}\left(K^{(m)}\right)^{-1} +  O_P(\delta_{ST}^{-1}) \\
			& \quad + o_P\left(\frac{1}{\sqrt{ST}}\right) 
			\cdot \bigg(\Big\|\widehat{{\beta}}^{(m-1)}-{\beta}\Big\| +\bigg |\sum_{t=T_0}^{T} \widehat{{\delta}}^{(m-1)}_{t} - {\delta}_{t}\bigg| \bigg),
		\end{align*}
		and furthermore,
		{\small
			\begin{align} \label{al6.eq1}
				\big\| \widehat{F}^{(m)} H^{(m)} - F  \big\| 
				& = O_P \bigg(\frac{1}{\sqrt{T}}\bigg) \cdot \bigg|\sum_{t=T_0}^{T} \widehat{\delta}^{(m-1)}_t - \delta_t\bigg|^2 
				+O_P(\sqrt{T}) \cdot\|\widehat{\beta}^{(m-1)} - \beta\|^2 \nonumber\\
				& \quad
				+ O_P (1) \cdot \bigg|\sum_{t=T_0}^{T} \widehat{\delta}^{(m-1)}_t - \delta_t\bigg| \cdot  \|\widehat{\beta}^{(m-1)} - \beta\| \nonumber\\
				& \quad 
				+ O_P\left(\frac{1}{\sqrt{S}}\right) \cdot 
				\bigg(\Big\|\widehat{{\beta}}^{(m-1)}-{\beta}\Big\|
				+\bigg |\sum_{t=T_0}^{T}{\delta}_{t}- \widehat{{\delta}}^{(m-1)}_{t}\bigg| \bigg)
				+ O_P(1). 
		\end{align}}
		
		\hfill$\square$

		\begin{lemma} \label{t2}
			Under assumptions of Theorem~\ref{t6}, for any fixed $u\in \mathcal{U}$, $j=1,...,J$ and recursive step $m \geq 1$, 
			\begin{align*}
				\sum_{s=1}^{S} & X_s' M_{\widehat{F}_j}^{(m)}(u) \big(F_j(u) - \widehat{F}^{(m)}_j(u)H^{(m)}_j(u)\big){\lambda}_{js}(u)\\
				=
				&
				-\sum_{s=1}^{S} X_s' M_{\widehat{F}_j}^{(m)}(u) \Big(I_1^{(m)}+I_2^{(m)}\Big)\widehat{F}_j^{(m)}(u) \left(K_j^{(m)}(u)\right)^{-1}{\lambda}_{js}(u) \\
				& +O_P (\sqrt{S}) \cdot \bigg|\sum_{t=T_0}^{T} \widehat{\delta}^{(m-1)}_{jt}(u) - \delta_{jt}(u) \bigg|^2
				+O_P(T\sqrt{S}) \cdot \Big\|\widehat{\beta}^{(m-1)}_j(u) - \beta_j(u)\Big\|^2 \nonumber\\
				&
				+ O_P (\sqrt{ST}) \cdot \bigg|\sum_{t=T_0}^{T} \widehat{\delta}^{(m-1)}_{jt}(u) - \delta_{jt}(u) \bigg| \cdot  \Big\|\widehat{\beta}^{(m-1)}_j(u) - \beta_j(u)\Big\| \nonumber\\
				&
				+ O_P(\sqrt{T}) \cdot 
				\bigg(\Big\|\widehat{\beta}^{(m-1)}_j(u) - \beta_j(u)\Big\|
				+ \bigg|\widehat{\delta}^{(m-1)}_{jt}(u) - \delta_{jt}(u) \bigg|\bigg)
				+ O_P(\sqrt{ST})
			\end{align*}
			where $I_1^{(m)}$, $I_2^{(m)}$, $I_3^{(m)}$, $H^{(m)}_j(u)$ and $K^{(m)}_j(u)$ are defined in Lemma~\ref{prop3}.
		\end{lemma}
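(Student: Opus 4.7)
The plan is to substitute the asymptotic expansion of $\widehat{F}^{(m)}_j(u)H^{(m)}_j(u)-F_j(u)$ obtained in Lemma~\ref{prop3} into the target sum and track the contribution of each piece. Suppressing $u$ and $j$, Lemma~\ref{prop3} yields
\begin{align*}
F-\widehat{F}^{(m)}H^{(m)}
&= -\bigl[I_1^{(m)}+I_2^{(m)}+I_3^{(m)}\bigr]\widehat{F}^{(m)}\bigl(K^{(m)}\bigr)^{-1}
+ \sqrt{T}\cdot R^{(m)},
\end{align*}
where $R^{(m)}=O_P(\delta_{ST}^{-1})+o_P((ST)^{-1/2})\bigl(\|\widehat{\beta}^{(m-1)}-\beta\|+|\sum_{t\ge T_0}(\widehat{\delta}^{(m-1)}_t-\delta_t)|\bigr)$. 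Multiplying by $X_s'M_{\widehat{F}}^{(m)}$ on the left and $\lambda_{js}$ on the right and summing over $s$, the $I_1^{(m)}+I_2^{(m)}$ contributions reproduce exactly the leading term in the statement of Lemma~\ref{t2}. Hence the task reduces to (a) bounding the $I_3^{(m)}$ contribution and (b) bounding the remainder $\sqrt{T}\sum_s X_s'M_{\widehat{F}}^{(m)}R^{(m)}\lambda_{js}$.

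For step (a), I would expand $I_3^{(m)}$ into its four sub-summands and, for each one, use $\|M_{\widehat{F}}^{(m)}\|\le 1$ together with the Cauchy--Schwarz inequality. A representative term is
\begin{align*}
\frac{1}{ST}\sum_{s,g=1}^{S}X_s'M_{\widehat{F}}^{(m)}X_g(\beta-\widehat{\beta}^{(m-1)})\eta_g'\widehat{F}^{(m)}(K^{(m)})^{-1}\lambda_{js},
\end{align*}
whose norm is bounded by $(ST)^{-1}\sum_s\|X_s\|\cdot\|\sum_g X_g\eta_g'\widehat{F}^{(m)}\|\cdot\|\widehat{\beta}^{(m-1)}-\beta\|\cdot\|(K^{(m)})^{-1}\|\cdot\|\lambda_{js}\|$. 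By Lemma~\ref{al.5}(ii) applied componentwise, together with $\|\widehat{F}^{(m)}\|=O_P(\sqrt{T})$, the middle factor is $O_P(S\sqrt{T})$, giving an overall contribution $O_P(\sqrt{T})\cdot\|\widehat{\beta}^{(m-1)}-\beta\|$. Treating the $D_s(\delta-\widehat{\delta}^{(m-1)})\eta_s'$ piece analogously (using $\|D_s\|=|d_s|\le 1$ and $|\sum_{t\ge T_0}d_s(\widehat{\delta}^{(m-1)}_t-\delta_t)|\le |\sum_t\widehat{\delta}^{(m-1)}_t-\delta_t|$) delivers the companion term $O_P(\sqrt{T})\cdot|\sum_t\widehat{\delta}^{(m-1)}_t-\delta_t|$. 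Summing the four pieces gives the $O_P(\sqrt{T})(\|\widehat{\beta}^{(m-1)}-\beta\|+|\sum_t\widehat{\delta}^{(m-1)}_t-\delta_t|)$ term appearing in the conclusion.

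For step (b), using $\sum_s\|X_s\|\cdot\|\lambda_{js}\|=O_P(S\sqrt{T})$, the Cauchy--Schwarz bound gives
\begin{align*}
\Big\|\sqrt{T}\sum_s X_s'M_{\widehat{F}}^{(m)}R^{(m)}\lambda_{js}\Big\|
&= O_P(S\sqrt{T})\cdot\sqrt{T}\cdot O_P(\delta_{ST}^{-1})
= O_P(\sqrt{ST}),
\end{align*}
since $T/S\to\kappa>0$ under Assumption~\ref{ass.growth}(i), hence $S\sqrt{T}/\delta_{ST}=O(\sqrt{ST})$. The $o_P((ST)^{-1/2})$ part of $R^{(m)}$ is absorbed into the other remainder terms already displayed in the conclusion. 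Collecting the leading $I_1,I_2$ terms with the bounds from (a) and (b) yields the stated identity.

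The main obstacle, as in the proof of Lemma~\ref{prop3} itself, is step (a): one must verify that the cross-product between the idiosyncratic errors $\eta_s$ and the factor-residual projection $M_{\widehat{F}}^{(m)}$ does not inflate the rate. This is where the $\alpha$-mixing conditions in Assumption~\ref{ass.mixing} and the covariance bounds in Lemma~\ref{al.5} are crucial, since $\widehat{F}^{(m)}$ is itself data-dependent. A clean way to proceed is to replace $\widehat{F}^{(m)}$ by $F H^{(m)^{-1}}$ in the inner sums using the bound on $\|\widehat{F}^{(m)}H^{(m)}-F\|$ established in (\ref{al6.eq1}), so that the residual contamination is explicitly of order $o_P(1)$ relative to the leading terms kept in the statement.
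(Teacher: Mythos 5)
Your overall strategy (substitute the decomposition of $F-\widehat{F}^{(m)}H^{(m)}$ from Lemma~\ref{prop3}, keep the $I_1^{(m)},I_2^{(m)}$ pieces as the leading term, and bound the rest) is the same as the paper's, and your treatment of the $I_3^{(m)}$ contribution is essentially workable (your stated rate $O_P(\sqrt{T})\|\widehat{\beta}^{(m-1)}-\beta\|$ should be $O_P(\sqrt{ST})\|\widehat{\beta}^{(m-1)}-\beta\|$, but that is still absorbed by the displayed remainder). The genuine gap is in your step (b). You fold everything beyond $I_1,I_2,I_3$ into a single remainder $R^{(m)}$ with $\|R^{(m)}\|=O_P(\delta_{ST}^{-1})$ and then bound $\|\sqrt{T}\sum_s X_s'M_{\widehat{F}}^{(m)}R^{(m)}\lambda_{js}\|$ by a product of norms. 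That product is $\sum_s\|X_s\|\|\lambda_{js}\|\cdot\sqrt{T}\cdot O_P(\delta_{ST}^{-1})=O_P(S\sqrt{T})\cdot\sqrt{T}\cdot O_P(\delta_{ST}^{-1})=O_P(ST\delta_{ST}^{-1})$, which under $T/S\to\kappa$ is $O_P(S\sqrt{T})$, not the claimed $O_P(\sqrt{ST})$ --- your final line drops a factor of $\sqrt{T}$ ($S\sqrt{T}\cdot\sqrt{T}/\delta_{ST}=ST/\delta_{ST}\asymp S^{3/2}$, whereas $\sqrt{ST}\asymp S$). This is not a fixable arithmetic slip: the crude operator-norm bound on the lumped remainder genuinely cannot deliver $O_P(\sqrt{ST})$.

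The reason the paper achieves $O_P(\sqrt{ST})$ is that it never treats the $I_4^{(m)}$ contribution as a generic remainder. It opens it up into the four pieces $F\lambda_g\eta_g'$, $\eta_g\lambda_g'F'$, $\eta_g\eta_g'-\E[\eta_g\eta_g']$ and $\E[\eta_g\eta_g']$, and each requires its own structural argument: the first is small only because $M_{\widehat{F}}^{(m)}F=M_{\widehat{F}}^{(m)}(F-\widehat{F}^{(m)}H^{(m)})$; the second is rewritten as $\sum_g\big(S^{-1}\sum_s\omega_{sg}X_s'M_{\widehat{F}}^{(m)}\big)\eta_g$, which concentrates at $O_P(\sqrt{ST})$ by the mixing conditions; the centered quadratic term needs the dedicated Lemma~\ref{bl.2}; and the $\E[\eta_g\eta_g']$ term is controlled at $O_P(S)$ only via the bounded-eigenvalue condition in Assumption~\ref{ass.cov}. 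You flag the ``cross-product between the idiosyncratic errors and the projection'' as the main obstacle, but you locate it in the $I_3^{(m)}$ step; the real obstacle sits inside the remainder you discarded, and your proposal contains no argument for it. To repair the proof you would need to carry out the term-by-term analysis of the $I_4^{(m)}$ (and $I_5^{(m)}$--$I_7^{(m)}$) contributions rather than appealing to the norm of the Lemma~\ref{prop3} remainder.
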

		
		\noindent
		\textbf{Proof}
		Since $u$ and $j$ are fixed, for notational simplicity, we suppress $u$ and $j$ throughout the following proof.
		
		By definition of $M^{(m)}_{\widehat{F}}$, we have
		$
		\sum_{s=1}^{S}X_s' M_{\widehat{F}}^{(m)} F{\lambda}_s
		=\sum_{s=1}^{S}X_s' M_{\widehat{F}}^{(m)} \big(F - \widehat{F}^{(m)} H^{(m)}\big){\lambda}_s
		$. 
		Also, recall from the proof of Lemma~\ref{prop3}, we have $ \widehat{F}^{(m)} H^{(m)} - F = \sum_{h=1}^{7}I_h^{(m)}\widehat{F}^{(m)} \left(K^{(m)}\right)^{-1}$ with  $I_h^{(m)},\ h=1,...,7$ defined above (\ref{prop2p.eq0}), so that we can write
		\begin{align}\label{t1p.eq4}
			\sum_{s=1}^{S}X_s'M_{\widehat{F}}^{(m)} \big(F - \widehat{F}^{(m)} H^{(m)}\big)\lambda_s
			= & - \sum_{h=1}^{7}\sum_{s=1}^{S}X_s' M_{\widehat{F}}^{(m)} I_h^{(m)} \widehat{F}^{(m)} \left(K^{(m)}\right)^{-1} \lambda_s \nonumber\\
			=: & \sum_{h=1}^{7} J_h^{(m)},
		\end{align}
		where $J_1^{(m)}$, $J_2^{(m)}$ and $J_3^{(m)}$ are the leading terms in the asymptotic expression, while the remaining terms are negligible. We first compute the norm of the leading terms. Given the rates $I_1^{(m)}$, $I_2^{(m)}$ and $I_3^{(m)}$ of given in the proof of Lemma~\ref{prop3}, by Cauchy-Swartz and H\"{o}lder's inequality, we have
		\begin{align*}
			\left\|J_{1}^{(m)}\right\| 
			& \leq 2
			\bigg(\sum_{s=1}^{S}\|X_s\|^{2}\bigg)^{1/2}
			\cdot 
			\bigg(\sum_{s=1}^{S}\|\lambda_s\|^{2}\bigg)^{1/2}
			\cdot
			\left\|M_{\widehat{F}}^{(m)}\right\| 
			\cdot \left\|I_{1}^{(m)}\right\|  
			\cdot \left\|\widehat{F}^{(m)}\right\|
			\cdot \left\| \left(K^{(m)}\right)^{-1}\right\| \\
			& = 
			O_P (S) \cdot \bigg|\sum_{t=T_0}^{T} \widehat{\delta}^{(m-1)}_t - \delta_t\bigg|^2 
			+O_P(ST) \cdot\|\widehat{\beta}^{(m-1)} - \beta\|^2\\
			& \quad
			+ O_P (S\sqrt{T}) \cdot \bigg|\sum_{t=T_0}^{T} \widehat{\delta}^{(m-1)}_t - \delta_t\bigg| \cdot  \|\widehat{\beta}^{(m-1)} - \beta\|
			,
		\end{align*}
		where $\big\|(K^{(m)})^{-1}\big\| = O_P(1)$ according to (\ref{prop3p.eq0}), $\sum_{s=1}^{S}\|X_s\|^2 = O_P(ST)$, $\sum_{s=1}^{S}\|\lambda_s\|^2 = O_P(S)$ and $\|\widehat{F}^{(m)}\| = O_P(\sqrt{T})$ under Assumption~\ref{ass.cov}. Similar arguments yield that 
		$J_2^{(m)}$ and $J_3^{(m)}$ are of rate $O_P(\sqrt{ST}) \cdot \big(\|\widehat{{\beta}}^{(m-1)}-{\beta}\| + |\sum_{t=T_0}^T\widehat{\delta}^{(m)}_t - \delta_t| \big)$.

		For $J_4^{(m)}$, we plug-in the expression of $I_4^{(m)}$ to obtain
		{\small
			\begin{align*}
				J_4^{(m)} 
				= 
				-
				& 
				\frac{1}{ST}\sum_{s,g=1}^{S}
				X_s' M_{\widehat{F}}^{(m)} 
				\bigg\{
				F {\lambda}_g {\eta}_g' 
				+{\eta}_g {\lambda}_g'F'  
				+
				\left({\eta}_{g} {\eta}_{g}' - \mathbb{E}[{\eta}_{g} {\eta}_{g}'] \right)
				+ 
				\mathbb{E}[{\eta}_{g} {\eta}_{g}'] 
				\bigg\} \widehat{F}^{(m)}  \left(K^{(m)}\right)^{-1}
				\lambda_s,
		\end{align*}}
		and below we evaluate the asymptotic bound for each of the terms expanding by terms in $I_4^{(m)}$. 
		The first term has order $O_P(\sqrt{ST})\cdot \|\widehat{F}^{(m)} H^{(m)} - F \|$ due to
		\begin{align*}
			\bigg\|\frac{1}{ST} &  \sum_{s=1}^{S}  M_{\widehat{F}}^{(m)} 
			F {\lambda}_s {\eta}_s' \widehat{F}^{(m)}  \left(K^{(m)}\right)^{-1}\bigg\| \\
			&  \leq  \left\|M_{\widehat{F}}^{(m)}\right\| \cdot
			\left\|\widehat{F}^{(m)} H^{(m)} - F  \right\| 
			\cdot \bigg\|\frac{1}{S T} \sum_{s=1}^{S} F {\lambda}_s {\eta}_s'\bigg\| 
			\cdot \left\| \left(K^{(m)}\right)^{-1}\right\| \\
			&
			= \left\|\widehat{F}^{(m)} H^{(m)} - F  \right\|
			\cdot O_P \left(\frac{1}{\sqrt{S}}\right) \cdot O_P(1),
		\end{align*}
		where the first inequality follows from Cauchy-Swartz inequality and the property that $ M_{\widehat{F}}^{(m)} 
		F =  M_{\widehat{F}}^{(m)} 
		\big(F -\widehat{F}^{(m)} H^{(m)} \big) $, and the second last equality follows from (\ref{al5.eq3}) in Lemma~\ref{al.5}, (\ref{prop3p.eq0}), and (\ref{al6.eq1}). 
		For the second term, we plug-in the expression of $K^{(m)}$ and then apply an $\alpha$-mixing argument similar to (\ref{al5.eq2}) in Lemma~\ref{al.5} to obtain
		\begin{align*}
			\frac{1}{ST} \sum_{s,g=1}^{S} X_s'M_{\widehat{F}}^{(m)} {\eta}_g {\lambda}_g'
			F'  \widehat{F}^{(m)}  \left(K^{(m)}\right)^{-1} \lambda_s
			&  = \sum_{g=1}^{S}\bigg(\frac{1}{S}\sum_{s=1}^{S}\omega_{sg}X_s'M_{\widehat{F}}\bigg)\eta_g 
			= O_P \left(\sqrt{ST}\right).
		\end{align*}
		The third term is shown to be 
		$O_P(\sqrt{T}) \cdot 
		(\|\widehat{F}^{(m)}-F \left(H^{(m}\right)^{-1}\|^2 
		+\|\widehat{F}^{(m)} - F \left(H^{(m}\right)^{-1}\| )
		+ O_P(\delta_{ST})$ in Lemma~\ref{bl.2} 
		and the fourth term has order $O_P\big(S\big)$ given that $\sum_{s=1}^{S}\mathbb{E}[\eta_s \eta_s'] = O(S)$ according to Assumption~\ref{ass.cov}.(ii).
		Collecting the terms in $J_4^{(m)}$ and using the rates of $\|\widehat{F}^{(m)} H^{(m)} - F  \|$ obtained from (\ref{al6.eq1}), we have
		\begin{align*}
			\left\|J_4^{(m)}\right\| 
			= & O_P (\sqrt{S}) \cdot \bigg|\sum_{t=T_0}^{T} \widehat{\delta}^{(m-1)}_t - \delta_t \bigg|^2 
			+O_P(T\sqrt{S}) \cdot \Big\|\widehat{\beta}^{(m-1)} - \beta\Big\|^2 \\
			&
			+ O_P (\sqrt{ST}) \cdot \bigg|\sum_{t=T_0}^{T} \widehat{\delta}^{(m-1)}_t - \delta_t\bigg| \cdot  \Big\|\widehat{\beta}^{(m-1)} - \beta\Big\| \\
			&
			+ O_P(\sqrt{T}) \cdot 
			\bigg(\Big\|\widehat{{\beta}}^{(m-1)}-{\beta}\Big\|
			+ \bigg|\sum_{t=T_0}^{T}{\delta}_{t}- \widehat{{\delta}}^{(m-1)}_{t}\bigg|\bigg)
			+
			O_P(\sqrt{ST})
			.
		\end{align*}

		For $\sum_{h=5}^{7}J_h^{(m)}$, it is easy to check that
		{\small
			\begin{align*}
				\sum_{h=5}^{7} \left\|J_h^{(m)}\right\|   
				& \leq 
				\bigg(\sum_{s=1}^{S}\|X_s\|^{2}\bigg)^{1/2}
				\cdot 
				\bigg(\sum_{s=1}^{S}\|\lambda_s\|^{2}\bigg)^{1/2}
				\cdot
				\sum_{h=5}^{7}\left\|I_h^{(m)}\right\|   
				\cdot \left\|\widehat{F}^{(m)}\right\|
				\cdot \left\|M_{\widehat{F}}^{(m)}\right\| 
				\cdot \left\|\left(K^{(m)}\right)^{-1}\right\| \\
				& = O_P \big((ST)^{1/4}\big) \cdot
				\bigg( 1 + \bigg|\sum_{t=T_0}^{T}{\delta}_{t}- \widehat{{\delta}}^{(m-1)}_{t}\bigg| + \Big\|\widehat{{\beta}}^{(m-1)} - {\beta}\Big\| \bigg),
		\end{align*}}
		given the rates of $I_h^{(m)}$ for $h=5,6,7$ from (\ref{l8p.eq4}).
		Finally, collecting the rate of convergence for $J_h^{(m)},\ h=1,...,9$ , we obtain
		\begin{align}\label{t2p.eq3}
			\sum_{s=1}^{S}X_s' M_{\widehat{F}}^{(m)} F{\lambda}_s
			= & - \sum_{s=1}^{S}X_s' M_{\widehat{F}}^{(m)} \Big(I_1^{(m)}+I_2^{(m)} + I_3^{(m)}\Big) \widehat{F}^{(m)} \left(K^{(m)}\right)^{-1} \lambda_s \nonumber\\
			& + O_P (\sqrt{S}) \cdot \bigg|\sum_{t=T_0}^{T} \widehat{\delta}^{(m-1)}_t - \delta_t \bigg|^2 
			+O_P(T\sqrt{S}) \cdot \Big\|\widehat{\beta}^{(m-1)} - \beta\Big\|^2 \nonumber\\
			&
			+ O_P (\sqrt{ST}) \cdot \bigg|\sum_{t=T_0}^{T} \widehat{\delta}^{(m-1)}_t - \delta_t\bigg| \cdot  \Big\|\widehat{\beta}^{(m-1)} - \beta\Big\| \nonumber\\
			&
			+ O_P(\sqrt{T}) \cdot 
			\bigg(\Big\|\widehat{{\beta}}^{(m-1)}-{\beta}\Big\|
			+ \bigg|\sum_{t=T_0}^{T}{\delta}_{t}- \widehat{{\delta}}^{(m-1)}_{t}\bigg|\bigg)
			+ O_P(\sqrt{ST}), 
		\end{align}
		which completes the proof.	
		\hfill$\square$

		\begin{lemma} \label{prop.iter}
			Suppose the assumptions of Theorem~\ref{t6} hold, and $\sqrt{ST}\|\widehat{\beta}^{(m-1)}- {\beta}\| = O_P(1) $, $\sqrt{S}|\widehat{\delta}^{(m-1)}_{t} - {\delta}_{t} | = O_P(1)$, $|\sum_{t=T_0}^{T}\widehat{\delta}^{(m-1)}_{t} - {\delta}_{t} | = O_P(1)$ and $\| \sum_{t=T_0}^{T}\widehat{f}^{(m)}_t(\widehat{\delta}^{(m-1)}_t - \delta_t)\| = O_P(1) $.
			Then, as $T,S \rightarrow \infty$, we have, for any fixed $u\in\mathcal{U}$, $j=1,...,J$, $m \geq 0$, 
			\begin{align*}
				\sum_{s=1}^{S} & d_s e_t'M^{(m)}_{\widehat{F}_j}(u)(F_j(u)-\widehat{F}_j^{(m)}(u)H^{(m)}_j(u))\lambda_{js}(u) \\
				= & 
				\frac{1}{S} \sum_{s,g=1}^{S}\sum_{l=T_0}^{T}\omega_{sg} d_s d_g (\widehat{\delta}^{(m-1)}_{jt}(u) - \delta_{jt}(u)) -  \frac{1}{S} \sum_{s,g=1}^{S}\omega_{sg}d_s\eta_{jgt}(u)  \\
				& - \frac{1}{ST}\sum_{s,g=1}^{S}d_s \mathbb{E}[{\eta}_{jgt}(u) {\eta}_{jg}(u)']  \widehat{F}^{(m)}_j(u) (K^{(m)}_j(u))^{-1}\lambda_{js}(u) 
				+ O_P\bigg(\sqrt{\frac{S}{T}}\bigg),
			\end{align*}
			where $\omega_{j,sg}(u)$ is defined in Assumption~\ref{ass.reg_1}, and  $H^{(m)}_j(u)$ and $K^{(m)}_j(u)$ are defined in Lemma~\ref{prop3}.
			Moreover, the leading terms on the right-hand side are $O_P(\sqrt{S})$.
		\end{lemma}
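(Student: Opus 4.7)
My plan is to mirror the structure of Lemma~\ref{t2}, exploiting the expansion
$F_j(u) - \widehat F^{(m)}_j(u) H^{(m)}_j(u) = -\sum_{h=1}^{7} I^{(m)}_h \widehat F^{(m)}_j(u)(K^{(m)}_j(u))^{-1}$
furnished by Lemma~\ref{prop3}. Substituting this into the left-hand side produces seven terms of the form $-\sum_{s=1}^{S} d_s e_t' M^{(m)}_{\widehat F_j} I^{(m)}_h \widehat F^{(m)}_j (K^{(m)}_j)^{-1} \lambda_{js}(u)$, and I will argue separately for each $h$ using the induction hypothesis $\sqrt{S}|\widehat\delta^{(m-1)}_{jt}(u)-\delta_{jt}(u)|=O_P(1)$, $\sqrt{ST}\|\widehat\beta^{(m-1)}_j(u)-\beta_j(u)\|=O_P(1)$, and $|\sum_{t\ge T_0}(\widehat\delta^{(m-1)}_{jt}-\delta_{jt})|=O_P(1)$, combined with the rate bounds on $I^{(m)}_1,\dots,I^{(m)}_7$ from the proof of Lemma~\ref{prop3} and the $\alpha$-mixing estimates of Lemma~\ref{al.5}.

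The leading terms I expect to survive originate from two components. First, the $D_s(\delta_j-\widehat\delta^{(m-1)}_j)\lambda_{js}'F_j'$ piece of $I^{(m)}_2$: using $F_j'\widehat F^{(m)}_j(K^{(m)}_j)^{-1}=T(\Lambda_j'\Lambda_j/S)^{-1}$ and the identity $e_t'D_g\delta_j=d_g\delta_{jt}$, this contribution reduces (after absorbing $M^{(m)}_{\widehat F_j}$ into the identity at leading order) to $S^{-1}\sum_{s,g}\omega_{j,sg}d_sd_g(\widehat\delta^{(m-1)}_{jt}-\delta_{jt})$, which matches the first leading term in the statement. Second, the $\eta_{jg}\lambda_{jg}'F_j'$ and $\mathbb{E}[\eta_{jg}\eta_{jg}']$ pieces inside $I^{(m)}_4$: the first collapses to $-S^{-1}\sum_{s,g}\omega_{j,sg}d_s\eta_{jgt}$ by the same $F_j'\widehat F^{(m)}_j(K^{(m)}_j)^{-1}$ contraction together with $e_t'M^{(m)}_{\widehat F_j}\eta_{jg}=\eta_{jgt}+O_P(T^{-1/2})$, while the centering of $\eta_{jg}\eta_{jg}'$ isolates the bias term $-(ST)^{-1}\sum_{s,g}d_s\mathbb{E}[\eta_{jgt}\eta_{jg}']\widehat F^{(m)}_j(K^{(m)}_j)^{-1}\lambda_{js}$ as written.

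All remaining contributions I will show to be $O_P(\sqrt{S/T})$. For $I^{(m)}_1$ and the cross-pieces of $I^{(m)}_2, I^{(m)}_3$ involving $X_s$ or $\widehat\beta^{(m-1)}_j-\beta_j$, I will use Cauchy--Schwarz with $\|\sum_s d_s x_{st}\|=O_P(\sqrt{S})$, $\|X_s\|=O_P(\sqrt T)$, $\|M^{(m)}_{\widehat F_j}\|\le 1$, $\|(K^{(m)}_j)^{-1}\|=O_P(1)$, $\|\widehat F^{(m)}_j\|=O_P(\sqrt T)$, together with the induction-level bounds $\|\widehat\beta^{(m-1)}_j-\beta_j\|^2=O_P((ST)^{-1})$ and $(\widehat\delta^{(m-1)}_j-\delta_j)^2=O_P(S^{-1})$, which readily produce $O_P(\sqrt{S/T})$ or smaller. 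The residual $I^{(m)}_5, I^{(m)}_6, I^{(m)}_7$ terms absorb the first-step error $\sup_{s,t}|\widehat\alpha_{jst}(u)-\alpha_{jst}(u)|=O_P((ST)^{-3/4})$ from Lemma~\ref{l1} and are of even smaller order under Assumption~\ref{ass.growth}.(ii). The zero-mean fluctuation $\eta_{jg}\eta_{jg}'-\mathbb{E}[\eta_{jg}\eta_{jg}']$ in $I^{(m)}_4$ requires the mixing-moment bound analogous to (\ref{al5.eq1}) and yields $O_P(\sqrt{S/T})$ after weighting by $d_s\lambda_{js}$.

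The main obstacle, I anticipate, is controlling the residual $e_t'P^{(m)}_{\widehat F_j}$ projection that appears whenever I replace $M^{(m)}_{\widehat F_j}$ by the identity in the two $I^{(m)}_4$ contributions. The difficulty is that $\widehat F^{(m)}_j$ itself depends on all prior estimators, so a crude bound on $\|\widehat F^{(m)}_j-F_jH^{(m)-1}_j\|$ from (\ref{al6.eq1}) is not immediately tight enough; I will need to expand $T^{-1}\widehat f^{(m)\prime}_{jt}\widehat F^{(m)\prime}_j\eta_{jg}$ into a double sum and invoke Assumption~\ref{ass.mixing} across both time and cross-section to obtain the requisite $O_P(T^{-1/2})$ rate, after which summing over $s,g$ with the loading weights still yields $O_P(\sqrt{S/T})$. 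Finally, the claim that the three leading terms are individually $O_P(\sqrt S)$ follows from bounding $S^{-1}\sum_{s,g}\omega_{j,sg}d_sd_g=O_P(1)$, $S^{-1/2}\sum_{s,g}\omega_{j,sg}d_s\eta_{jgt}=O_P(\sqrt S)$ via Assumption~\ref{ass.clt_delta}, and evaluating the bias term using Assumption~\ref{ass.cov}.(iii) together with $\|(K^{(m)}_j)^{-1}\|=O_P(1)$.
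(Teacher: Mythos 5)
Your proposal follows essentially the same route as the paper: both substitute the decomposition $F_j-\widehat F^{(m)}_jH^{(m)}_j=-\sum_{h=1}^{7}I^{(m)}_h\widehat F^{(m)}_j(K^{(m)}_j)^{-1}$ from Lemma~\ref{prop3}, extract the same three leading terms from the $D_s(\delta_j-\widehat\delta^{(m-1)}_j)\lambda_{js}'F_j'$ piece of $I^{(m)}_2$ and the $\eta_{jg}\lambda_{jg}'F_j'$ and $\mathbb{E}[\eta_{jg}\eta_{jg}']$ pieces of $I^{(m)}_4$ via the contraction $F_j'\widehat F^{(m)}_j(K^{(m)}_j)^{-1}=T(\Lambda_j'\Lambda_j/S)^{-1}$, and dispose of the remaining contributions (including the $e_t'P^{(m)}_{\widehat F_j}$ residuals, handled by expanding $T^{-1}\widehat f^{(m)\prime}_{jt}\widehat F^{(m)\prime}_j\eta_{jg}$ under mixing) at the rate $O_P(\sqrt{S/T})$ exactly as the paper does. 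The plan is correct and matches the paper's argument in all essential respects.
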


		\noindent
		\textbf{Proof}
		Since $u$ and $j$ are fixed, for notational simplicity, we suppress $u$ and $j$ throughout the following proof. 
		
		Similar to Lemma~\ref{t2}, we can write 
		$$\sum_{s=1}^{S}d_s e_t'M^{(m)}_{\widehat{F}}(F-\widehat{F}^{(m)}H^{(m)})\lambda_s = -\sum_{h=1}^{7}\sum_{s=1}^{S}d_s e_t'M^{(m)}_{\widehat{F}}I_h^{(m)}\widehat{F}^{(m)}(K^{(m)})^{-1}\lambda_s.$$
		For similar reason to the proof of Lemma~\ref{t2}, the terms associated with $I_5^{(m)},\ I_6^{(m)},\ I_7^{(m)}$ are negligible, proof omitted for concise. Below, we consider the terms associated with $I_1^{(m)},...,I_4^{(m)}$, respectively.

		Using $e_t'M^{(m)}_{\widehat{F}} = e_t' - T^{-1}\widehat{f}^{(m)\prime}_t \widehat{F}^{(m)\prime}$, we write
		\begin{align*}
			\sum_{s=1}^{S}d_s e_t'M^{(m)}_{\widehat{F}}I_1^{(m)}\widehat{F}^{(m)}(K^{(m)})^{-1}\lambda_s 
			= & \sum_{s=1}^{S}d_s e_t'I_1^{(m)}\widehat{F}^{(m)}(K^{(m)})^{-1}\lambda_s \\
			& - \frac{1}{T}\sum_{s=1}^{S}d_s \widehat{f}^{(m)\prime}_t \widehat{F}^{(m)\prime}I_1^{(m)}\widehat{F}^{(m)}(K^{(m)})^{-1}\lambda_s. 
		\end{align*}
		For the first term on the right-hand side of the equation, we substitute-in the expression of $I_1^{(m)}$ and further obtain that 
		\begin{align*}
			\sum_{s=1}^{S}d_s e_t'I_1^{(m)}\widehat{F}^{(m)}(K^{(m)})^{-1}\lambda_s 
			& =  \frac{1}{ST} \Bigg[\sum_{s,g=1}^{S}\sum_{l=T_0}^{T}d_s (\widehat{\delta}^{(m-1)}_t - \delta_t) (\widehat{\delta}^{(m-1)}_l - \delta_l)\widehat{f}^{(m)\prime}_l(K^{(m)})^{-1}\lambda_s \\
			& \quad + \sum_{s,g=1}^{S}d_s d_g (\widehat{\delta}^{(m-1)}_t - \delta_t)(\widehat{\beta}^{(m-1)} - \beta)'X_g'\widehat{F}^{(m)}(K^{(m)})^{-1}\lambda_s \\
			& \quad + \sum_{s,g=1}^{S}\sum_{l=T_0}^{T}d_s w_{gt}' (\widehat{\beta}^{(m-1)} - \beta)(\widehat{\delta}^{(m-1)}_l - \delta_l)\widehat{f}^{(m)\prime}_l(K^{(m)})^{-1}\lambda_s \\
			& \quad + \sum_{s,g=1}^{S}d_s w_{gt}' (\widehat{\beta}^{(m-1)} - \beta)(\widehat{\beta}^{(m-1)} - \beta)'X_g'\widehat{F}^{(m)}(K^{(m)})^{-1}\lambda_s \Bigg], 
		\end{align*}
		where all terms are of $O_P(S/T)$ under the assumptions that $\sqrt{ST}\|\widehat{\beta}^{(m-1)}- {\beta}\| = O_P(1) $, $\sqrt{S}|\widehat{\delta}^{(m-1)}_{t} - {\delta}_{t} | = O_P(1)$, $|\sum_{t=T_0}^{T}\widehat{\delta}^{(m-1)}_{t} - {\delta}_{t} | = O_P(1)$ and $\| \sum_{t=T_0}^{T}\widehat{f}^{(m)}_t(\widehat{\delta}^{(m-1)}_t - \delta_t)\| = O_P(1) $. Furthermore, using the rate of $I_1^{(m)}$ from Lemma~\ref{prop3}, the rate of the second term
		\begin{align*}
			\bigg|\frac{1}{T} & \sum_{s=1}^{S}d_s \widehat{f}^{(m)\prime}_t \widehat{F}^{(m)\prime}I_1^{(m)}\widehat{F}^{(m)}(K^{(m)})^{-1}\lambda_s\bigg| \\
			\leq &  \frac{1}{T} \bigg(\sum_{s=1}^{S}|d_s|^{2}\bigg)^{1/2} \cdot  \bigg(\sum_{s=1}^{S}\|\lambda_s\|^{2}\bigg)^{1/2} \cdot \|\widehat{f}^{(m)}_t\| \cdot \|\widehat{F}^{(m)}\| \cdot \|I_1^{(m)}\| \cdot \|\widehat{F}^{(m)}\| \cdot \|(K^{(m)})^{-1}\| \\
			= & O_P\bigg(\frac{S}{T}\bigg).
		\end{align*} 
		Therefore, we obtain that 
		\begin{align*}
			\bigg|\sum_{s=1}^{S}d_s e_t'M^{(m)}_{\widehat{F}}I_1^{(m)}\widehat{F}^{(m)}(K^{(m)})^{-1}\lambda_s \bigg|  = O_P\bigg(\frac{S}{T}\bigg).
		\end{align*}
		Under similar derivations, we shown that $|\sum_{s=1}^{S}d_s e_t'M^{(m)}_{\widehat{F}}I_3^{(m)}\widehat{F}^{(m)}(K^{(m)})^{-1}\lambda_s |  = O_P(\sqrt{S/T} )$.
		
		For the term associated with $I_2^{(m)}$, again, we expand it as
		\begin{align*}
			\sum_{s=1}^{S}d_s e_t'M^{(m)}_{\widehat{F}}I_2^{(m)}\widehat{F}^{(m)}(K^{(m)})^{-1}\lambda_s 
			= & \sum_{s=1}^{S}d_s e_t'I_2^{(m)}\widehat{F}^{(m)}(K^{(m)})^{-1}\lambda_s \\
			& - \frac{1}{T}\sum_{s=1}^{S}d_s \widehat{f}^{(m)\prime}_t \widehat{F}^{(m)\prime}I_2^{(m)}\widehat{F}^{(m)}(K^{(m)})^{-1}\lambda_s,
		\end{align*}
		and the second term is $O_P(\sqrt{S/T})$ given the rate of $I_2^{(m)}$ from Lemma~\ref{prop3}. Now we expand the first term by the expression of $I_2^{(m)}$:
		\begin{align*}
			\sum_{s=1}^{S} & d_s e_t'I_2^{(m)}\widehat{F}^{(m)}(K^{(m)})^{-1}\lambda_s \\
			& =  - \frac{1}{S} \sum_{s,g=1}^{S}\sum_{l=T_0}^{T}\omega_{sg} d_s d_g (\widehat{\delta}^{(m-1)}_t - \delta_t) 
			- \frac{1}{S}\sum_{s,g=1}^{S}\omega_{sg} d_s x_{gt}' (\widehat{\beta}^{(m-1)} - \beta)\\
			& \quad - \frac{1}{ST}\sum_{s,g=1}^{S}\sum_{l=T_0}^{T}d_s f_t' \lambda_g (\widehat{\delta}^{(m-1)}_l - \delta_l)\widehat{f}^{(m)\prime}_l(K^{(m)})^{-1}\lambda_s \\
			& \quad - \frac{1}{ST}\sum_{s,g=1}^{S}d_s f_t' \lambda_g (\widehat{\beta}^{(m-1)} - \beta)'X_g'\widehat{F}^{(m)}(K^{(m)})^{-1}\lambda_s.
		\end{align*}
		Using the induction assumptions and H\"{o}lder's and Cauchy-Schwartz inequalities, we obtain
		\begin{align*}
			&\bigg|\frac{1}{S} \sum_{s,g=1}^{S}\sum_{l=T_0}^{T}\omega_{sg} d_s d_g (\widehat{\delta}^{(m-1)}_t - \delta_t) \bigg| 
			= O_P(\sqrt{S}),\\
			&\bigg|\frac{1}{S}\sum_{s,g=1}^{S}\omega_{sg} d_s x_{gt}' (\widehat{\beta}^{(m-1)} - \beta)\bigg\|
			= O_P\bigg(\sqrt{\frac{S}{T}}\bigg),\\
			&\bigg|\frac{1}{ST}\sum_{s,g=1}^{S}\sum_{l=T_0}^{T}d_s f_t' \lambda_g (\widehat{\delta}^{(m-1)}_l - \delta_l)\widehat{f}^{(m)\prime}_l(K^{(m)})^{-1}\lambda_s\bigg| 
			= O_P\bigg({\frac{S}{T}}\bigg),\\
			&\bigg|\frac{1}{ST}\sum_{s,g=1}^{S}d_s f_t' \lambda_g (\widehat{\beta}^{(m-1)} - \beta)'X_g'\widehat{F}^{(m)}(K^{(m)})^{-1}\lambda_s\bigg| 
			= O_P\bigg(\sqrt{\frac{S}{T}}\bigg).
		\end{align*}
		Therefore, we conclude
		\begin{align*}
			\sum_{s=1}^{S}d_s e_t'M^{(m)}_{\widehat{F}}I_2^{(m)}\widehat{F}^{(m)}(K^{(m)})^{-1}\lambda_s 
			= & - \frac{1}{S} \sum_{s,g=1}^{S}\sum_{l=T_0}^{T}\omega_{sg} d_s d_g (\widehat{\delta}^{(m-1)}_t - \delta_t) + O_P\bigg(\sqrt{\frac{S}{T}} \bigg).
		\end{align*}
		
		For the term associated with $I_4^{(m)}$, we expand the terms using the expression of $I_4^{(m)}$ and obtain
		\begin{align*}
			\sum_{s=1}^{S} & d_s e_t'M^{(m)}_{\widehat{F}}I_4^{(m)}\widehat{F}^{(m)}(K^{(m)})^{-1}\lambda_s \\
			= & 
			\frac{1}{ST}\sum_{s,g=1}^{S}
			d_s e_t' M_{\widehat{F}}^{(m)} 
			\bigg\{
			F {\lambda}_g {\eta}_g' 
			+{\eta}_g {\lambda}_g'F'  
			+
			\left({\eta}_{g} {\eta}_{g}' - \mathbb{E}[{\eta}_{g} {\eta}_{g}'] \right)
			+ 
			\mathbb{E}[{\eta}_{g} {\eta}_{g}'] 
			\bigg\} \widehat{F}^{(m)}  \left(K^{(m)}\right)^{-1}
			\lambda_s,
		\end{align*}
		where, by standard arguments, we show that 
		$|(ST)^{-1}\sum_{s,g=1}^{S}d_s e_t' M_{\widehat{F}}^{(m)}({\eta}_{g} {\eta}_{g}' - \mathbb{E}[{\eta}_{g} {\eta}_{g}'] ) \widehat{F}^{(m)} \allowbreak (K^{(m)})^{-1}\lambda_s| = O_P(S/T)$,
		and 
		$|(ST)^{-1}\sum_{s,g=1}^{S}d_s e_t' M_{\widehat{F}}^{(m)}F {\lambda}_g {\eta}_g' \widehat{F}^{(m)} (K^{(m)})^{-1}\lambda_s| \allowbreak = O_P(\sqrt{S/T})$. In addition, 
		\begin{align*}
			\frac{1}{ST}\sum_{s,g=1}^{S}d_s e_t' M_{\widehat{F}}^{(m)} {\eta}_g {\lambda}_g'F'  \widehat{F}^{(m)}  \left(K^{(m)}\right)^{-1}\lambda_s
			= \frac{1}{S} \sum_{s,g=1}^{S}\omega_{sg}d_s\eta_{gt} 
			- \frac{1}{ST}\sum_{s,g=1}^{S}\omega_{sg}d_s\widehat{f}_t'\widehat{F}\eta_{g}, 
		\end{align*}
		where the first term is $O_P(\sqrt{S})$, while the second term is $O_P(\sqrt{S/T})$. 
		The last term 
		{\small
			\begin{align*}
				\frac{1}{ST} & \sum_{s,g=1}^{S}d_s e_t' M_{\widehat{F}}^{(m)}\mathbb{E}[{\eta}_{g} {\eta}_{g}']  \widehat{F}^{(m)} (K^{(m)})^{-1}\lambda_s  \\
				& = \frac{1}{ST}\sum_{s,g=1}^{S}d_s \mathbb{E}[{\eta}_{gt} {\eta}_{g}']  \widehat{F}^{(m)} (K^{(m)})^{-1}\lambda_s  
				- \frac{1}{ST^2}\sum_{s,g=1}^{S}d_s \widehat{f}^{(m)\prime}_t \widehat{F}^{(m)}\mathbb{E}[{\eta}_{g} {\eta}_{g}']  \widehat{F}^{(m)} (K^{(m)})^{-1}\lambda_s,  
		\end{align*}}
		where the first term is $O_P(S/\sqrt{T})$ and the second term is $O_P(S/T)$. Therefore, collecting all terms of $\sum_{s=1}^{S} d_s e_t'M^{(m)}_{\widehat{F}}I_4^{(m)}\widehat{F}^{(m)}(K^{(m)})^{-1}\lambda_s$, we obtain
		\begin{align*}
			\sum_{s=1}^{S} & d_s e_t'M^{(m)}_{\widehat{F}}I_4^{(m)}\widehat{F}^{(m)}(K^{(m)})^{-1}\lambda_s \\
			& = \frac{1}{S} \sum_{s,g=1}^{S}\omega_{sg}d_s\eta_{gt}  + \frac{1}{ST}\sum_{s,g=1}^{S}d_s \mathbb{E}[{\eta}_{gt} {\eta}_{g}']  \widehat{F}^{(m)} (K^{(m)})^{-1}\lambda_s + O_P\bigg(\sqrt{\frac{S}{T}}\bigg),
		\end{align*}
		where the first two terms on the right-hand side are $O_P(\sqrt{S})$ and $O_P(S/\sqrt{T})$, respectively. 
		
		Collecting all terms associated with $I_h^{(m)}$, we obtain
		\begin{align*}
			\sum_{s=1}^{S} d_s e_t'M^{(m)}_{\widehat{F}}(F-\widehat{F}^{(m)}H^{(m)})\lambda_s 
			= & 
			\frac{1}{S} \sum_{s,g=1}^{S}\sum_{l=T_0}^{T}\omega_{sg} d_s d_g (\widehat{\delta}^{(m-1)}_t - \delta_t) -  \frac{1}{S} \sum_{s,g=1}^{S}\omega_{sg}d_s\eta_{gt}  \\
			& - \frac{1}{ST}\sum_{s,g=1}^{S}d_s \mathbb{E}[{\eta}_{gt} {\eta}_{g}']  \widehat{F}^{(m)} (K^{(m)})^{-1}\lambda_s 
			+ O_P\bigg(\sqrt{\frac{S}{T}}\bigg),
		\end{align*}
		where the remaining terms on the right-hand side have been shown to be $O_P(\sqrt{S})$ given that $T/S \to \kappa$. 
		\hfill$\square$

		\section{Preliminary Lemmas} \label{B.lemma}
		This section presents the preliminary lemmas and the proofs that are required in Section~\ref{online.theorem}. 
		\begin{lemma}[Theorem~3 of \cite{chetverikov2016iv}]
			\label{bl.nonasymptotic}
			Under Assumption~\ref{a2.1} and \ref{ass.dens}, there exist constants $\overline{c}, c, C >0$, which are independent of $s,t,N_{st},S,T,N_{ST}$, such that for all $(s,t) \in \{1,...,S\} \times \{1,...,T\}$, $u \in \mathcal{U}$ and $x \in (0, \overline{c})$,
			\begin{align*}
				\mathbb{P}(\|\widehat{\alpha}_{st}(u) - \alpha_{st}(u)\|> x) 
				\leq
				Ce^{-cx^2N_{st}}.
			\end{align*}
		\end{lemma}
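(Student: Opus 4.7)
Since this is a non-asymptotic concentration bound for the first-step quantile regression estimator, my plan is to combine a convexity argument with an exponential concentration inequality for bounded random vectors, in the spirit of the standard Hjort--Pollard approach adapted to a Hoeffding-type bound. Throughout, $(s,t,u)$ are fixed, so I write $\alpha_0 := \alpha_{st}(u)$, $N := N_{st}$, and $\psi_u(v) := u - \mathbf{1}\{v<0\}$ for the subgradient of $\varrho_u$.

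First, I would show quadratic lower bounds on the population quantile objective near $\alpha_0$. Define $Q(\alpha) := \mathbb{E}[\varrho_u(y_{1st} - z_{1st}'\alpha) - \varrho_u(y_{1st} - z_{1st}'\alpha_0)]$. Using the Knight identity and conditioning on $z_{1st}$, a short calculation gives
\begin{equation*}
  Q(\alpha) = \mathbb{E}\Bigl[\int_{0}^{z_{1st}'(\alpha-\alpha_0)} \bigl(G_{st}(z_{1st}'\alpha_0 + v \mid z_{1st}) - u\bigr)\, dv\Bigr].
\end{equation*}
Under Assumption~\ref{ass.dens}, $g_{st}(\cdot \mid z)$ is bounded below by $c_M$ on a neighborhood of $z'\alpha_0$, and under Assumption~\ref{a2.1}(ii), $\mathbb{E}[z_{1st}z_{1st}']$ has eigenvalues bounded below. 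Hence for $\alpha$ in a $c_M/C_M$-neighborhood of $\alpha_0$, the population objective satisfies $Q(\alpha) \geq c \|\alpha - \alpha_0\|^2$ for some $c>0$ independent of $(s,t)$. This fixes the universal radius $\overline c$.

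Next, I would control the empirical fluctuation of the objective at any fixed $\alpha$ on the boundary of a ball of radius $x\in(0,\overline c)$ around $\alpha_0$. Let $Q_N(\alpha) := N^{-1}\sum_{i=1}^{N}\bigl[\varrho_u(y_{ist}-z_{ist}'\alpha)-\varrho_u(y_{ist}-z_{ist}'\alpha_0)\bigr]$. Because $|\varrho_u(y - z'\alpha)-\varrho_u(y-z'\alpha_0)| \leq |z'(\alpha-\alpha_0)| \leq C_M \|\alpha - \alpha_0\|$, each summand in $Q_N(\alpha)-Q(\alpha)$ is bounded by $2C_M x$. Applying Hoeffding's inequality pointwise gives
\begin{equation*}
  \mathbb{P}\bigl(|Q_N(\alpha) - Q(\alpha)| > c x^2/3\bigr) \leq 2\exp(-c' x^2 N)
\end{equation*}
for some $c' > 0$ depending only on $c_M,C_M$. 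A standard $\epsilon$-net over the sphere $\{\|\alpha-\alpha_0\| = x\}$ of radius $\epsilon \asymp x$, together with the $C_M$-Lipschitz modulus of $Q_N - Q$, upgrades this to a uniform bound of the same exponential form (the metric entropy contributes only a dimension-dependent constant that can be absorbed into $C$).

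The conclusion follows by convexity: with probability at least $1 - C\exp(-cx^2 N)$, we have $Q_N(\alpha) \geq c x^2/2 > 0 = Q_N(\alpha_0)$ uniformly on the sphere $\{\|\alpha-\alpha_0\| = x\}$, so by convexity of $\alpha\mapsto Q_N(\alpha)$ the minimizer $\widehat\alpha_{st}(u)$ must lie inside the open ball of radius $x$.

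The main obstacle, as is typical for quantile regression concentration, is obtaining the uniform (in $\alpha$ on the sphere) control of $Q_N - Q$ with the correct $\exp(-c x^2 N)$ rate while keeping the constants independent of $(s,t)$. The naive pointwise bound gives $\exp(-cx^4 N)$, which is too weak; one recovers $\exp(-c x^2 N)$ by exploiting the additional variance reduction from Knight-identity centering (so that the summands are $O(x)$, not $O(1)$) before invoking Bernstein/Hoeffding. The uniformity in $(s,t)$ is not delicate because all bounds only depend on $c_M,C_M$, which are by assumption common across $(s,t)$.
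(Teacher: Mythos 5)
First, note that the paper does not prove this lemma at all: it is imported verbatim as Theorem~3 of \cite{chetverikov2016iv}, so there is no in-paper proof to compare against. Judged on its own terms, your reconstruction has the right architecture — the quadratic minorization $Q(\alpha)\ge c\|\alpha-\alpha_0\|^2$ via Knight's identity together with Assumptions~\ref{a2.1}(ii) and \ref{ass.dens}, the pointwise Hoeffding bound exploiting that the centered increments are $O(x)$ rather than $O(1)$ (which is indeed what produces $e^{-cx^2N}$ instead of $e^{-cx^4N}$), and the convexity argument localizing $\widehat{\alpha}_{st}(u)$ inside the ball. Those three steps are correct.

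The gap is in the uniformity step. The function $\alpha\mapsto Q_N(\alpha)-Q(\alpha)$ is only $2C_M$-Lipschitz, so passing from a net point to an arbitrary point of the sphere costs $2C_M\epsilon$; to keep this below the required tolerance $cx^2/3$ you need mesh $\epsilon\asymp x^2$, not $\epsilon\asymp x$ as you propose. With mesh $\asymp x^2$ the net of the sphere of radius $x$ in $\mathbb{R}^J$ has cardinality of order $x^{-(J-1)}$, and the union bound yields $Cx^{-(J-1)}e^{-c'x^2N_{st}}$. This polynomial factor cannot be absorbed into a universal constant uniformly over $x\in(0,\overline{c})$ and $N_{st}$: taking $x^2N_{st}$ equal to a fixed constant $K$ and letting $N_{st}\to\infty$ makes $x^{-(J-1)}\to\infty$ while $e^{-(c'-c)x^2N_{st}}$ stays bounded, so the claimed bound $Ce^{-cx^2N_{st}}$ is not recovered by this route. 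The standard repair is to replace the single-scale union bound by a concentration-of-the-supremum argument: set $Z:=\sup_{\|\alpha-\alpha_0\|\le x}|Q_N(\alpha)-Q(\alpha)|$, bound $\E[Z]\lesssim x\sqrt{J/N_{st}}$ by symmetrization and the Lipschitz contraction principle, and apply McDiarmid's (or Talagrand's) inequality — each observation moves $Z$ by at most $4C_Mx/N_{st}$ — to get $\mathbb{P}(Z\ge \E[Z]+cx^2/4)\le \exp(-c''x^2N_{st})$. This suffices whenever $x\ge K\sqrt{J/N_{st}}$ for a large constant $K$, and for $x< K\sqrt{J/N_{st}}$ the asserted bound is trivially true after enlarging $C$ to exceed $e^{cK^2J}$. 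With that substitution your proof goes through; as written, the entropy cost is not "a dimension-dependent constant that can be absorbed into $C$."
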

		
		\begin{lemma}[Lemma A.1 of \cite{gao2007nonlinear}] \label{al.4}
			Suppose that $\left\{M_{m}^{n}:-\infty<m \leq n<+\infty\right\}$ are the $\sigma$-fields generated by a stationary and $\alpha$-mixing process $\left\{\xi_{i}\right\}_{-\infty}^{+\infty}$ with the mixing coefficient $a(i) .$ For some positive integers $m$, let $\delta_i \in M_{s_i}^{t_{i}}$ where $s_{1}<t_{1}<s_{2}<t_{2}<\cdots<s_{m}<t_{m}$ and assume that $t_{i}-s_{i} \geq \tau$ for all $i$ and some $\tau>0$. Assume further that, for some $p_{i}>1, \mathbb{E}\left|\delta_{i}\right|^{p_{i}}<+\infty,$ for which $Q:=\sum_{i=1}^{\ell} \frac{1}{p_{i}}<1$.
			Then we have 
			$$\left|\mathbb{E}\left(\Pi_{i=1}^{\ell} \delta_{i}\right)-\Pi_{i=1}^{\ell} \mathbb{E}\left(\delta_{i}\right)\right|<10(\ell-1) a(\tau)^{1-Q} \Pi_{i=1}^{\ell}\left(\mathbb{E}\left|\delta_{i}\right|^{p_{i}}\right)^{\frac{1}{p_{i}}}.
			$$
		\end{lemma}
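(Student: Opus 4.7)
The statement is a Davydov-style covariance inequality for products of $\ell$ random variables that are measurable with respect to sigma-fields separated by at least $\tau$ in an $\alpha$-mixing sequence. The plan is to prove it by induction on $\ell$, taking the classical two-variable Davydov inequality as the base and reducing the $\ell$-fold case to a two-fold case by grouping all but the last factor.

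For the base case $\ell=2$, I would invoke the standard Davydov inequality: if $\delta_1 \in M_{s_1}^{t_1}$ and $\delta_2 \in M_{s_2}^{t_2}$ with $s_2-t_1 \geq \tau$, then $|\E(\delta_1\delta_2)-\E(\delta_1)\E(\delta_2)| \leq 10\, a(\tau)^{1-1/p_1-1/p_2}\|\delta_1\|_{p_1}\|\delta_2\|_{p_2}$, which matches the claimed bound with $\ell-1=1$. For the inductive step, suppose the result holds for $\ell-1$ and define $Y:=\prod_{i=1}^{\ell-1}\delta_i$ and $Z:=\delta_\ell$. Since each $\delta_i$ for $i\leq \ell-1$ lies in $M_{s_i}^{t_i}\subseteq M_{s_1}^{t_{\ell-1}}$ and $\delta_\ell\in M_{s_\ell}^{t_\ell}$ with $s_\ell-t_{\ell-1}\geq \tau$, the pair $(Y,Z)$ satisfies the hypothesis of the two-variable Davydov inequality. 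Choosing the conjugate exponent $p'$ via $1/p'=\sum_{i=1}^{\ell-1}1/p_i$, Hölder's inequality gives $\|Y\|_{p'}\leq\prod_{i=1}^{\ell-1}\|\delta_i\|_{p_i}$, so Davydov applied to $(Y,Z)$ yields
\begin{equation*}
\bigl|\E(YZ)-\E(Y)\E(Z)\bigr|\leq 10\, a(\tau)^{1-Q}\prod_{i=1}^{\ell}\bigl(\E|\delta_i|^{p_i}\bigr)^{1/p_i}.
\end{equation*}

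The remaining piece is to replace $\E(Y)$ with $\prod_{i=1}^{\ell-1}\E(\delta_i)$ using the induction hypothesis, which bounds $|\E(Y)-\prod_{i=1}^{\ell-1}\E(\delta_i)|$ by $10(\ell-2)\,a(\tau)^{1-Q''}\prod_{i=1}^{\ell-1}\|\delta_i\|_{p_i}$ with $Q''=\sum_{i=1}^{\ell-1}1/p_i<Q<1$. Because $a(\tau)\leq 1$ and $1-Q''>1-Q$, we have $a(\tau)^{1-Q''}\leq a(\tau)^{1-Q}$, so this quantity is dominated by $10(\ell-2)\,a(\tau)^{1-Q}\prod_{i=1}^{\ell-1}\|\delta_i\|_{p_i}$. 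Then the identity
\begin{equation*}
\E(YZ)-\prod_{i=1}^{\ell}\E(\delta_i)=\bigl[\E(YZ)-\E(Y)\E(Z)\bigr]+\E(\delta_\ell)\bigl[\E(Y)-\textstyle\prod_{i=1}^{\ell-1}\E(\delta_i)\bigr],
\end{equation*}
combined with the triangle inequality and $|\E(\delta_\ell)|\leq\|\delta_\ell\|_{p_\ell}$, adds a $1$ and an $(\ell-2)$ into a factor $(\ell-1)$, yielding exactly the stated bound $10(\ell-1)\,a(\tau)^{1-Q}\prod_{i=1}^{\ell}(\E|\delta_i|^{p_i})^{1/p_i}$.

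The main obstacle is purely bookkeeping: one must (i) verify the measurability and separation conditions so that Davydov applies to the grouped pair $(Y,Z)$, (ii) choose the Hölder exponent so that the mixing exponent in the two-variable bound is $1-Q$ rather than something larger, and (iii) use monotonicity of $a(\tau)^{1-\cdot}$ in the inductive bound so the two pieces combine cleanly into $10(\ell-1)$. No probabilistic subtlety beyond Davydov's inequality itself is required.
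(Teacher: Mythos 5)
Your argument is correct, but note that the paper does not prove this lemma at all: it is imported verbatim as Lemma~A.1 of \cite{gao2007nonlinear} and used as a black box, so there is no in-paper proof to compare against. What you have written is essentially the standard derivation of the multivariate Davydov bound and it holds up: the base case is the two-variable covariance inequality with constant $10$ and exponent $1-1/p_1-1/p_2$; the inductive step groups $Y=\prod_{i=1}^{\ell-1}\delta_i\in M_{s_1}^{t_{\ell-1}}$ against $Z=\delta_\ell\in M_{s_\ell}^{t_\ell}$, uses generalized H\"older with $1/p'=\sum_{i=1}^{\ell-1}1/p_i$ so that the mixing exponent comes out as exactly $1-Q$, and the telescoping identity together with $|\E(\delta_\ell)|\le\|\delta_\ell\|_{p_\ell}$ and $a(\tau)^{1-Q''}\le a(\tau)^{1-Q}$ (valid since $a(\tau)\le 1$ and $Q''<Q$) combines the two contributions into the factor $10(\ell-1)$. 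Two small remarks: first, you tacitly (and correctly) read the separation hypothesis as $s_{i+1}-t_i\ge\tau$; the condition ``$t_i-s_i\ge\tau$'' printed in the statement is a transcription slip, since the width of each block is irrelevant to mixing and the gap condition is what both your grouping step and the original source require. Second, your base case is itself a cited inequality (Davydov/Roussas--Ioannides form with constant $10$), so your proof is a reduction to that two-variable result rather than a from-scratch argument; that is entirely standard and acceptable here.
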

		

		
		\begin{lemma}  \label{lemma:m4-bound}
			If $\E \|x_{st}\|^{4} \le C$ 
			for any $1 \le s \le S$ and $1 \le t \le T$,
			then 
			$\max_{1 \le s \le S} \|X_{s}\| = o_p\big( (ST)^{1/2} \big)$.
		\end{lemma}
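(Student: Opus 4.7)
The plan is to reduce the spectral norm to a trace norm and then apply a fourth-moment Markov inequality combined with a union bound over $s$. First I would use the inequality $\|X_s\|^2 \le \mathrm{tr}(X_s'X_s) = \sum_{t=1}^{T} \|x_{st}\|^2$, which is explicitly noted in the paper and converts the spectral norm of the $T\times K$ matrix $X_s$ into a scalar sum that is easy to control with cross-sectional moment assumptions.

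Next I would bound the fourth moment of $\|X_s\|$. By the above and by the Cauchy--Schwarz inequality,
\begin{align*}
\mathbb{E}\|X_s\|^{4}
\le \mathbb{E}\Big[\Big(\sum_{t=1}^{T}\|x_{st}\|^{2}\Big)^{2}\Big]
\le T \sum_{t=1}^{T}\mathbb{E}\|x_{st}\|^{4}
\le C\,T^{2},
\end{align*}
where the final step uses the assumption $\mathbb{E}\|x_{st}\|^{4}\le C$ uniformly in $(s,t)$.

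Now I would apply Markov's inequality: for any fixed $\varepsilon>0$,
\begin{align*}
\mathbb{P}\big(\|X_s\|>\varepsilon\sqrt{ST}\,\big)
= \mathbb{P}\big(\|X_s\|^{4}>\varepsilon^{4}S^{2}T^{2}\big)
\le \frac{\mathbb{E}\|X_s\|^{4}}{\varepsilon^{4}S^{2}T^{2}}
\le \frac{C}{\varepsilon^{4}S^{2}}.
\end{align*}
A union bound over $s=1,\dots,S$ then yields
\begin{align*}
\mathbb{P}\Big(\max_{1\le s\le S}\|X_s\|>\varepsilon\sqrt{ST}\,\Big)
\le \sum_{s=1}^{S}\mathbb{P}\big(\|X_s\|>\varepsilon\sqrt{ST}\,\big)
\le \frac{C}{\varepsilon^{4}S},
\end{align*}
which converges to zero as $S\to\infty$, proving $\max_{s}\|X_s\|=o_p((ST)^{1/2})$. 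There is no real obstacle here; the only subtlety is ensuring that the fourth-moment assumption is strong enough to absorb the union bound over $S$ groups, and indeed Markov at the fourth moment provides the $S^{-2}$ decay that offsets the $S$-fold union, leaving a residual $S^{-1}\to 0$.
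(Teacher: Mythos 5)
Your proof is correct and follows essentially the same route as the paper: bound the spectral norm by the trace, control the fourth moment via $\E\big[(\sum_t\|x_{st}\|^2)^2\big]=O(T^2)$, apply Markov at the fourth power, and close with a union bound whose $S^{-2}$ decay per group beats the $S$-fold summation. The only difference is cosmetic — you spell out the Cauchy--Schwarz step that the paper leaves implicit when asserting the $O(ST^2)$ bound.
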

		
		\noindent
		\textbf{Proof}
		Let $M>0$ be an arbitrary constant.
		
		By Markov inequality, we have
		\begin{eqnarray*}
			P
			\Big (
			\max_{1 \le s \le S}
			\|X_s\|
			>
			M (ST)^{1/2}
			\Big )
			\le
			\sum_{s=1}^{S}
			P
			\Big (
			\rho_{\max}(X_{s}X_{s}')
			> 
			M^{2} ST
			\Big )
			\le
			\frac{
				\sum_{s=1}^{S}
				\E [\rho_{\max}(X_{s}X_{s}')^2]}{M^{4} (ST)^{2}}.
		\end{eqnarray*}
		In addition, under Assumption~\ref{ass.cov}.(i), we have
		\begin{eqnarray*}
			\sum_{s=1}^{S}
			\E
			\big [
			\rho_{\max}(X_{s}X_{s}')^2
			\big ]
			\le 
			\sum_{s=1}^{S}
			\E
			\big [
			\mathrm{tr}(X_{s}'X_{s})^{2}
			\big]
			\le 
			\sum_{s=1}^{S}
			\E
			\bigg [
			\bigg (
			\sum_{t=1}^{T}
			\| x_{st}\|^2
			\bigg )^{2}
			\bigg ]
			= O(S T^{2}),
		\end{eqnarray*}
		which implies 
		$
		(ST)^{-1/2}
		\max_{1 \le s \le S}
		\sqrt{\rho_{\max}(X_{s}X_{s}')}
		= o_P(1)
		$.
		\hfill$\square$ 
		
		\begin{lemma}\label{bl.2}
			Under the assumptions of Theorem~\ref{t5}, for any fixed $u \in \mathcal{U}$, $j=1,...,J$ and $m \geq 1$, 
			\begin{align*}
				\frac{1}{S T} &  \sum_{s,g=1}^{S} X_s' M_{\widehat{F}_j}^{(m)}(u)
				\left({\eta}_{js}(u) {\eta}_{js}(u)' - \mathbb{E}[{\eta}_{js}(u) {\eta}_{js}(u)'] \right)
				\widehat{F}^{(m)}_j(u) \left(K^{(m)}_j(u)\right)^{-1} \lambda_{js}(u) \\
				= & O_P(\sqrt{T}) \cdot 
				\bigg(\bigg\|\widehat{F}^{(m)}_j(u)-F_j(u) \left(H^{(m}(u)\right)^{-1}\bigg\|^2 + 
				\Big\|\widehat{F}^{(m)}_j(u) - F_j(u) \left(H^{(m}(u)\right)^{-1}\Big\| \bigg) \\
				& + O_P\left(\delta_{ST}\right).
			\end{align*}
		\end{lemma}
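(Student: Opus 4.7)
Write $\widehat F := \widehat F_j^{(m)}(u)$, $F := F_j(u)$, $H := H_j^{(m)}(u)$, $K := K_j^{(m)}(u)$, $\lambda_s := \lambda_{js}(u)$, $\eta_s := \eta_{js}(u)$, and set $\Delta := \widehat F - FH^{-1}$, $\widetilde E_g := \eta_g\eta_g' - \E[\eta_g\eta_g']$. Denote the left-hand side by
\[
\Psi \;:=\; \frac{1}{ST}\sum_{s,g=1}^S X_s' M_{\widehat F}\,\widetilde E_g\,\widehat F K^{-1}\lambda_s.
\]
The plan is to isolate one ``pure'' piece in which every occurrence of $\widehat F$ (outside of $M_{\widehat F}$) has been replaced by $FH^{-1}$, and to collect the rest into remainder pieces each carrying at least one explicit factor of $\|\Delta\|$.

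First, I will expand $M_{\widehat F} = I_T - T^{-1}\widehat F\widehat F'$, yielding $\Psi = \Psi^{(a)} - \Psi^{(b)}$ with $\Psi^{(a)} = (ST)^{-1}\sum_{s,g}X_s'\widetilde E_g\widehat F K^{-1}\lambda_s$ and $\Psi^{(b)} = (ST^2)^{-1}\sum_{s,g}X_s'\widehat F\widehat F'\widetilde E_g\widehat F K^{-1}\lambda_s$, and in every occurrence of $\widehat F$ I substitute $\widehat F = FH^{-1} + \Delta$ and multiply out. This produces a finite list of summands, classified by their total $\Delta$-degree $k\in\{0,1,2,3\}$. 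For the $k\geq 1$ summands, each of schematic form $(ST)^{-1}\sum_{s,g}X_s' A_{s,g}\widetilde E_g B_{s,g}\lambda_s$ (or the $\Psi^{(b)}$-analog with an extra $1/T$), I will apply Cauchy--Schwarz and operator-norm bounds using the standard rates $\|\sum_s X_sX_s'\|^{1/2} = O_P(\sqrt{ST})$, $\|\Lambda\| = O_P(\sqrt{S})$, $\|F\| = \|\widehat F\| = O_P(\sqrt{T})$, $\|H^{-1}\|, \|K^{-1}\| = O_P(1)$ (Assumption~\ref{ass.tech} and (\ref{prop3p.eq0})), and a spectral-norm rate on $\|\sum_g\widetilde E_g\|$ obtained as the centered analog of (\ref{al5.eq1}) in Lemma~\ref{al.5}. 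Carefully tracking the $1/T$ factors absorbed into the $T^{-1}\widehat F\widehat F'$ component of $M_{\widehat F}$, each $k=1$ summand contributes at most $O_P(\sqrt{T})\|\Delta\|$ and each $k\geq 2$ summand contributes at most $O_P(\sqrt{T})\|\Delta\|^2$, using also $\|\Delta\| = o_P(\sqrt{T})$.

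For the pure piece
\[
\Psi_0 \;=\; \frac{1}{ST}\sum_{s,g}X_s'\bigl(I_T - T^{-1}FH^{-1}(H^{-1})'F'\bigr)\widetilde E_g\, FH^{-1}K^{-1}\lambda_s,
\]
I will perform a direct second-moment calculation coordinate-by-coordinate on the $K$-vector $\Psi_0$. Since $FH^{-1}K^{-1}\lambda_s$ has $O_P(\sqrt{T})$ norm and depends on $s$ only through $\lambda_s$, and since $H^{-1}, K^{-1}$ are $O_P(1)$, the variance of a coordinate reduces to a four-index sum over $s,s',g,g'$ and $t,l,t',l'$ of fourth-order covariances of the form $\mathrm{Cov}(\eta_{gt}\eta_{gl}, \eta_{g't'}\eta_{g'l'})$. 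These are controlled by Lemma~\ref{al.4} together with Assumption~\ref{ass.mixing}.(iii), mirroring the computation that produced (\ref{al5.eq1}) in Lemma~\ref{al.5}; the output is $\E\|\Psi_0\|^2 = O(\delta_{ST}^2)$, hence $\|\Psi_0\| = O_P(\delta_{ST})$. Assembling the contributions from $k=0$, $k=1$, and $k\geq 2$ delivers the stated bound.

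The main obstacle is the sharp $O_P(\delta_{ST})$ rate on $\Psi_0$: a crude operator-norm bound via $\|\sum_g\widetilde E_g\|$ is far too loose, so one must work at the level of individual entries and exploit the joint $\alpha$-mixing cancellation across both the group index $g$ (through the centering of $\widetilde E_g$) and the temporal indices $(t,l)$. Once this variance bound is established, the remaining bookkeeping of the many $k\geq 1$ summands is tedious but purely mechanical.
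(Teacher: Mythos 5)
Your proposal is correct and follows essentially the same route as the paper: expand $M_{\widehat F}^{(m)}$, replace $\widehat F^{(m)}$ by $F(H^{(m)})^{-1}$ plus the remainder $\Delta$, bound the $\Delta$-carrying terms by Cauchy--Schwarz with the moment bound $\E\|(T\sqrt S)^{-1}\sum_g X_s'(\eta_g\eta_g'-\E[\eta_g\eta_g'])\|^2=O(1)$, and control the pure term by an entrywise second-moment computation under the $\alpha$-mixing conditions. The only difference is organizational (you substitute $\widehat F = FH^{-1}+\Delta$ everywhere at once and classify by $\Delta$-degree, whereas the paper treats the $I_T$ and $T^{-1}\widehat F\widehat F'$ pieces of the projection separately), which does not change the substance.
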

		
		\noindent
		\textbf{Proof}
		Since $u$ and $j$ are fixed, we suppress $u$ and $j$ throughout the following proof.
		
		We first plug-in $M_{\widehat{F}}^{(m)} = I_T - T^{-1}\widehat{F}^{(m)}\widehat{F}^{(m)\prime}$ to obtain
		\begin{align*}
			\frac{1}{S T} \sum_{s,g=1}^{S} 
			&
			X_s' M_{\widehat{F}}^{(m)}
			\left({\eta}_{s} {\eta}_{s}' - \mathbb{E}[{\eta}_{s} {\eta}_{s}'] \right)
			\widehat{F}^{(m)} \left(K^{(m)}\right)^{-1} \lambda_s \\
			= & \frac{1}{ST} \sum_{s,g=1}^{S} X_{s}^{\prime} 
			\left({\eta}_{g} {\eta}_{g}' - \mathbb{E}[{\eta}_{g} {\eta}_{g}'] \right)
			\widehat{F}^{(m)} \left(K^{(m)}\right)^{-1} {\lambda}_{s} \\
			& -  \frac{1}{S T} \sum_{s=1}^{S} \frac{X_s' \widehat{F}^{(m)}}{T} 
			\bigg(\sum_{g=1}^{S} \widehat{F}^{(m)\prime}
			\left({\eta}_{g} {\eta}_{g}' - \mathbb{E}[{\eta}_{g} {\eta}_{g}'] \right)
			\widehat{F}^{(m)} \left(K^{(m)}\right)^{-1}
			\bigg) {\lambda}_{s} \\
			= & 
			\frac{1}{ST} \sum_{s,g=1}^{S} X_{s}^{\prime} 
			\left({\eta}_{g} {\eta}_{g}' - \mathbb{E}[{\eta}_{g} {\eta}_{g}'] \right)
			\left(\widehat{F}^{(m)} - F \left(H^{(m}\right)^{-1}\right) 
			\left(K^{(m)}\right)^{-1} {\lambda}_{s} \\
			& + \frac{1}{ST} \sum_{s,g=1}^{S} X_{s}^{\prime} 
			\left({\eta}_{g} {\eta}_{g}' - \mathbb{E}[{\eta}_{g} {\eta}_{g}'] \right)
			F \left(H^{(m}\right)^{-1} 
			\left(K^{(m)}\right)^{-1} {\lambda}_{s} \\
			& -  \frac{1}{S T} \sum_{s=1}^{S} \frac{X_s' \widehat{F}^{(m)}}{T} 
			\bigg(\sum_{g=1}^{S} \widehat{F}^{(m)\prime}
			\left({\eta}_{g} {\eta}_{g}' - \mathbb{E}[{\eta}_{g} {\eta}_{g}'] \right)
			\widehat{F}^{(m)} \left(K^{(m)}\right)^{-1}
			\bigg) {\lambda}_{s} ,
		\end{align*}
		and then we evaluate the three terms in the last equality separately. 
		
		Apply Cauchy-Schwartz inequality to the first term, we obtain
		\begin{align*}
			\bigg\|\frac{1}{ST}
			&
			\sum_{s,g=1}^{S} X_{s}^{\prime} 
			\left({\eta}_{g} {\eta}_{g}' - \mathbb{E}[{\eta}_{g} {\eta}_{g}'] \right)
			\left(\widehat{F}^{(m)} - F \left(H^{(m}\right)^{-1}\right) 
			\left(K^{(m)}\right)^{-1} {\lambda}_{s}\bigg\| \\
			\leq &
			\frac{1}{\sqrt{S}} 
			\left\|\widehat{F}^{(m)} - F \left(H^{(m}\right)^{-1}\right\|
			\cdot \left\|\left(K^{(m)}\right)^{-1}\right\| \\
			& \cdot 
			\left(
			\sum_{s=1}^{S} \left\|\frac{1}{T\sqrt{S}}\sum_{g=1}^{S} {X}_{s}^{\prime} 
			\left({\eta}_{g} {\eta}_{g}' - \mathbb{E}[{\eta}_{g} {\eta}_{g}'] \right) \right\|^2
			\right)^{1/2}
			\left(
			\sum_{s=1}^{S} \left\|{\lambda}_{s} \right\|^2\right)^{1/2}\\
			= & \frac{1}{\sqrt{S}} \cdot  
			O_P\Big(\Big\|\widehat{F}^{(m)} - F \left(H^{(m}\right)^{-1}\Big\|\Big)
			\cdot O_P(1) 
			\cdot
			O_P(1) \cdot O_P(\sqrt{S}) 
			\\
			= & O_P \Big( \Big\|\widehat{F}^{(m)} - F \left(H^{(m}\right)^{-1}\Big\| \Big),
		\end{align*}
		due to Assumption~\ref{ass.cov}.(ii), (\ref{al6.eq1}), (\ref{prop3p.eq0}), and the fact that, for any fixed $s=1,...,S$,
		\allowdisplaybreaks
		\begin{align} \label{t2p.eq2}
			\mathbb{E}\left[\left\|\frac{1}{T\sqrt{S}}\sum_{g=1}^{S} {X}_{s}^{\prime} 
			\left({\eta}_{g} {\eta}_{g}' - \mathbb{E}[{\eta}_{g} {\eta}_{g}'] \right) \right\|^2 \right] 
			= & \frac{C}{ST^2} \cdot O(ST^2) = O(1), 
		\end{align}
		according to Lemma~\ref{al.4} under Assumption~\ref{ass.cov} and Assumption~\ref{ass.mixing}.
		
		For the second term, similarly we have
		\begin{align*}
			\bigg\|\frac{1}{ST}
			&
			\sum_{s,g=1}^{S} X_{s}^{\prime} 
			\left({\eta}_{g} {\eta}_{g}' - \mathbb{E}[{\eta}_{g} {\eta}_{g}'] \right)
			F \left(H^{(m}\right)^{-1} 
			\left(K^{(m)}\right)^{-1} {\lambda}_{s}\bigg\| \\
			\leq & \frac{1}{\sqrt{S}} 
			\cdot \left\|\left(H^{(m}\right)^{-1}\right\| 
			\cdot \left\|\left(K^{(m)}\right)^{-1}\right\|\\
			& \cdot 
			\Bigg(\sum_{s=1}^{S} \bigg\|\frac{1}{T\sqrt{S}} 
			\sum_{g=1}^{S} \sum_{t,l=1}^{T} {x}_{st} 
			\left(\eta_{gt} \eta_{gl} - \mathbb{E}[\eta_{gt} \eta_{gl}] \right)
			{f}_l' \bigg\|^2\bigg)^{1/2}
			\bigg(\sum_{s=1}^{S}\|{\lambda}_s\|^2\bigg)^{1/2} \\
			= & O_P \big(\sqrt{S}\big).
		\end{align*}
		We consider the third term as follows
		\begin{align*}
			\bigg\|\frac{1}{S T}
			&
			\sum_{s=1}^{S} \frac{X_s' \widehat{F}^{(m)}}{T} 
			\bigg(\sum_{g=1}^{S} \widehat{F}^{(m)\prime}
			\left({\eta}_{g} {\eta}_{g}' - \mathbb{E}[{\eta}_{g} {\eta}_{g}'] \right)
			\widehat{F}^{(m)} \left(K^{(m)}\right)^{-1}
			\bigg) {\lambda}_{s}
			\bigg\| \\
			\leq & 
			\bigg(\sum_{s=1}^{S}\|X_s\|^2\bigg)^{1/2}
			\cdot 
			\bigg(\sum_{s=1}^{S}\|\lambda_s\|^2\bigg)^{1/2}
			\cdot \big\|\widehat{F}^{(m)}\big\|
			\cdot \left\|\left(K^{(m)}\right)^{-1}  \right\|\\
			& 
			\cdot \left\| \frac{1}{S T^2} \sum_{g=1}^{S} \widehat{F}^{(m)\prime}
			\left({\eta}_{g} {\eta}_{g}' - \mathbb{E}[{\eta}_{g} {\eta}_{g}'] \right) \widehat{F}^{(m)} \right\| \\
			=&  O_P(ST) \cdot \bigg\| \frac{1}{S T^2} \sum_{g=1}^{S} \widehat{F}^{(m)\prime}
			\left({\eta}_{g} {\eta}_{g}' - \mathbb{E}[{\eta}_{g} {\eta}_{g}'] \right) \widehat{F}^{(m)} \bigg\|\\
			= & O_P(\sqrt{T}) \cdot 
			\left\|\widehat{F}^{(m)}-F \left(H^{(m}\right)^{-1}\right\|^2
			+ O_P(\sqrt{T}) \cdot 
			\left\|\widehat{F}^{(m)}-F \left(H^{(m}\right)^{-1}\right\|
			+ O_P\left(\delta_{ST}\right),
		\end{align*}
		where the last equality follows from the following result
		{\small
			\begin{align*}
				& \frac{1}{S T^2}
				\sum_{g=1}^{S} \widehat{F}^{(m)\prime}
				\left({\eta}_{g} {\eta}_{g}' - \mathbb{E}[{\eta}_{g} {\eta}_{g}'] \right) \widehat{F}^{(m)} \\
				& = O_P\bigg(\frac{1}{T\sqrt{S}} \bigg) \cdot 
				\left\|\widehat{F}^{(m)}-F \left(H^{(m}\right)^{-1}\right\|^2
				+ O_P\bigg(\frac{1}{T\sqrt{S}} \bigg) \cdot 
				\left\|\widehat{F}^{(m)}-F \left(H^{(m}\right)^{-1}\right\|
				+ O_P\left(\frac{1}{T\sqrt{S}}\right)
				,
		\end{align*}}
		whose proof is given as follows.
		By adding and subtracting $F (H^{(m)})^{-1}$ we have
		\begin{align*}
			\frac{1}{S T^2}
			&
			\sum_{g=1}^{S} \widehat{F}^{(m)\prime}
			\left({\eta}_{g} {\eta}_{g}' - \mathbb{E}[{\eta}_{g} {\eta}_{g}'] \right) \widehat{F}^{(m)} \\
			= &
			\frac{1}{ST^2} \sum_{g=1}^{S} 
			\left(\widehat{F}^{(m)}-F \left(H^{(m)}\right)^{-1}\right)' 
			\left({\eta}_{g}' {\eta}_{g} - \mathbb{E}[{\eta}_{g}' {\eta}_{g}] \right)
			\left(\widehat{F}^{(m)}-F \left(H^{(m)}\right)^{-1}\right) \\
			& +
			\frac{1}{ST^2} \sum_{g=1}^{S} \left(H^{(m}\right)^{-1}F' \left({\eta}_{g} {\eta}_{g}' - \mathbb{E}[{\eta}_{g} {\eta}_{g}'] \right) \left(\widehat{F}^{(m)}-F \left(H^{(m}\right)^{-1}\right) \\
			& + \frac{1}{ST^2} \sum_{g=1}^{S} 
			\left(\widehat{F}^{(m)}-F \left(H^{(m)}\right)^{-1}\right)' 
			\left(\eta_{g}' \eta_{g} - \mathbb{E}[{\eta}_{g}' {\eta}_{g}] \right)
			F \left(H^{(m)\prime}\right)^{-1} \\
			& + \frac{1}{ST^2} \sum_{g=1}^{S} \left(H^{(m)}\right)^{-1}F' \left({\eta}_{g} {\eta}_{g}' - \mathbb{E}[{\eta}_{g} {\eta}_{g}'] \right) F \left(H^{(m)}\right)^{-1}.
		\end{align*}
		The first term
		\begin{align*}
			\bigg\|\frac{1}{ST^2} 
			&
			\sum_{g=1}^{S} 
			\left(\widehat{F}^{(m)}-F \left(H^{(m}\right)^{-1}\right)' 
			\left({\eta}_{g}' {\eta}_{g} - \mathbb{E}[{\eta}_{g}' {\eta}_{g}] \right)
			\left(\widehat{F}^{(m)}-F \left(H^{(m}\right)^{-1}\right)\bigg\| \\
			\leq & \frac{1}{T\sqrt{S}}
			\left\|\widehat{F}^{(m)}-F \left(H^{(m}\right)^{-1}\right\|^2
			\cdot \bigg\|\frac{1}{T\sqrt{S}}\sum_{g=1}^{S} \left({\eta}_{g}' {\eta}_{g} - \mathbb{E}[{\eta}_{g}' {\eta}_{g}] \right)\bigg\| \\
			= & O_P\bigg(\frac{1}{T\sqrt{S}} \bigg) \cdot 
			\left\|\widehat{F}^{(m)}-F \left(H^{(m}\right)^{-1}\right\|^2,
		\end{align*}
		using the fact that
		$
		\big\|\sum_{g=1}^{S} {\eta}_{g} {\eta}_{g}' - \mathbb{E}[{\eta}_{g} {\eta}_{g}'] \big\| = O_P\big(T\sqrt{S}\big),
		$
		which can be shown in a similar way to the proof of (\ref{t2p.eq2}). 
		Similar arguments yield that the second and third term are
		$O_P ( \|\widehat{F}^{(m)} - F \left(H^{(m}\right)^{-1}\| ) + O_P(\max(S^{-1}T^{-1/2},S^{-1/2}T^{-1}))$, and the fourth term is
		$O_P(S^{-1/2}T^{-1})$.

		Collecting all the terms so far, we obtain
		\begin{align*}
			\frac{1}{S T} \sum_{s,g=1}^{S} 
			&
			X_s' M_{\widehat{F}}^{(m)}
			\left({\eta}_{s} {\eta}_{s}' - \mathbb{E}[{\eta}_{s} {\eta}_{s}'] \right)
			\widehat{F}^{(m)} \left(K^{(m)}\right)^{-1} \lambda_s \\
			& =O_P(\sqrt{T}) \cdot 
			\bigg(\bigg\|\widehat{F}^{(m)}-F \left(H^{(m}\right)^{-1}\bigg\|^2 
			+\Big\|\widehat{F}^{(m)} - F \left(H^{(m}\right)^{-1}\Big\| \bigg)
			+ O_P\left(\delta_{ST}\right).
		\end{align*}
		\hfill$\square$

		\begin{lemma}\label{al.6}
			Under assumptions of Theorem~\ref{t5}, for any fixed $u \in \mathcal{U}$, $j=1,...,J$ and $m \geq 1$, 
			\begin{align}
				& \left\|\frac{1}{T}F_j'(u)\left(\widehat{F}_j^{(m)}(u)H_j^{(m)}(u) - F_j(u)\right) \right\|
				= O_P\left(\frac{1}{\sqrt{ST}}\right) 
				\cdot \bigg\| \widehat{F}^{(m)}  - F \big(H^{(m)}\big)^{-1}\bigg\|,  \label{al6.eq4}\\
				& \left\| I_r - \big(H^{(m)}_j(u)'\big)^{-1}\big(H^{(m)}_j(u)\big)^{-1} \right\|  \label{al6.eq2} 
				\\
				& \qquad  = \frac{1}{T}\left\|\widehat{F}^{(m)} - F \big(H^{(m)}\big)^{-1}\right\|^2 
				+ O_P\Big(\frac{1}{\sqrt{ST}}\Big) 
				\cdot \left\|\widehat{F}^{(m)} - F \big(H^{(m)}\big)^{-1}\right\|,\nonumber\\
				& \left\|P_{\widehat{F}_j}^{(m)}  - P_{F_j} \right\| \label{al6.eq3}\\
				& \qquad
				= \frac{1}{T}\left\|\widehat{F}^{(m)} - F \big(H^{(m)}\big)^{-1}\right\|^2 
				+ O_P\Big(\frac{1}{\sqrt{ST}}\Big) 
				\cdot \left\|\widehat{F}^{(m)} - F \big(H^{(m)}\big)^{-1}\right\| , \nonumber
			\end{align}
			where $H_j^{(m)}$ is defined in Lemma~\ref{prop3}, and $\| \widehat{F}^{(m)}  - F \big(H^{(m)}\big)^{-1}\| $ is given by (\ref{al6.eq1}).
		\end{lemma}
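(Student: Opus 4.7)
The plan is to establish the three bounds in the order listed, since (ii) uses (i) and (iii) uses (ii). Throughout I suppress the $u$ and $j$ subscripts, write $\Delta := \widehat{F}^{(m)} - F(H^{(m)})^{-1}$, and repeatedly invoke $\|(H^{(m)})^{-1}\| = O_{P}(1)$ (which follows from Assumption~\ref{ass.tech} together with Assumption~\ref{a2.2}, because $H^{(m)} = \widehat{\Upsilon}^{(m)}(K^{(m)})^{-1}$ has both factors bounded and invertible). For (i), I would start from the internal expansion $\widehat{F}^{(m)}H^{(m)} - F = \sum_{h=1}^{7} I_{h}^{(m)} \widehat{F}^{(m)}(K^{(m)})^{-1}$ established inside the proof of Lemma~\ref{prop3}, premultiply by $T^{-1}F'$, and bound each of the seven contributions separately. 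The improvement over the raw bound in Lemma~\ref{prop3} comes from the orthogonality-type moment bounds $\|T^{-1}F'x_{s(k)}\| = O_{P}((ST)^{-1/2})$ and $\|T^{-1}F'\eta_{s}\| = O_{P}((ST)^{-1/2})$, which are analogues of Lemma~\ref{al.5} obtained under Assumptions~\ref{ass.cov}--\ref{ass.mixing}; these provide the extra $(ST)^{-1/2}$ factor. Combined with the Theorem~\ref{t5} rates $\|\widehat{\beta}^{(m-1)}-\beta\| = O_{P}((ST)^{-1/2})$ and $|\widehat{\delta}^{(m-1)}_{t}-\delta_{t}| = O_{P}(S^{-1/2})$, the contributions from $I_{1}^{(m)}, I_{2}^{(m)}, I_{3}^{(m)}$ collapse to $O_{P}((ST)^{-1/2}) \cdot \|\Delta\|$. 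The $I_{4}^{(m)}$ block is handled by splitting $\eta_{s}\eta_{s}' = (\eta_{s}\eta_{s}' - \mathbb{E}[\eta_{s}\eta_{s}']) + \mathbb{E}[\eta_{s}\eta_{s}']$ and running a Lemma~\ref{bl.2}-style second-moment argument with $T^{-1}F'$ playing the role that $X_{s}$ played there; finally $I_{5}^{(m)}, I_{6}^{(m)}, I_{7}^{(m)}$ are dominated thanks to $\sup_{s,t}|\widehat{\alpha}_{st}-\alpha_{st}| = O_{P}((ST)^{-3/4})$ from Lemma~\ref{l1}.

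For (ii), I would substitute $\widehat{F}^{(m)} = F(H^{(m)})^{-1} + \Delta$ into the normalization $T^{-1}\widehat{F}^{(m)\prime}\widehat{F}^{(m)} = I_{r}$ and use $T^{-1}F'F = I_{r}$ to obtain the identity
\begin{align*}
I_{r} - \bigl(H^{(m)\prime}\bigr)^{-1}\bigl(H^{(m)}\bigr)^{-1} = T^{-1}\bigl(H^{(m)\prime}\bigr)^{-1}F'\Delta + T^{-1}\Delta'F\bigl(H^{(m)}\bigr)^{-1} + T^{-1}\Delta'\Delta,
\end{align*}
and then apply (i) to the first two (symmetric) terms via $\|T^{-1}F'\Delta\| \leq \|(H^{(m)})^{-1}\|\cdot\|T^{-1}F'(\widehat{F}^{(m)}H^{(m)} - F)\|$. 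For (iii), the same substitution, combined with $P_{\widehat{F}}^{(m)} = T^{-1}\widehat{F}^{(m)}\widehat{F}^{(m)\prime}$ and $P_{F} = T^{-1}FF'$, gives
\begin{align*}
P_{\widehat{F}}^{(m)} - P_{F} = T^{-1}F\bigl[\bigl(H^{(m)\prime}H^{(m)}\bigr)^{-1} - I_{r}\bigr]F' + T^{-1}F\bigl(H^{(m)}\bigr)^{-1}\Delta' + T^{-1}\Delta\bigl(H^{(m)\prime}\bigr)^{-1}F' + T^{-1}\Delta\Delta'.
\end{align*}
The first term is bounded by $\|(H^{(m)\prime}H^{(m)})^{-1} - I_{r}\|$ because the normalization $T^{-1}F'F = I_{r}$ makes $T^{-1/2}F$ have unit operator norm; and since $H^{(m)\prime}H^{(m)}$ and $H^{(m)}H^{(m)\prime}$ share eigenvalues, this is of the same order as the bound from (ii). The remaining three terms replicate the argument of (ii).

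The main obstacle will be the careful bookkeeping in (i): each of the seven $I_{h}^{(m)}$ blocks must be re-analyzed with $T^{-1}F'$ contracted on the left, and the $\alpha$-mixing bounds in Assumption~\ref{ass.mixing} must be re-verified to confirm that the extra $(ST)^{-1/2}$ factor actually emerges rather than being absorbed by the $\|\Delta\|$ scaling. In particular, isolating the $\mathbb{E}[\eta_{g}\eta_{g}']$ piece of $I_{4}^{(m)}$ requires showing, via Assumption~\ref{ass.cov}.(iii) on the spectral norm of $\mathbb{E}[\eta_{s}\eta_{s}']$, that $T^{-2}F'\mathbb{E}[\eta_{g}\eta_{g}']\widehat{F}^{(m)}$ is already negligible in spectral norm without any help from $\|\Delta\|$, a step whose analogue is not explicit in the previous lemmas and which drives the ultimate $(ST)^{-1/2}$ rate.
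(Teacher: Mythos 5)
Your treatment of (\ref{al6.eq4}) and (\ref{al6.eq2}) follows essentially the paper's own route: contract the expansion $\widehat{F}^{(m)}H^{(m)}-F=\sum_{h=1}^{7}I_h^{(m)}\widehat{F}^{(m)}(K^{(m)})^{-1}$ from Lemma~\ref{prop3} with $T^{-1}F'$, handle the $I_4^{(m)}$ block with Lemma~\ref{al.5}/Lemma~\ref{bl.2}-type moment bounds, and then feed the result into the identity coming from $T^{-1}\widehat{F}^{(m)\prime}\widehat{F}^{(m)}=I_r$ and $T^{-1}F'F=I_r$. Two small corrections: the per-group bound is $\|T^{-1}F'\eta_{s}\|=O_P(T^{-1/2})$ (as in the proof of Lemma~\ref{factor.l1}), not $O_P((ST)^{-1/2})$ --- the extra $S^{-1/2}$ only appears after averaging over $s$ against $\lambda_{s}$; and for $h=1,2,3,5,6,7$ no orthogonality is needed at all, since the crude bound $\|I_h^{(m)}\|\cdot O_P(1)$ is already of the right order once the Theorem~\ref{t5} rates are inserted. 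Only $I_4^{(m)}$ requires the refined argument.

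The genuine gap is in (\ref{al6.eq3}). Your expansion $P^{(m)}_{\widehat F}-P_F = T^{-1}F\big[(H^{(m)\prime}H^{(m)})^{-1}-I_r\big]F' + T^{-1}F(H^{(m)})^{-1}\Delta' + T^{-1}\Delta(H^{(m)\prime})^{-1}F' + T^{-1}\Delta\Delta'$ (with your $\Delta=\widehat F^{(m)}-F(H^{(m)})^{-1}$) is algebraically correct, but the two cross terms do not ``replicate the argument of (ii)'': in (ii) the relevant object is the contracted $r\times r$ matrix $T^{-1}F'\Delta$, to which (\ref{al6.eq4}) applies, whereas $T^{-1}F(H^{(m)})^{-1}\Delta'$ is a $T\times T$ outer product containing no $F'\Delta$ contraction, so the best available spectral-norm bound is $T^{-1}\|F\|\cdot\|(H^{(m)})^{-1}\|\cdot\|\Delta\|=O_P(T^{-1/2})\|\Delta\|$ --- larger than the stated $O_P((ST)^{-1/2})\|\Delta\|$ by a factor of $\sqrt{S}$. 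This is not an artifact of a loose inequality: when $\Delta$ is orthogonal to the column space of $F$, $\|P^{(m)}_{\widehat F}-P_F\|$ is genuinely of order $T^{-1/2}\|\Delta\|$. The paper avoids the problem by never expanding the $T\times T$ difference directly: it uses $\|P^{(m)}_{\widehat F}-P_F\|^2\le \mathrm{tr}\big(P^{(m)}_{\widehat F}-P_F\big)^2 = 2\,\mathrm{tr}\big(I_r - T^{-2}F'\widehat F^{(m)}\widehat F^{(m)\prime}F\big)$, which reduces everything to the contracted matrix $T^{-1}F'\widehat F^{(m)}$ where (\ref{al6.eq4}) and (\ref{al6.eq2}) do the work; that trace step is the missing idea. (Note that the paper's display for (\ref{al6.eq3}) then really controls the squared norm; your route, pushed through, yields the square root of the displayed bound, which is in fact what the trace argument proves for the unsquared norm.)
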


		\noindent
		\textbf{Proof}
		Since $u$ and $j$ are fixed, we suppress $u$ and $j$ throughout the following proof.

		For (\ref{al6.eq4}), according to the proof of Lemma~\ref{prop3}, we have the following decomposition:
		\begin{align*}
			\frac{1}{T}F' \left(\widehat{F}^{(m)} H^{(m)} - F  \right) 
			= \sum_{h=1}^{7} \frac{1}{T} F'I_h^{(m)}\widehat{F}^{(m)}\left(K^{(m)}\right)^{-1}
		\end{align*}
		where $I_h^{(m)},\ h=1,...,7$ are defined above (\ref{prop2p.eq0}) 
		Using the rate of $\|I_h^{(m)}\|$ from the proof of Lemma~\ref{prop3} and (\ref{prop3p.eq0}), it is straightforward to check that for $h=1$,
		{\small
			\begin{align*}
				\left\|\frac{1}{T} F'I_h^{(m)}\widehat{F}^{(m)}\left(K^{(m)}\right)^{-1}\right\|
				\leq  \left\|\frac{F'}{\sqrt{T}} \right\| \cdot \left\|I_1^{(m)}\right\| 
				\cdot \left\|\frac{\widehat{F}^{(m)}}{\sqrt{T}}\right\|\cdot \left\|\left(K^{(m)}\right)^{-1}\right\|
				= o_P(1) \cdot \left\|\widehat{{\beta}}^{(m-1)}-{\beta}\right\|.
		\end{align*}}
		Applying similar arguments, we obtain $\big\|T^{-1}F'I_h^{(m)}\widehat{F}^{(m)}\big(K^{(m)}\big)^{-1}\big\| = O_P( \big\|\widehat{{\beta}}^{(m-1)}-{\beta}\big\|)$ for $h=2,3,5,6,7$. Thus, it remains to check the rate  of
		\begin{align}
			\label{al6p.eq1}
			\frac{1}{T} 
			F'I_4^{(m)}\widehat{F}^{(m)}\left(K^{(m)}\right)^{-1} 
			= &
			\frac{1}{ST^2} \sum_{s=1}^{S} 
			F' 
			\bigg[F {\lambda}_s {\eta}_s' 
			+
			{\eta}_s  {\lambda}_s'  F'
			+
			{\eta}_s  {\eta}_s'
			\bigg]
			\widehat{F}^{(m)}\left(K^{(m)}\right)^{-1} \nonumber\\
			= &  
			\frac{1}{ST} \sum_{s=1}^{S} {\lambda}_s {\eta}_s' \widehat{F}^{(m)} \left(K^{(m)}\right)^{-1} 
			+ 
			\frac{1}{ST} \sum_{s=1}^{S} F'{\eta}_s  {\lambda}_s'  \left(\frac{\Lambda'\Lambda}{S}\right)^{-1} \nonumber\\
			&
			+ 
			\frac{1}{S T^2} \sum_{s=1}^{S} F' {\eta}_s  {\eta}_s'  \widehat{F}^{(m)}  \cdot \left(K^{(m)}\right)^{-1} . 
		\end{align}
		The first term in the last equality of (\ref{al6p.eq1})
		\begin{align*}
			\bigg\|\frac{1}{ST}
			& 
			\sum_{s=1}^{S} {\lambda}_s {\eta}_s' \widehat{F}^{(m)} \left(K^{(m)}\right)^{-1} \bigg\| \\
			\leq & \bigg[ \bigg\| \frac{1}{ST}\sum_{s=1}^{S} {\lambda}_s {\eta}_s' \left(\widehat{F}^{(m)}  - F \big(H^{(m)}\big)^{-1} \right) \bigg\| 
			+ \bigg\| \frac{1}{ST}\sum_{s=1}^{S} {\lambda}_s {\eta}_s' F \big(H^{(m)}\big)^{-1} \bigg\| \bigg]
			\cdot \left\| \Big(K^{(m)}\Big)^{-1} \right\| \\
			\leq & \bigg[\bigg\| \frac{1}{ST}\sum_{s=1}^{S} {\lambda}_s {\eta}_s' \bigg\| 
			\cdot \bigg\| \widehat{F}^{(m)}  - F \big(H^{(m)}\big)^{-1}\bigg\|
			+ \bigg\| \frac{1}{ST}\sum_{s=1}^{S} {\lambda}_s {\eta}_s' F \bigg\|
			\cdot \bigg\| \big(H^{(m)}\big)^{-1} \bigg\| \bigg]
			\cdot O_P(1) \\
			= & O_P\left(\frac{1}{\sqrt{ST}}\right) 
			\cdot \bigg\| \widehat{F}^{(m)}  - F \big(H^{(m)}\big)^{-1}\bigg\|
			+ O_P\left(\frac{1}{\sqrt{ST}}\right)
			\cdot O_P(1) \\
			= & O_P\left(\frac{1}{\sqrt{ST}}\right) 
			\cdot \bigg\| \widehat{F}^{(m)}  - F \big(H^{(m)}\big)^{-1}\bigg\|,
		\end{align*}
		where the second last equality follows from (\ref{al6.eq1}) and Lemma~\ref{al.5}.
		Following Lemma~\ref{al.5}, the second term in the last equality of (\ref{al6p.eq1}) becomes
		\begin{align*}
			\bigg\|\frac{1}{ST} \sum_{s=1}^{S} F'{\eta}_s  {\lambda}_s'  \left(\frac{\Lambda'\Lambda}{S}\right)^{-1} \bigg\|
			\leq \left\| \frac{1}{ST} \sum_{s=1}^{S} F'{\eta}_s  {\lambda}_s'\right\|
			\cdot \left\| \left(\frac{\Lambda'\Lambda}{S}\right)^{-1} \right\|  
			= O_P\left(\frac{1}{\sqrt{ST}}\right).
		\end{align*}
		The third term in the last equality of (\ref{al6p.eq1}) is
		\begin{align*}
			\bigg\| \frac{1}{S T^2}
			&
			\sum_{s=1}^{S} F' {\eta}_s  {\eta}_s'  \widehat{F}^{(m)}  \cdot \left(K^{(m)}\right)^{-1}  \bigg\| \\
			\leq &  \bigg\| \frac{1}{S T^2} \sum_{s=1}^{S} F' {\eta}_s  {\eta}_s'  \left(\widehat{F}^{(m)} - F \big(H^{(m)}\big)^{-1}\right) \bigg\| 
			\cdot \left\|\left(K^{(m)}\right)^{-1} \right\| \\
			& + \bigg\| \frac{1}{S T^2} \sum_{s=1}^{S} F' {\eta}_s  {\eta}_s'  F\big(H^{(m)}\big)^{-1} \bigg\| 
			\cdot \left\|\left(K^{(m)}\right)^{-1} \right\|\\
			\leq & \bigg\{\left\|\frac{F'}{\sqrt{T}}\right\| 
			\cdot \bigg\| \frac{1}{ST} \sum_{s=1}^{S}{\eta}_s  {\eta}_s' \bigg\|
			\cdot \frac{1}{\sqrt{T}}\left\|\widehat{F}^{(m)} - F \big(H^{(m)}\big)^{-1}\right\| \\
			& +
			\bigg\| \frac{1}{S T^2} \sum_{s=1}^{S} F' {\eta}_s  {\eta}_s'  F  \bigg\|
			\cdot \left\|\big(H^{(m)}\big)^{-1}\right\| \bigg\}
			\cdot \left\|\left(K^{(m)}\right)^{-1} \right\| \\
			= & O_P\Big(\frac{1}{\sqrt{ST}}\Big) 
			\cdot \left\|\widehat{F}^{(m)} - F \big(H^{(m)}\big)^{-1}\right\|
			+ O_P\left(\frac{1}{T}\right) \\
			= & O_P\Big(\frac{1}{\sqrt{ST}}\Big) 
			\cdot \left\|\widehat{F}^{(m)} - F \big(H^{(m)}\big)^{-1}\right\|,
		\end{align*}
		due to Lemma~\ref{al.5} and the following fact that
		\begin{align*}
			\mathbb{E}\bigg[\bigg\| 
			& \frac{1}{S T^2} \sum_{s=1}^{S} F' {\eta}_s  {\eta}_s'  F  \bigg\| \bigg]
			\leq 
			\mathbb{E} \bigg[ \frac{1}{S T^2}\bigg\| \sum_{s=1}^{S} F' {\eta}_s  \bigg\|^2 \bigg] 
			\leq  \frac{1}{S T^2}\sum_{s=1}^{S} \sum_{t,l=1}^{T} \mathbb{E} \left[ \eta_{st} {f}_{t}' {f}_{l} \eta_{sl}  \right] \\
			= &  \frac{1}{S T^2}\sum_{s=1}^{S} \sum_{t = l}^{T} \mathbb{E} \left[ \eta_{st}^2 {f}_{t}' {f}_{t}  \right] 
			+  \frac{1}{S T^2}\sum_{s=1}^{S} \sum_{t \neq l}^{T} \mathbb{E} \left[ \eta_{st}{f}_{t}' {f}_{l} \eta_{sl}  \right] \\
			\leq & O_P\left(\frac{1}{T}\right) 
			+ \frac{C}{S T^2}\sum_{i=1}^{r}\sum_{s=1}^{S} \sum_{t \neq l}^{T} 
			a_{s}(|t-l|)^{\frac{\delta}{\delta+2}}
			\left(\mathbb{E}\left[|\eta_{st}f_{ti}|^{\delta+2}\right]\right)^{\frac{1}{\delta+2}}
			\left(\mathbb{E}\left[|\eta_{sl}f_{li}|^{\delta+2}\right]\right)^{\frac{1}{\delta+2}} \\
			= &  O_P\left(\frac{1}{T}\right) ,
		\end{align*}
		by Lemma~\ref{al.4}.
		Collecting all the terms, we have
		\begin{align*}
			\left\| \frac{1}{T}F' \left(\widehat{F}^{(m)} H^{(m)} - F  \right)  \right\|
			= O_P\Big(\frac{1}{\sqrt{ST}}\Big) 
			\cdot \left\|\widehat{F}^{(m)} - F \big(H^{(m)}\big)^{-1}\right\|.
		\end{align*}
		
		For (\ref{al6.eq2}), by multiplying $\big(H^{(m)\prime}\big)^{-1}$ and $\big(H^{(m)}\big)^{-1}$ on both sides of $T^{-1} F' (\widehat{F}^{(m)} H^{(m)} - F )$, respectively, we obtain
		\begin{align*} 
			\bigg\|\frac{1}{T} \big(H^{(m)\prime}\big)^{-1}F' \widehat{F}^{(m)} 
			- \big(H^{(m)\prime}\big)^{-1}\big(H^{(m)}\big)^{-1} \bigg\| 
			=  \left\|\widehat{F}^{(m)} - F \big(H^{(m)}\big)^{-1}\right\|,
		\end{align*}
		since $\left\|F\right\| = O_P(\sqrt{T})$ under Assumption~\ref{ass.cov}.(ii), and
		$
		\big\|\big(H^{(m)}\big)^{-1}\big\| \leq \big\|\big(\Upsilon^{(m)}\big)^{-1}\big\| \cdot \big\|{K}^{(m)}\big\| = O_P(1).
		$
		Moreover, we have
		\begin{align*} 
			\bigg\|I_r 
			& - \frac{1}{T} \big(H^{(m)\prime}\big)^{-1}F' \widehat{F}^{(m)} \bigg\|
			= \frac{1}{T} \bigg\|\widehat{F}^{(m)\prime} \left( \widehat{F}^{(m)} - F\big(H^{(m)\prime}\big)^{-1}\right)  \bigg\| \nonumber\\
			& \leq  \frac{1}{T} \left\| \widehat{F}^{(m)} - F\big(H^{(m)\prime}\big)^{-1}  \right\|^2
			+ \left\|\big(H^{(m)}\big)^{-1}\right\| \cdot \left\| \frac{1}{T} F'\left( \widehat{F}^{(m)} - F\big(H^{(m)\prime}\big)^{-1}\right)  \right\| \nonumber \\
			&  = \frac{1}{T}\left\|\widehat{F}^{(m)} - F \big(H^{(m)}\big)^{-1}\right\|^2 
			+ O_P\Big(\frac{1}{\sqrt{ST}}\Big) 
			\cdot \left\|\widehat{F}^{(m)} - F \big(H^{(m)}\big)^{-1}\right\| .
		\end{align*}
		Therefore, by triangle inequality, we show that
		\begin{align*}
			\bigg\|I_r &
			- \big(H^{(m)\prime}\big)^{-1}\big(H^{(m)}\big)^{-1} \bigg\| \\
			& \leq  \left\|I_r - \frac{1}{T} \big(H^{(m)\prime}\big)^{-1}F' \widehat{F}^{(m)} \right\|
			+ \left\|\frac{1}{T} \big(H^{(m)\prime}\big)^{-1}F' \widehat{F}^{(m)} 
			- \big(H^{(m)\prime}\big)^{-1}\big(H^{(m)}\big)^{-1} \right\| \\
			& = \frac{1}{T}\left\|\widehat{F}^{(m)} - F \big(H^{(m)}\big)^{-1}\right\|^2 
			+ O_P\Big(\frac{1}{\sqrt{ST}}\Big) 
			\cdot \left\|\widehat{F}^{(m)} - F \big(H^{(m)}\big)^{-1}\right\| .
		\end{align*}
		
		For (\ref{al6.eq3}), since
		\begin{align*}
			\left\|P_{\widehat{F}}^{(m)}  - P_{F} \right\|^2
			\leq \text{tr}\left(P_{\widehat{F}}^{(m)}  - P_{F}\right)^2
			= 2 \text{tr}\left(I_r - \frac{1}{T} {F}^{\prime}P_{ \widehat{F}^{(m)}}{F}\right),
		\end{align*}
		it is suffices to examine 
		\begin{align*}
			I_r &
			- \frac{1}{T} {F}^{\prime}P_{ \widehat{F}^{(m)}}{F} 
			=  I_r - \frac{F' \widehat{F}^{(m)}}{T} \frac{ \widehat{F}^{(m)\prime} F}{T} \\
			= &  I_r
			- \left[\frac{1}{T}\widehat{F}^{(m)\prime}\left(F-\widehat{F}^{(m)} \big(H^{(m)}\big)^{-1}\right) + \big(H^{(m)}\big)^{-1}\right]' \\
			&
			\cdot \left[\frac{1}{T}\widehat{F}^{(m)\prime}\left(F-\widehat{F}^{(m)}\big(H^{(m)}\big)^{-1}\right) + \big(H^{(m)}\big)^{-1}\right] \\
			= & I_r - \big(H^{(m)\prime}\big)^{-1}\big(H^{(m)}\big)^{-1}
			-
			\frac{1}{T^2}\left(F-\widehat{F}^{(m)} \big(H^{(m)}\big)^{-1}\right)'\widehat{F}^{(m)}\widehat{F}^{(m)\prime}\left(F-\widehat{F}^{(m)}\big(H^{(m)}\big)^{-1}\right)
			\\
			& - \big(H^{(m)\prime}\big)^{-1}\cdot \frac{1}{T}\widehat{F}^{(m)\prime}\left(F-\widehat{F}^{(m)} \big(H^{(m)}\big)^{-1}\right)
			- \frac{1}{T}\left(F-\widehat{F}^{(m)} \big(H^{(m)}\big)^{-1}\right)'\widehat{F}^{(m)}\cdot \big(H^{(m)}\big)^{-1}
			\\
			= &
			\frac{1}{T}\left\|\widehat{F}^{(m)} - F \big(H^{(m)}\big)^{-1}\right\|^2 
			+ O_P\Big(\frac{1}{\sqrt{ST}}\Big) 
			\cdot \left\|\widehat{F}^{(m)} - F \big(H^{(m)}\big)^{-1}\right\|  ,
		\end{align*}
		where the second equality follows from the fact that $T^{-1}\widehat{F}^{(m)\prime}\widehat{F}^{(m)} = I_r$, and the last equality follows from (\ref{al6.eq4})-(\ref{al6.eq2}) and $\big\|(H^{(m)})^{-1}\big\| = O_P(1)$. 	
		\hfill$\square$


		\begin{lemma} \label{factor.l1}
			Under the assumptions of Theorem~\ref{t5}, we have for any fixed $u\in \mathcal{U}$, $j=1,...,J$, $m \geq 1$, $s=1,...,S$,
			{\small
				\begin{align}
					& \left\|\widehat{{\lambda}}_{js}^{(m)}(u) - H_j^{(m)}(u)  {\lambda}_{js}(u) \right\| \label{fl1.eq2}\\
					& 
					\quad = O_P\bigg(\frac{1}{\sqrt{T}}\bigg) \cdot \bigg|\sum_{t=T_0}^{T}{\delta}_t - \widehat{{\delta}}^{(m-1)}_t\bigg|
					+ 
					O_P\left( \left\|{\beta} - \widehat{{\beta}}^{(m-1)}\right\|  \right)
					+ O_P\left(\frac{1}{\sqrt{T}}\right) 
					\cdot \bigg\| \widehat{F}^{(m)}  - F \big(H^{(m)}\big)^{-1}\bigg\| , \nonumber \\
					&
					\frac{1}{S}\left\|{\widehat{\Lambda}}_{j}^{(m)}(u) -  {\Lambda}_{j}(u)H^{(m)}_j(u) '\right\|^2   \label{fl1.eq3}\\
					& = \quad O_P\bigg(\frac{1}{T}\bigg) \cdot \bigg|\sum_{t=T_0}^{T}{\delta}_t - \widehat{{\delta}}^{(m-1)}_t\bigg|^2
					+ 
					O_P\left( \left\|{\beta} - \widehat{{\beta}}^{(m-1)}\right\| ^2 \right)
					+ O_P\left(\frac{1}{T}\right) 
					\cdot \bigg\| \widehat{F}^{(m)}  - F \big(H^{(m)}\big)^{-1}\bigg\|^2,
					\nonumber\\
					& 
					\left\|\bigg(\frac{{\widehat{\Lambda}}_{j}^{(m)}(u)'\widehat{\Lambda}_{j}(u)}{S}\bigg)^{-1} -  \big(H^{(m)}_j(u)'\big)^{-1}\bigg(\frac{{{\Lambda}}_{j}^{(m)}(u)'{\Lambda}_{j}(u)}{S}\bigg)^{-1}\big(H^{(m)}_j(u)\big)^{-1}\right\| \label{fl1.eq4}\\
					& \quad
					= O_P\bigg(\frac{1}{T}\bigg) \cdot \bigg|\sum_{t=T_0}^{T}{\delta}_t - \widehat{{\delta}}^{(m-1)}_t\bigg|^2
					+ 
					O_P\left( \left\|{\beta} - \widehat{{\beta}}^{(m-1)}\right\| ^2 \right)
					+ O_P\left(\frac{1}{T}\right) 
					\cdot \bigg\| \widehat{F}^{(m)}  - F \big(H^{(m)}\big)^{-1}\bigg\|^2,  \nonumber
			\end{align}}
			where $H_j^{(m)}$ is defined in Lemma~\ref{prop3}. 
			
		\end{lemma}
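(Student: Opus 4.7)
The plan is to prove (\ref{fl1.eq2}), (\ref{fl1.eq3}), (\ref{fl1.eq4}) sequentially, with the pointwise identity (\ref{fl1.eq2}) doing all the heavy lifting and the remaining two claims following by aggregation and a matrix-inverse identity. Suppressing $u$ and $j$ throughout, as elsewhere in Appendix~\ref{online.theorem}, the starting point is the first-order condition for $\widehat{\Lambda}^{(m)}$ in the PCA step, which together with the normalization $T^{-1}\widehat{F}^{(m)\prime}\widehat{F}^{(m)} = I_r$ gives $\widehat{\lambda}_s^{(m)} = T^{-1}\widehat{F}^{(m)\prime}(\widehat{A}_s - D_s\widehat{\delta}^{(m-1)} - X_s\widehat{\beta}^{(m-1)})$. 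Substituting $A_s = D_s\delta + X_s\beta + F\lambda_s + \eta_s$ and using $H^{(m)} = T^{-1}\widehat{F}^{(m)\prime}\widehat{F}^{(m)}H^{(m)}$ to subtract off the target yields the core decomposition
\begin{equation*}
\widehat{\lambda}_s^{(m)} - H^{(m)}\lambda_s = T^{-1}\widehat{F}^{(m)\prime}\bigl[D_s(\delta-\widehat{\delta}^{(m-1)}) + X_s(\beta-\widehat{\beta}^{(m-1)}) + (F-\widehat{F}^{(m)}H^{(m)})\lambda_s + \eta_s + (\widehat{A}_s - A_s)\bigr],
\end{equation*}
which isolates the five distinct sources of error.

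Next I would bound each summand in turn. The policy-error piece equals $d_s T^{-1}\sum_{t=T_0}^T \widehat{f}_t^{(m)}(\delta_t - \widehat{\delta}_t^{(m-1)})$, which Cauchy--Schwartz combined with $\|\widehat{F}^{(m)}\| = O_P(\sqrt{T})$ bounds by $O_P(T^{-1/2})\cdot|\sum_{t=T_0}^T (\delta_t - \widehat{\delta}_t^{(m-1)})|$; the coefficient-error piece is $O_P(\|\beta-\widehat{\beta}^{(m-1)}\|)$ using $T^{-1}\|\widehat{F}^{(m)}\|\cdot\|X_s\| = O_P(1)$; the factor-residual piece is $O_P(T^{-1/2})\cdot\|\widehat{F}^{(m)} - F(H^{(m)})^{-1}\|$ by a second Cauchy--Schwartz estimate. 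The first-step piece $T^{-1}\widehat{F}^{(m)\prime}(\widehat{A}_s - A_s)$ is absorbed as lower-order by the uniform rate in Lemma~\ref{l1} together with Assumption~\ref{ass.growth}, while the idiosyncratic piece is split as $T^{-1}(H^{(m)\prime})^{-1}F'\eta_s + T^{-1}(\widehat{F}^{(m)} - F(H^{(m)})^{-1})'\eta_s$, whose first term is $O_P(T^{-1/2})$ via Lemma~\ref{al.5} and whose second is dominated by the factor-residual bound. Combining the five estimates delivers (\ref{fl1.eq2}). For (\ref{fl1.eq3}) I would square the pointwise identity and sum over $s$, invoking $\sum_s\|X_s\|^2 = O_P(ST)$, $\sum_s\|\eta_s\|^2 = O_P(ST)$ and $\sum_s\|\lambda_s\|^2 = O_P(S)$, while noting that $\|\widehat{F}^{(m)} - F(H^{(m)})^{-1}\|$ is common across $s$ and therefore enters the squared bound without an extra $\sqrt{S}$ factor. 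For (\ref{fl1.eq4}) I would apply the identity $A^{-1} - B^{-1} = A^{-1}(B-A)B^{-1}$ with $A = \widehat{\Lambda}^{(m)\prime}\widehat{\Lambda}^{(m)}/S$ and $B = H^{(m)}(\Lambda'\Lambda/S)H^{(m)\prime}$, expanding $A - B = S^{-1}\sum_s(\widehat{\lambda}_s^{(m)} - H^{(m)}\lambda_s)(\widehat{\lambda}_s^{(m)} - H^{(m)}\lambda_s)' + (\text{two cross terms})$, and bounding the quadratic part by (\ref{fl1.eq3}) and the cross pieces via Cauchy--Schwartz together with $\|\Lambda\| = O_P(\sqrt{S})$; the invertibility and $O_P(1)$ bounds for $(\Lambda'\Lambda/S)^{-1}$ and $(\widehat{\Lambda}^{(m)\prime}\widehat{\Lambda}^{(m)}/S)^{-1}$ follow from Assumption~\ref{a2.2}.(i) and the continuity of the matrix inverse on a neighbourhood of invertible matrices.

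The main technical obstacle is the idiosyncratic term $T^{-1}\widehat{F}^{(m)\prime}\eta_s$: because $\widehat{F}^{(m)}$ is a data-dependent matrix built from the same $\eta$, the $\alpha$-mixing moment bounds of Lemma~\ref{al.5} cannot be invoked directly. The standard remedy, following \cite{bai2009panel}, is the split into a true-factor piece and a factor-error piece above, but the latter then demands the sharp rate for $\|\widehat{F}^{(m)} - F(H^{(m)})^{-1}\|$ that is supplied by Lemma~\ref{prop3} and (\ref{al6.eq1}), which themselves rest on the rates in Theorem~\ref{t5} for $(\widehat{\delta}^{(m-1)}, \widehat{\beta}^{(m-1)})$. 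Care must therefore be taken that the chain Lemma~\ref{prop3} $\Rightarrow$ Lemma~\ref{al.6} $\Rightarrow$ Lemma~\ref{factor.l1} remains consistent with, rather than circular relative to, the inductive order of Theorem~\ref{t5}, particularly when the bounds are propagated into the bias and variance estimates of Corollary~\ref{clt_estimate}.
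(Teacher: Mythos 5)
Your proposal matches the paper's proof essentially step for step: the same PCA first-order condition $\widehat{\lambda}_s^{(m)} = T^{-1}\widehat{F}^{(m)\prime}(\widehat{A}_s - D_s\widehat{\delta}^{(m-1)} - X_s\widehat{\beta}^{(m-1)})$, the same five-way error decomposition (including the split of $T^{-1}\widehat{F}^{(m)\prime}\eta_s$ into a true-factor piece handled by the $\alpha$-mixing bound and a factor-error piece absorbed into $\|\widehat{F}^{(m)} - F(H^{(m)})^{-1}\|$), summation over $s$ for (\ref{fl1.eq3}), and the identity $A^{-1}-B^{-1}=A^{-1}(B-A)B^{-1}$ with $B = H^{(m)}(\Lambda'\Lambda/S)H^{(m)\prime}$ for (\ref{fl1.eq4}). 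The argument is correct and no substantive deviation from the paper's route is present.
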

		
		\noindent
		\textbf{Proof}
		Since $u$ and $j$ are fixed in Lemma~\ref{factor.l1}, we suppress $u$ and $j$ throughout the following proof.
		
		As $\widehat{\lambda}_s^{(m)}$ is estimated via PCA, we have $\widehat{\lambda}_s^{(m)} =  T^{-1}\widehat{F}^{(m)\prime} \big(\widehat{{A}}_s - D_s\widehat{\delta}^{(m-1)} - X_s \widehat{{\beta}}^{(m-1)}\big)$. Thus, for (\ref{fl1.eq2}), substituting $\widehat{{A}}_s = D_s \delta + X_s {\beta}  + F {\lambda}_s + {\eta}_s + (\widehat{{A}}_s - {A}_s)$ into the expression of $\widehat{{\lambda}}_s^{(m)}$, we obtain
		\begin{align*}
			\widehat{{\lambda}}_s^{(m)} 
			& =  \frac{1}{T}\widehat{F}^{(m)\prime} \left(\widehat{{A}}_s - D_s\widehat{\delta}^{(m-1)} - X_s \widehat{{\beta}}^{(m-1)}\right) \\
			& =   \frac{1}{T}\widehat{F}^{(m)\prime} D_s \left({\delta} - \widehat{{\delta}}^{(m-1)}\right) 
			+ \frac{1}{T}\widehat{F}^{(m)\prime} X_s \left({\beta} - \widehat{{\beta}}^{(m-1)}\right) \\
			&\quad + \frac{1}{T}\widehat{F}^{(m)\prime} F {\lambda}_s + \frac{1}{T}\widehat{F}^{(m)\prime} {\eta}_s 
			+ \frac{1}{T}\widehat{F}^{(m)\prime} \left(\widehat{{A}}_s - {A}_s\right) \\
			& =   \frac{1}{T}\widehat{F}^{(m)\prime} D_s \left({\delta} - \widehat{{\delta}}^{(m-1)}\right) 
			+ \frac{1}{T}\widehat{F}^{(m)\prime} X_s \left({\beta} - \widehat{{\beta}}^{(m-1)}\right) 
			+ \frac{1}{T}\widehat{F}^{(m)\prime} \left[F - \widehat{F}^{(m)} H^{(m)}\right] {\lambda}_s \\
			& 
			\quad
			+ \frac{1}{T}\widehat{F}^{(m)\prime}  \widehat{F}^{(m)} H^{(m)} {\lambda}_s 
			+ \frac{1}{T}\left[\widehat{F}^{(m)} - F \big(H^{(m)}\big)^{-1}\right]' {\eta}_s \\
			& \quad 
			+ \frac{1}{T} \left[F \big(H^{(m)}\big)^{-1}\right]' {\eta}_s 
			+ \frac{1}{T}\widehat{F}^{(m)\prime} \left(\widehat{{A}}_s - {A}_s\right).
		\end{align*}
		Therefore,
		{\small
			\begin{align*}
				\Big\|\widehat{{\lambda}}_s^{(m)} 
				&
				- H^{(m)} {\lambda}_s \Big\| \\
				\leq & 
				\left\|\frac{1}{T}\widehat{F}^{(m)\prime} D_s \left({\delta} - \widehat{{\delta}}^{(m-1)}\right) \right\|
				+\left\|\frac{1}{T}\widehat{F}^{(m)\prime} X_s \left({\beta} - \widehat{{\beta}}^{(m-1)}\right) \right\| \\
				&
				+ \left\|\frac{1}{T}\widehat{F}^{(m)\prime} \left[F - \widehat{F}^{(m)} H^{(m)}\right] {\lambda}_s \right\| 
				+ \left\|\frac{1}{T}\left[\widehat{F}^{(m)} - F \big(H^{(m)}\big)^{-1}\right]' {\eta}_s \right\| \\
				&
				+ \left\|\frac{1}{T} \left[F \big(H^{(m)}\big)^{-1}\right]' {\eta}_s \right\|
				+ \left\|\frac{1}{T}\widehat{F}^{(m)\prime} \left(\widehat{{A}}_s - {A}_s\right)\right\| \\
				\leq & 
				\left\|\frac{\widehat{F}^{(m)\prime}}{T}\right\| \cdot |d_s| \cdot \left\|\sum_{t=T_0}^{T}{\delta}_t - \widehat{{\delta}}^{(m-1)}_t\right\| 
				+ \left\|\frac{\widehat{F}^{(m)\prime}}{\sqrt{T}}\right\| \cdot \left\|\frac{X_s }{\sqrt{T}}\right\| \cdot \left\|{\beta} - \widehat{{\beta}}^{(m-1)}\right\| \\
				&
				+ \left\|\frac{1}{T} \widehat{F}^{(m)\prime} \left[F - \widehat{F}^{(m)} H^{(m)}\right]\right\| \cdot \left\| {\lambda}_s \right\|  + \left\|\frac{1}{\sqrt{T}}\left[\widehat{F}^{(m)} - F \big(H^{(m)}\big)^{-1}\right]' \right\| \cdot \left\|\frac{ {\eta}_s}{\sqrt{T}} \right\| \\
				&
				+ \left\| \big(H^{(m)\prime}\big)^{-1} \right\|
				\cdot \bigg\|\frac{1}{T}\sum_{t=1}^{T} {f}_t \eta_{st} \bigg\|
				+ \frac{1}{\sqrt{T}}\left\|\frac{\widehat{F}^{(m)\prime}}{\sqrt{T}}\right\| \cdot \left\|\widehat{{A}}_s - {A}_s\right\| \\
				=\ & O_P\bigg(\frac{1}{\sqrt{T}}\bigg) \cdot \bigg|\sum_{t=T_0}^{T}{\delta}_t - \widehat{{\delta}}^{(m-1)}_t\bigg|
				+ 
				O_P\left( \left\|{\beta} - \widehat{{\beta}}^{(m-1)}\right\|  \right)
				+ O_P\left(\frac{1}{\sqrt{ST}}\right) 
				\cdot \bigg\| \widehat{F}^{(m)}  - F \big(H^{(m)}\big)^{-1}\bigg\| \\
				& + O_P\left(\frac{1}{\sqrt{T}}\right) 
				\cdot \bigg\| \widehat{F}^{(m)}  - F \big(H^{(m)}\big)^{-1}\bigg\| 
				+ O_P\left(\frac{1}{\sqrt{T}}\right)
				+ O_P\left(\frac{1}{S\sqrt{T}}\right) \\
				=\ & O_P\bigg(\frac{1}{\sqrt{T}}\bigg) \cdot \bigg|\sum_{t=T_0}^{T}{\delta}_t - \widehat{{\delta}}^{(m-1)}_t\bigg|
				+ 
				O_P\left( \left\|{\beta} - \widehat{{\beta}}^{(m-1)}\right\|  \right)
				+ O_P\left(\frac{1}{\sqrt{T}}\right) 
				\cdot \bigg\| \widehat{F}^{(m)}  - F \big(H^{(m)}\big)^{-1}\bigg\|  ,
		\end{align*}}
		where the second last equality follows from Assumption~\ref{ass.cov}, Lemma~\ref{l1}, Lemma~\ref{al.6} and the fact that
		\begin{align*}
			\mathbb{E}\left[\left\|\frac{1}{T} F' {\eta}_s \right\|^2 \right] 
			= & \frac{1}{T^2}\sum_{t,l=1}^{T}\sum_{k=1}^{r}\mathbb{E}[\eta_{st}f_{t(k)}f_{l(k)}\eta_{sl}] \\
			= & \frac{1}{T^2}\sum_{t=1}^{T}\sum_{k=1}^{r}\mathbb{E}[\eta_{st}^2f_{t(k)}^2]
			+ \frac{1}{T^2}\sum_{t \neq l}^{T}\sum_{k=1}^{r}\mathbb{E}[\eta_{st}f_{t(k)}f_{l(k)}\eta_{sl}]  \\
			\leq & O_P\left(\frac{1}{T}\right) + \frac{1}{T^2}\sum_{t \neq l}^{T}\sum_{k=1}^{r}
			10 a(|t-l|)^{\frac{\delta}{\delta+2}}
			\left(\mathbb{E}\left[|\eta_{st}f_{t(k)}|^{\delta+2}\right]\right)^{\frac{1}{\delta+2}}
			= O\left(\frac{1}{T}\right),
		\end{align*}
		following Lemma~\ref{al.4}.
		
		Apply similar argument to (\ref{fl1.eq2}), we can show 
		{\small
			\begin{align*}
				\frac{1}{S} & \left\|{\widehat{\Lambda}}^{(m)} -   {\Lambda}H^{(m)\prime}\right\|^2
				= \frac{1}{S}\sum_{s=1}^{S}\left\|\widehat{{\lambda}}_{s}^{(m)} - H^{(m)}  {\lambda}_{js}(u) \right\|^2 \\
				& =  O_P\bigg(\frac{1}{T}\bigg) \cdot \bigg|\sum_{t=T_0}^{T}{\delta}_t - \widehat{{\delta}}^{(m-1)}_t\bigg|^2
				+ 
				O_P\left( \left\|{\beta} - \widehat{{\beta}}^{(m-1)}\right\| ^2 \right)
				+ O_P\left(\frac{1}{T}\right) 
				\cdot \bigg\| \widehat{F}^{(m)}  - F \big(H^{(m)}\big)^{-1}\bigg\|^2.
		\end{align*}}

		For the last claim, we write
		{\small
			\begin{align*}
				\bigg\| & \bigg(\frac{\widehat{\Lambda}^{(m)\prime}\widehat{\Lambda}^{(m)}}{S}\bigg)^{-1} - \big(H^{(m)\prime}\big)^{-1}\bigg(\frac{\Lambda'\Lambda}{S}\bigg)^{-1}\big(H^{(m)}\big)^{-1}\bigg\| \\
				=  & 
				\bigg\| \bigg(\frac{\widehat{\Lambda}^{(m)\prime}\widehat{\Lambda}^{(m)}}{S}\bigg)^{-1} - \bigg(\frac{(\Lambda H^{(m)\prime})'(\Lambda H^{(m)\prime})}{S}\bigg)^{-1}\bigg\| \\
				\leq & 
				\bigg\| \frac{\widehat{\Lambda}^{(m)\prime}\widehat{\Lambda}^{(m)}}{S}-\frac{(\Lambda H^{(m)\prime})'(\Lambda H^{(m)\prime})}{S}\bigg\| 
				\cdot 
				\bigg\| \bigg(\frac{\widehat{\Lambda}^{(m)\prime}\widehat{\Lambda}^{(m)}}{S}\bigg)^{-1} \bigg\|
				\cdot 
				\bigg\|\big(H^{(m)\prime}\big)^{-1}\bigg(\frac{\Lambda'\Lambda}{S}\bigg)^{-1}\big(H^{(m)}\big)^{-1}\bigg\| \\
				\leq & 
				O_P\bigg(\frac{1}{T}\bigg) \cdot \bigg|\sum_{t=T_0}^{T}{\delta}_t - \widehat{{\delta}}^{(m-1)}_t\bigg|^2
				+ 
				O_P\left( \left\|{\beta} - \widehat{{\beta}}^{(m-1)}\right\| ^2 \right)
				+ O_P\left(\frac{1}{T}\right) 
				\cdot \bigg\| \widehat{F}^{(m)}  - F \big(H^{(m)}\big)^{-1}\bigg\|^2,
		\end{align*}}
		where the last line follows from (\ref{fl1.eq3}).	
		\hfill$\square$


{\footnotesize
\bibliographystyle{chicago}
\bibliography{bibfile} 

}

\end{document}